\newtheorem{thm}{Theorem}[section]
\newtheorem{prop}[thm]{Proposition}
\newtheorem{lem}[thm]{Lemma}
\newtheorem{cor}[thm]{Corollary}
\newtheorem{defn}[thm]{Definition}
\def\th@definition{%
  \thm@notefont{}
  \normalfont 
}
\theoremstyle{definition}
\newtheorem{example}[thm]{Example}
\newcommand{\eox}{\hfill$\triangle$}
\newcommand{\cH}{\mathcal{H}}
\newcommand{\cF}{\mathcal{F}}
\newcommand{\cB}{\mathcal{B}}
\newcommand{\cV}{\mathcal{V}}
\newcommand{\cE}{\mathcal{E}}
\newcommand{\cP}{\mathcal{P}}
\newcommand{\cZ}{\mathcal{Z}}
\newcommand{\uU}{\underline{U}}
\newcommand\restr[2]{{
  \left.\kern-\nulldelimiterspace 
  #1 
  \vphantom{\big|} 
  \right|_{#2} 
  }}
\newcommand{\Sec}[1]{Sec.~\ref{#1}}
\newcommand{\App}[1]{App.~\ref{#1}}
\newcommand{\mbf}[1]{\mathbf{#1}}
\newcommand{\mrm}[1]{\mathrm{#1}}
\newcommand{\Hil}{\mathcal{H}}
\newcommand{\Oh}{\mathcal{O}}
\newcommand{\dd}{\mathrm{d}}
\newcommand{\ketbra}[2]{|#1\rangle \langle #2 |}
\DeclareMathOperator{\Tr}{Tr}
\DeclareMathOperator{\Span}{span}
\definecolor{purple}{rgb}{0.8,0,0.8}
\definecolor{darkcoral}{rgb}{0.8, 0.36, 0.27}
\title{\textbf{A Correspondence Between Quantum Error Correcting Codes and Quantum Reference Frames}}
\author[1]{Sylvain Carrozza\thanks{sylvain.carrozza@ube.fr}}
\author[2,3]{Aidan Chatwin-Davies\thanks{aidan.chatwindavies@uri.edu}}
\author[3]{Philipp A.\ H\"ohn\thanks{philipp.hoehn@oist.jp}}
\author[4,5]{Fabio M.\ Mele\thanks{fmele1@lsu.edu}}
\affil[1]{\normalsize{IMB UMR 5584, Universit\'{e} Bourgogne Europe, CNRS, F-21000 Dijon, France}}
\affil[2]{\normalsize{Department of Physics, University of Rhode Island, Kingston, RI 02881, United States of America}}
\affil[3]{\normalsize{Okinawa Institute of Science and Technology Graduate University, Onna, Okinawa 904-0495, Japan}}
\affil[4]{\normalsize{Department of Physics \& Astronomy, Western University, London, ON, N6A 3K7, Canada}}
\affil[5]{\normalsize{Department of Physics and Astronomy, Louisiana State University, Baton Rouge, LA 70803, USA}}
\date{\today}
\begin{document}

\maketitle

\begin{abstract}
In a gauge theory, a collection of kinematical degrees of freedom is used to redundantly describe a smaller
amount of gauge-invariant information. In a quantum error correcting code (QECC), a collection of computational
degrees of freedom that make up a device's physical layer is used to redundantly encode a smaller amount
of logical information. We elaborate this parallel in terms of quantum reference frames (QRFs), which are a
universal toolkit for dealing with symmetries in quantum systems and which define the gauge theory analog of encodings. The result is a precise dictionary between
QECCs and QRF setups within the perspective-neutral framework for gauge systems. Concepts from QECCs like error sets and correctability translate to
novel insights into the informational architecture of gauge theories. Conversely, the dictionary provides a
systematic procedure for constructing symmetry-based QECCs and characterizing
their error correcting properties. In this initial work, we scrutinize the dictionary between Pauli stabilizer codes
and their corresponding QRF setups. We show that there is a one-to-one correspondence between maximal correctable error sets and tensor factorizations splitting system from error-generated QRF degrees of freedom. Relative to this split, errors corrupt only redundant frame data, leading to a novel characterization of correctability. When passed through the dictionary,
standard Pauli errors behave as electric excitations that are dual, via Pontryagin duality, to magnetic excitations related to gauge-fixing. This gives rise to a new class of correctable errors and a systematic error duality. We illustrate our findings in surface codes, which themselves manifestly connect quantum error correction with gauge systems. Our
exploratory investigations pave the way for deeper foundational applications to gauge
theories and for eventual practical applications to quantum simulation.
\end{abstract}

\clearpage

\setlength{\cftbeforesecskip}{7pt}

\tableofcontents

\section{Introduction}

Gauge theories are ubiquitous in nature; they capture the physics of electrodynamics, nuclear forces, gravity, and more.
One of their defining features is that they describe physics redundantly in a way that is insensitive to certain local details.
This is similar to how quantum error correcting codes (QECCs) protect quantum information from local errors by redundantly encoding logical states into a larger physical space.
Quantum error correction (QEC) is thereby an essential ingredient for establishing fault-tolerant quantum computation in the omnipresence of quantum noise,  which corrupts stored information and which hinders gate implementation and state preparation if left unchecked.

Without redundantly encoding logical data into a quantum register, \emph{any} uncontrolled interactions of this register with its environment or itself can result in a logical error in the computation.~Without redundancy among kinematical degrees of freedom, a gauge symmetry cannot exist.~Which degrees of freedom are identified as redundant is in both contexts a choice and not specified by the theory.~In QEC this is the reason that a given code can, in principle, correct multiple different sets of errors, and in gauge theories this corresponds to the freedom to gauge fix in a continuum of different ways.~Which error set is practically interesting depends on the error model and physical implementation of the code, and which gauge is useful depends on the preference of the practitioner.~Moreover, in stabilizer codes, the logical information to be protected from errors is encoded so as to commute with the stabilizer group, just like how physical information in gauge theories commutes with gauge symmetries. 

There is thus a clear analogy between QEC and gauge theories and one may inquire whether this is merely a coincidence or whether there is, in fact, a deeper underlying structural relationship.~Can we in general understand gauge theories and the distribution of gauge-invariant quantum information in spacetime as a QECC and, conversely, can we meaningfully think of a QECC as a gauge theory?~If so, what are the foundational and practical insights to be gained from such a correspondence?~Our aim with this work is to start setting the stage for answering such questions.  

To be sure, QECCs have been usefully invoked in high-energy physics and particularly in quantum gravity before.
For example, quantum codes have been employed to explore how long it would take to decode information from Hawking radiation about objects that have fallen into a black hole \cite{Hayden:2007cs,Yoshida:2017non}.
Similarly, holography \cite{Maldacena:1997re,Witten:1998qj} has been shown to have a natural interpretation in terms of erasure codes \cite{Almheiri:2014lwa,Harlow:2016vwg,Faulkner:2020hzi}, and QEC also gives rise to a new angle on renormalization and splitting relevant from irrelevant information \cite{Furuya:2020tzv,Furuya:2021lgx}.
These applications have also had benefits for QEC, influencing code design and leading to new types of codes and tensor networks, such as, e.g., the HaPPY code \cite{Pastawski:2015qua}, hyperinvariant MERA \cite{Evenbly:2017hyg,Cao:2021wrb,Steinberg:2023wll,Steinberg:2024ack}, Quantum Lego \cite{Cao:2021ibt,Fan:2024ero}, and more.
By contrast, our aim with this work will not be specific applications of QEC in high-energy physics, but rather developing a fundamental correspondence between gauge theories and QEC.

Going beyond an analogy between QEC and gauge theories requires making sense of key ingredients of QEC---namely, encodings and correctable errors---on the gauge theory side, where such concepts are not usually considered.
This is where quantum reference frames (QRFs) come into the picture \cite{Aharonov:1967zza,Aharonov:1984,Bartlett:2006tzx,angeloPhysicsQuantumReference2011a,Giacomini:2017zju,delaHamette:2021oex,Hoehn:2023ehz,Hoehn:2019fsy,Hoehn:2020epv,Chataignier:2024eil,AliAhmad:2021adn,Hoehn:2021flk,delaHamette:2021piz,Vanrietvelde:2018pgb,Vanrietvelde:2018dit,Giacomini:2021gei,Castro-Ruiz:2019nnl,Suleymanov:2023wio,delaHamette:2020dyi,Krumm:2020fws,Kabel:2024lzr,Kabel:2023jve,Carette:2023wpz,loveridgeSymmetryReferenceFrames2018a,Giacomini:2018gxh,Castro-Ruiz:2021vnq}. 
They constitute a universal toolset that applies to any quantum system subject to a symmetry principle. 
More precisely, the correspondence established in this work relies on QRFs according to the \emph{perspective-neutral} (PN) framework, which is distinguished by being built using the structures of gauge theories, thus making it directly applicable to the latter \cite{delaHamette:2021oex,Hoehn:2023ehz,Hoehn:2019fsy,Hoehn:2020epv,Chataignier:2024eil,AliAhmad:2021adn,Hoehn:2021flk,delaHamette:2021piz,Vanrietvelde:2018pgb,Vanrietvelde:2018dit,Giacomini:2021gei,Castro-Ruiz:2019nnl,Suleymanov:2023wio}.
While the terminology of QRFs first arose in the foundations of quantum theory and quantum information \cite{Aharonov:1967zza,Aharonov:1984,angeloPhysicsQuantumReference2011a}, the concept is pervasive in high-energy physics and arises whenever constructing observables in gauge systems, either via a gauge-invariant dressing method, or, equivalently, a gauge-fixing procedure, e.g.\ see \cite{Carrozza:2021gju,Goeller:2022rsx,Carrozza:2022xut,Araujo-Regado:2024dpr,Araujo-Regado:2025ejs}.
In short, the field degrees of freedom that become frozen in a gauge-fixing procedure, or that are used to ``dress'' non-invariant ``bare'' quantities into invariant ones, constitute a QRF for the pertinent gauge group.
Now, crucially, according to the PN-framework, a choice of QRF in a gauge system is a choice of split between redundant (the QRF itself) and non-redundant information.
On the QEC side, this corresponds to a decoding and is deeply connected with sets of correctable errors.
This suggests a dictionary between QEC and gauge theories, using QRFs as a mediator, and indeed this is what we find.
The dictionary is summarized in Table~\ref{tab_dictionary} and is explained in detail in the main text.
In this article, we shall restrict our attention for simplicity and explicitness to  Pauli stabilizer codes \cite{Gottesman:1997zz}, while we will generalize the correspondence to general group-based codes and QRFs associated with general symmetry groups in upcoming work \cite{CCDHM2}.

Before proceeding, let us briefly contrast our work with previous discussions invoking QRFs in the context of QEC. Their primary aim was to deal with continuous unitary symmetries in the logical space and particularly the Eastin-Knill theorem \cite{Eastin:2009tem}, which says that a continuous group of transversal operators in finite-dimensional codes is not universal. This leaves two broad ways out for universal transversal gate sets: approximate QEC in finite dimensions or infinite-dimensional codes (which is natural in field theories, but not in practical implementations). QRFs have been invoked to address this challenge \cite{Hayden:2017jjm,Woods:2019fpy,Faist:2019ahr,Yang:2020zqm}, however, with several key differences to the present manuscript. First, the QRFs in \cite{Hayden:2017jjm,Woods:2019fpy,Faist:2019ahr,Yang:2020zqm} are external and appended to the code as ancilla systems, i.e.\ to the space of physical qubits implementing the code, whereas here we extract the QRFs from the physical space itself so that they are internal. Second, in \cite{Hayden:2017jjm,Woods:2019fpy,Faist:2019ahr,Yang:2020zqm} the QRFs are associated with a symmetry group acting on the logical space and they are treated with the distinct \emph{quantum information approach} to QRFs \cite{Bartlett:2006tzx} appropriate for physical symmetries. By contrast, our QRFs are associated with the stabilizer group and accordingly with the PN-approach appropriate for gauge symmetries. Third, the QRFs in \cite{Hayden:2017jjm,Woods:2019fpy,Faist:2019ahr,Yang:2020zqm} are invoked to set up a suitable formulation of \emph{approximate} QEC to sidestep the Eastin-Knill theorem, while we focus here on \emph{exact} codes. The scope and setting of our work are thus quite different, and we emphasize that the QECC/QRF correspondence to be established is by construction only feasible with the PN-framework.

What may be the use of this correspondence?~On the practical side, we anticipate that it may provide a novel avenue for improving \emph{quantum simulations} of gauge theories and quantum gravity, a topic that has recently garnered a lot of attention \cite{Banuls:2019bmf,Martinez:2016yna,Kokail:2018eiw,Yang:2020yer,Bauer:2022hpo,DiMeglio:2023nsa,Chakraborty:2020uhf,Honda:2021aum,Brown:2019hmk,Gharibyan:2020bab,Shapoval:2022xeo,Li:2017gvt,Li:2019holentropysim,Maceda:2024rrd,Darbha:2024cwk,Darbha:2024srr}.~Indeed, this may provide novel code designs for implementing error correction in these applications by aligning the code space with the space of gauge-invariant states.~Furthermore, QRFs offer a systematic toolkit for dealing with symmetries in quantum systems and this may well lead to novel QECC designs, based on more general groups and exploiting properties of QRFs.~We shall comment on these aspects in the conclusions.~Conversely, we anticipate that this structural dictionary may also lead to novel foundational insights into both QECCs, as well as gauge theories, e.g., in the latter case better understanding the encoding of gauge-invariant data in terms of quantum codes. Indeed, our exposition will already provide a few examples of new foundational insights gained via the dictionary.

The organization of this article is as follows.~In Sec.~\ref{sec_dictionary}, we give brief introductions to QEC and to the PN-approach to QRFs, while emphasizing their similarities.~We then provide a broad summary of the results of the paper, describing in particular the precise dictionary underlying the correspondence between QECCs and QRF setups for general compact stabilizer/gauge groups. 

In Sec.~\ref{sec_QRF2QEC}, we explain in more detail in which sense a QRF setup associated to a finite Abelian symmetry can be reinterpreted, via the PN-approach, as a code. In this section, for later use, we also extend the previous QRF-literature by permitting the QRFs and complementary systems to carry a projective (rather than proper) unitary representation of the pertinent group. 

From Sec.~\ref{sec_genstab} onward we focus on the converse direction, in the specific context of Pauli stabilizer codes: namely, we explain how any stabilizer code can be reinterpreted and analyzed in terms of the mathematical structures of the PN-approach to QRFs. Central to our analysis is a one-to-one correspondence between: 
\begin{itemize}
\item[(a)] certain tensor factorizations of the physical space of the code (kinematical Hilbert space in the language of gauge theories) into a QRF factor and complementary system factor; and 
\item[(b)] maximal correctable error sets (see Theorem~\ref{thm_errorQRF}) or, equivalently, complete sets of frame fields (see Theorem~\ref{thm:nonlocal_fact}). 
\end{itemize}
In particular, Theorem~\ref{thm_errorQRF} shows that \emph{the code restriction of every maximal set of correctable errors generates a QRF and the stabilizer group}.~Such QRFs are special and entirely absorb the code's redundancy in the sense that both the code-restricted errors and all stabilizer transformations act trivially on the subsystem complementary to it, which thus is plainly logical.~This split between fully redundant and non-redundant data gives rise to a novel interpretation of correctability of errors, making the relation with redundancy mentioned in the opening paragraphs unambiguously clear.~We show that \emph{a set of Pauli errors is correctable} if and only if such a QRF exists on which the errors act, or in other words, \emph{if and only if a tensor factorization can be found such that the stabilizer transformations and code-restricted errors act exclusively on a purely redundant tensor factor} (Corollary~\ref{cor:KL-via-QRFs}).~QRFs thereby afford a new perspective on the relation between correctability and redundancy, or, in other words, provide another way to make the general idea of a QECC as a collection of virtual ancilla qubits and virtual logical qubits precise \cite{PreskillNotes,Poulin:2005wry,Kao:2023ehh}. We will come back to this in Sec.~\ref{ssec_redundancy}.

Such factorizations are in general non-local relative to the tensor product structure canonically associated to the code's physical space. However, we show in Theorem~\ref{thm:existence_of_local_qrfs} that it always remains possible to resort to a local QRF in a stabilizer code (which may carry a projective representation of the stabilizer group). 

Sec.~\ref{sec_errorduality} reveals the existence of and formalizes a new notion of error duality. It relates ``electric'' to ``magnetic'' families of correctable errors, where standard Pauli errors are part of the ``electric'' family, by means of an operator-valued Fourier transform exploiting the mathematical notion of Pontryagin duality. All properties of such error sets, including syndromes and recovery operations, are dual under the exchange of ``electric'' and ``magnetic'' charge labels. This observation may also be viewed as a new foundational insight into the structure of QEC, coming from a QRF and gauge theory perspective.

Sec.~\ref{sec:surfacecodes} illustrates some of our main findings in the context of surface codes (which include Kitaev's toric code) with $\mathbb{Z}_2$ structure group. The complementary interpretations of such models, as both stabilizer codes and lattice gauge theories, makes them particularly relevant to our analysis. 

We conclude the paper with a technical discussion connecting our results to earlier works on the relation between correctability and redundancy of errors, as well as on operator algebra QEC (Sec.~\ref{sec:discussion}), and a general conclusion outlining future research directions (Sec.~\ref{sec:conclusion}). Finally, extensive technical discussions, including reviews of relevant mathematical tools and proofs, can be found in the Appendices.

\begin{xltabular}[t!]{\textwidth}{|X c|c X|}
\captionsetup{labelfont=bf}
\caption{The correspondence between QEC for $[[n,k]]$ stabilizer codes and the perspective-neutral framework for internal QRFs associated with the gauge group $G\simeq\mathbb{Z}_2^{\times(n-k)}$.~The detailed correspondence is explained in the main body text.~Every stabilizer code with stabilizer group $G$ gives rise to a multitude of \emph{ideal} QRFs (i.e., that possess perfectly distinguishable orientation states) and these are in one-to-one correspondence with --- and in fact, generated by --- maximal sets of correctable errors.
For notational simplicity, we opted here to only represent the Page-Wootters reduction $\mathcal{R}_R$ as an analog of decoding on the QRF side. There is a unitarily equivalent and dual trivialization map satisfying essentially the same properties, as we will explain in the main body. $\hat{G}$ denotes the Pontryagin dual of $G$.}
\label{tab_dictionary} \\

\multicolumn{4}{c}{\textbf{QECC/QRF Dictionary}}\\ \hline

\multicolumn{2}{|c|}{\textbf{Stabilizer QECCs}} & \multicolumn{2}{c|}{\textbf{Ideal QRFs}}\\ \hline
\endfirsthead

\multicolumn{4}{c}%
{\tablename\ \thetable{} -- continued from previous page} \\
\hline \multicolumn{2}{|c|}{\textbf{Stabilizer QECCs}} & \multicolumn{2}{c|}{\textbf{Ideal QRFs}}\\ \hline
&&&\\[-0.65em]
\endhead

\hline \multicolumn{4}{|r|}{{Continued on next page}} \\ \hline
\endfoot

\hline
\endlastfoot

&&&\\[-0.65em]

\parbox[c]{\hsize}{\small\setstretch{0.75}{\bf physical space}} & $\mathcal H_{\rm physical}$ & $\mathcal H_{\rm kin}$ & \parbox[c]{\hsize}{\small\setstretch{0.75}{\bf kinematical space}}\\[1em]

\parbox[c]{\hsize}{\small\setstretch{0.75}{\bf{stabilizer group}}} & $G$ & $G$ & \parbox[c]{\hsize}{\small\setstretch{0.75}{\bf gauge group}}\\[1em]

\parbox[c]{\hsize}{\small\setstretch{0.75}{\bf stabilizer operator}} & $\bigotimes_i U_i^g$ & $U^g$ & \parbox[c]{\hsize}{\small\setstretch{0.75} {\bf gauge transformation}}\\[1em]

\parbox[c]{\hsize}{\small\setstretch{0.75} {\bf code space}} & $\mathcal H_{\rm{code}}$ & $\mathcal H_{\rm pn}$ & \parbox[c]{\hsize}{\small\setstretch{0.75} {\bf perspective-neutral space}}\\[1em]

\parbox[c]{\hsize}{\small\setstretch{0.75} {\bf maximal set of\\correctable errors}} & $\mathcal{E}=\{E_i\Pi_{\rm code}\}$& $\mathcal{A}_{R}=\cB(\mathcal{H}_R)$&\parbox[c]{\hsize}{\small\setstretch{0.75}{\bf QRF algebra}}\\[1em]

\parbox[c]{\hsize}{\small\setstretch{0.75} {\bf choice of\\correctable error set}}&$\mathcal{E}$&$\Hil_R,\mathcal{A}_R$&\parbox[c]{\hsize}{\small\setstretch{0.75}{\bf choice of internal QRF}}\\
~ & ~ &$\text{s.t. }\Hil_{\rm kin}\simeq\Hil_R\otimes\Hil_S$& \parbox[c]{\hsize}{\small\setstretch{0.75}{\bf subsystem} (and\\complementary system $S$)} \\
&&\text{and }$U^g=U^g_R\otimes U^g_S$&\\[1em]

\parbox[c]{\hsize}{\small\setstretch{0.75} {\bf logical space}} & $\mathcal H_{\rm logical}$ & $\mathcal H_{|R}=\mathcal R_R[\Hil_{\rm pn}]\simeq\Hil_S$ &\parbox[c]{\hsize}{\small\setstretch{0.75}{\bf reduced system space}}\\[1em]

\parbox[c]{\hsize}{\small\setstretch{0.75}{\bf decoding}} & $\mathcal{D}(\bullet)$ & $\mathcal W_R(\bullet)=\mathcal R_R\bullet \mathcal R_R^\dag$ & \parbox[c]{\hsize}{\small\setstretch{0.75}{\bf{internal QRF\\perspective}} (reduction)}\\[1em]

\parbox[c]{\hsize}{\small\setstretch{0.75}{\bf encoding}} & $\mathcal{N}(\bullet)$ & $\mathcal W_R^\dag(\bullet)=\mathcal R_R^\dag\bullet \mathcal R_R$ & \parbox[c]{\hsize}{\small\setstretch{0.75}{\bf embedding}\\(inverse reduction)}\\[1em]

\parbox[c]{\hsize}{\small\setstretch{0.75}{\bf encoded logical operator} ($f_L\in \cB(\Hil_{\rm logical})$)} & $\mathcal N(f_L)$& $O_{f_S|R}=\mathcal R_R^{\dag}(f_S)\mathcal R_R$ &\parbox[c]{\hsize}{\small\setstretch{0.75}{\bf gauge-invariant\\relational observable}\\ ($f_S\in\cB(\Hil_{|R})$)}\\[1em]

\parbox[c]{\hsize}{\small\setstretch{0.75}{\bf change of encoding}} & $\mathcal D^{\prime}\circ\mathcal{N}=\mathcal N^{\prime\dag}\circ\mathcal N$ & $\mathcal V_{R\to R^{\prime}}=\mathcal W_{R^{\prime}}\circ \mathcal W_R^{\dag}$ & \parbox[c]{\hsize}{\small\setstretch{0.75}{\bf QRF transformation}}\\[1em]

\parbox[c]{\hsize}{\small\setstretch{0.75}{\bf correctable Pauli error}} &$E_{\chi}$ & $E_{\chi}$&\parbox[c]{\hsize}{\small\setstretch{0.75}{\bf``electric'' charge\\excitation}}\\[1em]

\parbox[c]{\hsize}{\small\setstretch{0.75}{\bf error space}\\($\chi\in\hat G$ irreducible characters of $G$, elements of Pontryagin dual $\hat G$)}&$\Hil_{\chi}=E_{\chi}(\Hil_{\rm code})$&$\Hil_{\chi}=E_{\chi}(\Hil_{\rm pn})$&\parbox[c]{\hsize}{\small\setstretch{0.75}{\bf space with ``electric''\\charge $\chi$}}\\[2em]

\parbox[c]{\hsize}{\small\setstretch{0.75}{\bf syndrome measurement}\\ $\{U^g\}_{g\in G_{\rm gen}}$ set of stabilizer generators;\\$\{P_{\chi}\}_{\chi\in\hat G}$ projectors onto charge space $\Hil_{\chi}$} &$\{U^g\}_{g\in G_{\rm gen}} $ or $\{P_{\chi}\}_{\chi\in\hat G}$&$\{U^g\}_{g\in G_{\rm gen}}$ or $\{P_{\chi}\}_{\chi\in\hat G}$&\parbox[c]{\hsize}{\small\setstretch{0.75}{\bf ``electric'' charge\\measurement}\\
$\{U^g\}_{g\in G_{\rm gen}}$ set of gauge generators; \\ $\{P_{\chi}\}_{\chi\in\hat G}$ projectors onto charge space $\Hil_{\chi}$}\\[3em]

\parbox[c]{\hsize}{\small\setstretch{0.75}{\bf error detection\\operation}}&$\sum_{\chi\in\hat G}\,P_{\chi}\bullet P_{\chi}$&$\frac{1}{|G|}\sum_{g\in G} U^g\bullet\,(U^g)^\dag$ & \parbox[c]{\hsize}{\small\setstretch{0.75}{\bf incoherent group\\average over $G$} ($G$-twirl)}\\[1.5em]

\parbox[c]{\hsize}{\small\setstretch{0.75}{\bf recovery} for (unitary)} error $E_{\chi}$&$E_{\chi}^\dag$ &$E_{\chi}^\dag$&\parbox[c]{\hsize}{\small\setstretch{0.75}{\bf ``electric'' charge\\de-excitation}}\\[2em]

\parbox[c]{\hsize}{\small\setstretch{0.75}{\bf recovery operation}\\for (unitary) errors $E_{\chi}$}&$\sum_{\chi\in\hat G}E_{\chi}^\dag P_{\chi}\bullet P_{\chi}E_{\chi}$ &$\sum_{\chi\in\hat G}E_{\chi}^\dag P_{\chi}\bullet P_{\chi}E_{\chi}$&\parbox[c]{\hsize}{\small\setstretch{0.75}{\bf recovery operation}\\for ``electric'' errors}\\
&&&\\

\parbox[c]{\hsize}{\small\setstretch{0.75} {unitary \bf dual $\hat G$-error}\\($g\in G$ labels the irreducible characters of $\hat G$)} &$\hat{E}_g$ & $\hat{E}_g$&\parbox[c]{\hsize}{\small\setstretch{0.75}{\bf{``magnetic'' charge\\excitation/ gauge fixing\\errors}}}\\
 &&&\\
 
\parbox[c]{\hsize}{\small\setstretch{0.75}{\bf dual error space}}&$\hat{\Hil}_g=\hat{E}_g(\Hil_{\rm code})$&$\hat{\Hil}_g=\hat{E}_g(\Hil_{\rm pn})$&\parbox[c]{\hsize}{\small\setstretch{0.75}{\bf gauge fixed space} (``magnetic'' charge $g$ space)}\\[1em]

\parbox[c]{\hsize}{\small\setstretch{0.75}{\bf $\hat{G}$-syndrome\\measurement}\\ $\{\hat{U}^{\chi}\}_{\chi\in\hat{G}_{\rm gen}}$ set of\\$\hat G$-generators;\\$\{\hat{P}_{g}\}_{g\in G}$ projectors onto\\dual charge space $\hat{\Hil}_{g}$} &$\{\hat{U}^{\chi}\}_{\chi\in\hat{G}_{\rm gen}} $ or $\{\hat{P}_g\}_{g\in G}$&$\{\hat{U}^{\chi}\}_{\chi\in\hat{G}_{\rm gen}} $ or $\{\hat{P}_g\}_{g\in G}$&\parbox[c]{\hsize}{\small\setstretch{0.75}{\bf ``magnetic'' charge\\measurement}}\\
&&&\\

\parbox[c]{\hsize}{\small\setstretch{0.75}{\bf $\hat G$-error detection\\operation}}&$\sum_{g\in G}\,\hat{P}_{g}\bullet \hat{P}_{g}$&$\frac{1}{|\hat G|}\sum_{\chi\in\hat G} \hat{U}^{\chi}\bullet\,(\hat{U}^{\chi})^\dag$ & \parbox[c]{\hsize}{\small\setstretch{0.75}{\bf incoherent group\\average over $\hat G$} ($\hat G$-twirl)}\\
&&&\\

\parbox[c]{\hsize}{\small\setstretch{0.75}{\bf recovery for\\dual error} $\hat{E}_g$}  &$\hat{E}_g^\dag$ &$\hat{E}_g^\dag$&\parbox[c]{\hsize}{\small\setstretch{0.75}{\bf ``magnetic'' charge\\de-excitation}}\\
&&&\\

\parbox[c]{\hsize}{\small\setstretch{0.75}{\bf recovery operation}\\for dual errors $\hat{E}_{g}$}&$\sum_{g\in G}\hat{E}_{g}^\dag \hat{P}_{g}\bullet \hat{P}_{g}\hat{E}_{g}$ &$\sum_{g\in G}\hat{E}_{g}^\dag \hat{P}_{g}\bullet \hat{P}_{g}\hat{E}_{g}$&\parbox[c]{\hsize}{\small\setstretch{0.75}{\bf recovery operation}\\for ``magnetic''/gauge-fixing errors}\\
[1em]
\end{xltabular}

\medskip

\noindent\textbf{Note added.} While completing this work, we were made aware of interesting concurrent efforts to understand relations between QECCs, the PN-framework and gauge theories by Lin-Qing Chen and Elias Rothlin as part of the MSc thesis of the latter \cite{CRthesis}. There is some overlap with identifications we also make in this work (e.g., code space = perspective-neutral space, encoded logical operators = relational observables) and the authors also consider the five-qubit code as an example like us in Example~\ref{sec:5qubit} to extract QRFs from it. Altogether, however, the error side of the dictionary, including the duality, which comprises the essence of our work and the fineprint of the dictionary, does not appear in \cite{CRthesis} and so the overall overlap is not extensive. We look forward to comparing our results in the future and thank Lin-Qing Chen and Elias Rothlin for sharing their results with us. We are also aware of concurrent efforts by Masazumi Honda \cite{Masazumi1,Masazumi2} to link lattice gauge theories with QECCs for applications in quantum simulations of the former (no link with QRFs is made). Also in this work, electric excitations as errors on the gauge theory side appear, in line with our observations. We thank Masazumi Honda for many discussions on this and initial collaboration on follow-up work.

\section{The QECC/QRF dictionary in a nutshell}\label{sec_dictionary}

Let us now provide some flavor of the correspondence between QECCs and the PN-framework for QRFs.~We will here be somewhat coarse without referring to a specific class of QECCs or QRFs to merely provide intuition.~Some facets of Table~\ref{tab_dictionary} will require further structure that we explain in detail in the succeeding sections.

Boiled down to one sentence, this correspondence rests on the fact that redundancy is the hallmark of both QEC and gauge theories and that QRFs are the tool to split redundancy from physical information in gauge systems.

\subsection{Quantum error correction in a nutshell}\label{ssec_QEC}

Figure~\ref{fig:QECC} depicts the basic structure of quantum error correction (see \cite{Gottesman:1997zz,NielesenChuang:book,gottesman2006quantum,Gottesman:2009zvw,PreskillNotes} for more complete introductions).
Constituting this structure are four different types of Hilbert spaces:
\begin{itemize}
\item[(a)] The \emph{physical space} $\mathcal H_{\rm physical}$, modeling \emph{all} the possible states of the physical system used to physically incarnate the code.
\item[(b)] The \emph{code space} $\mathcal H_{\rm code}$ which, in finite-dimensional systems, is a subspace of the physical space $\mathcal H_{\rm code}\subset\mathcal H_{\rm physical}$ and corresponds to the space of states of the physical system actually needed to physically incarnate the code.
\item[(c)] The \emph{logical space} $\mathcal H_{\rm logical}$, which corresponds to the abstract states and operations one wishes to implement physically.
\item[(d)] The \emph{error spaces} $\mathcal{H}_i\subset\Hil_{\rm physical}$, which are the target spaces of errors acting on the code space.
\end{itemize}
Technically, there is an obvious map between the first two spaces:
\begin{itemize}
    \item[(0)] The orthogonal code space projector $\Pi_{\rm code}:\Hil_{\rm physical}\rightarrow\Hil_{\rm code}$.
\end{itemize}
Quantum error correction as a process relates these four spaces through several operationally important maps:
\begin{itemize}
\item[(i)] The unitary \emph{encoding} $\mathcal{N}:\cB(\mathcal H_{\rm logical})\rightarrow\cB(\mathcal H_{\rm code})$ maps density matrices and operations from the logical space into its physical realization, the code space. In practice, one usually appends $m$ ancillary qudit systems, each in some ready state $\ket{1}$, to the logical state until $\dim(\Hil_{\rm logical}\otimes\left(\mathbb{C}^{d}\right)^{\otimes m})=\dim\Hil_{\rm physical}\equiv n$ to write
\begin{equation}\label{encoding}
    \mathcal N(\rho_L)=N^\dag\left(\rho_L\otimes\ket{1}\!\bra{1}\otimes\cdots\otimes\ket{1}\!\bra{1}\right)N
\end{equation}
for some suitable $n\times n$ unitary $N$.
Encoding is typically highly non-local, so as to spread the logical data across all physical subsystems.
Conversely, the \emph{decoding} is just the inverse unitary, $\mathcal D=\mathcal{N}^\dag:\cB(\Hil_{\rm code})\rightarrow\cB(\Hil_{\rm logical})$. 

\item[(ii)] Encoded \emph{logical operators} are operators that preserve the code space, $\bar{A} : \Hil_\mrm{code} \to \Hil_\mrm{code}$, and correspond to the operations one wishes to perform on the abstract logical state.~For a given $A \in \cB(\Hil_\mrm{logical})$, it follows that $\mathcal{D} \circ \mathcal{N}(A) = A$.~For $\bar{A}$ as an operator whose domain and range are both $\Hil_\mrm{code}$, $\mathcal{N} \circ \mathcal{D}(\bar{A}) = \bar{A}$ as well.~However, the domain of $\bar{A}$ is commonly extended to all of $\Hil_\mrm{physical}$, in which case $\mathcal{N} \circ \mathcal{D}(\bar{A})$ need only be equal to the extension of $\bar{A}$ up to multiplication by an operator that decodes to the identity, $I_\mrm{logical}$. 

\item[(iii)] In contrast, \emph{errors} (which we label by some index $i$) are operators that do not preserve the code space, and the image of $\Hil_\mrm{code}$ under a given error $E_i$ is the error space $\Hil_i$; to wit, $E_i:\mathcal{H}_{\rm code}\rightarrow\Hil_i\subset\Hil_{\rm physical}$.\footnote{$\Hil_i$ need not necessarily lie in the orthogonal complement $\Hil^\perp_{\rm code}$ of the code subspace in order for $E_i$ to be correctable. For example, in Sec.~\ref{ssec_dualerrors}  we will see that the ``gauge fixing errors'' $\hat{E}_g$ of Table~\ref{tab_dictionary}  do not map into $\Hil_{\rm code}^\perp$, yet are correctable.} We will often collect such errors in a set $\mathcal E=\{E_1,\ldots,E_m\}$ for some $m\in\mathbb N$, and we will always take the $E_i$ to be unitary (as maps from $\Hil_{\rm code}$ to $\Hil_i$), for otherwise they are not correctable. With some abuse of notation, we will sometimes also write $E_i$ even if the error-defining operator's domain is all of $\Hil_{\rm physical}$ (as with Pauli operators later), in which case we write $E_i\Pi_{\rm code}$ for the error acting only on the code.

\item[(iv)] An error set $\mathcal E$ gives rise to a continuum of \emph{error operations}, each of which models a specific noise source acting on the quantum registers and taking states (and operators) out of the code space: \begin{equation}
    \tilde{\mathcal{E}}:\cB(\Hil_{\rm code})\rightarrow\cB(\Hil_{\rm physical})\,,\qquad\qquad\qquad\tilde{\mathcal{E}}(\rho)=\sum_k \tilde{E}_k\,\rho\,\tilde{E}_k^\dag\,,
\end{equation}
The operation elements $\tilde E_k$ are linear combinations of the errors $E_i$ which need not be unitary and are such that $\sum_k\tilde{E}_k^\dag\,\tilde{E}_k\leq I$. Such quantum operations can also account for the case that \emph{no} error happened with a certain probability. Which such error operation (and error set $\mathcal E$) is relevant in practice depends on the errors that one wishes to model. 

\item[(v)] The purpose of QEC is to find ways to correct those errors and, in the perfect case, the original uncorrupted logical state is subsequently recovered after decoding.
For unitary errors, if one knew that a given $E_i$ had happened, recovery would be achieved by simply applying $E_i^\dag$.
To make this determination, one performs an error \emph{syndrome measurement}.
Generically, one can determine the syndrome by a projective measurement $\{P_i\}$, comprised by the  orthogonal projectors $P_i$ onto the error spaces $\Hil_i$.\footnote{In stabilizer codes, one can alternatively measure any generating set $\{S_i\}$ of the stabilizer group $\mathcal{G}$. These leave the error spaces invariant, given that they comprise the non-trivial irreducible representations of $G$; see Secs.~\ref{Sec:generalstabilizer} and~\ref{ssec_pauliduality} for further review.}
The measurement result then determines which recovery operation should be carried out.
In the end, provided the errors defining $\tilde{\mathcal{E}}$ are \emph{correctable}, one can combine syndrome measurement and recovery into a single \emph{error-correction operation} $\mathcal{O}:\cB(\Hil_{\rm physical})\rightarrow\cB(\Hil_{\rm code})$ that maps $\tilde{\mathcal{E}}(\rho_{\rm code})$ back into the code, so that
\begin{equation}
\mathcal{O}\left(\tilde{\mathcal{E}}(\rho_{\rm code})\right)=\rho_{\rm code}\,.
\end{equation}
Error correction fails when $\Oh$ results in the application of a logical operator to the original encoded state, or in other words, a \emph{logical error}.
Failure can be due to a faulty implementation of $\Oh$ or due to an error being \emph{uncorrectable}.
\end{itemize}
These maps are thus not independent; when the errors are correctable, they obey
\begin{equation}
\mathcal{O}\circ\tilde{\mathcal{E}}\circ\mathcal{N} = \mathcal{N}\,.\label{qeccrel}
\end{equation}

\begin{figure}
    \centering
    \includegraphics[width=0.5\linewidth]{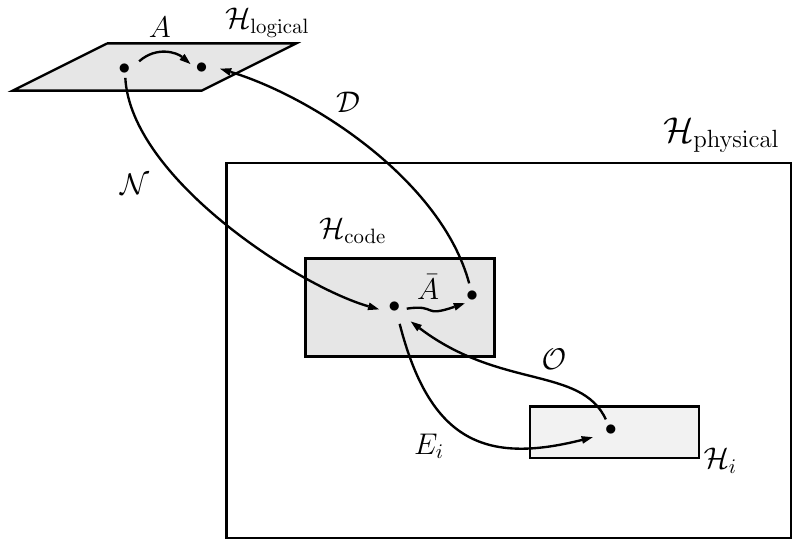}
    \caption{The Hilbert spaces in quantum error correction and the maps among them.}
    \label{fig:QECC}
\end{figure}

Now when is an error set correctable?
Intuitively, it is clear that errors and the corresponding error spaces must be distinguishable in some way, for otherwise one does not know which recovery to apply. 
It is a foundational result in QEC that an error set $\mathcal{E}=\{E_1,\ldots,E_m\}$ is correctable (this includes detection \emph{and} recovery) if and only if the \emph{Knill-Laflamme (KL) condition} \cite{Knill:1996ny}
\begin{equation}\label{KL}
    \Pi_{\rm code}\,E_i^\dag\,E_j\,\Pi_{\rm code}=C_{ij}\Pi_{\rm code}
\end{equation}
is satisfied for all $i,j=1,\ldots,m$, where $C_{ij}$ is a Hermitian matrix of complex coefficients.~Note that, if this condition holds for $\mathcal E$, then it also holds for its linear span, which means that all the operation elements $\tilde{E}_k$ above will satisfy it too.~In particular, since $C_{ij}$ is Hermitian, it can be diagonalized: $w^\dag C w=d$, where $d$ is diagonal and $w$ unitary.~Diagonalization results in reshuffled errors $E_i'=\sum_j w_{ij}E_j$, such that the error spaces $\Hil_i'=E_i'(\Hil_{\rm code})$ corresponding to errors $E'_i$ with non-zero entry in $d$ are mutually orthogonal.~States in different error spaces are thus perfectly distinguishable, and the error spaces are all of dimension $\dim\Hil_{\rm code}$ (recall that the $E_i$ are unitaries).~Thus, an error set $\mathcal E$ is correctable if and only if its linear span contains a basis of unitary errors mapping to non-overlapping orthogonal error subspaces of $\Hil_{\rm physical}$.~If one of the $\Hil_i'$ coincides with $\Hil_{\rm code}$ then the corresponding error must be trivial.~We call the rank of $C$ the rank of the error set $\mathcal E$.

\begin{example}[Three-qubit repetition code]
\label{sssec_3qubit}

To illustrate the basics of quantum error correction described above, let us begin by reviewing a simple example of a stabilizer code, namely, a repetition code on three qubits.~This 3-qubit code will be a workhorse for illustrating our results, and so here we will highlight those features that are most relevant to the coming analysis.~Pedagogical overviews are abundant in the literature, e.g., \cite{Gottesman:2009zvw,PreskillNotes}.

A 3-qubit repetition code encodes one logical qubit into three physical qubits.
$\Hil_\mrm{code}$ is thus a two-dimensional subspace of $\Hil_\mrm{physical}$, which we can choose to span with the orthogonal codewords
\begin{equation} \label{eq:3qubitrep}
    \ket{\bar{0}} \equiv \ket{000}\;,\quad\ket{\bar{1}} \equiv \ket{111}.
\end{equation}
As a matter of notation, we will often put a bar over a state in $\Hil_\mrm{code} \subset \Hil_\mrm{physical}$ to denote that it is the encoded image of the corresponding logical state in $\Hil_\mrm{logical}$.
The relationships among the various Hilbert spaces introduced in the previous section are then
\begin{equation}
    \mathbb{C}^2 \simeq \Hil_{\text{logical}} \simeq \Hil_{\text{code}} \subset \Hil_{\text{physical}} \simeq (\mathbb{C}^2)^{\otimes 3},
\end{equation}
and the encoding map for the choice of basis states \eqref{eq:3qubitrep} is thus
\begin{equation}
    N ~ : ~ \ket{i}_\mrm{logical} \mapsto \ket{\bar{i}}, \quad i = 0, 1.
\end{equation}

The 3-qubit code is an example of \emph{a stabilizer code}: another way to characterize $\Hil_\mrm{code}$ is as the joint $+1$ eigenspace of the \emph{stabilizer group}
\begin{equation} \label{eq:3qubitstab}
    \mathcal{G} = \{I, Z_1 Z_2, Z_2 Z_3, Z_1 Z_3\},
\end{equation}
which is generated by any pair of the non-identity stabilizers.
Stabilizer codes have the property that
\begin{equation} \label{eq:Pi_code_stabilizers}
    \Pi_{\rm code}=\frac{1}{|\mathcal{G}|}\prod_{i=1}^{\ln|\mathcal{G}|}\,\left(I+S_i\right)\,,
\end{equation}
where the $\{S_i\}$ are any set of generators of $\mathcal{G}$ and $|\mathcal{G}|$ is its cardinality (or order).
~It is a straightforward exercise to check that this expression indeed gives $\Pi_\mrm{code} = \ketbra{\bar{0}}{\bar{0}} + \ketbra{\bar{1}}{\bar{1}}$.

The 3-qubit code corrects a single erroneous $X_i$, whose location can be determined by measuring a pair of stabilizer generators.
For example, suppose the error $X_1$ occurs on an encoded state $\ket{\bar{\psi}} \in \Hil_\mrm{code}$ and we subsequently measure the generators $Z_1 Z_2$ and $Z_2 Z_3$.
Since
\begin{equation}
\begin{aligned}
    Z_1 Z_2 (X_1 \ket{\bar{\psi}}) &= - X_1 (Z_1 Z_2 \ket{\bar{\psi}}) = - (X_1 \ket{\bar{\psi}}) \\
    Z_2 Z_3 (X_1 \ket{\bar{\psi}}) &=  X_1 (Z_2 Z_3 \ket{\bar{\psi}}) = (X_1 \ket{\bar{\psi}}),
\end{aligned}
\end{equation}
we will measure the eigenvalue $-1$ for $Z_1 Z_2$ and $+1$ for $Z_2 Z_3$.
More generally, each single-qubit $X_i$ will present a unique syndrome (i.e., set of measurement results) for a given pair of stabilizers when measured, and so the location of a single $X_i$ error can be diagnosed and then corrected by deliberately applying $X_i$ again.

The logical operators $X_\mrm{logical}$ and $Z_\mrm{logical}$ have many representations on $\Hil_\mrm{code}$.
For example, $\bar{X} = X_1 X_2 X_3$ and $\bar{Z} = Z_i$ where $i \in \{1, 2, 3\}$ will do the trick.
Note that $\bar{X}$ and $\bar{Z}$ commute with the stabilizers, and so they are ``undetectable'' via syndrome measurement --- as they must be, since they preserve the code subspace!

In the language of the previous section, $\mathcal{E} = \{I, X_1, X_2, X_3\}$ is a set of correctable errors.~Indeed, the Knill-Laflamme condition \eqref{KL} is satisfied with $C_{ij} = \delta_{ij}$.~The error spaces are then $\Hil_i = X_i \Hil_\mrm{code}$ for $i = 1, 2, 3$, and we can build an error channel, for example, by assigning probabilities $p_k$, $k = 0, \dots, 3$, to each member $E_k \in \mathcal{E}$ (with the convention that $E_0 = I$).~Explicitly, we have that
\begin{equation}
    \tilde{\mathcal{E}}(\rho) = \sum_{k=0}^3 p_k E_k \rho E_k^\dagger
\end{equation}
(i.e., $\tilde E_k = \sqrt{p_k} E_k$).~Denoting the projector onto $\Hil_k$ by $P_k$ (with $P_0 \equiv \Pi_\mrm{code}$), we can formally write the recovery operation as
\begin{equation}
    \Oh(\rho) = \sum_{k=0}^3 E_k P_k \rho P_k E_k^\dagger.
\end{equation}
However, operationally, we can correct for errors by following the syndrome measurement and correction protocol described above.

The choice of correctable error set is not unique.
For example, $\{I,Y_1,Y_2,Y_3\}$ is also a set of correctable errors, and so is $\{I, X_1, X_2, X_1 X_2\}$, albeit not one that we usually choose, given that we are traditionally concerned with error sets that consist of all Pauli strings up to a given weight.
$\{X_1, Z_2\}$ is also technically a correctable error set (because $I$ is not contained in the set), but, intuitively, it is somehow not as big as it could be.
It turns out that $\{X_1, Z_2\}$  is not a \emph{maximal} set of correctable errors, which is a notion that we will define carefully in Sec.~\ref{Sec:error/QRFcorrespondence}.
At a formal level, all these error sets satisfy the Knill-Laflamme condition.
At an operational level, it comes down to how one chooses to interpret the syndrome measurement.
For example, the operators $X_3$ and $X_1 X_2$ both present the same syndrome when we measure a pair of stabilizer generators; what we consequently do to recover determines the error set that we correct for.
\eox
\end{example}

\clearpage

\subsection{The perspective-neutral framework for QRFs in a nutshell} \label{ssec_PN}

Let us now provide a similar nutshell review of the PN-framework for QRFs, followed by an illustrative example.

\subsubsection{Structure of gauge theories}
To understand how QRFs add the necessary structure for writing gauge theories in a form analogous to QECCs, we begin with the Hamiltonian formulation of gauge theories \cite{Henneaux:1992ig}. Underlying it are three different types of Hilbert spaces (the primed labels on the left link with those of QEC, e.g.\ (a') is the analog of (a), etc.):
\begin{itemize}
\item[(a')] The \emph{kinematical Hilbert space} $\mathcal H_{\rm kin}$, carrying \emph{all} (gauge-variant and -invariant) degrees of freedom in the problem and a fixed unitary representation  $\mathcal{G}=U(G)$ of the gauge group $G$. For what follows, it will suffice to assume $G$ to be compact.\footnote{This does not work for continuum gauge field theories, where $G$ is a suitable set of $G'$-valued spacetime functions and thus infinite-dimensional. Here, $G'$ denotes the (typically compact) structure group, e.g.\ $\rm{U}(1)$ in QED. The assumption is consistent with gauge theories on finite lattices, however, where $G=(G')^{\times (\# \text{vertices})}$, as well as finite-dimensional gauge systems in mechanics. } 
\item[(b')] The \emph{gauge-invariant Hilbert space} $\mathcal H_{\rm pn}$ of $\mathcal{G}$-invariant 
 states, $\ket{\psi_{\rm pn}}=U^g\ket{\psi_{\rm pn}}$ for all $g\in{G}$.\footnote{In the terminology of constraint quantization, this Hilbert space is often denoted by $\Hil_{\rm phys}$.~To avoid confusion with the terminology of QEC, in this work we shall refer to this Hilbert space and its states as (internal QRF) \emph{perspective-neutral}, given that this is their role in the PN-framework (see below), and hence utilize the subscript notation ``pn'' rather than ``phys''.~As discussed below, in our dictionary $\mathcal H_{\rm pn}$ will be identified with $\Hil_{\rm code}$, while $\mathcal H_{\rm physical}$ of QEC will be identified with $\Hil_{\rm kin}$.}  This space encodes the physically predictive content of the theory and is a subspace of $\Hil_{\rm kin}$ when $G$ is compact.

 \item[(d')] The non-trivially \emph{charged sectors} $\Hil_i$ carrying non-trivial irreducible representations of ${G}$ in the decomposition $\Hil_{\rm kin}=\Hil_{\rm pn}\oplus(\bigoplus_i\Hil_i)$. In Abelian gauge theories, these correspond to electric charges. If no further degrees of freedom are added to build new trivial representations, then the $\Hil_i$ break gauge-invariance. 
 \end{itemize}

What is the gauge theory counterpart of the logical space?~The redundancy-based analogy of gauge theories and QECCs suggests that it should constitute a redundancy-free description of $\Hil_{\rm pn}$.~This is where QRFs according to the PN-framework enter the scene to make this intuition precise.~The basic structure of the PN-framework is depicted in Fig.~\ref{fig:QRF} and is analogous to that of QEC shown in Fig.~\ref{fig:QECC}.

\subsubsection{The perspective-neutral framework for QRFs}\label{sssec_PN}

Within the PN-framework \cite{delaHamette:2021oex,Hoehn:2023ehz,Hoehn:2019fsy,Hoehn:2020epv,Chataignier:2024eil,AliAhmad:2021adn,Hoehn:2021flk,delaHamette:2021piz,Vanrietvelde:2018pgb,Vanrietvelde:2018dit,Giacomini:2021gei,Castro-Ruiz:2019nnl,Suleymanov:2023wio} (see \cite[Sec.~II]{Hoehn:2023ehz} for an introduction), which is built using the constraint structures of gauge theories, 
\begin{quote}
    \emph{a choice of QRF is tantamount to a choice of split of the kinematical degrees of freedom into redundant (the QRF itself) and non-redundant ones (the QRF's kinematical complement)}.
\end{quote}
In general, there is no unique such split and so any gauge theory will admit many distinct internal QRFs. A QRF has to transform non-trivially under $\mathcal{G}$, for otherwise it cannot be redundant; it will thus always be associated with some non-trivial representation of $\mathcal{G}$. 

More precisely, a QRF $R$, where $R$ stands for both ``reference'' and ``redundant'', associated with  ${G}$ is described by some Hilbert space $\mathcal{H}_R$ that carries a non-trivial (possibly projective and not necessarily irreducible) unitary representation $U_R$ of ${G}$. Our aim is to use it as a reference to describe the remaining degrees of freedom, for compactness called $S$ for ``system'',\footnote{For simplicity, we leave the  $R$-dependence of $S$ implicit.} in a gauge-invariant manner. Hence, a choice of QRF subsystem is nothing but a choice of kinematical tensor product structure (TPS) $\Hil_{\rm kin}\simeq\Hil_R\otimes\Hil_S$. For simplicity, we restrict our attention to partitions such that the representation $U(G)$  is of tensor product form, $U^g=U_R^g\otimes U_S^g$, $g\in G$. While this is not necessary, it will suffice for all our considerations below. $\Hil_S$ may itself have a further refined TPS (as will typically be the case in examples below). 

The QRF gives rise to an additional type of Hilbert space --- the counterpart of the logical space (c) in QEC:
\begin{itemize}
\item[(c')] The \emph{internal QRF perspective Hilbert space} $\Hil_{|R}$, corresponding to the  \emph{redundancy-free} description of $S$ relative to $R$.  As such, it is sometimes also called a \emph{reduced Hilbert space}.
\end{itemize}
QRFs equip the gauge-invariant Hilbert space with an additional interpretation and lead to a further type of Hilbert space on the same level as, but dual  to the charged sectors (in a sense to be described later):
\begin{itemize}
\item[(b')] $\Hil_{\rm pn}$ is a \emph{perspective-neutral space} that encodes and links all possible internal QRF perspectives (cf.~(vi')).
\item[(d'')]  The \emph{gauge-fixed spaces} $\hat{\Hil}_g\subset\Hil_{\rm kin}$ obtained by fixing $\Hil_{\rm pn}$ to a specific \emph{orientation} of $R$ (labeled by $g\in{G}$ and explained in detail later). These are isomorphic to $\Hil_{\rm pn}$ and sometimes also referred to as \emph{QRF-aligned Hilbert spaces} \cite{Krumm:2020fws,Hoehn:2021flk}.\footnote{Such aligned descriptions also appear in the distinct perspectival QRF approach in \cite{delaHamette:2020dyi}. This approach is only equivalent to the PN-framework for ideal QRFs (sharp orientation states) \cite{delaHamette:2021oex}. The spaces of \cite{delaHamette:2020dyi} can be viewed as gauge-fixed only in that case.} 
\end{itemize}

\begin{figure}[t!]
    \centering
    \includegraphics[width=0.5\linewidth]{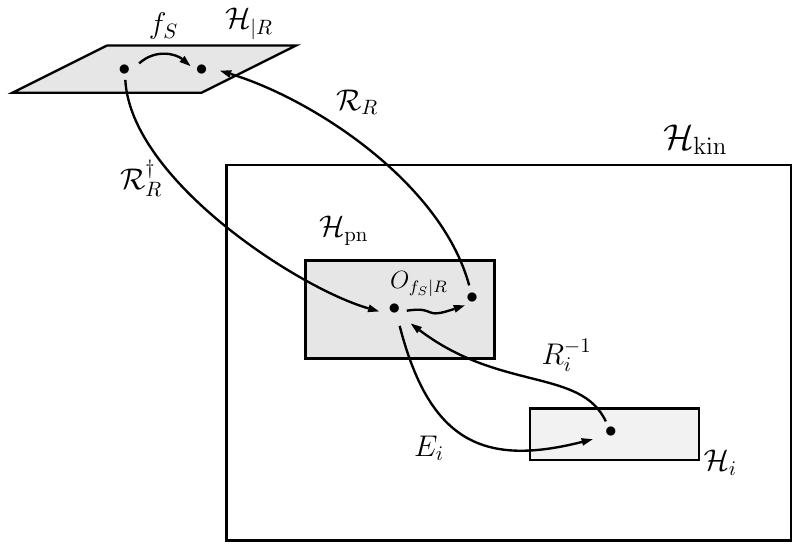}
    \caption{The Hilbert spaces in the perspective-neutral framework for QRFs and the maps among them.} 
    \label{fig:QRF}
\end{figure}

The PN-framework relates these Hilbert spaces through the following maps:
\begin{itemize}
    \item[(0')] The orthogonal projector $\Pi_{\rm pn}:\Hil_{\rm kin}\rightarrow\Hil_{\rm pn}$ onto the perspective-neutral space (zero charge sector), obtained by \emph{coherent group averaging} of states and observables over the gauge orbits:
    \begin{equation}
        \Pi_{\rm pn}=\frac{1}{\rm{Vol}(G)}\int_G \dd{g}\,U^g\,,
    \end{equation}
    where $\rm{Vol}(G)$ and $\dd{g}$ are the volume and Haar measure of $G$. 
   \item[(i')] A core aspect of the PN-framework is to spell out how one can ``jump into the perspective of a QRF $R$''. Two types of unitary \emph{reduction maps} achieve this \cite{Hoehn:2019fsy,Hoehn:2021flk,delaHamette:2021oex}: (1) $\mathcal{R}_R:\Hil_{\rm pn}\to\Hil_{|R}$ by gauge fixing the QRF $R$ into a specific orientation (Page-Wootters reduction); (2) $T_R:\Hil_{\rm pn}\to\ket{1}_R\otimes\Hil_{|R}$ by disentangling the  QRF $R$, where $\ket{1}_R$ is a fixed QRF state (trivialization). (Note that this fixed fiducial QRF state should not be confused with the Pauli $Z$ eigenstate.)  Both maps thus remove redundancy in the description of $\Hil_{\rm pn}$ in terms of \emph{kinematical} data and they will turn out to be dual to one another. 
    Thus, density operators $\rho_{|R}$ describing $S$ relative to $R$ relate to their perspective-neutral counterpart as
    \begin{equation}
    \rho_{\rm pn}=\mathcal{R}_R^\dag\,\rho_{|R}\,\mathcal{R}_R=T_R^\dag\left(
    \ket{1}\!\bra{1}_R\otimes\rho_{|R}\right)T_R\,.
    \end{equation}
    Comparing with the encoding in Eq.~\eqref{encoding}, we can view $\ket{1}_R$ as an ancilla ready state.
   
   \item[(ii')] All gauge-invariant operators $O:\Hil_{\rm pn}\rightarrow\Hil_{\rm pn}$ can be written as \emph{relational observables}, describing an $S$ property $f_S\in\mathcal{B}(\Hil_{|R})$ relative to $R$ by embedding $\mathcal{B}(\Hil_{|R})\rightarrow\mathcal{B}(\Hil_{\rm pn})$ \cite{Hoehn:2019fsy,Hoehn:2021flk,delaHamette:2021oex}
   \begin{equation}
O_{f_S|R}=\mathcal{R}_R^\dag\,f_S\,\mathcal{R}_R=T_R^\dag\left(\ket{1}\!\bra{1}_R\otimes f_S\right)T_R\,.
   \end{equation}
   \item[(iii')] The maps $E_i:\Hil_{\rm pn}\rightarrow\Hil_{i}$ to the non-trivial irreps of ${G}$ have the interpretation of \emph{electric charge excitations} in Abelian gauge theories. In that case, all irreps are isomorphic and the $E_i$ \emph{can} be unitary (unique only up to unitaries in the domain and target space).
   \item[(vi')] There are many QRFs, so we can change the internal perspective with a \emph{QRF transformation} that takes the form of \emph{quantum coordinate transformations}.
For two QRFs $R,R'$ and (i')(1) it reads 
\begin{equation}\label{QRFchange}
  {V}_{R\to R'}:\Hil_{|R}\rightarrow\Hil_{|R'}\,,\qquad\qquad  {V}_{R\to R'}=\mathcal{R}_{R'}\circ\mathcal{R}_R^\dag\,,
\end{equation}
extending to operators in an obvious manner, and taking a similar form for (i')(2). $V_{R\to R'}$ passes through the QRF-perspective-neutral space $\Hil_{\rm pn}$ that thus links the internal perspectives akin to how a manifold links different coordinate frame descriptions. This provides the PN-framework with its name. 
$R,R'$ are two distinct  QRF subsystems, so $\Hil_{\rm kin}\simeq\Hil_R\otimes\Hil_S$ and $\Hil_{\rm kin}\simeq\Hil_{R'}\otimes\Hil_{S'}$.  These partitions are related by a change of TPS of $\Hil_{\rm kin}$, unless $R'\subset S$ (the previous standard case in much of the QRF literature). We have suppressed it  for expositional simplicity in Eq.~\eqref{QRFchange} (see App.~\ref{app_QRFchange} and \cite{TBH} for details).
Expositions of QRF transformations for finite groups can be found in \cite{Hoehn:2021flk,Hoehn:2023ehz,delaHamette:2021oex,Krumm:2020fws,delaHamette:2020dyi}. 
\item[(iii'')] The maps $\hat{E}_g:\Hil_{\rm pn}\rightarrow \hat{\Hil}_g$ gauge fix $R$'s orientation (and are equivalent to (i')(1) above).

\end{itemize}
We have deliberately left out the QRF analogs of error operations (iv') and recovery (v'), as these are typically not considered in this context. We will discuss them in the coming sections, while refining the dictionary, and identify gauge theoretic structure that parallels Eq.~\eqref{qeccrel}. Anticipating a little, we will show that: (iv') a choice of correctable error set $\{E_i\}$ can be reinterpreted as a choice of \emph{QRF splitting} of the Hilbert space (see Sec.~\ref{Sec:error/QRFcorrespondence}); (v') the recovery operation $\mathcal{O}$ can be understood, in gauge-theoretic terms, as a \emph{dressing} operation (see Sec.~\ref{sec:dressing_field}). As a result, all the elements of Fig.~\ref{fig:QECC}, illustrating the general structure of a QECC, have a  counterpart in the QRF reinterpretation of Fig.~\ref{fig:QRF}.

\begin{example}[QRFs in the three-qubit repetition code]
\label{Sec:3qubitcodeasQRF}

To illustrate the PN-framework and to preview the results in the next sections, let us reinterpret the 3-qubit bit flip code in terms of internal QRFs that transform under the stabilizer group $\mathcal{G}=U(G)$.
The basic idea is to identify (a) $\Hil_{\rm physical}$ from QEC with (a') $\Hil_{\rm kin}$  from gauge theories, and to seek a factorization into redundant and non-redundant information, where the latter can be identified with the logical information in the code.
In concrete terms, we therefore seek a factorization $\Hil_{\rm kin} = \Hil_R \otimes \Hil_S$ into frame ($R$) and system ($S$) with $\Hil_S \simeq \Hil_\mrm{code}$.
Furthermore, we seek a representation of $G$ on $\Hil_R$ and an associated basis of configuration states that turn out to define the reductions in (i').

Here, let us look for a factorization $\Hil_{\rm kin} = \Hil_R \otimes \Hil_S$ that is local with respect to the same 3-qubit tensor product structure as $\Hil_{\rm kin}$ (later we will generalize this).~To perform this split, we first note that, as an Abelian group with four elements, each of which squares to the identity, the stabilizer group~\eqref{eq:3qubitstab} is isomorphic to $G=\mathbb{Z}_2\times\mathbb{Z}_2$.~If we require that the frame supports a regular representation of this group, then we must have that $\Hil_R=\ell^2(\mathbb{Z}_2)\otimes\ell^2(\mathbb{Z}_2)\simeq\mathbb{C}^2\otimes\mathbb{C}^2$ is isomorphic to a pair of qubits.~Let us label the group elements as $\{++,+-,-+,--\}$.~We can then define the group configuration basis $\ket{g}_R$, $g\in\mathbb{Z}_2\times\mathbb{Z}_2$, identified with the QRF's possible \emph{orientations}, for example, via pairs of $X$-basis states\footnote{This choice is of course not unique.~For example, the $Y$-eigenbasis $Y\ket{\mathsf{y}} = \ket{\mathsf{y}}$, $Y\ket{\lambda} = -\ket{\lambda}$ with the identification $+ \to \mathsf{y}$, $- \to \lambda$ would work equally well.}
\begin{equation} \label{eq:3qubitOrientations}
    \ket{g}_R\in\big\{\ket{++},\ket{+-},\ket{-+},\ket{--}\big\}\,.
\end{equation}
Clearly, these are orthonormal and it is straightforward to see that a unitary representation of $G$ on $\mathbb{C}^2\otimes\mathbb{C}^2$ is given by
\begin{eqnarray}\label{eq:3qbfUR}
    U_R^{++} &=& I\otimes I\,,\quad\quad U_R^{+-}=I\otimes Z\,,\nonumber\\
    U_R^{-+}&=&Z\otimes I\,,\quad\quad U_R^{--}=Z\otimes Z\,,
\end{eqnarray}
such that we have the covariance property $U_R^g\ket{g'}_R = \ket{gg'}_R$.~Such QRFs with a perfectly distinguishable covariant orientation basis  are called \emph{ideal}.\footnote{In the stabilizer code context of this article, it turns out to suffice to almost exclusively work with ideal frames.~We will, however, encounter non-ideal frames in Sec.~\ref{sec:surfacecodes} and touch on them again in Sec.~\ref{sec:conclusion}.}

Since the QRF has to come with a four-dimensional Hilbert space, $\Hil_S$ must be two-dimensional and hence isomorphic to the logical qubit.~We now seek a unitary representation of $G$ on $\Hil_S\simeq\mathbb{C}^2$.~For dimensional reasons, it cannot be faithful.
However,
\begin{eqnarray}\label{eq:3qbfUS}
    U_S^{++}&=&I_S\,,\quad\quad U_S^{+-}=Z_S\,,\nonumber\\
    U_S^{-+}&=&Z_S\,,\quad\quad U_S^{--}=I_S\label{Srep}
\end{eqnarray}
furnishes a unitary representation of $\mathbb{Z}_2\times\mathbb{Z}_2$ on $\mathbb{C}^2$.
Indeed, $U_S^{g} U_S^{g'} = U_S^{g'}U_S^{g} = U_S^{gg'}$ for all $g, g' \in G$.

In order to now factorize the physical space as $\Hil_{\rm physical}=\mathbb{C}^2\otimes\mathbb{C}^2\otimes\mathbb{C}^2=\Hil_R\otimes\Hil_S$, let us simply choose any two physical qubits to make up the internal QRF and the remaining qubit as the system.
Without loss of generality, suppose we choose $R=12$ and $S=3$.
Then we see that 
\begin{equation} \label{eq:3qubit-Urep}
\begin{array}{lll}
U^{++} = U_R^{++}\otimes U_S^{++} = I_1\otimes I_2\otimes I_3\, , & ~ & U^{+-} = U_R^{+-}\otimes U_S^{+-}=I_1\otimes Z_2\otimes Z_3\, , \\[2mm]
    U^{-+} = U_R^{-+}\otimes U_S^{-+} = Z_1\otimes I_2\otimes Z_3\, , & ~ & U^{--} = U_R^{--}\otimes U_S^{--}=Z_1\otimes Z_2\otimes I_3 \, ,
\end{array}
\end{equation}
which coincides exactly with the stabilizers in~\eqref{eq:3qubitstab}.
Note that this coincidence holds regardless of which pair of physical qubits we choose as the internal QRF.
The stabilizer group can thus be interpreted as the gauge group with which $R$ is associated (in Sec.~\ref{sec_QRF2QEC} we will see that this group can more generally be interpreted as a group of \emph{external} frame transformations \cite[Sec.~II]{Hoehn:2023ehz}).
This implies that the coherent average over the stabilizer group --- which, in error correction terms, is the projector onto the code subspace --- is the standard group averaging, defining the projector onto the perspective-neutral Hilbert space used in the internal QRF setup:
\begin{eqnarray}\label{eq:3qubit:PicodePipn}
    \Pi_{\rm code}=\ket{\bar0}\!\bra{\bar0}+\ket{\bar1}\!\bra{\bar1}=\frac{1}{4}\left(I+Z_1Z_2+Z_2Z_3+Z_1Z_3\right) = \frac{1}{|G|}\sum_{g\in G} U_R^g\otimes U_S^g=\Pi_{\rm pn}\,.
\end{eqnarray}

With a choice of QRF in place, we can write down the mappings that were introduced above which remove the redundancy in the perspective-neutral description.
For finite groups, the Page-Wootters (PW) reduction map in (i')(1) reads $\mathcal{R}_R^g=\sqrt{|G|}(\bra{g}_R\otimes I_S)\Pi_{\rm pn}$ \cite{Hoehn:2021flk,delaHamette:2021oex}, so continuing with the choice $R=12$, and, e.g., choosing $g=e=++$, we have
\begin{equation}
    \mathcal{R}^{++}_{12}=2(\bra{++}_{12}\otimes I_3)\Pi_{\rm pn}=\bra{00}_{12}\otimes\frac{1}{2}(I_3+Z_3)+\bra{11}_{12}\otimes\frac{1}{2}(I_3-Z_3),
\end{equation}
and so
\begin{equation}\label{eq:logicalbasisPWencoding}
    \mathcal{R}^{++}_{12}\ket{\bar 0}=\ket{0}_3\,,\quad\quad \mathcal{R}^{++}_{12}\ket{\bar1}=\ket{1}_3\,.
\end{equation}
In other words, relative to the QRF $R=12$ in the ``identity orientation,'' the logical qubits are encoded in the up and down states of the $Z$-basis on the system qubit $S=3$.
This encoding would change if we made a different choice of orientation $g$, but can maximally involve an application of $Z_3$.
Explicitly, for $\ket{\bar{\psi}} \in \Hil_\mrm{code}$,
\begin{equation}
    \mathcal{R}_{12}^g \ket{\bar{\psi}} = \left\{
    \begin{array}{ll}
        \ket{\psi}_3 & g = ++,-- \\
        Z_3 \ket{\psi}_3 & g = +-, -+ .
    \end{array}
    \right.
\end{equation}
In particular, this defines a \emph{covariant} encoding map $\mathcal{N}_g:\mathcal{B}(\Hil_{3})\rightarrow\mathcal{B}(\Hil_{\rm pn})$, yielding the relational observables encoding $f_3\in\mathcal{B}(\Hil_3)$ relative to $R=12$ by 
\begin{equation}\label{3qubit:covencoding}
\mathcal{N}_g(f_3):=O_{f_3|12}=\left(\mathcal{R}_{12}^g\right)^\dag f_3\,\mathcal{R}_{12}^g =4\,\Pi_{\rm pn}(\ket{g}\!\bra{g}_{12}\otimes f_3)\Pi_{\rm pn}\,.
\end{equation}
Covariance follows because for all $g,g'\in\{++,+-,-+,--\}$,
\begin{equation}
  \mathcal{N}_g\left(U_3^{g'}\,f_3\,U_3^{g'} \right)=\left(I_{12}\otimes U_3^{g'}\right)\mathcal{N}_g(f_3)\left(I_{12}\otimes U_3^{g'}\right) \,.
\end{equation}
The disentangler in (i')(2), on the other hand, reads for finite groups $T_R=\sum_{g\in G}\ket{g}\!\bra{g}_R\otimes \left(U_S^g\right)^\dag$ \cite{Hoehn:2021flk,delaHamette:2021oex}, and so for the 3-qubit code is given by
\begin{equation}\label{3qubit:triv}
    T_{12} = I_{12} \otimes \tfrac{1}{2}(I_3 + Z_3) + X_1 X_2 \otimes \tfrac{1}{2}(I_3 - Z_3)\,.
\end{equation}
For any $\ket{\bar\psi} \in \Hil_\mrm{code}$, we have that $T_{12}\ket{\bar \psi} = \ket{00}_{12} \otimes \ket{\psi}_3$.

Page-Wootters reductions and trivialization act in a similar way on encoded states by pushing the logical information onto the system $S$.~However, their action on the QRF is \emph{dual} (which is best seen via the next paragraph): identifying $\ket{00}_{12}\equiv\ket{1}_R$ as the ``ready state'' of the QRF, we have that $\braket{g}{1}_R=\pm\frac{1}{2}$ for all $g\in\mathbb{Z}_2\times\mathbb{Z}_2$;~the $X$- and $Z$-basis of $R$ are mutually unbiased.~We will later see that the two are related by a group Fourier transform and more generally by what is known as Pontryagin duality (cf.~App.~\ref{app:Pontryagin}) that will play a central role in our work.~Comparing with (i) from the QEC side, we can view the two maps as dual decodings.~There is a further difference between the two: while $\mathcal{R}_R$ is only unitary on $\Hil_{\rm pn}$, $T_R$ is unitary on \emph{all} of $\Hil_\mrm{kin}$, and so it retains information about any errors that may have occurred.~This will prove useful for error correction (cf.\ Eq.~\eqref{eq:3qubitTerror}).

Finally, the ``electric charge excitations'' are  $\mathcal{E}=\{I,X_1,X_2,X_3\}$ as already described in Sec.~\ref{sssec_3qubit}.~On the other hand, the gauge fixing maps $\hat{E}_g$ turn out to be equivalent to fixing the QRF's orientation to $g\in \{++,+-,-+,--\}$ via
\begin{equation}
    \hat{P}_g:=\ket{g}\!\bra{g}_{12}\otimes I_3\,,
\end{equation}
and thereby map $\hat{E}_g:\Hil_{\rm pn}\rightarrow\hat{\Hil}_g=\ket{g}_{12}\otimes(\mathbb{C}^2)_{3}$.~Using the covariance of the orientation states and invariance of $\Hil_{\rm pn}$, it is easy to see that $U^g(\hat{\Hil}_{g'})=\hat{\Hil}_{gg'}$.~Clearly, these maps are equivalent to PW-reduction; however, their image now resides in $\Hil_{\rm kin}$.~As such, they can also be seen as decodings with $\ket{g}_R$ now the QRF ``ready state'' (dual to the ``ready state'' $\ket{1}_R$ from the trivialization above).~In Sec.~\ref{ssec_dualerrors} we will see how $\hat{E}_g$ can be replaced by a unitary on $\Hil_{\rm kin}$, giving rise to an error set that is \emph{dual} to $\mathcal{E}$.
\eox
\end{example}

\subsection{Interlude: how to think of errors in gauge theory and QRF setups}\label{ssec_interlude}

Perhaps the most evident identification is that of the code space of QEC with the gauge-invariant/perspective-neutral Hilbert space of gauge theories and QRFs.~Before proceeding with a brief summary of the QECC/QRF dictionary, let us pause to ask a key question concerning the interpretation of this correspondence.~Namely, what is the meaning of errors in a gauge theory, where operations taking us out of the space of gauge-invariant states are not usually considered physical?~In fact, there are at least five circumstances under which it \emph{does} make sense to consider “errors” in a gauge theory or a QRF setup:
\begin{itemize}
    \item They are literally errors in a quantum \emph{simulation} of a gauge theory.
    \item The gauge symmetry is \emph{emergent} (as in the toric code) and errors are simply excitations beyond the symmetric sector (e.g.\ electric or magnetic excitations in Abelian theories), thereby breaking a non-fundamental gauge invariance. This case will be discussed in the surface codes of Sec.~\ref{sec:surfacecodes}. 
    \item The errors do not break gauge invariance, but excite a previously “frozen” field species, thereby changing the shape of the gauge constraints. For example, the errors could be excitations beyond the matter vacuum in quantum electrodynamics, thus changing the Gauss law from its vacuum form to the version containing a non-trivial charge density. This would amount to an electric charge excitation and the ensuing non-invariance of the gauge field sector can be compensated by matter degrees of freedom such as electrons. The ``error'' would thereby map from one physical subsector of the theory to another. Such an observation has also been made in \cite{Masazumi1,Masazumi2}. 
    \item As we will explain in Sec.~\ref{sec_QRF2QEC}, in operational QRF scenarios, the group $G$ need not literally be a gauge group, but can be considered a group of external frame transformations. A standard scenario involves two agents who do not share an external frame and thus aim to restrict their quantum communication to external-frame-independent information \cite{Bartlett:2006tzx}. In this case, errors such as the $\{E_i\}$ discussed in (iii, iii') can be viewed as reintroducing external frame information into the communicated quantum information, thereby corrupting it.
    \item Gauge fixings provide another way of mapping states out of the space of gauge-invariant states without irrevocably losing physical information. As we will explain in Sec.~\ref{sec_errorduality}, these will indeed take the form of correctable errors obeying the KL condition and give rise to a novel error class that is dual to standard ones in QECCs, akin to electromagnetic duality (with the essential difference being that the dual charges are complementary). 
\end{itemize}
There may well be other circumstances under which gauge theory/QRF analogs of errors have a physical meaning.~Regardless, the points here already motivate us to consider errors, syndrome measurements and recovery operations on both sides of the QECC/QRF correspondence.

\subsection{The dictionary in a nutshell}

The preceding nutshell reviews of QEC and the PN-framework for QRFs were meant to provide a flavor of the QECC/QRF correspondence summarized  in Table~\ref{tab_dictionary}. They do not yet suffice to understand the full scope of the dictionary presented in that table, for which the rest of this paper will be required. In particular, it will be necessary to consider a specific QECC and corresponding QRF class to appreciate its detailed facets. However, the reviews already support the general identification of the structures (a--d) and (0, i--iii) of QEC with (a'--d') and (0', i'--iii'), respectively, from the PN-framework, providing part of the content of Table~\ref{tab_dictionary}. 

We did not yet discuss the QRF/gauge theory counterpart of error operations (iv), as well as syndrome measurements  and recovery operations (v), as these are not usually considered in that context. However, the discussion in Sec.~\ref{ssec_interlude} implies that one can equally well consider those in a QRF/gauge theory setting, where they describe charge excitation operations (iv') and charge measurement and de-excitation operations (v'). Similarly, they may describe operations reinserting external frame information (iv'), as well as detection of that information and recovery of the original encoding (v').

Conversely, we did not yet discuss the QEC counterpart of QRF changes (vi').
This corresponds to a feature that is not usually emphasized in the literature on QEC; namely, that for any code space there exist many decodings (and thus also encodings), each of which splits off a distinct set of (possibly virtual \cite{Zanardi:2001zz,Zanardi:2004zz}) qudit subsystems in some ready state. This corresponds to different ways of splitting off redundancy. Let $\mathcal{N},\mathcal{N}'$ be two encodings. A \emph{change of decoding} (vi) is given by
\begin{equation}\label{enchange}
\mathcal{D}'\circ\mathcal{N}=\mathcal{N}'^\dag\circ\mathcal{N}\,.
\end{equation}
In order for $\mathcal{N},\mathcal{N}'$ to be genuinely distinct encodings, $\mathcal{D}'\circ\mathcal{N}$ must be a non-local unitary conjugation, for otherwise the encodings will differ only by a local change of basis for the physical subsystems of the register. Similarly, QRF transformations are non-local unitaries \cite{Hoehn:2023ehz,AliAhmad:2021adn,delaHamette:2021oex,Hoehn:2019fsy,Giacomini:2017zju}.

In the remainder of this article, we shall explain in detail also the rest of the dictionary in Table~\ref{tab_dictionary}, which necessitates more structure. The precise form of those further ingredients in the dictionary relies on a restriction to stabilizer QECCs and QRFs associated with finite Abelian groups, to which we now turn, though we anticipate a generalization of those results in an appropriate form also for more general groups and codes. 

This will lead to two core results:~in Sec.~\ref{sec_genstab} we will prove that every correctable set of standard Pauli errors that is maximal in a certain sense generates an ideal QRF with special properties and that, conversely, such a QRF is associated with a unique equivalence class of maximal correctable error sets.~This will explain the translation between the choice of correctable error set and choice of QRF $R$ in Table~\ref{tab_dictionary} and lead to a reinterpretation of the KL condition with a QRF perspective. 

Furthermore, in Sec.~\ref{sec_errorduality}, we will describe a novel duality between standard Pauli errors and ``gauge fixing errors'' as briefly alluded to above. The duality will manifest in every property of the errors, syndrome measurements, and recovery operations. This duality is based on so-called Pontryagin duality (cf.~App.~\ref{app:Pontryagin}) and is akin to electromagnetic duality, except that in our case the ``electric'' and ``magnetic'' errors/charge excitations do not commute. It invokes representations of the Pontryagin dual group $\hat{G}$ of the stabilizer group $G$ that turn out to be dual in a specific sense to the stabilizer representation. This error duality can be viewed as a discrete form of the Fourier transform duality between position and momentum variables in quantum theory; indeed, position and momentum translations are each other's Pontryagin dual (cf.~the discussion around Eq.~\eqref{eq:Weyl}).

As a point of semantics, we will henceforth use the dictionary to allow us to use QEC and QRF notation for mathematical objects interchangeably.

\section{From QRFs to QECCs: finite Abelian groups}\label{sec_QRF2QEC}

In this section, we will start with QRFs according to the PN-framework and show how this formally admits the structure of a QECC. This will provide the nutshell synopsis in Sec.~\ref{sssec_PN} with more context and detail. In the next section, we will do the converse, starting with stabilizer QECCs and showing how each constitutes an instance of the PN-framework with many possible choices of QRFs inside. 

It is worthwhile to emphasize that the PN-framework is of broader applicability than gauge theories and gravity.~Specifically, the framework is agnostic as to whether the symmetry group $G$ is an actual gauge symmetry or corresponds to some operational restriction.~For example, the framework encompasses also the quantum simulation of gauge theories, alluded to in the introduction, where the actual physical system in the laboratory is not a gauge system, but one restricts its set of physically meaningful states to a subset that mimics the gauge-invariant states of, say, some lattice gauge theory.~This highlights that the QECC/QRF correspondence spelled out in this article is not restricted to pure QEC and gauge theory settings, but has a larger applicability in other operationally relevant situations.

\subsection{External vs.\ internal frames}\label{sec:extvsintframe}

The starting point of the PN-framework is the distinction between an \emph{external} frame (not part of the quantum system) and \emph{internal} QRFs --- which are subsystems of the total quantum system subjected to some symmetry principle --- and the states and observables that they can each distinguish.\footnote{See \cite[Sec.~II]{Hoehn:2023ehz} for an introduction to this philosophy and \cite{delaHamette:2021oex} for further details.} Invoking the language from Sec.~\ref{sssec_PN}, the total composite quantum system would be $RS$, where $R$ is one internal QRF and $S$ its complement (which may encompass many more internal QRFs). The joint description of $RS$ is the one relative to the external frame, as is this subsystem decomposition. Thus, we have a reinterpretation:
\begin{itemize}
    \item[(a')] The \emph{kinematical Hilbert space} $\Hil_{\rm kin}\simeq\Hil_R\otimes\Hil_S$ is the \emph{space of externally distinguishable states}, i.e.\ states that are distinguishable relative to the external frame. Similarly, the subsystem partition between $R$ and $S$ (and corresponding TPS) is defined relative to the external frame.
\end{itemize}
The choice of internal $R$ frame thereby becomes tantamount to a choice of kinematical TPS.

The aim of the program is to develop a purely internal description, i.e.\ to describe the composite quantum system from within, relative to an internal QRF subsystem such as $R$. The group ${G}$ capturing the symmetries is identified with external frame transformations. Since we intend to formulate a purely internal description, we aim to wipe out any external frame information, thereby constructing a standalone internal quantum theory from the original external description. Thus, we shall treat ${G}$ as gauge and refer to the external frame transformations simply as gauge transformations; gauge-invariance thereby becomes another term for external frame-independence. For the framework and its applicability, it does not matter whether the external frame and its transformations are physically meaningful or entirely fictitious. They would be meaningful in certain laboratory situations including quantum simulations of gauge theories and quantum communication between two agents not sharing an external frame, where $G$ encodes transformations between different external laboratory frames. By contrast, they may be fictitious as in gauge theories and gravity, where ${G}$ may be the actual gauge group of the theory (e.g., ${\rm{SU}}(N)$-valued spacetime functions in Yang-Mills theory, or compactly supported diffeomorphisms in gravity).

To proceed with this aim, we need to introduce more structure. As we mentioned before, in this framework, 
\begin{quote}
    \emph{a choice of QRF is tantamount to a choice of split of the externally distinguishable degrees of freedom into redundant (the QRF itself) and non-redundant ones (the QRF's complement)}.
\end{quote}
To this end, $\Hil_R$ needs to be ``big enough'' to absorb all $G$-induced gauge redundancy and serve as a reference for describing its complement $S$ in a gauge-invariant manner. In particular, $\Hil_R$ must carry a non-trivial unitary representation $U_R$ of $G$; this representation may be projective and need not be irreducible. This means that the representation obeys\footnote{In much of the previous literature on QRFs, $U_R$ is, for simplicity, assumed to be a non-projective unitary representation, so $c(g,g')=1$ for all $g,g'\in G$ (periodic clocks constitute an exception \cite{Chataignier:2024eil}). In the next section, we will see that it will be necessary to slightly extend to projective representations in this work.}
\begin{equation}
    U^g_R\,U^{h}_R=c(g,h)\,U^{gh}_R\,,\quad\forall\,g,h\in G\,,
\end{equation}
where $|c(g,h)|=1$ is a phase term that differs from $1$ whenever $U_R$ is projective. As $U_R^e=I_R$, we have the condition that $c(g,e)=c(e,g)=1$, for all $g\in G$. In App.~\ref{app_projrep}, we exhibit further constraints on the phases $c$, which constitute a cocycle, and we show that we have the freedom to choose it such that:
\begin{equation}\label{eq:cocycle_gauge_restriction}
    \forall g \in G\,, \qquad c(g, g^{-1})=1\,,
\end{equation}
implying
\begin{equation}\label{eq:inverse_of_UR}
    (U_R^g)^\dagger = (U_R^g)^{-1} = U_R^{(g^{-1})}\,. 
\end{equation}
Furthermore, such a cocycle also obeys the useful identities 
\begin{equation}\label{eq:conjugate_of_cocycle}
   \forall\,g,h\in G\,,\qquad c^{*}(g,h) = c(h^{-1} , g^{-1} ) \,.
\end{equation}
We will assume these choices have been made from now on.\footnote{In Sec.~\ref{Sec:genstabilizerQRF}, the $U_R^g$ will be strings of Pauli operators, and so by construction they already satisfy $c(g,g^{-1}) = 1$, since strings of Pauli operators square to the identity.}

As we already mentioned, we will for simplicity restrict to unitary tensor product representations on $\Hil_{\rm kin}$, such that $U(G)=\mathcal{G}$ takes the form 
\begin{equation}
   U^g=U^g_R\otimes U^g_S,\qquad\qquad\forall\,g\in G\,,
\end{equation}
where $U^g$ is unitary (and not projective) and $U_S$ defines a unitary representation on $\Hil_S$. Given that $U^g$ is a non-projective representation, this means that $U_S$ must be projective whenever $U_R$ is:
\begin{equation} \label{eq:USgUSh}
    U^g_S\,U^{h}_S=c^*(g,h)\,U_S^{gh}\,,\quad\forall\,g,h\in G\,.
\end{equation} 
Otherwise, we do not impose further restrictions on $U_S$ (in particular, it may be trivial, as we will see in interesting cases in Sec.~\ref{sec_genstab}). 
This will suffice for the purposes of this article (and going beyond it is somewhat non-trivial).

As an intermediate step toward an internal description, we first build the state space that is independent of the external frame. It is thus the gauge-invariant Hilbert space:
\begin{itemize}
    \item[(b')] The perspective-neutral Hilbert space $\Hil_{\rm pn}$ of $\mathcal{G}$-invariant states is the \emph{space of internally distinguishable states}, i.e.\ states that are distinguishable relative to an internal QRF. By being $\mathcal{G}$-invariant, this space is external-frame-independent. It is further neutral with respect to the choice of internal frame.
\end{itemize}
The construction of this space is simple in our case of interest.

As we are interested in the connection between Pauli stabilizer codes and QRFs, we shall henceforth restrict our exposition to $G$ being a finite Abelian group, as stabilizer groups for $[[n,k]]$ Pauli stabilizer codes are always of the form $G=\mathbb{Z}_2^{\times(n-k)}$. The following discussion of the PN-framework for finite Abelian groups is based on \cite{Hoehn:2021flk,Hoehn:2023ehz}.\footnote{See also \cite{delaHamette:2020dyi,Krumm:2020fws} for discussions of QRFs for finite groups in a different approach.} A description of the PN-framework for general groups can be found in \cite{delaHamette:2021oex} (see also \cite{Hoehn:2019fsy,Hoehn:2020epv,AliAhmad:2021adn,Vanrietvelde:2018dit,Vanrietvelde:2018pgb}).

The perspective-neutral space can be easily constructed by coherently averaging over the gauge orbits, i.e.\
\begin{equation}
    \Pi_{\rm pn}:\Hil_{\rm kin}\rightarrow\Hil_{\rm pn}\,\qquad\qquad \Pi_{\rm pn}:=\frac{1}{|G|}\sum_g\,U^g_R\otimes U^g_S\,.
\end{equation}
Here, $|G|$ denotes the cardinality of the group. $\Pi_{\rm pn}$ constitutes an orthogonal projector onto the trivial representation of $G$ in $\Hil_{\rm kin}$, cf.~App.~\ref{app:stabilizer_reps}. We can interpret $\Hil_{\rm pn}$ as the code space on which the internal relational physics is encoded. The relational data thereby assumes the role of the logical data. 

\subsection{The QRF's orientations}\label{ssec_orientations}

To make this more precise, we need to build internal relational descriptions that define the encoding and decoding of the relational data. To this end, we need distinguished variables of $R$ that we can relate $S$'s properties to. For example, if $R$ was a clock, we would need to define what its readings are. We will call the QRF's configurations as measured by these reference variables its \emph{orientations}. These need to cover a large enough parameter space to deparametrize the gauge orbits. To define them, we construct a coherent state system 
\begin{equation}\label{cohstate}
\{U_R,\ket{g}_R\}\,,\qquad\text{such that}\qquad U_R^g\ket{h}_R=c(g,h)\ket{gh}_R\,, \forall\,g,h\in{G}\,,
\end{equation}
comprising the orientation states of the frame \cite{delaHamette:2021oex}. Frame orientations are thus group-valued and not gauge-invariant; indeed, tetrads in special relativity are valued in the Lorentz group, the position variable of a particle takes value in the translation group, etc. There are many distinct possible orientation state systems. A natural way to construct one is to select a seed state $\ket{e}_R$ representing the neutral element $e$ in the Hilbert space of $R$, and then define $\ket{g}_R := U_R(g)\ket{e}_R$ for any $g \in G$ (recall that $c(g,e)=1$, so this construction is indeed consistent with \eqref{cohstate}). Different choices of seed state $\ket{e}_R$ will lead to different orientation state systems.\footnote{Beyond the choice of seed state, equivalence among cocycles gives further freedom in picking a set of orientation states. Namely, we can always redefine the cocycle via $c(g,h)\to c(g,h) f(gh)f(g)^{-1}f(h)^{-1}$, for any function $f:G \to U(1)$. In practice, in Sec.~\ref{sec_genstab}, projective instances of a representation $\{U_R^g\}$ will always be ``given'', in the sense that they are constructed from a QECC's list of stabilizers, and so we will not exploit this freedom.} In this work, we will be mostly interested in \emph{complete QRFs} (see \cite{delaHamette:2021oex} for other types), which are modeled by systems $\{U_R,\ket{g}_R\}$ on which $U_R$ defines a regular action\footnote{Note that the representation $U_R$ need not be regular in order for its action on the coherent state system to be regular. For example, the action can still be regular when the coherent states comprise an overcomplete basis \cite{delaHamette:2021oex}.} and which form a basis of $\Hil_R$. $R$'s orientations are then in one-to-one correspondence with group elements and can be used to parametrize the $\mathcal{G}$ orbits in $\Hil_{\rm kin}$. 

A QRF is called \emph{ideal} when the chosen orientation states are perfectly distinguishable,
\begin{equation} 
\braket{g}{h}_R=\delta_{g,h}\,\qquad\forall\,g,h\in G\,, \end{equation}
with $\delta_{g,h}$ the Kronecker delta  on ${G}$. Otherwise, we call it non-ideal. Idealness is a property of the representation \emph{and} chosen seed state, so that $R$ may have sharp, as well as fuzzy orientation state systems.\footnote{For instance, if $\ket{g}_R$ is an ideal QRF orientation system, then smearing it with a group Gaussian can produce a fuzzy version of it and still give rise to a covariant system, $U^g_R\,\ket{g'}_R=\ket{gg'}_R$.} 

A QRF orientation observable is given by the covariant positive operator-valued measure (POVM) defined by the orientation states, which equips orientations with a probability distribution \cite{delaHamette:2021oex,Hoehn:2019fsy,Chataignier:2024eil}.\footnote{One can consider more general covariant POVMs to model QRF orientations \cite{Carette:2023wpz} but this will not be relevant here.} In the finite group context of stabilizer codes below, these are simply given by $\{\ket{g}\!\bra{g}_R\}_{g\in G}$ and constitute a resolution of the identity
\begin{equation}
    \sum_{g\in G}\ket{g}\!\bra{g}_R=I_R\,.
\end{equation}

\subsection{Internal QRF perspectives as decodings}

We may now exploit the QRF's orientation states to ``jump into $R$'s internal perspective''. There exist two unitarily equivalent ways of doing so; both remove/isolate redundant information (that of the QRF) and give rise to a \emph{relational Schr\"odinger and Heisenberg picture}, respectively. We may think of them as two distinct decodings or (equivalently) encodings of internal descriptions.
    \begin{enumerate}
        \item The first is known as \emph{Page-Wootters reduction} and conditions on $R$'s orientation, which is nothing but a gauge fixing.\footnote{It is a generalization of the Page-Wootters construction for relational quantum dynamics \cite{Page:1983uc} to general groups \cite{delaHamette:2021oex}.} As such, it is unitary (on $\Hil_{\rm pn}$) and for finite groups it reads \cite{Hoehn:2021flk,Hoehn:2023ehz}:\footnote{These references discuss proper unitary representations of $G$ on $\mathcal{H}_R$. However, it can be easily checked that $\mathcal{R}_R^g$ remains unitary also for projective unitary representations, in particular that $\mathcal{R}^g_R\left(\mathcal{R}^g_R\right)^\dag=\Pi_{|R}$ and $\left(\mathcal{R}_R^g\right)^\dag\mathcal{R}_R^g=\Pi_{\rm phys}$.}
\begin{equation}\label{PWred}
        \mathcal{R}_R^g:\Hil_{\rm pn}\rightarrow\Hil_{|R}\,,\qquad \mathcal{R}_R^g:=\sqrt{|{G}|}\left(\bra{g}_R\otimes I_S\right)\Pi_{\rm pn} \,,
        \end{equation}
where the internal $R$-perspective Hilbert space is 
\begin{equation}
        \Hil_{|R}=\Pi_{|R}\left(\Hil_S\right)\,,\qquad\qquad \Pi_{|R}:=\mathcal{R}^g_R\,\Pi_{\rm pn}\left(\mathcal{R}^g_R\right)^\dag=\sum_{g'\in G} c(g',g)c^*(g,g')\,\braket{e}{g'}_R\, U_S^{g'}\,.
    \end{equation}
    Here, $\Pi_{|R}$ is an orthogonal projector onto the subspace of $S$ consistent with the gauge-invariant states $\ket{\psi}_{\rm pn}$ \cite{delaHamette:2021oex}; for ideal QRFs one has $\Pi_{|R}=I_S$.
Its image is interpreted as describing the states of $S$ relative to $R$. We note that the cocycle term  in $\Pi_{|R}$ obeys 
\begin{equation}\label{eq:cocycle1}c(g',g)c^*(g,g')=1
\end{equation}
if and only if the projective representation of $G$ is Abelian, i.e.\ $U^g_RU^{g'}_R=U_R^{g'}U_R^g$, in which case the form of the projector simplifies (becoming trivial for ideal QRFs).

We denote by $\ket{\psi(g)}_{|R}:=\mathcal{R}_R^g\ket{\psi}_{\rm pn}$ the reduced state associated to $\ket{\psi}_{\rm pn}$ when the frame is in orientation $g\in G$.\footnote{We draw the reader's attention to the slight abuse of notation in equation \eqref{PWred2}: $\ket{\psi(g)}_{|R}$ simply denotes a state that depends on $\ket{\psi}_{\rm pn}$ and $g$; there is no function $\psi$ on the group $G$ that one can evaluate to obtain a state label $\psi(g)$, despite what the notation may suggest. Such notation is common in the literature on the Page-Wooters formalism, so we also adopt it here.} These reduced states obey the covariance 
\begin{equation}
\ket{\psi(g)}_{|R}=U_S^g\,\ket{\psi(e)}_{|R}\,,\label{PWred2}
\end{equation}
which follows from the invariance of $\ket{\psi}_{\rm pn}$ and the covariance of the QRF orientation states in Eq.~\eqref{cohstate}; this is the origin of the name \emph{relational Schr\"odinger picture} (e.g.\ when $G$ is the time translation group  this yields the unitary time evolution of system states \cite{Page:1983uc,Hoehn:2019fsy,Hoehn:2020epv,Chataignier:2024eil}). 

Clearly, $(\mathcal{R}_R^g)^\dag(\bullet)\mathcal{R}_R^g:\cB(\Hil_{|R})\rightarrow\cB(\Hil_{\rm pn})$ and this map embeds $S$-observables as QRF-relational observables into the algebra on the perspective-neutral space \cite{Hoehn:2019fsy,Hoehn:2021flk,delaHamette:2021oex}:
\begin{equation}\label{relobs}
    O_{f_S|R}=\left(\mathcal{R}_R^g\right)^\dag\,f_S\,\mathcal{R}_R^g=|G|\,\Pi_{\rm pn}\left(\ket{g}\!\bra{g}_R\otimes f_S\right)\,\Pi_{\rm pn}=:\mathcal{N}_g(f_S)\,.
\end{equation}
This measures the system observable $S$, conditional on the QRF $R$ being in orientation $g$. In fact, whenever Eq.~\eqref{eq:cocycle1} holds, this relational observable description constitutes a \emph{covariant} encoding of the internal data into the perspective-neutral space as 
\begin{equation}
  \mathcal{N}_g\left(U_S^{g'}\,f_S\,\left(U_S^{g'}\right)^\dag \right)=\left(I_{R}\otimes U_S^{g'}\right)\mathcal{N}_g(f_S)\left(I_{R}\otimes U_S^{g'}\right)^\dag \,.
\end{equation}
The relational observables thus faithfully encode the internal description of $S$ relative to $R$ on the ``code space'' $\Hil_{\rm pn}$. Thus, they assume the role of encoded logical operators. This is specifically true for proper projective representations. When Eq.~\eqref{eq:cocycle1} does not hold, $I_R\otimes U_S^g$ does not commute with the gauge transformations $U_R^{g'}\otimes U_S^{g'}$ and therefore does not correspond to a physical operation. In that case, the notion of covariance has to be adapted. As this will not be relevant in the remainder, we shall not pursue this here.

   \item The second is based on the \emph{trivialization} or \emph{QRF-disentangler}, which reads
    \begin{eqnarray}
      && {T}_R:\Hil_{\rm pn}\rightarrow \ket{1}_R\otimes \Hil_{|R}\,,\qquad \qquad{T}_R=\sum_{g\in G}\ket{g}\!\bra{g}_R\otimes (U^g_S)^\dag\label{triv}
      \end{eqnarray}
      and acts as 
      \begin{eqnarray}
          \ket{1}_R\otimes\ket{\psi(e)}_{|R}={T}_R\ket{\psi}_{\rm pn}\,,\qquad\qquad\ket{1}_R:=\frac{1}{\sqrt{|{G}|}}\sum_{g\in G}\ket{g}_R\, \label{eq:R_ready}
    \end{eqnarray}
   on perspective-neutral states.~The latter equation explains why it is called a disentangler: $\ket{\psi}_{\rm pn}$ is entangled with respect to $\Hil_{\rm kin}\simeq\Hil_R\otimes\Hil_S$ when $U_S$ is non-trivial, so ${T}_R$ disentangles the selected redundant QRF subsystem $R$ from its complement $S$.

   In the remainder of our discussion of the disentangler, let us now assume that the QRF is \emph{ideal}, in which case a few additional properties hold. First, in this case, $T_R$ is unitary on \emph{all} of $\mathcal{H}_{\rm kin}$. Second, $T_R$ trivializes gauge transformations to exclusively act on the selected redundant subsystem $R$ via a new representation, $V_R^g$:
\begin{equation}\label{eq:trivialization_projective_sharp}
 T_R (U_R^g \otimes U_S^g ) T_R^\dagger \underset{\text{ideal QRF}}{=} V_R^g \otimes I_S \,, \qquad \mathrm{with} \quad V_R^g := U_R^g \sum_{h \in G}  c^*(g,h) \ket{h}\!\bra{h}_R=\sum_{g\in G}\ket{gh}\!\bra{h}_R \,.
\end{equation}
We can easily check that $V_R^g$ is unitary for any $g\in G$. Furthermore, for any $g,h \in G$, we find that
\begin{align}\label{eq:rep_VR}
    V_R^g \ket{h}_R &= U_R^g \sum_{k \in G} c^*(g,k) \ket{k}\!\bra{k}_R \ket{h}_R \underset{\text{ideal QRF}}{=} U_R^g \, c^*(g,h) \ket{h}_R = c(g,h)c^*(g,h) \ket{gh}_R \nonumber \\
    &= \ket{gh}_R\,.
\end{align}
Hence $V_R: g \mapsto V_R^g$ defines a regular representation of $G$; in particular, it is a non-projective unitary representation, even when $U_R$ is projective. This is not surprising, given that $T_R$ is unitary and that $U_R\otimes U_S$ furnishes a proper unitary representation of $G$.
    
     The trivialization is often followed by a Page-Wootters reduction to remove the redundant ``ready state'' $\ket{1}_R$, resulting in a \emph{relational Heisenberg picture} on $\Hil_{|R}$, in which observables, not states, transform unitarily under $\mathcal{G}$ \cite{Hoehn:2021flk,delaHamette:2021oex,Hoehn:2019fsy,Chataignier:2024eil}; this will however not be relevant here.
     
     We can interpret $\ket{1}_R$ as the QRF ``ancilla ready state'' absorbing the redundancy, and so similarly to the encoding Eq.~\eqref{encoding}, we have (for ideal QRFs)
\begin{equation}\label{eq:observables_triv}
        O_{f_S|R}=T_R^\dag\left(\ket{1}\!\bra{1}_R\otimes f_S\right)T_R=|G|\,\Pi_{\rm pn}\left(\ket{e}\!\bra{e}_R\otimes f_S\right)\,\Pi_{\rm pn}\,.
    \end{equation}
   When $U_S$ is non-trivial, this is a non-local unitary, spreading the information about $f_S$ across $R$ and $S$. We may thus legitimately view the QRF-disentangler as an en- and de-coding of internal relational data, in line with the interpretation that the latter assumes the role of logical data in QEC. As with the Page-Wootters maps, it constitutes a covariant encoding.

\end{enumerate}

Both reduction maps remove the redundant information of $R$, yielding the redundancy-free description of $S$ on the Hilbert space $\Hil_{|R}$, which assumes the role of the logical Hilbert space. The logical information is the description of $S$ relative to $R$. 

\subsection{Changing internal QRF perspective}

Removing redundancy from the description of $\Hil_{\rm pn}$ in terms of kinematical variables is not unique. Indeed, the redundancy has now moved into there existing \emph{many} redundancy-free descriptions and reduced Hilbert spaces. 
These correspond to different internal QRF choices and their perspectives on distinct complementary systems $S$.
We can change the internal perspective with a \emph{QRF transformation} and these take the form of \emph{quantum coordinate transformations}. Suppose $R$ and $R'$ are two distinct choices of QRF subsystem, so $\Hil_{\rm kin}\simeq\Hil_R\otimes\Hil_S$ and $\Hil_{\rm kin}\simeq\Hil_{R'}\otimes\Hil_{S'}$. If $R'$ is a subsystem of $S$ (previous standard case in much of the QRF literature), then the two TPSs are equivalent. Otherwise, the change of internal QRF subsystem also implies a change of TPS of $\Hil_{\rm kin}$ relative to the external frame.

    We described two ways of ``jumping into a QRF perspective'' and so will also summarize the corresponding changes of QRF. For expositional simplicity leaving the change of TPS on $\Hil_{\rm kin}$ implicit here (see App.~\ref{app_QRFchange} and \cite{TBH} for an explicit discussion of including the kinematical TPS change in the QRF transformation), we have:
    \begin{enumerate}
        \item In the relational Schr\"odinger picture the QRF transformations are given by
        \begin{equation}\label{eq:QRFtransf}
            V_{R\to R'}=\mathcal{R}_{R'}^{g'}\circ \left(\mathcal{R}_R^g\right)^\dag\,,
        \end{equation}
        where the dependence on the frame orientations $g,g'$ on the left-hand side has been omitted to ease the notation. The QRF transformation \eqref{eq:QRFtransf} takes an explicit form in the usual standard case $S=R'\otimes\tilde{S}$, e.g.\ see \cite{delaHamette:2021oex,Hoehn:2021flk,Hoehn:2019fsy}. It is a non-local unitary when $U_S$ is non-trivial and a change of TPS on $\Hil_{\rm pn}$ \cite{Hoehn:2023ehz}.
        \item The change of trivialization/disentangler similarly reads
        $
            T_{R'}\circ  T_R^\dag $, which again is a non-local unitary (unless $U_S$ is trivial).
    \end{enumerate}
Both of these QRF change maps can be viewed as a change of decoding, cf.~Eq.~\eqref{enchange}, highlighting that there are many ways to describe the physics that is distinguishable relative to internal frames. 

Note that both forms of the QRF transformations pass through the perspective-neutral space $\Hil_{\rm pn}$, just like coordinate changes pass through the coordinate-independent manifold. The perspective-neutral space thereby encodes and links all the internal QRF perspectives, thus being neutral with respect to an internal QRF perspective. This is the origin of the framework's name.

\subsection{Reversible maps out of the perspective-neutral space}

What, now, are interesting maps that take us out of the perspective-neutral space and into the ambient $\Hil_{\rm kin}$ that admit the interpretation of correctable ``errors''? We will be somewhat brief on this point, both because the general idea has already been discussed in Sec.~\ref{sec_dictionary} and because we will discuss this much more extensively in line with the error duality in Sec.~\ref{sec_errorduality}. This will also include syndrome measurements and recovery operations, the discussion of which we thus defer to that section. 

One natural class of maps are what we called \emph{electric charge excitations} before, $E_i:\Hil_{\rm pn}\rightarrow \Hil_i$, which take us into the charged spaces $\Hil_i$, corresponding to the non-trivial irreps of ${G}$. In Abelian theories, these are all isomorphic and so the maps are unitary. Indeed, in Abelian gauge theories, these spaces describe sectors of the fields now carrying a non-trivial ``electric'' charge. These break gauge invariance when no additional degrees of freedom are added to compensate for the charge (e.g.\ electrons in QED).\footnote{Charge sectors as resources for quantum encryption and QEC have been explored in Refs.~\cite{Kitaev:2003zj} and \cite{Bao:2023lrb}, respectively.} As unitaries, the $E_i$ will satisfy the KL condition \eqref{KL}, provided  that no two distinct maps to the sector $i$ are included in the ``error set''. For example, $E_i$ and $U_i\,E_i\,U_{\rm pn}$ with non-trivial unitaries $U_i,U_{\rm pn}$ on target and source spaces, respectively, would yield two maps that are indistinguishable with an electric charge measurement, yet they have different inverses and so recovery operations.

On the other hand, in operational quantum communication scenarios between two agents without shared external frame, the $\Hil_i$ correspond to non-trivial superselection sectors for the external frame transformations \cite{Bartlett:2006tzx}. As argued in \cite[Sec.~II B 3]{Hoehn:2023ehz}, transitioning to non-trivial charge sectors means reintroducing external frame information into the description. When $E_i$ is unitary, however, this means that the external-frame-independent data is not irretrievably lost. In fact, explicitly including non-trivial charge sectors in the description connects the PN-framework with the older quantum information-theoretic approach to QRFs \cite{Bartlett:2006tzx} of which it is essentially the trivial charge sector. Coincidentally, the quantum information-theoretic approach has its origins in charge superselection sector considerations in gauge theories \cite{Aharonov:1967zza,Aharonov:1967zz}.

 Another natural set of reversible maps from the invariant space $\Hil_{\rm pn}$ into $\Hil_{\rm kin}$ is given by gauge fixings. For simplicity, let us assume that $R$ is an ideal QRF, as this will suffice for most of our considerations in sections to come. The maps $\hat{E}_g:\Hil_{\rm pn}\rightarrow\hat{\Hil}_g\subset\Hil_{\rm kin}$ into the QRF orientation  gauge-fixed spaces are given by the projectors
\begin{equation}
    \hat{P}_g=\ket{g}\!\bra{g}_R\otimes I_S\,,\qquad g\in{G}\,.
\end{equation}
By the covariance Eq.~\eqref{cohstate}, $\mathcal{G}$ acts transitively on the projectors, $U^g\,\hat{P}_{g'}\,(U^{g})^\dag=\hat{P}_{gg'}$. It is clear that this gauge fixing is equivalent to the Page-Wootters reduction in Eq.~\eqref{PWred} upon the replacement $\bra{g}_R\mapsto \ket{g}\!\bra{g}_R$ in Eq.~\eqref{PWred} and thus unitary (up to a rescaling by $1/\sqrt{|{G}|})$.
Hence, 
\begin{equation}
    \hat{\Hil}_g=\ket{g}_R\otimes\Hil_{|R}\,,
\end{equation}
    and the gauge fixing yields QRF-aligned states \cite{Hoehn:2021flk}
\begin{equation}\label{alignedstate}
    \hat{P}_g\,\ket{\psi}_{\rm pn}=\frac{1}{\sqrt{| G|}}\ket{g}_R\otimes\ket{\psi(g)}_{|R}
    \end{equation}
    with
    \begin{equation}
 U^{g'}\,\hat{P}_g\,\ket{\psi}_{\rm pn}=\hat{P}_{g'g}\ket{\psi}_{\rm pn}=\frac{1}{\sqrt{| G|}}\ket{g'g}_R\otimes\ket{\psi(g'g)}_{|R}\,.
\end{equation}
Clearly, as a gauge fixing does not delete gauge-invariant information, the gauge-fixed target spaces are mutually orthogonal. It is thus not surprising that also these gauge fixings satisfy the KL condition \eqref{KL} and we may thus view these as correctable errors. Again, this will be discussed in detail in Sec.~\ref{sec_errorduality}, where we will explain how these ``gauge fixing errors'' are dual to the electric ones above and we may view them as \emph{magnetic charge excitations}. There we will also explain how these gauge fixing projectors can be replaced by unitaries $\hat{E}_g$ on $\Hil_{\rm kin}$, bringing these errors on an equal footing with the electric charge excitations $E_i$.

Lastly, we note that we may view these gauge fixing errors alternatively as yet another covariant decoding, equivalent to the Page-Wootters one, where $\ket{g}_R$ in Eq.~\eqref{alignedstate} now assumes the role of the ancilla QRF ``ready state''. This decoding and encoding will ultimately be best described in terms of the unitary $\hat{E}_g$ to be described in Sec.~\ref{sec_errorduality}. For now we only note that it will be dual to the one defined by the trivialization map. Indeed, the two ready states are related by a Fourier transform over the group $G$
\begin{equation}
    \ket{1}_R=\frac{1}{\sqrt{|G|}}\sum_g\ket{g}_R\,,
\end{equation}
so that their overlap is maximally unbiased
\begin{equation}
    \braket{g}{1}_R=\frac{1}{\sqrt{|G|}}\,\qquad\qquad\forall\,g\in G\,.
\end{equation}
This will be made more precise with Pontryagin duality in Sec.~\ref{sec_errorduality}.

In summary, the PN-framework naturally has the structure of a QECC with a multitude of different encodings between which it translates and there are several natural analogs of correctable error sets.

\section{From QECCs to QRFs: General Pauli stabilizer codes}\label{sec_genstab}

In this section, we show that a generic $[[n,k]]$ Pauli stabilizer QECC\footnote{We omit the code distance $d$ in the bracket $[[n,k]]$, since we will not discuss the role it plays for QRFs here.} gives rise to QRFs according to the PN-framework with symmetry group $G = \mathbb{Z}_2^{\times (n-k)}$.~Our starting point is to identify the collection of $n$ physical qubits as the kinematical Hilbert space, $\Hil_\mrm{kin} \equiv \Hil_\mrm{physical}$.~The stabilizers of the QECC thus furnish a unitary representation of $G$ on all of $\Hil_\mrm{physical}$.~The task is then to 
\begin{enumerate}
    \item identify a tensor factorization of $\Hil_\mrm{physical}$ into system, $S$, and frame, $R$, such that the frame carries a nontrivial (possibly projective) unitary representation $\{U_R^g\}$ of $G$ (cf.~Sec.~\ref{sec:extvsintframe}), and then
    \item construct a basis for $\Hil_R$ consisting of coherent, frame orientation states $\ket{g}_R$ that transform under the unitary representation: $U_R^g \ket{g'}_R = c(g,g') \ket{gg'}_R$, where the phase $|c(g,g')| = 1$ may not itself be equal to 1 if the representation is projective (cf.~Sec.~\ref{ssec_orientations}).
\end{enumerate}

In what follows, we present two distinct methods of constructing a QRF.
The first method identifies an ideal QRF that consists of $n-k$ qubits, which are a subset of the $n$ physical qubits.
The factorization of $\Hil_\mrm{physical} = \Hil_R \otimes \Hil_S$ into frame and system is therefore a \emph{local} one with respect to the inherent tensor product structure of the $n$ physical qubits.
However, the price that we pay for this local factorization is that $U^g = U_R^g \otimes U_S^g$ acts nontrivially on $S$ as well. This means we cannot directly associate the code's logical degrees of freedom with this kinematical $S$, but only with the (non-gauge-redundant) relational description of $S$ relative to $R$. It will also have the consequence that the frame $R$ itself will not be associated with a unique (equivalence class) of correctable error sets.
The second method instead uncovers a constitutive relationship between error sets and frames. \emph{Any} given maximal correctable error set $\mathcal{E}$ will be shown to generate a frame $R'$ with special properties. It yields a \emph{nonlocal} factorization $\Hil_\mrm{physical} \simeq \Hil_{R'} \otimes \Hil_{S'}$, but still with $\dim \Hil_{R'} = 2^{n-k}$ and $\dim \Hil_{S'} = 2^k$.
There the frame-system tensor product structure (TPS) is inequivalent to the original TPS, with $R'$ and $S'$ spread across (possibly all) the original physical qubits. The action of $\mathcal{E}$ on $\Hil_\mrm{code}$, as well as the full representation $U(G)$, localize to the frame $R'$ while encoded logical manipulations localize entirely to the kinematical $S'$, thus making sharp within QRF language the QECC picture of a stabilizer code as defining $k$ virtual qubits \cite{PreskillNotes,Poulin:2005wry,Kao:2023ehh}.

We begin by reviewing the basics of stabilizer codes in Sec~\ref{Sec:generalstabilizer} before constructing local frames in Sec.~\ref{Sec:genstabilizerQRF}.
We then construct nonlocal frames in Secs.~\ref{ssec:QRFsandPontryagin}-\ref{Sec:error/QRFcorrespondence} and end by using QRF tools to construct explicit decoding and recovery operations in the remaining subsections.

\subsection{Stabilizer basics}\label{Sec:generalstabilizer}

To establish notation, we begin by reviewing the essential definitions and ingredients of Pauli stabilizer QECCs.
Once again, we will only review the most relevant points here; see, e.g., \cite{Gottesman:1997zz,NielesenChuang:book,Gottesman:2009zvw,PreskillNotes} for a more detailed exposition. Along the way, we will provide a representation-theoretic interpretation of the mathematical structure underlying stabilizer codes, which will prove useful in the rest of the paper, see Appendix~\ref{app:stabilizer_reps} for supplementary details.

The single-qubit Pauli group $\mathcal P_1$ is the matrix group consisting of all Pauli matrices with multiplicative factors $\pm 1$ and $\pm i$; that is,
\begin{equation}
    \mathcal P_1=\left\{\pm I,\pm iI,\pm X,\pm iX,\pm Y,\pm iY,\pm Z,\pm iZ\right\}\,.
\end{equation}
The $n$-qubit Pauli group $\mathcal P_n$ consists of all the $n$-fold tensor products of the above matrices.
A generic $[[n,k]]$ stabilizer code is a $2^k$-dimensional subspace $\Hil_\mrm{code}$ of a $2^n$-dimensional $n$-qubit Hilbert space $\Hil_{\rm physical}\simeq(\mathbb C^2)^{\otimes n}$ that is stabilized by an Abelian subgroup $\mathcal G \subset \mathcal{P}_n$ with $n-k$ independent and commuting generators and such that $-I\not\in\mathcal G$:
\begin{equation}
\mathcal G=\langle S_1,\dots, S_{n-k}\rangle\,.
\end{equation}
As strings of Pauli operators, the generators of $\mathcal G$ are unitary, Hermitian, and square to the identity. Equivalently, $\mathcal{G}$ can be defined as the image of a unitary representation $U: G \to {\rm Aut}(\cH_{\rm kin})$ of the abstract gauge group $G= \mathbb{Z}_2^{\times (n-k)}$, obeying two technical conditions: a) $U$ must be \emph{faithful}; b) ${\rm Im}(U)=\mathcal{G}$ should not contain $-I$.\footnote{Those two conditions are themselves equivalent to the fact that the unique $*$-algebra representation $\uU: \ell^2 (G) \to \cB(\cH_{\rm kin})$ induced by $U$ is itself \emph{faithful}; see Lemma \ref{lem:faithful_alg_rep} from Appendix \ref{app:stabilizer_reps}.} Those conditions ensure that 
$\Hil_\mrm{code}$, which is nothing but the invariant subspace of the representation $U$, is isomorphic to a collection of $k$ logical qubits, and so we have that (see Lemma \ref{lemma:dim_Hj} for a representation-theoretic proof of this fact)
\begin{equation}
    (\mathbb{C}^2)^{\otimes k} \simeq \Hil_\mrm{logical} \simeq \Hil_\mrm{code} \subset \Hil_\mrm{physical} \simeq (\mathbb{C}^2)^{\otimes n}.
\end{equation}
An explicit mapping between $\Hil_\mrm{logical}$ and $\Hil_\mrm{code}$ defines the computational encoding of the $k$ logical qubits, and it is customary to characterize it as follows.
Recall that the normalizer $\mathcal N_{\mathcal P_n}(\mathcal G)$ of $\mathcal G$ in $\mathcal P_n$ is the set of elements $N \in \mathcal P_n$ such that $N\mathcal G N^{\dagger}\in\mathcal G$.
Owing to the structure of $\mathcal G$ and $\mathcal P_n$, $\mathcal N_{\mathcal P_n}(\mathcal G)$ coincides with the centralizer of $\mathcal G$ in $\mathcal P_n$; that is, the set of elements of $\mathcal P_n$ that commute with all of $\mathcal G$.\footnote{For any $N\in\mathcal P_n$, $U\in\mathcal G$, one has $N^{\dagger} U N=\pm N^{\dagger} N U=\pm U$ so that, since $-I\not\in\mathcal G$, $ N\in\mathcal N_{\mathcal P_n}(\mathcal G)$ if and only if $N$ belongs to the centralizer of $\mathcal G$ in $\mathcal P_n$.}
Clearly $\mathcal G\subseteq\mathcal N_{\mathcal P_n}(\mathcal G)$, and so the elements of $\mathcal N_{\mathcal P_n}(\mathcal G)/\mathcal G$ act nontrivially on $\Hil_{\rm code}$ and thus correspond to encoded logical unitaries.
Choosing a basis for $\Hil_{\rm code}$ that consists of the eigenvectors of $n$  commuting elements of $\mathcal N_{\mathcal P_n}(\mathcal G)$ then defines an isomorphism between $\mathcal N_{\mathcal P_n}(\mathcal G)/\mathcal G$ and $\mathcal P_k$, and by linearity, $\cB(\Hil_\mrm{code})$ and $\cB(\Hil_\mrm{logical})$ as well.

Under this isomorphism, the logical Pauli operators $X_i, Z_i \in \mathcal{P}_k$ that act on the $i$\textsuperscript{th} logical qubit map to encoded representations $\bar{X}_i$, $\bar{Z}_i$ that satisfy
\begin{equation}
    [\bar{X}_i,\bar{X}_j]=0=[\bar{Z}_i,\bar{Z}_j]\qquad,\qquad
    [\bar{Z}_i,\bar{X}_j]=0\quad i\neq j\qquad,\qquad
    \{\bar{Z}_i,\bar{X}_i\}=0\;.
\end{equation}
The encoded logical computational basis is given by the simultaneous eigenstates of the commuting set $\{S_1,\dots,\allowbreak S_{n-k},\bar{Z}_1,\dots,\bar{Z}_k\} \subset \mathcal N_{\mathcal P_n}(\mathcal G)$.~We shall refer to the mapping of the logical computational basis from $\Hil_{\rm logical}\simeq(\mathbb C^2)^{\otimes k}$ into $\Hil_\mrm{code}$ as the \emph{computational encoding}, denoted $C_{\rm comp}$, to distinguish it from QRF-induced encodings.~Accordingly, the corresponding \emph{computational decoding} map is $\mathcal D_{\rm comp}(\cdot)=C_{\rm comp}^{-1}(\cdot)C_{\rm comp}$, and it satisfies $\mathcal D_{\rm comp}(\bar{X}_i)=X_i$, $\mathcal D_{\rm comp}(\bar{Z}_i)=Z_i$.~The encoded computational basis states are then $C_{\rm comp}\ket{z_1,\dots,z_k}_{\rm logical}$, $z_{\beta}\in\{0,1\}$, $\beta=1,\dots,k$.

Recall that errors are unitary operators that map states in $\Hil_\mrm{code}$ into states that have support in the complement of $\Hil_\mrm{code}$.
In terms of the normalizer, errors are spanned by the elements of $\mathcal{P}_n \setminus \mathcal N_{\mathcal P_n}(\mathcal G)$, and any operator in the linear span of an error set $\mathcal{E} = \{E_1, E_2, \dots, E_m\}$ that satisfies the Knill-Laflamme condition \eqref{KL} is in principle correctable.
When acting on a codeword, an error $E_i$ moves the codeword from the $+1$ eigenspace to the $-1$ eigenspace of those stabilizers with which it anticommutes.
Therefore, measuring the eigenvalues of a set of generators of $\mathcal G$ reveals information about a possible error that may have occurred and constitutes a complete syndrome measurement for stabilizer QECCs.
The set of such eigenvalues is an $(n-k)$-dimensional binary vector known as the \emph{syndrome}.

Representation theory allows to decompose both $\cH_{\rm kin}$ (resp.\ $\mathcal{B}(\Hil_{\rm kin})$) into canonically defined orthogonal subspaces $\{\cH_\chi\}$ (resp.\ $\{\cB_\chi\}$) called \emph{isotypes}. They are labeled by the inequivalent irreducible representations, or characters $\chi$ of the (Abelian) gauge group $G$, which can also be seen as elements of the dual gauge group $\hat{G}$ (see Appendix \ref{app:Pontryagin}). Any vector in a subspace labeled by $\chi$ generates a one-dimensional irreducible representation isomorphic to $\chi$. The code subspace corresponds to the trivial isotype $\cH_1 = \cH_{\rm code}$ (see the detailed discussion around \eqref{eq:H_trivial_code_pn}), while logical operators can be identified with $\cB_1$ modulo equivalence on $\cH_{\rm code}$ (see the detailed discussion around \eqref{eq:logical_operations}). Furthermore, the Pauli group itself decomposes as the disjoint union $\cP_n = \sqcup_\chi C_\chi$, where $C_\chi := \cB_\chi \cap \cP_n$ and $\Span (C_\chi) = \cB_\chi$ for any $\chi \in \hat{G}$ (see Proposition \ref{prop:decomp_Pauli}). Determining the syndrome associated to an error $E \in \cP_n$ turns out to be equivalent to determining the unique character $\chi \in \hat{G}$ such that $E \in C_\chi$. Such an error then isometrically maps the code subspace $\cH_{\rm code}$ to $\cH_\chi$, which can be interpreted as an \emph{error subspace}. See Appendix \ref{app:stabilizer_reps} for further detail.

As reviewed in \Sec{ssec_QEC} and illustrated for the 3-qubit code in Example~\ref{sssec_3qubit}, interpreting the syndrome is tantamount to making a choice of the error set $\mathcal{E}$ that one corrects for.~Having made this choice, since each $E_i \in \mathcal P_n$, a known error can be corrected by applying $E_i$ (or one equivalent to it by multiplication by a stabilizer) a second time.~Even if the original error that occurred is a nontrivial linear combination of errors in $\mathcal P_n$, the process of syndrome measurement projects the erroneous encoded state onto a state on which a single $E_i$ has acted.

\subsection{Local QRFs}\label{Sec:genstabilizerQRF}

Identifying $\Hil_\mrm{kin} \equiv \Hil_\mrm{physical} \simeq (\mathbb{C}^2)^{\otimes n}$, we already noted earlier that the stabilizer group $\mathcal{G}$ furnishes a \emph{faithful} unitary representation of $G = \mathbb{Z}_2^{\times(n-k)}$.~This follows from the fact that $G$ is Abelian and has $2^{n-k}$ elements, each of which squares to the identity.~As such, let us index the elements of $\mathcal{G}$ with the elements of $G$, that is $\mathcal{G} = \{U^g ~ | ~ g \in G\}$, in such a way that $U^{g}U^{h}=U^{gh}$ for any $g,h \in G$.~Generalizing what we have seen in the 3-qubit code example discussed in Sec.~\ref{Sec:3qubitcodeasQRF} (cf.~Eq.~\eqref{eq:3qubit:PicodePipn}, Example~\ref{Sec:3qubitcodeasQRF}), we then naturally have that the projector onto the code subspace, $\Pi_\mrm{code}$, which can be written in terms of any generating set of stabilizers as in Eq.~\eqref{eq:Pi_code_stabilizers}, coincides with the group-averaging projector, $\Pi_\mrm{pn}$, that projects onto the perspective-neutral Hilbert space:
\begin{lem}[\textbf{Equality of the code and perspective-neutral projectors}]\label{lem:PicodeasGA}
The orthogonal projector $\Pi_{\rm code}$ onto $\mathcal \Hil_{\rm code}\subset\Hil_{\rm physical}$ can be written as a coherent group averaging projector over the stabilizer group.
That is,
\begin{equation}\label{eq:stabilizerGA}
    \Pi_{\rm code}=\frac{1}{|G|}\sum_{g\in G}U^g \equiv \Pi_\mrm{pn}\;,
\end{equation}
with  $U^g$, $g\in G=\mathbb Z_2^{\times(n-k)}$, a unitary representation of the stabilizer group on $\Hil_{\rm physical}$.
\end{lem}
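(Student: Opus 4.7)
The plan is to set $P := \frac{1}{|G|}\sum_{g\in G} U^g$ and show that $P$ is an orthogonal projector whose image coincides with $\Hil_{\rm code}$. Since $\Pi_{\rm code}$ is uniquely characterized as the orthogonal projector onto $\Hil_{\rm code}$, this identification immediately yields $P = \Pi_{\rm code}$. The argument is a textbook group-averaging computation in the setting of finite groups and faithful unitary representations, so I do not anticipate any serious obstacle --- the main task is simply to verify the defining properties of an orthogonal projector in the right order.

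First I would verify idempotency, $P^2 = P$. Expanding and using the fact that $g \mapsto U^g$ is a group homomorphism (so $U^g U^h = U^{gh}$), we get
\begin{equation*}
P^2 = \frac{1}{|G|^2}\sum_{g,h \in G} U^{gh} = \frac{1}{|G|^2}\sum_{g\in G}\sum_{k \in G} U^{k} = \frac{1}{|G|}\sum_{k\in G} U^k = P\,,
\end{equation*}
where we reindexed $k = gh$ for each fixed $g$, using that left translation is a bijection on $G$. Next I would verify Hermiticity. Since every stabilizer $U^g$ is a Pauli string with $U^g \in \mathcal{G} \subset \mathcal{P}_n$ and $-I \notin \mathcal{G}$, each $U^g$ squares to the identity and is Hermitian; alternatively, $(U^g)^\dagger = U^{g^{-1}}$, and $g \mapsto g^{-1}$ is a bijection on $G$, so the sum is invariant under Hermitian conjugation. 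Hence $P^\dagger = P$, and together with $P^2 = P$ we conclude that $P$ is an orthogonal projector.

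It remains to identify $\mathrm{Im}(P) = \Hil_{\rm code}$. For the inclusion $\Hil_{\rm code} \subseteq \mathrm{Im}(P)$, any $\ket{\psi} \in \Hil_{\rm code}$ satisfies $U^g \ket{\psi} = \ket{\psi}$ for all $g \in G$ (as $\Hil_{\rm code}$ is the joint $+1$ eigenspace of $\mathcal{G}$), so $P\ket{\psi} = \ket{\psi}$. For the reverse inclusion, for any $\ket{\phi} \in \mathcal{H}_{\rm physical}$ and any $h \in G$,
\begin{equation*}
U^h P \ket{\phi} = \frac{1}{|G|}\sum_{g \in G} U^{hg}\ket{\phi} = \frac{1}{|G|}\sum_{k\in G} U^{k}\ket{\phi} = P\ket{\phi}\,,
\end{equation*}
again by reindexing, so $P\ket{\phi}$ is stabilized by every $U^h$ and hence lies in $\Hil_{\rm code}$. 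Thus $\mathrm{Im}(P) = \Hil_{\rm code}$, and since $P$ is the orthogonal projector onto this subspace, $P = \Pi_{\rm code} = \Pi_{\rm pn}$, completing the proof. The only subtlety worth flagging is that one should note the faithfulness of the representation plays no role in the identity itself --- it only ensures that $G$ is the \emph{right} abstract group to index the stabilizers, so that the sum has exactly $|\mathcal{G}|$ distinct terms as in \eqref{eq:Pi_code_stabilizers}.
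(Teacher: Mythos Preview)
Your proof is correct and follows essentially the same approach as the paper's: show that the group average is Hermitian and idempotent, then verify the two inclusions $\Hil_{\rm code}\subseteq\mathrm{Im}(P)$ and $\mathrm{Im}(P)\subseteq\Hil_{\rm code}$ via the stabilizer property and reindexing. Your version is simply more explicit in spelling out the reindexing steps and the Hermiticity argument.
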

\begin{proof}
The proof can be found in App.~\ref{App:proofs}.
\end{proof}

Next, we will exhibit a factorization $\Hil_\mrm{physical} = \Hil_R \otimes \Hil_S$ into frame and system, with $\Hil_R$ carrying a nontrivial and faithful (possibly projective) unitary representation of $G$, and then a basis of coherent states for $\Hil_R$.~Per the correspondence, we must have that $\Hil_S \simeq \Hil_\mrm{code} \simeq \Hil_\mrm{logical}$, and so by counting dimensions, we will have that $\dim \Hil_R = 2^{n-k}$.~$R$ can thus be an \emph{ideal} frame, i.e., having perfectly distinguishable (mutually orthogonal) frame orientation states, and we will demand that this be the case.

Interestingly, it turns out that it is always possible to construct a frame such that $\Hil_R$ consist of $n-k$ of the original $n$ physical qubits.
The tensor factorization into frame and system is thus (up to permutations of tensor factors) local with respect to the original TPS on $\Hil_\mrm{physical}$.

The result follows from a chain of lemmas.
The first lemma establishes that it is always possible to choose $n-k$ qubits such that the set of Pauli strings formed by truncating the stabilizers to these qubits still forms a faithful, albeit possibly projective, unitary representation of $G$.

\begin{lem}[\textbf{Existence of a faithful representation of $G$ on a subset of physical qubits}]\label{lemma:URexistence}
    Let $0 \leq k \leq n$ be fixed.
    Let $\mathcal{G} = \{U^g ~ | ~ g \in G = \mathbb{Z}_2^{\times(n-k)}\}$, where $U^e = I^{\otimes n}$, be a Pauli stabilizer group on $\Hil \simeq (\mathbb{C}^2)^{\otimes n}$, i.e., $-I \notin \mathcal{G}$ and $\forall~g,h\in G$: $U^g \in \mathcal{P}_n$, $[U^g,U^h] = 0$, $U^g U^h = U^{gh} \in \mathcal{G}$, and $U^g = U^h$ if and only if $g = h$.
    Then, there exists a choice of $A \subseteq \{1, \dots, n\}$ consisting of $k$ distinct elements such that $\pi_A(U^g) = I^{\otimes n}$  if and only if $g=e$, where $\pi_A = \prod_{a \in A} \pi_a$ and with
    $$
    \begin{aligned}
        \pi_a ~ : ~ &\mathcal{P}_n \to \mathcal{P}_n \\
        & i^\lambda \, \Oh_1 \cdots \Oh_{a-1} \Oh_a \Oh_{a+1} \cdots \Oh_n ~ \mapsto ~ \Oh_1 \cdots \Oh_{a-1} I \Oh_{a+1} \cdots \Oh_n
    \end{aligned}
    $$
    denoting the operator that sets the $a^\text{th}$ letter of an $n$-qubit Pauli string to the identity and removes any prefactor that is not 1.
\end{lem}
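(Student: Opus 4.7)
My plan is to prove the lemma via the symplectic (binary) representation of the Pauli group. The center $Z(\mathcal{P}_n) = \{\pm 1, \pm i\}\, I^{\otimes n}$ quotients $\mathcal{P}_n$ to an abelian group isomorphic to $\mathbb{F}_2^{2n}$, via the map sending $i^\lambda X^{\mathbf{a}} Z^{\mathbf{b}}$ to $(\mathbf{a}, \mathbf{b}) \in \mathbb{F}_2^n \oplus \mathbb{F}_2^n$. Under this identification, the operation $\pi_a$ of the lemma descends to the coordinate projection that zeroes out the two $\mathbb{F}_2$-coordinates associated with qubit $a$ (the discarding of the phase prefactor $i^\lambda$ is precisely the quotient by $Z(\mathcal{P}_n)$). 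The condition $\pi_A(U^g) = I^{\otimes n}$ thus translates to the statement that the symplectic vector $\mathbf{v}_g \in \mathbb{F}_2^{2n}$ of $U^g$ is supported entirely on the qubits in $A$.

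The next step is to observe that the image $V := \{\mathbf{v}_g : g \in G\}$ is an $(n-k)$-dimensional $\mathbb{F}_2$-linear subspace. Indeed, because $\mathcal{G}$ is a Pauli stabilizer group with $-I \notin \mathcal{G}$, all its elements square to $+I^{\otimes n}$, so $\mathcal{G} \cap Z(\mathcal{P}_n) = \{I^{\otimes n}\}$; combined with the hypothesized faithfulness of $g \mapsto U^g$, this makes the composed map $g \mapsto \mathbf{v}_g$ an injective group homomorphism from $G \simeq \mathbb{Z}_2^{\times(n-k)}$ into $\mathbb{F}_2^{2n}$, whose image therefore has $\mathbb{F}_2$-dimension $n-k$. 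The lemma then reduces to the following linear-algebraic claim: there exists $B \subseteq \{1, \dots, n\}$ with $|B| = n-k$ such that the coordinate projection $\mathrm{pr}_B : V \to \mathbb{F}_2^{2(n-k)}$ is injective, after which one simply sets $A := \{1,\dots,n\} \setminus B$.

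For this final linear-algebraic step I would argue directly via column rank. Picking any basis of $V$ yields an $(n-k) \times 2n$ matrix $M$ over $\mathbb{F}_2$ of full row rank $n-k$, hence also of column rank $n-k$, so some $n-k$ columns of $M$ are linearly independent. These columns involve at most $n-k$ distinct qubits (each qubit contributes two columns to $M$), determining a subset $B_0 \subseteq \{1,\dots,n\}$ with $|B_0| \leq n-k$; enlarge $B_0$ arbitrarily to a set $B$ of exactly $n-k$ qubits, which is possible since $n-k \leq n$. The submatrix $M|_B$, obtained from $M$ by keeping only the $2(n-k)$ columns indexed by $B$, still contains the chosen $n-k$ independent columns, so its rank is $n-k$, equal to its number of rows. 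This forces $\mathrm{pr}_B$ to be injective on $V$, and choosing $A := B^c$ of cardinality $k$ completes the argument. I expect no serious obstacle: the only real care needed is in the initial translation, where one must verify that $\pi_a$'s discarding of phases is compatible with the quotient by $Z(\mathcal{P}_n)$, which is immediate once one notes that no nontrivial element of $\{\pm 1, \pm i\}\, I^{\otimes n}$ can lie in $\mathcal{G}$.
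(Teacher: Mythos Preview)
Your proof is correct and considerably more direct than the paper's. Both arguments begin identically: pass to the $GF(2)^{2n}$ picture, observe that $-I\notin\mathcal{G}$ forces $\mathcal{G}\cap Z(\mathcal{P}_n)=\{I\}$ so that $g\mapsto\mathbf{v}_g$ is injective, and reduce the lemma to the linear-algebraic claim that the $(n-k)$-dimensional subspace $V\subset\mathbb{F}_2^{2n}$ projects injectively onto the $2(n-k)$ coordinates associated with some $(n-k)$-qubit subset $B$. Where you diverge is in dispatching this claim. You argue directly: any $(n-k)\times 2n$ matrix $M$ with rows a basis of $V$ has column rank $n-k$, so $n-k$ columns are independent; they touch at most $n-k$ qubits, and restricting to any $(n-k)$-qubit superset of those keeps the row rank at $n-k$, so $\mathrm{pr}_B$ is injective. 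The paper instead argues by contradiction: assuming every $k$-subset $A$ yields some nonzero $\mathbf{u}(g_A)$ supported on $A$, it assembles these into a $2n\times\binom{n}{k}$ matrix, proves by a separate induction (their Lemma~\ref{lem:technicalZ2}) that any such matrix has rank $\geq n-k+1$, and derives a contradiction with $\dim V=n-k$. Your rank argument is a genuine simplification---it replaces an inductive combinatorial lemma and a contradiction step with a one-line observation about column rank---and loses nothing in generality.
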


\begin{proof}
The proof can be found in App.~\ref{App:proofs}.
\end{proof}

Consequently, there exists a choice of $R$ consisting of $n-k$ physical qubits and $S$ consisting of the remaining $k$ physical qubits such that the operators 
\begin{equation} \label{eq:URg_generalstab}
    U_R^g := \frac{1}{2^k} \Tr_S[\pi_S (U^g)]
\end{equation}
faithfully represent $G$.~That $\mathcal{G}_R = \{U_R^g\}$ is a representation of $G$ is guaranteed by the fact that $\{U^g\}$ is a representation of $G$ consisting of local unitaries. Faithfulness follows because $U_R^g = I^{\otimes(n-k)}$ if and only if $\pi_S (U^g) = I^{\otimes n}$, which Lemma~\ref{lemma:URexistence} guarantees only happens when $U^g = U^e \equiv I^{\otimes n}$. Defining $\pi_a$ to remove any nontrivial phase from the $U_R^g$ themselves is a matter of convenience, which ensures that $\mathcal{G}_R$ is a list of Pauli strings with no overall nontrivial phases and which simplifies the proof of Lemma~\ref{lemma:URexistence}.
(A stabilizer can at most carry a phase of $-1$; otherwise, a putative stabilizer $\tilde U$ with an overall prefactor of $\pm i$ would obey $\tilde U^2 = -I$, which is not allowed in $\mathcal{G}$.)

The representation $\mathcal{G}_R$ may be projective, however, with
\begin{equation}\label{eq:phasedefinition}
    U_R^g U_R^{g'} = c(g,g') U_R^{gg'},
\end{equation}
for a possible phase $c(g,g')$ (see also the discussion in Sec.~\ref{sec:extvsintframe} and App.~\ref{app_projrep}).
We can view Eq.~\eqref{eq:phasedefinition} as defining $c(g,g')$, since $U_R^{gg'} = U_R^{g'g}$ is also guaranteed to be an element of $\mathcal{G}_R$.
Consequently, $c(g_1, g_2)$ satisfies  
\begin{equation}\label{eq:phasecondition}
\begin{array}{ll}
    c(g_1, g_2) \in \{+1, -1\}\,, & c(g_1, g_2) = c(g_2,g_1) \quad \text{if} \;\, [U_R^{g_1},U_R^{g_2}] = 0 \\[2mm]
    c(g_1, g_2) \in \{+i, -i\}\,, & c(g_1, g_2) = -c(g_2,g_1) \quad \text{if} \;\, \{U_R^{g_1},U_R^{g_2}\} = 0,
\end{array}
\end{equation}
which is a consequence of $\mathcal G$ being Abelian.~The projectiveness of the representation does not cause any obstructions to constructing an ideal frame on $R$, however, and if we lift the unitary representation of $\mathcal{G}$ in \eqref{eq:URg_generalstab} to the space of linear operators on $\Hil_R$, then the representation is no longer projective.~Explicitly, letting
\begin{equation}
    \begin{aligned}
        \mathcal{U}^g_R ~ : ~ &\mathcal{B}(\Hil_R) ~ \to ~ \mathcal{B}(\Hil_R) \\
        & \Oh ~ \mapsto ~ U^g_R \Oh (U^g_R)^\dagger,
    \end{aligned}
\end{equation}
we then have $\mathcal{U}^g_R \circ \mathcal{U}^{g'}_R = \mathcal{U}^{gg'}_R$.

We can define the stabilizer fragments on $S$, denoted $U^g_S$, by writing
\begin{equation}
    U^g = U_R^g \otimes U_S^g
\end{equation}
for every $g \in G$, i.e., the $U^g_S$ carry any phases of $-1$ in the stabilizers.~The set $\mathcal{G}_S = \{U_S^g\}$ similarly forms a  projective unitary representation of $G$, but in general it need not be faithful.~Likewise, the representation obeys (cf. Eq.~\eqref{eq:USgUSh})
\begin{equation} \label{eq:phasedefinitionS}
    U_S^g U_S^{g'} = c^*(g,g') U_S^{gg'}
\end{equation}
as a consequence of $\mathcal{G}$ being an ordinary representation.

The next lemma establishes that there exists a \emph{seed state} in $\Hil_R$ that can be used to construct a basis of ideal frame orientation states for $\Hil_R$.
This is a state $\ket{e}_R$ such that the $2^{n-k}$ states
\begin{equation}
    \ket{g}_R = U_R^g \ket{e}_R
\end{equation}
are distinct and orthonormal.

\begin{lem}[\textbf{Existence of a seed state for local QRFs}]\label{lemma:seedstate}
    Let $\{P^0 = I^{\otimes m}, P^1, \dots, P^{2^m-1}\} \subset \mathcal{P}_m$, $m\geq1$, be a set of distinct Pauli strings with no non-unit phases that is closed under multiplication up to constants,~i.e., for all $i, j$, $P^i \neq P^j$ if $i \neq j$ and $P^i P^j \propto P^k$ for some $k$.
    Then, there exists a state $\ket{\phi} \in (\mathbb{C}^2)^{\otimes m}$ such that $\bra{\phi} P^i \ket{\phi} = 0$ $\forall~i\neq 0$.
\end{lem}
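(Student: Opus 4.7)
The strategy I would pursue is to translate the statement into the symplectic picture of the Pauli group on $m$ qubits and reduce the existence of $\ket{\phi}$ to the existence of a transverse Lagrangian subspace. Identify each unit-phase Pauli $P = X^a Z^b$ with the vector $\bar P := (a|b) \in \mathbb{F}_2^{2m}$; this quotients $\mathcal{P}_m$ by its center $\{\pm 1, \pm i\}$, and the induced nondegenerate symplectic form $\omega((a|b),(a'|b')) := a\cdot b' + a'\cdot b \pmod 2$ encodes commutation via $PQ = (-1)^{\omega(\bar P, \bar Q)} QP$. Since the $P^i$ are $2^m$ distinct unit-phase representatives closed under multiplication up to scalars, the image $H := \{\bar P^i\}$ is a genuine $m$-dimensional subspace of $\mathbb{F}_2^{2m}$.

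The key lemma would be that there exists a Lagrangian (i.e., $m$-dimensional isotropic) subspace $L \subset \mathbb{F}_2^{2m}$ with $L \cap H = \{0\}$. Granting this, pick a basis $v_1,\dots,v_m$ of $L$, take unit-phase Pauli lifts $\tilde Q_{v_i}$, and extend multiplicatively to an Abelian subgroup $\tilde L \subset \mathcal{P}_m$. Isotropy of $L$ means the unsigned Paulis commute, and commuting unit-phase Paulis multiply to $\pm 1$ times a unit-phase Pauli, so the assignment is well defined and consistent. Moreover $-I \notin \tilde L$: the only way the product of generators gives the symplectic zero is the trivial product, which lifts to $+I$. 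Thus $\tilde L$ is a stabilizer group of size $2^m$ on $m$ qubits, and its joint $+1$-eigenspace is one-dimensional; let $\ket{\phi}$ span it. For $i \neq 0$ one has $\bar P^i \in H \setminus \{0\}$ and hence $\bar P^i \notin L = L^{\perp}$, so some $\tilde Q \in \tilde L$ anticommutes with $P^i$. The usual eigenstate argument $\bra{\phi} P^i \ket{\phi} = \bra{\phi} P^i \tilde Q \ket{\phi} = -\bra{\phi} \tilde Q P^i \ket{\phi} = -\bra{\phi} P^i \ket{\phi}$ then forces $\bra{\phi} P^i \ket{\phi} = 0$.

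The main obstacle is therefore the existence of a Lagrangian transverse to $H$. I would handle this with a pigeonhole/counting argument: the total number of Lagrangians in $\mathbb{F}_2^{2m}$ is $\prod_{i=1}^m (2^i+1)$, and by transitivity of $\mathrm{Sp}(2m, \mathbb{F}_2)$ on nonzero vectors any fixed nonzero $v$ is contained in a fraction $(2^m-1)/(2^{2m}-1) = 1/(2^m+1)$ of them. A union bound over $H \setminus \{0\}$ shows that at most $(2^m-1)/(2^m+1) < 1$ of all Lagrangians meet $H$ nontrivially, leaving at least $2\prod_{i=1}^{m-1}(2^i+1)$ transverse to $H$. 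A fully explicit alternative uses a symplectic Witt decomposition adapted to $H$, writing $H = \langle e_1,f_1,\dots,e_s,f_s,e_{s+1},\dots,e_{s+r}\rangle$ relative to a symplectic basis with $m = 2s+r$, and then directly verifying that $L = \langle f_{s+1},\dots,f_{s+r},\, \{e_i+f_{s+r+i},\, f_i+e_{s+r+i}\}_{i=1}^s\rangle$ is $m$-dimensional, isotropic, and transverse to $H$. Either route settles the lemma; everything else is routine stabilizer-state bookkeeping.
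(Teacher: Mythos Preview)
Your proof is correct and takes a genuinely different route from the paper's. You work in the symplectic $\mathbb{F}_2$-picture of the Pauli group and reduce the existence of $\ket{\phi}$ to a clean structural fact: any $m$-dimensional subspace $H\subset\mathbb{F}_2^{2m}$ admits a transverse Lagrangian $L$. Both your union-bound count and your explicit Witt-decomposition construction settle this uniformly in $m$, and the seed state is then the unique stabilizer state of the lifted maximal Abelian group $\tilde L$; the vanishing of $\bra{\phi}P^i\ket{\phi}$ follows from the standard anticommutation argument since $L=L^\perp$ and $L\cap H=\{0\}$. By contrast, the paper treats $m=1,2$ by hand and, for $m\geq 3$, first establishes via a separate counting argument only the existence of a \emph{small} commuting Pauli subgroup inside $\mathcal{W}^\perp$ (two generators), and then solves explicitly for the coefficients of a rank-one projector written as an equal-weight linear combination over that subgroup. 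Your approach is more conceptual and more uniform: it avoids the case split, it produces a stabilizer state by construction, and its key lemma (existence of a transverse Lagrangian) is a standard symplectic fact with an explicit witness. The paper's approach stays closer to the Hilbert-space side and does not invoke symplectic structure theory, at the cost of a more computational and less transparent argument.
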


\begin{proof}
The proof can be found in App.~\ref{App:proofs}.
\end{proof}

We can now prove the central claim of this subsection:

\begin{thm}[\textbf{Local QRFs for stabilizer codes}]\label{thm:existence_of_local_qrfs}
    Let $\mathcal{G} = \{U^g ~ | ~ g \in G = \mathbb{Z}_2^{\times(n-k)}\}$ be the stabilizer group of a $[[n,k]]$ Pauli QECC, indexed such that $U^e = I^{\otimes n}$ and $U^{g}U^{h} = U^{gh}$.
    There exists a subset $R$ consisting of $n-k$ of the $n$ physical qubits such that
    \begin{enumerate}
        \item $\mathcal{G}_R = \{U_R^g ~ | ~ g \in G\}$ with $U_R^g$ defined in Eq.~\eqref{eq:URg_generalstab} is a faithful, unitary, and possibly projective representation of $G$ satisfying $U_R^g U_R^{g'} = c(g,g') U_R^{gg'}$, with $c(g,g') \in \{\pm 1, \pm i\}$ as defined in Eq.~\eqref{eq:phasedefinition}, and
        \item there further exists a seed state $\ket{e}_R \in \Hil_R$ such that the set $\{\ket{g}_R = U_R^g \ket{e}_R ~ | ~ g \in G\}$ is an orthonormal basis for $\Hil_R$ that transforms as $U_R^g \ket{g'}_R = c(g,g')\ket{gg'}_R$.
    \end{enumerate}
\end{thm}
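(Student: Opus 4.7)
The plan is to assemble the theorem directly from the two preparatory lemmas, treating it as a bookkeeping exercise: Lemma~\ref{lemma:URexistence} supplies the subsystem split, the definition \eqref{eq:URg_generalstab} supplies the representation, and Lemma~\ref{lemma:seedstate} supplies the seed state. The only nontrivial content left is verifying that the orientation states produced by these ingredients actually satisfy the orthonormality and covariance claims.

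First, I would apply Lemma~\ref{lemma:URexistence} to the stabilizer group $\mathcal{G}$, which is exactly a Pauli stabilizer group in its hypotheses (Abelian, no $-I$, faithful labeling by $G = \mathbb{Z}_2^{\times(n-k)}$). This produces a distinguished $n-k$ element subset of physical qubits; identify this subset with $R$ and its complement with $S$. Setting $U_R^g := \frac{1}{2^k}\Tr_S[\pi_S(U^g)]$ as in \eqref{eq:URg_generalstab}, the faithfulness condition delivered by the lemma gives $U_R^g = I$ iff $g=e$. The set $\mathcal{G}_R = \{U_R^g\}$ consists of Pauli strings without non-unit prefactors (by construction of $\pi_a$), and the relation $U^{g} = U_R^g \otimes U_S^g$ together with the group law on $\{U^g\}$ forces $U_R^g U_R^{g'} = c(g,g') U_R^{gg'}$ for some $c(g,g') \in \{\pm 1, \pm i\}$, yielding the claimed projective representation in item 1. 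The phase $c(g,g')$ is valued in $\{\pm 1,\pm i\}$ by \eqref{eq:phasecondition}, which comes from Abelianness of $\mathcal{G}$ and the anticommutation structure of Pauli strings.

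Next, for item 2, I would observe that $\mathcal{G}_R$ is precisely a set of $2^{n-k}$ distinct Pauli strings with no non-unit phases (faithfulness ensures distinctness, and $\pi_a$ kills extra phases), which is closed under multiplication up to the scalars $c(g,g')$. This matches the hypotheses of Lemma~\ref{lemma:seedstate} on $m = n-k$ qubits, so the lemma produces a state $\ket{e}_R \in \Hil_R$ with $\bra{e} U_R^g \ket{e}_R = 0$ for every $g \neq e$. Define $\ket{g}_R := U_R^g \ket{e}_R$. Covariance is then the one-line calculation $U_R^g \ket{g'}_R = U_R^g U_R^{g'}\ket{e}_R = c(g,g') U_R^{gg'}\ket{e}_R = c(g,g')\ket{gg'}_R$.

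For orthonormality, note that since each $U_R^g$ is a phase-free Pauli string, it is Hermitian, so $(U_R^g)^\dagger = U_R^g$ and $(U_R^g)^2 = I$, which in particular pins down $c(g,g) = 1$. Computing
\begin{equation}
\braket{g}{h}_R = \bra{e} U_R^g U_R^h \ket{e}_R = c(g,h)\, \bra{e} U_R^{gh} \ket{e}_R
\end{equation}
yields $c(g,g)=1$ when $g=h$ (so $gh=e$ and the expectation is $1$) and $0$ when $g\neq h$ (since $gh \neq e$ by faithfulness, and Lemma~\ref{lemma:seedstate} annihilates the expectation). This produces $2^{n-k}$ orthonormal states in the $2^{n-k}$-dimensional space $\Hil_R$, so they form an orthonormal basis, completing item 2. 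The only place where conceptual difficulty could arise is in checking that one is legitimately allowed to apply Lemma~\ref{lemma:seedstate}: this requires verifying that the $U_R^g$ are genuinely phase-free Pauli strings rather than merely Pauli-proportional ones, which is exactly the role of the $\pi_a$ operators in \eqref{eq:URg_generalstab} and is guaranteed by Lemma~\ref{lemma:URexistence}.
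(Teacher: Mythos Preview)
Your proposal is correct and follows essentially the same approach as the paper's proof: invoke Lemma~\ref{lemma:URexistence} to obtain $R$ and the faithful projective representation $\mathcal{G}_R$, apply Lemma~\ref{lemma:seedstate} to obtain the seed state, and then verify orthonormality via $\braket{g}{h}_R = c(g,h)\bra{e}U_R^{gh}\ket{e}_R$ using Hermiticity of the phase-free Pauli strings. The paper's argument is nearly identical, only slightly terser in presenting the orthonormality calculation.
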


\begin{proof}
The existence of $R$ that yields $\mathcal{G}_R$ is assured by Lemma~\ref{lemma:URexistence}.
By construction, $\mathcal{G}_R$ satisfies the assumptions of Lemma~\ref{lemma:seedstate}, and so there exists a state $\ket{\phi}_R=\ket{e}_R \in \Hil_R$ such that $\bra{e} U_R^g \ket{e}_R = 0$ for all $g \in G$, $g\neq e$.
But then, for any $g, g' \in G$, we have that
\begin{equation}
    \braket{g}{g'}_R = \bra{e} U_R^g U_R^{g'} \ket{e}_R \propto \bra{e} U_R^{gg'} \ket{e}_R \, .
\end{equation}
(Note that we used the fact that $(U_R^g)^\dagger = U_R^g$.) This vanishes for $g\neq g'$ and equals $1$ for $g=g'$.
Therefore, the set $\{ \ket{g}_R \, | \, g \in G \}$ indeed forms an orthonormal basis for $\Hil_R$ that transforms as $U_R(g) \ket{g'}_R = c(g,g') \ket{gg'}_R$.
\end{proof}

The choice of the $n-k$ physical qubits for the frame $R$ need not be unique.~For example, in a cyclic code, any choice of $n-k$ qubits could be used to construct the frame.~Moreover, even for a given choice of frame qubits, the seed states need neither be unique nor separable, as we shall demonstrate in the examples below.

\begin{example}[Non uniqueness of local QRFs and their orientation states in three-qubit codes]

Consider the 3-qubit bit flip code and the QRF with $G = \mathbb{Z}_2 \times \mathbb{Z}_2$ that we extracted from it in Sec.~\ref{Sec:3qubitcodeasQRF}.
Eqs.~\eqref{eq:3qubitOrientations}-\eqref{eq:3qubit-Urep} are a realization of the general structure developed above, with $n = 3$ and $k = 1$.
While we chose $R = 12$ for convenience, indeed any choice of the two physical qubits for $R$ would do since the code is invariant under permutations of the physical qubits.
We chose $\ket{++}_R$ (a $+1$ eigenstate of $XX$) to be the seed state, but we could equally have used, e.g., any eigenstate of an operator $\Phi=\bigotimes_{\alpha=1}^{n-k}\Phi_{\alpha}$ with $\Phi_{\alpha}\in\{X,Y,Z\}$ such that, for each Pauli string $U_R^g \neq I^{\otimes{n-k}}$, at least one letter $\Phi_\alpha$ anticommutes with the corresponding letter of $U_R^g$.
In the 3-qubit code, the non-identity two-letter strings are $\mathcal{G}_R \setminus \{II\} = \{IZ, ZI, ZZ\}$.
At least one letter of $XX$ anticommutes with a letter of each of these strings, but equally good choices would have been $\Phi \in \{YX, XY, YY\}$.
\eox
\end{example}
\begin{example}[Non-separable QRF seed states]
The seed states in the previous example are all separable but this is not necessary---and not always possible---in general.~As an illustration, consider the following 3-qubit operators, which we can view as stabilizer fragments $U_R^g$ for some code with $n-k = 3$ (and thus $G = \mathbb{Z}_2 \times \mathbb{Z}_2 \times \mathbb{Z}_2$):
\begin{center}
\begin{tabular}{c c c c c c c c c}
    $U_R^g$ & $III$ & $IIX$ & $IIY$ & $XXI$ & $IIZ$ & $XXX$ & $XXY$ & $XXZ$\\
    $g$ & $+++$ & $-++$ & $+-+$ & $++-$ & $--+$ & $-+-$ & $+--$ & $---$
\end{tabular}
\end{center}
These form a faithful, projective unitary representation of $\mathbb{Z}_2 \times \mathbb{Z}_2 \times \mathbb{Z}_2$, with a consistent assignment of group elements in the second line.
In particular, note that there exists no operator $\Phi = \Phi_1 \Phi_2 \Phi_3$ such that there is at least one $\Phi_{\alpha}$ anticommuting with one of the Pauli operators in each of the above 3-qubit Pauli strings, and so the $\Phi$-constrution is not possible here.
Nevertheless, as can be checked, the entangled state $\frac{1}{\sqrt{2}}(\ket{011}+\ket{110})$ is a valid seed state that yields a set of $2^3$ orthonormal frame orientation states by acting on it with the above Pauli strings.
\eox
\end{example}

\begin{example}[Five-qubit code]
\label{sec:5qubit}

As an example of a QECC that has less trivial error correcting properties, let us consider the perfect 5-qubit code.
This is a $[[5,1]]$ stabilizer code that corrects a single erroneous Pauli operator at an unknown location.
As such, now we have
\begin{equation}
    \mathbb{C}^2 \simeq \Hil_\mrm{logical} \simeq \Hil_\mrm{code} \subset \Hil_\mrm{physical} \simeq (\mathbb{C}^2)^{\otimes 5}.
\end{equation}
A full list of the stabilizers is
\begin{equation} \label{eq:5qubitstabilizers}
    \mathcal{G} = \left\{
    \begin{array}{cccc}
        IIIII & ZYYZI & XYIYX & YXXYI \\
        ZXIXZ & IYXXY & XXYIY & XIXZZ \\
        IZYYZ & ZIZYY & YZIZY & YYZIZ \\
        IXZZX & XZZXI & ZZXIX & YIYXX \\
    \end{array} \right\},
\end{equation}
which can be generated by any 4 independent elements.
Following convention, we take the states
\begin{equation}\label{eq:5qubitencodedlogical01}
\begin{aligned}
    \ket{\bar 0}&=\frac{1}{4}\bigl(\ket{00000}+\ket{10010}+\ket{01001}+\ket{10100}+\ket{01010}-\ket{11011}-\ket{00110}-\ket{11000}\\
    &\qquad\quad-\ket{11101}-\ket{00011}-\ket{11110}-\ket{01111}-\ket{10001}-\ket{01100}-\ket{10111}+\ket{00101}\bigr),\\[0.5em]
    \ket{\bar 1}&=\frac{1}{4}\bigl(\ket{11111}+\ket{01101}+\ket{10110}+\ket{01011}+\ket{10101}-\ket{00100}-\ket{11001}-\ket{00111}\\
    &\qquad\quad-\ket{00010}-\ket{11100}-\ket{00001}-\ket{10000}-\ket{01110}-\ket{10011}-\ket{01000}+\ket{11010}\bigr),
\end{aligned}
\end{equation}
to be the encoded logical $0$ and $1$ states that span $\Hil_\mrm{code}$.
Since here $\mathcal{G}$ is Abelian with sixteen elements, each of which squares to the identity, the stabilizer group \eqref{eq:5qubitstabilizers} is isomorphic to $G = (\mathbb{Z}_2)^{\times 4}$.

We proceed to factorize $\Hil_\mrm{physical} = \Hil_R \otimes \Hil_S$ by dividing the physical qubits into a set of 4 frame qubits and a single system qubit.
Since the 5-qubit code is cyclic, any collection of 4 qubits can be used to define $R$, with the last qubit being $S$.
Without loss of generality, let us choose $R = 1234$ and $S = 5$.
The pattern is then identical to Eqs.~\eqref{eq:3qbfUR} and \eqref{eq:3qbfUS} in the 3-qubit code.
The first four letters of each Pauli string in \eqref{eq:5qubitstabilizers} furnish a faithful representation of $G$ on $R$, with the last letter supplying the corresponding unfaithful representation on $S$, so that each stabilizer element factorizes as $U^g = U^g_R \otimes U^g_S$.
Denoting the elements of $G$ by strings of $+$'s and $-$'s, we have that:
\begin{equation}\label{eq:5qbRrep}
    \begin{array}{lllllll}
        U^{++++}_R = IIII && U^{----}_R = ZYYZ && U^{+-+-}_R = XYIY && U^{-++-}_R = YXXY\\
        U^{-+++}_R = ZXIX && U^{+-++}_R = IYXX && U^{++-+}_R = XXYI && U^{+++-}_R = XIXZ \\
        U^{+---}_R = IZYY && U^{-+--}_R = ZIZY && U^{--+-}_R = YZIZ && U^{---+}_R = YYZI \\
        U^{++--}_R = IXZZ && U^{+--+}_R = XZZX && U^{--++}_R = ZZXI &&  U^{-+-+}_R = YIYX \\
    \end{array}
\end{equation}

\begin{equation}
    \begin{array}{lllllll}
        U^{++++}_S = I && U^{----}_S = I && U^{+-+-}_S = X && U^{-++-}_S = I\\
        U^{-+++}_S = Z && U^{+-++}_S = Y && U^{++-+}_S = Y && U^{+++-}_S = Z \\
        U^{+---}_S = Z && U^{-+--}_S = Y && U^{--+-}_S = Y && U^{---+}_S = Z \\
        U^{++--}_S = X && U^{+--+}_S = I && U^{--++}_S = X && U^{-+-+}_S = X \\
    \end{array}
\end{equation}
The rule for this assignment is to match $I, X$ to $+$ and $Z, Y$ to $-$, which ensures that the $U_R^g$ and $U_S^g$ indeed represent $G$.
In contrast to the 3-qubit code example, this representation of $G$ on $R$ is projective (and so is the representation on $S$).
The fragments $U_R^g$ and $U_S^g$ therefore obey the conditions \eqref{eq:phasedefinition}, \eqref{eq:phasecondition}, and \eqref{eq:phasedefinitionS}.

The last necessary ingredient is to choose a seed state for $R$.
As before, we may conveniently choose the seed state to be the all-up $X$ eigenstate,
\begin{equation} \label{eq:5qubitseed}
    \ket{e}_R = \ket{++++}_R,
\end{equation}
but this choice is highly non-unique.
\eox
\end{example}

\subsection{QRFs and dual representations of the stabilizer group's Pontryagin dual} \label{ssec:QRFsandPontryagin}

In stabilizer codes, we start by assuming that we are given a unitary representation $U:G\to\rm{Aut}(\Hil_{\rm kin})$, $g\mapsto U^g$ of the stabilizer group. Its irreducible representations are labeled by the characters $\chi\in\hat{G}$, which are the elements of the Pontryagin dual group $\hat{G}$ (see Appendix \ref{app:Pontryagin}). This Pontryagin dual will turn out to play a central role in what follows. Let us now thus see how each ideal QRF defines unitary representations of $\hat{G}$ also. In fact, the following establishes a one-to-one correspondence between ideal QRFs and representations of $\hat G$ that are \emph{dual} to $U$, according to Definition~\ref{def:dual_reps_faithful}. 

Recall from the discussion in Sec.~\ref{Sec:genstabilizerQRF} that for each stabilizer code with \emph{faithful} representation of the stabilizer group such that $-I \notin \mathcal{G}$, there always exist partitions of the physical space into an \emph{ideal} QRF (subject to a possibly projective unitary representation of $G$ on $\mathcal{H}_R$)  and a complementary system. 
In general, there will exist many such partitions and in the following two subsections, we shall see more of them.

\begin{prop}[\textbf{QRF orientation basis/Pontryagin dual representation correspondence}]\label{prop_QRFPcorresp}
   Let $R$ be an ideal QRF with a faithful (possibly projective) unitary representation $U_R(G)=\mathcal{G}_R$ of the stabilizer group $G=\mathbb{Z}_2^{\times(n-k)}$ on $\Hil_R\simeq(\mathbb{C}^{2})^{\otimes(n-k)}$ such that $-I_R \notin \mathcal{G}_R$.~For each covariant, orthonormal QRF orientation  basis $\{\ket{g}_R\}_{g\in G}$ there is a representation $\hat{U}_R: \hat{G}\to {\rm Aut}(\cH_{R})$ of the Pontryagin dual on $\Hil_R$  defined by
    \begin{equation}
        \forall \chi \in \hat{G}\,, \qquad \hat{U}^\chi_R = \sum_{g \in G} \chi(g) \ket{g}\!\bra{g}_R\,
    \end{equation}
that is \emph{dual} to $\mathcal{G}_R$ in the sense that
\begin{equation}
    \forall  g \in G \, , \chi \in \hat{G}\,, \qquad U_R^g \,\hat{U}_R^\chi = \chi(g) \,\hat{U}^\chi_R \,U_R^g \, .
\end{equation}
Conversely, each representation $\hat{U}_R(\hat{G}) = \hat{\mathcal{G}}_R$ of $\hat{G}$ on $\Hil_R$ that is dual to $\mathcal{G}_R$ defines a covariant, orthonormal QRF orientation basis for $\Hil_R$.

Furthermore, the elements of $\hat{\mathcal{G}}_R$  have the properties of Pauli operators, namely: they square to the identity and
    \begin{equation}
        \forall \chi, \eta \in \hat{G}\,, \qquad \Tr (\hat{U}^\chi_R \hat{U}^\eta_R) = 2^{n-k} \delta_{\chi , \eta }\,.
    \end{equation}  
\end{prop}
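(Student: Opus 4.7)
The plan is to split the proof into the forward direction, the converse, and the final algebraic identities, with the dimensional bookkeeping in the converse being the main technical point.

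For the forward direction, I will take $\hat U_R^\chi := \sum_{g \in G} \chi(g) \ket{g}\!\bra{g}_R$ as a definition and read off all properties from the orthonormality of the orientation basis. That $\chi \mapsto \hat U_R^\chi$ is a group homomorphism is immediate: $\hat U_R^\chi \hat U_R^\eta = \sum_g \chi(g)\eta(g)\ket{g}\!\bra{g}_R = \hat U_R^{\chi\eta}$, and faithfulness follows because $\hat U_R^\chi = I$ forces $\chi(g) = 1$ for every $g$. The duality relation is then a short computation: because $U_R^g \ket{h}_R = c(g,h)\ket{gh}_R$ with $|c(g,h)|=1$, the cocycle phases cancel in $U_R^g \ket{h}\!\bra{h}_R (U_R^g)^\dagger = \ket{gh}\!\bra{gh}_R$. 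Summing with the character weights and shifting the summation variable gives $U_R^g \hat U_R^\chi (U_R^g)^\dagger = \chi(g^{-1})\hat U_R^\chi = \chi(g)\hat U_R^\chi$, the last equality using that every element of $G = \mathbb{Z}_2^{\times(n-k)}$ is its own inverse.

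For the converse, let $\hat U_R$ be any representation of $\hat G$ on $\Hil_R$ dual to $\mathcal G_R$ in the sense of Definition~\ref{def:dual_reps_faithful}. Since $\hat G$ is Abelian and every $\hat U_R^\chi$ is unitary, the family is simultaneously diagonalizable. By Pontryagin duality, the joint eigenvalue patterns are characters of $\hat G$ and hence labeled by elements of $G$; set $V_h := \{ v \in \Hil_R \,|\, \hat U_R^\chi v = \chi(h) v \text{ for all } \chi \in \hat G\}$. The duality relation $U_R^g \hat U_R^\chi = \chi(g)\hat U_R^\chi U_R^g$ implies that $U_R^g$ is a unitary map from $V_h$ onto $V_{gh}$, so $G$ acts transitively on the labels of the nonzero $V_h$; since $\Hil_R = \bigoplus_h V_h$ is nonzero, in fact every $V_h$ is nonzero and all have equal dimension. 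Matching $\sum_h \dim V_h = \dim \Hil_R = |G|$ then forces $\dim V_h = 1$ for every $h$. Choosing a unit seed vector $\ket{e}_R \in V_e$ and setting $\ket{g}_R := U_R^g \ket{e}_R$ yields a covariant orthonormal basis---orthogonality being automatic from distinct joint eigenvalues---and recovers the diagonal form $\hat U_R^\chi = \sum_g \chi(g)\ket{g}\!\bra{g}_R$ by construction, with the covariance phase read off as the cocycle $c(g,h)$.

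The final algebraic identities are then clean corollaries. Because $\hat G \simeq \mathbb Z_2^{\times(n-k)}$ implies $\chi^2 = 1$ as an element of $\hat G$, we have $(\hat U_R^\chi)^2 = \hat U_R^{\chi^2} = I$. Combining this with $\Tr(\hat U_R^\chi) = \sum_g \chi(g) = |G|\delta_{\chi,1}$ from character orthogonality gives $\Tr(\hat U_R^\chi \hat U_R^\eta) = \Tr(\hat U_R^{\chi\eta}) = 2^{n-k}\delta_{\chi\eta,1} = 2^{n-k}\delta_{\chi,\eta}$. The hard part will be the dimensional bookkeeping in the converse: one must carefully exploit the duality relation together with $\dim \Hil_R = |G|$ to conclude that the joint eigenspaces exhaust all labels in $G$ and are one-dimensional, at which point everything else---including faithfulness of $\hat U_R$---falls out automatically.
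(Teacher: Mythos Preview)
Your proof is correct and follows essentially the same route as the paper's: the forward direction establishes duality by conjugating the rank-one projectors $\ket{h}\!\bra{h}_R$ (the paper instead computes $U_R^g\hat U_R^\chi$ directly via cocycle identities, but the content is identical), and the converse uses the transitive $G$-action on the joint eigenspaces of $\{\hat U_R^\chi\}$ together with the dimension match $\dim\Hil_R=|G|$ to force each eigenspace to be one-dimensional. The only cosmetic difference is that the paper outsources this last step to Appendix~\ref{app:duality} (Eqs.~\eqref{Pgcov} and~\eqref{equaldim}), whereas you reprove it inline---arguably more transparently.
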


\noindent{\bf Remark.} Though we will not prove it here, the last property implies that we can find a unitary representation $\hat{\cP}_{n-k}$ of the Pauli group, of which $\hat{\mathcal{G}}_R$ is a subgroup; furthermore, $\hat{\mathcal{G}}_R$ is maximal among subgroups of $\hat{\cP}_{n-k}$ that do not contain $-I$. We denoted the Pauli group $\hat{\cP}_{n-k}$ with a hat because it will not in general coincide with the Pauli group $\mathcal{P}_{n-k}$ containing the unitary representation $\mathcal{G}_R$ of the stabilizer group, but rather constitute a unitarily rotated copy of it.~This will become relevant in the next subsection when we apply this result to stabilizer codes and $\hat{\mathcal{G}}_R$ will comprise a set of correctable errors, which thereby need not necessarily be contained in the set of standard Pauli errors. 

\begin{proof}
    Let $\{\ket{g}_R\}_{g\in G}$ be an orthonormal group basis in $\mathcal{H}_R\simeq\mathbb{C}^{2\otimes(n-k)}$, such that $U_R^g\ket{g'}_R=c(g,g')\ket{gg'}_R$. We can then define
\begin{equation}\label{eq:projdualisotype}
    \hat{P}_g:=\ket{g}\!\bra{g}_R
\end{equation}
   and thereby (this is, the group Fourier transform Eq.~\eqref{dualfourier} of Eq.~\eqref{eq:projector_isotype-dual})
\begin{equation}\label{Uhat}
    \hat{U}_R^\chi:=\sum_g \chi(g)\hat{P}_{g}\,,
\end{equation}
where the $\chi(g)\in\{\pm1\}$ are the stabilizer characters.~Clearly, $(\hat{U}_R^\chi)^2=I_R$, $\hat{U}_R^1=I_R$, $\hat{U}_R^\chi\hat{U}_R^\eta=\hat{U}_R^\eta\hat{U}_R^\chi$ and $(\hat{U}_R^\chi)^\dag=\hat{U}_R^\chi$.~Moreover, for any $\chi,\eta\in\hat{G}$,
\begin{equation}\label{TrUU}
\hat{U}_R^\chi\hat{U}_R^\eta=\sum_{g,g'}\chi(g)\eta(g')\delta_{g,g'}\hat{P}_g=\sum_{g}\chi\eta(g)\hat{P}_g=\hat{U}_R^{\chi\eta}\,,
\end{equation}
where we used that characters are multiplicative, cf.~App.~\ref{app:Pontryagin}. Thus, the $\hat{U}_R^\chi$ form a unitary representation of an Abelian group with cardinality ${|\hat{{G}}|=|{G}|=2^{n-k}}$; this is  the Pontryagin dual $\hat{{G}}$ of the stabilizer group ${G}$, see Apps.~\ref{app:Pontryagin} and~\ref{app:reps}. 

Furthermore, using $g=g^{-1}$, we have
\begin{align}
    U_R^g \,\hat{U}^\chi_R &= U_R^g \sum_{h \in G} \chi(h) \ket{h}\!\bra{h}_R = \sum_{h \in G} \chi(h) c(g,h) \ket{gh}\!\bra{h}_R = \sum_{h \in G} \chi(g^{-1}h) c(g,g^{-1}h) \ket{h}\!\bra{g^{-1}h}_R \\
    &= \chi(g) \sum_{h \in G} \chi(h) c(g,g^{-1}h) \ket{h}\!\bra{h}_R (U_R^{g^{-1}} c^*(g^{-1},h))^\dagger \\
    &= \chi(g) \sum_{h \in G} \chi(h) \underbrace{c(g,g^{-1}h) c(g^{-1},h)}_{=1} \ket{h}\!\bra{h}_R U_R^{g} = \chi(g)\, \hat{U}^\chi_R\, U_R^{g}\,.
\end{align}
We have used  Eq.~\eqref{eq:inverse_of_UR}, as well as the cocycle condition Eq.~\eqref{eq:cocycle_cond} and Eq.~\eqref{eq:cocycle_gauge_restriction}, which allow to write
\begin{equation}
    c(g, g^{-1} h)c(g^{-1}, h) = c(g, g^{-1}) c(e,h) = 1\,.
\end{equation}
Hence, $\hat{U}_R$ is \emph{dual}, according to Definition~\ref{def:dual_reps_faithful}, to $U_R$.

    Conversely, suppose $\hat{U}_R(\hat{G})=\hat{\mathcal{G}}_R$ is a representation of $\hat{G}$ on $\Hil_R$ that is dual to $\mathcal{G}_R$. Given that $\Hil_R\simeq\mathbb{C}^{2\otimes(n-k)}$ and $U_R$ is faithful, we have by Eq.~\eqref{equaldim} that each $\hat{P}_g$ in Eq.~\eqref{eq:projector_isotype-dual} is a one-dimensional projector on $\Hil_R$. Hence, Eq.~\eqref{Pgcov} implies that $\{\hat{P}_g\}_{g\in G}$ defines a covariant and orthnormal basis $\{\ket{g}_R\}_{g\in G}$ on $\Hil_R$.

    Finally, let us verify the last claim. Any element of a dual representation $\hat{\mathcal{G}}_R$ on $\Hil_R$ can be written in the form \eqref{Uhat}. Thus, $\Tr_R(\hat{U}_R^{\chi\neq1})=0$ because $\sum_g\chi(g)=0$ for $\chi\neq1$ by Eq.~\eqref{eq:orthogonality_rel_charac} (with $\eta =1$).  Furthermore, Eq.~\eqref{TrUU} tells us that also $\Tr_R(\hat{U}_R^\chi\hat{U}_R^\eta)=\delta_{\chi,\eta}2^{n-k}$, and so, together with the other properties established above, we conclude that the $\{\hat{U}_R^\chi\}_{\chi\in\hat{G}}$ have the properties of Pauli operators. 
\end{proof}
\noindent{\bf Remark.} The family of projectors $\{\ket{g}\!\bra{g}_R\}_{g\in G}$ and the dual representation $\{\hat{U}^\chi_R \}_{\chi\in \hat{G}}$ are related by an operator-valued Fourier transform; see Appendix \ref{app:reps}.

\medskip

The observations made in the previous proposition will become relevant in the next subsection, as we start addressing the role of errors, and their correspondence to QRFs. They will play an even more central role in Sec.~\ref{sec_errorduality}, where we will introduce a notion of error duality based on the same kind of Fourier analysis.

\subsection{Error-set/QRF correspondence}\label{Sec:error/QRFcorrespondence}

In this subsection, we will show that \emph{any} maximal set of correctable Pauli errors generates a QRF with especially nice properties that elucidate the connection between redundancy and correctability. In fact, this will turn out to be a two-way correspondence between error sets and QRFs from a special class. These frames will typically not be local in the original TPS underlying the physical space.

Let us first define maximality of a set of correctable errors. 
\begin{defn}[\textbf{Maximal sets of correctable errors}]\label{def:max_error_set}
    Let $\mathcal{E}=\{E_1,\ldots,E_m\}$ be a set of errors that is correctable according to the KL condition \eqref{KL}. We shall call such a set \emph{maximal} if the coefficient matrix $C_{ij}$ has the maximal rank that is attainable among all possible correctable error sets for the given code, i.e.\ if
    \begin{equation}
        \rm{rank}(C)=\Bigg\lfloor\frac{\dim(\cH_{\rm kin})}{\dim(\cH_{\rm code})}\Bigg\rfloor\,.
    \end{equation}
(Here, $\lfloor\cdot\rfloor$ denotes the floor function.)
\end{defn}

What the maximal rank is depends on the code. For an $[[n,k]]$ stabilizer code it is $2^{n-k}$. Note that maximal sets of correctable errors can be degenerate in the sense that the matrix $C_{ij}$ is not invertible. This happens, for example, when $\mathcal{E}$ contains distinct errors whose action on the code space $\mathcal{H}_{\rm code}$ coincides. This motivates us to define equivalence classes of error sets.
\begin{defn}[\textbf{Equivalence classes of correctable error sets}]\label{def_equiv}
   Two sets of correctable errors $\mathcal{E}$ and $\mathcal{E}'$ (of not necessarily the same cardinality) are called equivalent if
   \begin{equation}
       \Span_\mathbb{C}\{E_i\,\Pi_{\rm pn}\,|\,E_i\in\mathcal{E}\}=\Span_\mathbb{C}\{E_j'\,\Pi_{\rm pn}\,|\,E_j'\in\mathcal{E}'\}\,,
   \end{equation}
   that is, if the linear span of their code space restriction coincides.~The equivalence class $[\mathcal{E}]$ is then comprised of all possible errors sets that are equivalent to $\mathcal{E}$ in this sense.
\end{defn}
 
Next, we may ask ourselves whether Pauli error sets that are maximal (in the sense of Definition \ref{def:max_error_set}) exist or not. This is a non-trivial question \emph{a priori} since $\mathcal{P}_n$ only provides a discrete set of errors. However, representation theory allows one to answer in the affirmative. Indeed, we recall in Appendix \ref{app:stabilizer_reps} that both the kinematical Hilbert space $\cH_{\rm kin}$ and the algebra of operators $\cB(\cH_{\rm kin})$ can be decomposed into isotypes associated to the representation $U$:
\begin{equation}
    \cH_{\rm kin} = \bigoplus_{\chi \in \hat{G}} \cH_\chi\qquad \mathrm{and} \qquad
    \cB(\cH_{\rm kin})= \bigoplus_{\chi \in \hat{G}} \cB_\chi\,.
\end{equation}
In addition, those decompositions are \emph{orthogonal}. Furthermore, we show in Proposition \ref{prop:decomp_Pauli} that: a) the set of Pauli operators itself decomposes as a disjoint union
\begin{equation}
    \cP_n = \bigsqcup_{\chi \in \hat{G}} C_\chi\,, \quad \mathrm{with} \quad C_\chi := \cP_n \cap \cB_\chi \quad \mathrm{and} \quad \Span(C_\chi)= \cB_\chi \quad \forall \chi \in \hat{G}\,;
\end{equation}
and b) for any $\chi \in \hat{G}$, an error $E_\chi \in C_\chi$ isometrically maps the code subspace $\cH_{\rm pn}$ to the orthogonal error subspace $\cH_\chi$. It follows that we have the following characterization of correctable error sets drawn from $\cP_n$. 

\begin{prop}[\textbf{Characterization of correctable Pauli error sets}]\label{prop:correctable_CharTh}
    Let $\cE\subset \cP_n$ be a Pauli error set containing the identity. $\mathcal{E}$ is correctable if and only if: for any $\chi \in \hat{G}$ and any $E , \tilde{E} \in \mathcal{E}\cap C_\chi$, $E \Pi_{\rm pn}$ and $\tilde{E} \Pi_{\rm pn}$ are linearly dependent (meaning that $E$ and $\tilde{E}$ coincide on the code subspace, up to a phase). 
\end{prop}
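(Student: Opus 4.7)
The proof naturally decomposes along the isotypic decomposition $\cP_n = \sqcup_{\chi\in \hat{G}} C_\chi$ from Proposition \ref{prop:decomp_Pauli}. The two structural facts I will invoke are: (i) the product of two elements in $C_\chi$ and $C_\eta$ lies in $C_{\chi\eta}$ (because characters are multiplicative and, for $G = \mathbb{Z}_2^{\times(n-k)}$, $\chi^{-1}=\chi$, so in particular $E^\dagger \in C_\chi$ whenever $E \in C_\chi$); and (ii) any $E \in C_\chi$ maps $\cH_{\rm pn}$ isometrically onto the orthogonal error subspace $\cH_\chi$. Both are established in Appendix \ref{app:stabilizer_reps}. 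Everything else is essentially bookkeeping.

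For the forward direction, assume $\mathcal{E}$ satisfies the Knill-Laflamme condition \eqref{KL}, and pick any $E, \tilde{E} \in \mathcal{E}\cap C_\chi$. By (i), $E^\dagger \tilde{E} \in C_1$, so it commutes with $\Pi_{\rm pn}$ and preserves $\cH_{\rm pn}$. Applying \eqref{KL} then yields $E^\dagger \tilde{E}\,\Pi_{\rm pn} = \Pi_{\rm pn} E^\dagger \tilde{E}\,\Pi_{\rm pn} = c\,\Pi_{\rm pn}$ for some $c \in \mathbb{C}$. Multiplying on the left by $E$ and using that $E$ is unitary on $\cH_{\rm kin}$ (so that $E E^\dagger = I$) yields $\tilde{E}\,\Pi_{\rm pn} = c\, E\,\Pi_{\rm pn}$, which is the desired linear dependence.

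For the reverse direction, assume the stated linear-dependence property and take arbitrary $E_i \in \mathcal{E}\cap C_\chi$ and $E_j \in \mathcal{E}\cap C_\eta$. If $\chi \neq \eta$, then by (ii) the images of $E_i\Pi_{\rm pn}$ and $E_j\Pi_{\rm pn}$ lie in the mutually orthogonal subspaces $\cH_\chi$ and $\cH_\eta$, so $\Pi_{\rm pn} E_i^\dagger E_j \Pi_{\rm pn} = 0$. If $\chi = \eta$, the hypothesis gives $E_j\,\Pi_{\rm pn} = c_{ij}\, E_i\,\Pi_{\rm pn}$ for some scalar $c_{ij}$, and then $\Pi_{\rm pn} E_i^\dagger E_j \Pi_{\rm pn} = c_{ij}\,\Pi_{\rm pn} E_i^\dagger E_i \Pi_{\rm pn} = c_{ij}\,\Pi_{\rm pn}$ by unitarity of $E_i$. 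In both cases \eqref{KL} holds, with a coefficient matrix that is automatically Hermitian, so $\mathcal{E}$ is correctable.

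I do not anticipate a serious obstacle: once Proposition \ref{prop:decomp_Pauli} has been invoked, the isotypic structure of $\cP_n$ does almost all of the work. The only minor subtlety is the inversion step in the forward direction, which requires that $E^\dagger$ restricted to $E(\cH_{\rm pn}) = \cH_\chi$ be the left-inverse of $E\Pi_{\rm pn}$; this follows directly from $E^\dagger E = I$ on all of $\cH_{\rm kin}$ and fact (ii). The hypothesis $I \in \mathcal{E}$ plays no role in the argument itself; it is presumably included for the operational interpretation, in which the ``no error'' outcome is required to lie within the correctable set.
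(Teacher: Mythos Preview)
Your proof is correct and is precisely the argument the paper sets up: the paper does not spell out a proof of this proposition, but presents it as a direct consequence of the isotypic decomposition in Proposition~\ref{prop:decomp_Pauli} and Appendix~\ref{app:stabilizer_reps}, which is exactly what you invoke. Your observation that the hypothesis $I \in \mathcal{E}$ is not used in the equivalence itself is also accurate; it enters only in the subsequent Remark to force elements of $\mathcal{E}\cap C_1$ to act as scalars (rather than merely be mutually proportional) on $\cH_{\rm pn}$.
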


\noindent{\bf Remark.} In particular, since $\mathcal{E}$ contains the identity, a logical operation $E \in \mathcal{E}\cap C_1$ is correctable if and only if $E \Pi_{\rm pn} = \lambda \, \Pi_{\rm pn}$ for some $\lambda \in \{\pm 1 , \pm i\}$.

\medskip

The previous proposition shows that to obtain a maximal Pauli error set $\cE\subset \cP_n$ we need to first select, for every $\chi \in \hat{C}$, \emph{one} Pauli operator $E_\chi$ in the sector $C_\chi$ (which is possible since $C_\chi \neq \emptyset$). We then have $2^{n-k}$ inequivalent elements in $\cE$, which makes it maximal. Any other Pauli error we may want to add to $\cE$ after that, while preserving its correctable character, will have to be proportional to one of the operators $E_\chi$ when restricted to the code subspace. Hence, any error set obtained in this way will be equivalent to $\cE$. This leads to the following corollary.
\begin{cor}[\textbf{Characterization of maximal Pauli error sets}]\label{cor:max_error_sets}
    Maximal correctable Pauli error sets exist. Furthermore, any such set is equivalent (in the sense of Definition \ref{def:max_error_set}) to a set of the form $\cE = \{E_\chi \}_{\chi \in \hat{G}}$, where: a) $E_1 = I$; b) for any $\chi \in \hat{G}\setminus \{1\}$, $E_\chi \in C_\chi$.
\end{cor}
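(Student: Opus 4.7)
The plan is to produce an explicit witness for part~(a), verify both its correctability and its maximality, and then show that any maximal correctable Pauli error set must be equivalent to one of the asserted form. The basic tools are the isotypic decomposition $\mathcal{P}_n=\bigsqcup_{\chi\in\hat{G}}C_\chi$ of Proposition~\ref{prop:decomp_Pauli}, the characterization of correctability in Proposition~\ref{prop:correctable_CharTh}, and the sector-shift identity: for any $E\in C_\chi$ one has $E\,\mathcal{H}_\sigma=\mathcal{H}_{\chi\sigma}$, which follows directly from the defining relation $U^g E=\chi(g)\,E\,U^g$ applied to $\sigma$-eigenvectors of the stabilizer action.

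For existence, I would pick some $E_\chi\in C_\chi$ for each $\chi\in\hat{G}\setminus\{1\}$ (possible because every $C_\chi$ is non-empty by Proposition~\ref{prop:decomp_Pauli}) and set $E_1:=I\in C_1$. The set $\mathcal{E}=\{E_\chi\}_{\chi\in\hat{G}}$ has exactly one element per sector, so the linear-dependence hypothesis of Proposition~\ref{prop:correctable_CharTh} is vacuously satisfied and $\mathcal{E}$ is correctable. To check maximality, I would compute the Knill--Laflamme matrix $C_{\chi\eta}\,\Pi_{\rm pn}=\Pi_{\rm pn}E_\chi^\dagger E_\eta\Pi_{\rm pn}$. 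Since $E_\chi$ is unitary, $E_\chi^\dagger\in C_{\chi^{-1}}$, and products of Pauli strings lie in the corresponding product sector, so $E_\chi^\dagger E_\eta$ coincides, up to a unit phase, with a Pauli string in $C_{\chi^{-1}\eta}$. By the sector-shift identity, $E_\chi^\dagger E_\eta$ then maps $\mathcal{H}_{\rm pn}=\mathcal{H}_1$ isometrically into $\mathcal{H}_{\chi^{-1}\eta}$, which is orthogonal to $\mathcal{H}_{\rm pn}$ whenever $\chi\neq\eta$; hence $C_{\chi\eta}=0$ off-diagonal, while $\chi=\eta$ gives $E_\chi^\dagger E_\chi=I$ and thus $C_{\chi\chi}=1$. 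So $C$ is the identity matrix of size $|\hat G|=2^{n-k}=\dim(\mathcal{H}_{\rm kin})/\dim(\mathcal{H}_{\rm pn})$, which is the maximal possible rank.

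For the characterization part, let $\mathcal{E}'$ be an arbitrary maximal correctable Pauli error set. By Proposition~\ref{prop:correctable_CharTh}, all elements of $\mathcal{E}'\cap C_\chi$ restrict to scalar multiples of a single operator on $\mathcal{H}_{\rm pn}$, whereas restrictions coming from different sectors are automatically linearly independent because their ranges are contained in the mutually orthogonal subspaces $\mathcal{H}_\chi$. Consequently, the rank of the associated Knill--Laflamme matrix equals the number of sectors that $\mathcal{E}'$ meets non-trivially. Maximality forces this number to be $|\hat G|$, so $\mathcal{E}'$ intersects every sector; selecting one representative $E_\chi'\in\mathcal{E}'\cap C_\chi$ per sector yields a subset with the same code-restricted span as $\mathcal{E}'$, hence equivalent to $\mathcal{E}'$ in the sense of Definition~\ref{def_equiv}. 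Finally, by the remark following Proposition~\ref{prop:correctable_CharTh}, the trivial-sector representative $E_1'\in C_1$ satisfies $E_1'\,\Pi_{\rm pn}=\lambda\,\Pi_{\rm pn}$ for some $\lambda\in\{\pm 1,\pm i\}$, so replacing $E_1'$ by $I$ does not alter the equivalence class. I do not anticipate any substantive obstacle beyond careful bookkeeping of the overall Pauli phases produced by multiplying representatives across sectors, and this is precisely what the equivalence relation of Definition~\ref{def_equiv} is designed to absorb.
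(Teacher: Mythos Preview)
Your argument is correct and follows essentially the same line as the paper's brief justification (the paragraph immediately preceding the corollary): both deduce existence and the characterization directly from Proposition~\ref{prop:correctable_CharTh} and the isotypic decomposition of Proposition~\ref{prop:decomp_Pauli}, with your version supplying the explicit Knill--Laflamme computation and the rank-equals-number-of-sectors-met count that the paper leaves implicit. The only subtlety---your invocation of the remark after Proposition~\ref{prop:correctable_CharTh} to conclude $E_1'\Pi_{\rm pn}=\lambda\Pi_{\rm pn}$ presupposes $I\in\mathcal{E}'$---is the same implicit assumption the paper carries over from that proposition.
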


We will deal with equivalent maximal error sets below. Generic maximal sets are not equivalent. For example, let $\{E_\chi\}_{\chi\in\hat{G}}\subset\mathcal{P}_{n}$  be a maximal set of Pauli errors. Then $\{V_\chi E_\chi\}_{\chi\in\hat{G}}$, where $V_\chi$ is a  unitary on the isotype $\Hil_\chi$ (i.e.\ the corresponding error space) with the condition that $V_1=I$, is inequivalent if at least one $V_{\chi\neq1}\neq I_\chi$. 

We are now ready to prove, for stabilizer codes, that any (equivalence class of a) maximal set of correctable Pauli errors according to the KL condition defines a \emph{unique ideal} QRF $R$, distinguished by the fact that both gauge transformations and errors act non-trivially only on $R$ and trivially on its complementary system $S$. A converse statement also holds. Every such error set thus gives rise to a clean \emph{kinematical} partitioning between purely redundant and logical/gauge-invariant information.~As there are many inequivalent such error sets, there exist many such splits between redundant (a QRF $R$) and non-redundant (a system $S$) subsystems for each stabilizer code/perspective-neutral setup. 
\begin{thm}[\textbf{Error-set/QRF correspondence}]\label{thm_errorQRF}
Consider a $[[n,k]]$ stabilizer code and let $\mathcal E=\{E_0=I,\ldots,E_m\}\subset\mathcal P_n$ be any maximal set of correctable errors. The unital $*$-algebra $\mathcal A_R$ generated by the restriction $E_i\Pi_{\rm pn}$, $i=0,\ldots,m$, of the errors to $\mathcal H_{\rm code}$ 
\begin{itemize} 
\item[(i)] is independent of the representative $\mathcal{E}\in[\mathcal{E}]$,
\item[(ii)] contains the stabilizer group $\mathcal G$, and \item[(iii)] defines a gauge $(n-k)$-qubit Hilbert space $\mathcal H_R\simeq(\mathbb{C}^2)^{\otimes (n-k)}$ such that $\mathcal A_R=\cB(\mathcal{H}_R)$.  \end{itemize}
$R$ thus constitutes a QRF with special properties:~its commutant $\mathcal{A}_S:=\mathcal{A}_R'$ defines a gauge-invariant virtual $k$-qubit system $S$ and together they define a tensor product structure $\mathcal H_{\rm kin}\simeq\mathcal{H}_R\otimes\mathcal{H}_S$ such that $\mathcal{G}$ acts trivially on $\mathcal{H}_S$, i.e.\ $U^g\mapsto U^g_R\otimes I_S$ for all $g\in{G}$.~Furthermore, $R$ admits orthogonal orientation states and is thus an ideal QRF.

Conversely, let $\Hil_{\rm kin}\simeq\Hil_R\otimes\Hil_S$ with $\Hil_R\simeq(\mathbb{C}^2)^{\otimes (n-k)}$ and suppose $\mathcal{G}=U(G)$ is faithful on $\Hil_R$ and acts trivially on $\Hil_S$. Every such QRF $R$ is associated with a unique equivalence class $[\hat{\mathcal{E}}]$ of maximal sets of correctable errors distinguished by the property that their code space restriction acts trivially on $S$, i.e.\ $\hat{E}_\chi\Pi_{\rm pn}=E^\chi_R\otimes I_S$ for all $\hat{E}_\chi\in\hat{\mathcal{E}}$. This is the equivalence class $[\hat{\mathcal{E}}]=[\hat{\mathcal{G}}]$ defined by containing all the dual representations on $\Hil_R$ of the Pontryagin dual $\hat{{G}}$  of the stabilizer group ${G}$.
\end{thm}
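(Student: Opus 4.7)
The plan is to prove the two directions separately, using the isotypic decomposition $\Hil_{\rm kin} = \bigoplus_{\chi \in \hat{G}} \Hil_\chi$ from Appendix~\ref{app:stabilizer_reps} throughout and exploiting Corollary~\ref{cor:max_error_sets} to reduce the forward direction to a canonical representative $\mathcal{E} = \{E_\chi\}_{\chi \in \hat{G}}$ with $E_1 = I$ and each $E_\chi \in C_\chi$ isometrically mapping $\Hil_{\rm pn}$ onto $\Hil_\chi$. Item~(i) is immediate: any $\mathcal{E}' \in [\mathcal{E}]$ shares the linear span $\Span\{E_i \Pi_{\rm pn}\}$ by definition and hence generates the same unital $*$-algebra. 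The key technical objects are the ``matrix units'' $e_{\chi\eta} := E_\chi \Pi_{\rm pn} E_\eta^\dag$, which are partial isometries from $\Hil_\eta$ to $\Hil_\chi$ annihilating their orthogonal complement. A short calculation using $E_\chi \Pi_{\rm pn} E_\chi^\dag = \Pi_\chi$ (the isotype projector) and $\Pi_\chi \Pi_\eta = \delta_{\chi\eta} \Pi_\chi$ yields $e_{\chi\eta}\,e_{\chi'\eta'} = \delta_{\eta\chi'}\,e_{\chi\eta'}$ and $e_{\chi\eta}^\dag = e_{\eta\chi}$, so $\mathcal{A}_R = \Span\{e_{\chi\eta}\}$ is a full matrix algebra of dimension $|\hat{G}|^2 = 2^{2(n-k)}$. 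Item~(ii) follows since $I = \sum_\chi e_{\chi\chi} \in \mathcal{A}_R$, and $E_\chi \in C_\chi$ implies $U^g E_\chi = \chi(g) E_\chi U^g$, whence $U^g$ acts as multiplication by $\chi(g)$ on $\Hil_\chi$ and therefore $U^g = \sum_\chi \chi(g)\, e_{\chi\chi} \in \mathcal{A}_R$. Item~(iii) then follows from the finite-dimensional structure theorem for type~I factors: the unital embedding $\mathcal{A}_R \hookrightarrow \cB(\Hil_{\rm kin})$ induces a canonical tensor decomposition $\Hil_{\rm kin} \simeq \Hil_R \otimes \Hil_S$ with $\mathcal{A}_R = \cB(\Hil_R) \otimes \mathbb{C} I_S$ and commutant $\mathcal{A}_R' = \mathbb{C} I_R \otimes \cB(\Hil_S)$, where $\dim \Hil_R = 2^{n-k}$ and $\dim \Hil_S = 2^k$ by dimension counting.

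Under this factorization, $U^g = U_R^g \otimes I_S$ acts trivially on $S$, and $U_R$ inherits faithfulness from $U$ together with $-I_R \notin \mathcal{G}_R$ from $-I \notin \mathcal{G}$. To exhibit that $R$ is ideal, I decompose each error as $E_\chi = \hat{U}_R^\chi \otimes I_S$ and use the commutation relation above, which becomes $U_R^g \hat{U}_R^\chi = \chi(g)\,\hat{U}_R^\chi U_R^g$, to identify $\hat{U}_R$ as a representation of $\hat{G}$ on $\Hil_R$ that is \emph{dual} to $U_R(G)$ in the sense of Proposition~\ref{prop_QRFPcorresp}. The converse part of that proposition then produces an orthonormal covariant orientation basis $\{\ket{g}_R\}_{g \in G}$, establishing ideality.

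For the converse direction, I start with a factorization $\Hil_{\rm kin} \simeq \Hil_R \otimes \Hil_S$ with $U^g = U_R^g \otimes I_S$ faithful on $\Hil_R$ of dimension $|G|=2^{n-k}$; this forces $U_R$ to be (isomorphic to) the regular representation, whose invariant subspace is one-dimensional and spanned by $\ket{\alpha}_R := \tfrac{1}{\sqrt{|G|}} \sum_{g \in G} \ket{g}_R$ for any covariant orientation basis, so that $\Pi_{\rm pn} = \ketbra{\alpha}{\alpha}_R \otimes I_S$. Choosing any such orientation basis, Proposition~\ref{prop_QRFPcorresp} yields a dual representation $\hat{U}_R^\chi$; setting $\hat{E}_\chi := \hat{U}_R^\chi \otimes I_S$, the Knill--Laflamme condition reduces via character orthogonality to $\Pi_{\rm pn} \hat{E}_\chi^\dag \hat{E}_\eta \Pi_{\rm pn} = \langle \alpha | \hat{U}_R^{\chi^{-1}\eta} | \alpha\rangle_R\, \Pi_{\rm pn} = \delta_{\chi,\eta}\,\Pi_{\rm pn}$, confirming that $\hat{\mathcal{E}}= \{\hat{E}_\chi\}_{\chi \in \hat{G}}$ is a maximal correctable error set of the desired form. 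For uniqueness of the equivalence class, any maximal correctable $\tilde{\mathcal{E}}$ whose restrictions act trivially on $S$ must satisfy $\tilde{E}_\chi \Pi_{\rm pn} = \ket{v_\chi}\!\bra{\alpha}_R \otimes I_S$ with $\{\ket{v_\chi}\}_{\chi \in \hat{G}}$ forced by KL and maximality to be an orthonormal basis of $\Hil_R$; hence $\Span\{\tilde{E}_\chi \Pi_{\rm pn}\}$ coincides with $\{\ket{w}\!\bra{\alpha}_R \otimes I_S : \ket{w}_R \in \Hil_R\} = \Span\{\hat{E}_\chi \Pi_{\rm pn}\}$ independently of the basis choice, so all such $\tilde{\mathcal{E}}$ lie in $[\hat{\mathcal{G}}]$.

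The main obstacle I anticipate is cleanly tracking the tensor factorization induced by $\mathcal{A}_R$ on $\Hil_{\rm kin}$ and verifying that $U^g$, $\Pi_{\rm pn}$, and the errors $E_\chi$ all decompose as claimed under it, together with handling the potentially projective character of $\hat{U}_R$ without obstructing the application of Proposition~\ref{prop_QRFPcorresp}.
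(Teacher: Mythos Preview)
Your approach is essentially the same as the paper's: you build the matrix units $e_{\chi\eta}=E_\chi\Pi_{\rm pn}E_\eta^\dag$, identify $\mathcal{A}_R$ as a full matrix algebra, recover $U^g=\sum_\chi\chi(g)e_{\chi\chi}\in\mathcal{A}_R$, and invoke the factor structure to obtain the tensor split. The converse direction is also along the paper's lines and is fine.

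There is, however, a genuine gap in your proof of ideality in the forward direction. You write ``I decompose each error as $E_\chi=\hat{U}_R^\chi\otimes I_S$'' and then invoke Proposition~\ref{prop_QRFPcorresp}. But the full Pauli operator $E_\chi$ is \emph{not} generally an element of $\mathcal{A}_R$, and hence does not factorize as $(\text{something})\otimes I_S$; only its code-space restriction $E_\chi\Pi_{\rm pn}=e_{\chi 1}$ does. Concretely, in the three-qubit code with $\mathcal{E}=\{I,X_1,X_2,X_3\}$, one has $X_1P_{\chi_3}=X_1X_2\Pi_{\rm pn}X_2$, which is a partial isometry $\Hil_{\chi_3}\to\Hil_{\chi_4}$ different from $e_{\chi_4\chi_3}=X_3\Pi_{\rm pn}X_2$ (they send $\ket{010}$ to $\ket{110}$ and $\ket{001}$ respectively); since the $\mathcal{A}_R$-component of $\cB(\Hil_{\chi_3},\Hil_{\chi_4})$ is one-dimensional, $X_1\notin\mathcal{A}_R$. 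Even if the decomposition held, $\{\hat{U}_R^\chi\}$ would fail to be a representation of $\hat{G}$ unless $E_\chi E_\eta\Pi_{\rm pn}\propto E_{\chi\eta}\Pi_{\rm pn}$, which the same example refutes. Proposition~\ref{prop_QRFPcorresp} therefore does not apply as stated.

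The paper's fix is direct and avoids Proposition~\ref{prop_QRFPcorresp} here: since $U_R$ is faithful on $\Hil_R\simeq(\mathbb{C}^2)^{\otimes(n-k)}$ and $-I_R\notin\mathcal{G}_R$, Lemma~\ref{lemma:dim_Hj} forces every isotype of $\Hil_R$ to be one-dimensional, so $U_R$ is isomorphic to the regular representation and the group basis $\{\ket{g}_R\}$ exists automatically. Equivalently, you can observe that each $e_{\chi\chi}=P_\chi$ restricts to a rank-one projector $\ket{\chi}\!\bra{\chi}_R$ on $\Hil_R$ with $U_R^g\ket{\chi}_R=\chi(g)\ket{\chi}_R$, and then Fourier-transform to $\ket{g}_R:=|G|^{-1/2}\sum_\chi\chi(g)\ket{\chi}_R$, which is covariant and orthonormal. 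Either route closes the gap without attempting to factorize $E_\chi$ itself.
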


\noindent{\bf Remark.}
The dual representations $\hat{\mathcal{G}}=\hat{U}(\hat{G})$ of the Pontryagin dual  thus each comprise maximal sets of correctable errors.~It is in general not obvious whether $[\hat{\mathcal{E}}]$ always contains an error set $\mathcal{E}$ comprised exclusively of Pauli errors from the original set $\mathcal{P}_{n}$.~However, by Proposition \ref{prop_QRFPcorresp}, the dual representations $\hat{\mathcal{G}}$ are each guaranteed to reside in a possibly unitarily rotated Pauli group copy $\hat{\mathcal{P}}_n$.~In the examples below it will always be possible to choose a dual representation  $\hat{\mathcal{G}}$ that is contained in $\mathcal{P}_n$. \\

As only the code space restrictions of the errors matter, we thus see that the correspondence is between certain equivalence classes of maximal sets of correctable errors and QRFs such that these restricted errors  generate the stabilizer group \emph{and} the QRF.
In particular, the stabilizer gauge transformations are  generated by (the code space restriction of) the errors in \emph{any} maximal correctable set of Pauli operators.

\begin{proof}
Since the errors are assumed to be Pauli group elements, we can without loss of generality take $E_i=E_i^\dag$. Furthermore, the code is degenerate with respect to $\mathcal E$ only if there exist $E_i,E_j\in\mathcal E$ such that $E_iE_j\in\mathcal{G}$, in which case their action on $\mathcal{H}_{\rm pn}$ is identical and we place them in an equivalence class, $E_i\sim E_j$. It suffices to consider a single (arbitrary) representative from each class and so, without loss of generality, we can henceforth assume that the code is nondegenerate with respect to $\mathcal{E}$. The KL condition \eqref{KL} then reads \cite{Gottesman:1997zz}
\begin{equation}\label{KLstab}
    \Pi_{\rm pn} E_iE_j\Pi_{\rm pn}=\delta_{ij}\Pi_{\rm pn}\,,
\end{equation}
where the index runs over $i=0,\ldots,2^{n-k}-1$ because we assumed $\mathcal E$ to be maximal.

Consider now the $*$-algebra $\mathcal{A}_R$ generated by $E_i\Pi_{\rm pn}$. Using the KL condition \eqref{KLstab}, it may be checked that
\begin{equation}
    \mathcal{A}_R=\Span_\mathbb{C}\{E_i\Pi_{\rm pn}E_j, \forall\,i,j=0,\ldots,2^{n-k}-1\}\,,
\end{equation}
which also clarifies that $\mathcal{A}_R$ is independent of the representative of $[\mathcal{E}]$. This is a $4^{n-k}$-dimensional complex vector space. To see this, consider an arbitrary linear combination of the basis elements $\sum_{i,j}\lambda_{ij}E_i\Pi_{\rm pn} E_j$ and act with $\Pi_{\rm pn}\,E_k$ from the left and with $E_l\,\Pi_{\rm pn}$ from the right on it. Thanks to Eq.~\eqref{KLstab}, this yields $\lambda_{kl}\,\Pi_{\rm pn}$. Since this holds for all $k,l=0,\ldots,2^{n-k}-1$ and the $\lambda_{ij}$ are arbitrary, we conclude that the $4^{n-k}$ basis elements are indeed linearly independent.

The KL condition \eqref{KLstab} further implies
\begin{equation}\label{eprod}
   ( E_i\Pi_{\rm pn}E_j)(E_k\Pi_{\rm pn}E_l)=\delta_{jk}E_i\Pi_{\rm pn}E_l
\end{equation}
and so the $2^{n-k}$ diagonal elements $E_i\Pi_{\rm pn}E_i$ are mutually orthogonal projectors of rank $2^k$ since 
\begin{equation}
   \Tr(E_i\,\Pi_{\rm pn}\,E_i)=\Tr\Pi_{\rm pn}=2^k
\end{equation} 
by the cyclicity of the trace. Moreover, the basis elements are Hilbert-Schmidt-orthogonal. This follows from Eq.~\eqref{eprod} and  
\begin{equation}
    \Tr (E_i\,\Pi_{\rm pn}\,E_j)=\Tr (E_i\,\Pi_{\rm pn} \Pi_{\rm pn}\,E_j)= \Tr(\Pi_{\rm pn}\,E_jE_i \Pi_{\rm pn})= \delta_{ij} \Tr(\Pi_{\rm pn})\,.
\end{equation}
Finally, this algebras is unital with
\begin{equation}
    \sum_{i=0}^{2^{n-k}-1} E_i\Pi_{\rm pn}E_i=I\,.
\end{equation}
Thus, $\mathcal{A}_R$ is isomorphic to the algebra of complex $2^{n-k}\times2^{n-k}$ matrices with Hilbert-Schmidt inner product and, taking the rank into account, we can write $\mathcal{A}_R\simeq\cB((\mathbb{C}^2)^{\otimes(n-k)})\otimes I_{2^k\times 2^k}$. 

Next, we show that $\mathcal{G}\subset\mathcal{A}_R$. To this end, let us rely on a more canonical labeling of the errors $\cE = \{ E_i \}_{0\leq i\leq 2^{n-k}-1}$, which makes their commutation relations with the gauge group elements $\{ U^g \}_{g \in G}$ manifest. As established in Corollary \ref{cor:max_error_sets} based on results reviewed in Appendix \ref{app:stabilizer_reps} (see also the Appendices \ref{app:Pontryagin} and \ref{app:reps} for related material), we can, in fact, label the elements of any maximal Pauli error set by the characters of $G$: $\cE = \{ E_\chi \}_{\chi \in \hat{G}}$, with the condition that $E_1 = I$ (where $1$ denotes the trivial character). For any $\chi \in \hat{G}$ and $g \in G$, we then have: $E_\chi U^g = \chi(g) U^g E_\chi$, with $\chi(g) \in \{\pm 1\}$. Consider then the projector associated to $\chi\neq 1$, 
\begin{equation}\label{Pj}
    E_\chi \Pi_{\rm pn}E_\chi =\frac{1}{2^{n-k}}\sum_{g\in G}\chi(g)U^g\,,
\end{equation} 
which coincides with the projector $P_\chi$ onto the isotype of $\cH_{\rm kin}$ labeled by $\chi$ (see Eqs.~\eqref{eq:projectors_on_isotypes} and Proposition \eqref{prop:decomp_Pauli}), and is clearly gauge-invariant. For any $g \in G$, we have 
\begin{equation}\label{cig}
    \chi(g)I=(E_\chi U^g)^2=(U^g E_\chi)^2\,,
\end{equation}
and so $1(g)=\chi(e)=+1$ (as one should expect for characters). Furthermore, for any $\chi \in \hat{G}$ and $g \in G$,
\begin{equation}\label{eq:ev_U}
    U^gE_\chi\Pi_{\rm pn}=\chi(g)E_\chi \Pi_{\rm pn}
\end{equation}
so the $\{\chi(g)\}$, for fixed $\chi$ (and any generating subset of $\{g\in G\}$), define the syndrome of error $E_\chi$. Now $U^g$ is a Pauli operator and thus has equally many $+1$ and $-1$ eigenvalues for $g\neq e$. Given that all the eigenvalues of $U^g$ are precisely the $\{ \chi(g) \}_{\chi\in \hat{G}}$, each appearing with multiplicity $2^k$ (by \eqref{eq:ev_U}, and because we have $2^{n-k}$ errors $E_\chi$ and $\dim\mathcal H_{\rm code}=2^k$), this implies that $\displaystyle{\sum_{\chi \in \hat{G}} \chi(g)=0}$ for $g\neq e$ (see also Eq.~\eqref{eq:orthogonality_rel_dual} evaluated with $h=e$). Similarly, we have $\displaystyle{\sum_{\chi\in \hat{G}} \chi(e)=2^{n-k}}$. 

Using Eq.~\eqref{cig}, we find the orthogonality relation (thus recovering \eqref{eq:orthogonality_rel_dual}, which can be interpreted as an orthogonality relation for dual characters)
\begin{eqnarray}
    \sum_{\chi \in \hat{G}}\chi(g)\chi(g')&=&\sum_{\chi \in \hat{G}} (U^gE_\chi)^2(E_\chi U^{g'})^2=U^g\Biggl(\sum_{\chi \in \hat{G}} E_\chi U^{gg'}E_\chi\Biggr)U^{g'}\nonumber\\
    &=&\sum_{\chi \in \hat{G}} \chi(gg')=2^{n-k}\delta_{g,g'}\,,
\end{eqnarray}
from which we infer that\footnote{This formula can be interpreted as an operator-valued Fourier transform, see Appendix \ref{app:reps}.}
\begin{equation}
    \sum_{\chi \in \hat{G}} \chi(g)E_\chi \Pi_{\rm pn}E_\chi=\frac{1}{2^{n-k}}\sum_{\chi \in \hat{G}, g' \in G}\chi(g)\chi(g')U^{g'}=U^g\,.
\end{equation}
Hence, $U^g\in\mathcal{A}_R$, for all $g\in{G}$. 
Now $\mathcal{A}_R$ was shown to be isomorphic to the algebra of (bounded) operators on $(\mathbb{C}^2)^{\otimes(n-k)}$ and is thus a (Type $\rm{I}$) factor, i.e.\ has trivial center (only multiples of the identity commute with all elements in $\mathcal{A}_R$). Its commutant $\mathcal{A}_S:=\mathcal{A}_R'$  on $\mathcal{H}_{\rm kin}$ is thus a gauge-invariant $k$-qubit algebra, a factor, and $\mathcal{A}_R\cap\mathcal{A}_S=\mathbb{C}I$. Together with the property that the join of $\mathcal{A}_R$ and $\mathcal{A}_S$ generates all of $\cB(\mathcal{H}_{\rm kin})$, this implies \cite{Zanardi:2001zz,Zanardi:2004zz} that they define a tensor product structure $\mathcal{H}_{\rm kin}\simeq\mathcal{H}_R\otimes\mathcal{H}_S$  with the claimed properties. 

Now the stabilizer group ${G}=\mathbb{Z}_2^{\times(n-k)}$ is faithfully represented on $\mathcal{H}_{\rm kin}$ (and we have $-I \notin \mathcal{G}$). Then, since $U^g\in\mathcal{A}_R$ for any $g\in{G}$, it is also faithfully represented on $\mathcal{H}_R$ and $-I_R$ is not part of this representation. According to the analysis of Appendix \ref{app:stabilizer_reps}, $\cH_R$ thus decomposes as $2^{n-k}$ non-trivial isotypes of equal size, and since $\dim(\cH_R)=2^{n-k}$, they must all be one-dimensional, and in particular irreducible. Hence, $\cH_R$ contains exactly one copy of each irreducible representation of $G$; it is therefore isomorphic to the regular representation \cite{serre1977linear, simon1996representations}. Consequently, we have that $\mathcal{H}_R\simeq\ell^2({G})$ carries the regular representation of $G$ and thereby contains an orthonormal group orientation basis $\{\ket{g}_R\}_{g\in{G}}$, with $U_R^g\ket{h}_R=\ket{gh}_R$ for any $g,h \in G$. (See also Theorem \ref{thm:nonlocal_fact}, where such a group basis will be explicitly constructed.) In conclusion, $R$ is an ideal QRF.

Conversely, let ${G}=\mathbb{Z}^{\times(n-k)}_2$ be faithfully represented on $\Hil_{\rm kin}$ and suppose $R$ is a QRF such that $U^g=U^g_R\otimes I_S$ for all $g\in{G}$. From the preceding paragraph it follows that there exists an orthonormal group basis $\ket{g}_R\in\mathcal{H}_R\simeq(\mathbb{C}^{2})^{\otimes(n-k)}$, such that $U_R^g\ket{g'}_R=\ket{gg'}_R$. We can then use Eq.~\eqref{Uhat}
to define an error set $\hat{\mathcal{E}}=\{\hat{E}_\chi\}_{\chi\in\hat{G}}$ via a dual representation of the Pontryagin dual $\hat{G}$ on $\Hil_R$ from Proposition \ref{prop_QRFPcorresp}:
\begin{equation}\label{Uhat2}
    \hat{E}_\chi:=\hat{U}^\chi_R\otimes I_S\,.
\end{equation}
Then $\hat{E}_\chi\Pi_{\rm pn}=\ket{\chi}\!\bra{1}_R\otimes I_S$, where $\ket{\chi}_R=\frac{1}{\sqrt{|\mathcal{G}|}}\sum_g \chi(g)\ket{g}_R$ and so $\ket{1}_R=\frac{1}{\sqrt{|\mathcal{G}|}}\sum_g\ket{g}_R$. Since $\braket{\chi}{\eta}=\delta_{\chi,\eta}$, we conclude that the code space restrictions of the $\hat{E}_\chi\,\Pi_{\rm pn}$ remain linearly independent, and since further $\braket{\chi}{1}_R=\frac{1}{|\mathcal{G}|}\sum_g \chi(g)=0$, for $\chi\neq1$, we also have that each such error $\hat{E}_{\chi\neq1}$ maps into the orthogonal complement of $\mathcal{H}_{\rm code}=\ket{1}_R\otimes\mathcal{H}_S$. Hence, $\hat{\mathcal{E}}$ obeys the KL condition Eq.~\eqref{KLstab} and is maximal. Furthermore, since $\dim\Hil_{\rm code}^\perp=(2^{n-k}-1)2^k$ and the $2^k$-dimensional $\Hil_S$ is gauge-invariant, we have that 
$
\Span_{\mathbb{C}}\{\hat{E}_\chi\,\Pi_{\rm pn}\,|\,\hat{E}_\chi\in\hat{\mathcal{E}}\}$
contains the code space restriction of all possible correctable errors whose restriction  acts trivially on $\mathcal{H}_S$.

When the group basis $\ket{g}_R$ is non-unique, each such basis yields a representation of the Pontryagin dual $\hat{{G}}$ of ${G}$ with the same properties as just discussed. In particular, since all of them  act trivially on $\Hil_S$ they are contained in $[\hat{\mathcal{E}}]$. Thanks to Proposition \ref{prop_QRFPcorresp} this accounts for all possible  dual representations of $\hat{{G}}$ on $\Hil_R$.
\end{proof}

This result implies a reinterpretation of the KL-condition for stabilizer codes in terms of QRFs, showing explicitly that a necessary and sufficient condition for an error set comprised of Pauli operators to be correctable is that it acts exclusively on a choice of redundant (QRF) degrees of freedom, leaving the logical data untouched.
\begin{cor}[\textbf{QRF-reinterpretation of the Knill-Laflamme condition}]
\label{cor:KL-via-QRFs}
    Consider again a $[[n,k]]$ stabilizer code.~A (not necessarily maximal) set $\mathcal{E}=\{E_0,\ldots,E_m\}\subset\mathcal{P}_n$ of Pauli errors is correctable if and only if there exists a subsystem partition $\mathcal{H}_{R}\otimes\mathcal{H}_S$ of $\mathcal{H}_{\rm kin}$ with $\Hil_R\simeq(\mathbb{C}^{2})^{\otimes(n-k)}$ such that both $\mathcal{G}$ and the code space restricted errors $E_i\,\Pi_{\rm pn}$ act trivially on $S$, i.e.\ $g\mapsto U^g_R\otimes I_S$ for all $g\in\mathbb{R}$ and $E_i\Pi_{\rm pn}=\tilde{E}_i\otimes I_S$ for $i=0,\ldots,m$.~Thus, $R$ is entirely redundant and comprises a QRF $R$. 
\end{cor}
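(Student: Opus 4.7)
The plan is to prove the two directions of the biconditional separately. The forward direction will reduce to Theorem~\ref{thm_errorQRF} via an extension argument, while the reverse direction will amount to a short direct verification of the Knill--Laflamme condition~\eqref{KL}.

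For the forward direction, I would assume $\mathcal{E}$ is correctable, and if necessary first adjoin the identity (which preserves correctability). Proposition~\ref{prop:correctable_CharTh} then implies that $\mathcal{E}$ contains at most one Pauli per character sector $C_\chi$ up to proportionality on $\Hil_{\rm pn}$, and Corollary~\ref{cor:max_error_sets} permits me to extend $\mathcal{E}$ to a maximal correctable Pauli error set $\mathcal{E}' \supset \mathcal{E}$ by adjoining a single Pauli representative from each sector $C_\chi$ not yet represented. Applying Theorem~\ref{thm_errorQRF} to $\mathcal{E}'$ then produces the desired tensor factorization $\Hil_{\rm kin} \simeq \Hil_R \otimes \Hil_S$ with $\Hil_R \simeq (\mathbb{C}^2)^{\otimes(n-k)}$, on which the stabilizer group acts as $U^g = U_R^g \otimes I_S$ and where $\mathcal{A}_R = \cB(\Hil_R) \otimes I_S$ contains every code-restricted error $E' \Pi_{\rm pn}$ with $E' \in \mathcal{E}'$. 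Since $\mathcal{E} \subset \mathcal{E}'$, the required trivial action on $\Hil_S$ is inherited by the code-restricted errors of $\mathcal{E}$ itself.

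For the reverse direction, I would assume the factorization with the stated properties. Faithfulness of $\mathcal{G}$ on $\Hil_{\rm kin}$ together with $-I \notin \mathcal{G}$ descend to $\mathcal{G}_R = \{U_R^g\}_{g \in G}$ on $\Hil_R$; combined with $\dim \Hil_R = 2^{n-k} = |G|$, the isotype-counting argument from the proof of Theorem~\ref{thm_errorQRF} shows that $\Hil_R$ carries the regular representation of $G$. Consequently, the trivial isotype is one-dimensional, so the group-averaging projector factorizes as $\Pi_{\rm pn} = |\bar{0}\rangle\!\langle\bar{0}|_R \otimes I_S$ for some unit vector $|\bar{0}\rangle_R \in \Hil_R$. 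The hypothesis $E_i \Pi_{\rm pn} = \tilde{E}_i \otimes I_S$ then forces $\tilde{E}_i = |\phi_i\rangle\!\langle\bar{0}|_R$ with $|\phi_i\rangle_R := \tilde{E}_i |\bar{0}\rangle_R$. A one-line calculation using $(E_i \Pi_{\rm pn})^\dagger (E_j \Pi_{\rm pn}) = \Pi_{\rm pn} E_i^\dagger E_j \Pi_{\rm pn}$ then yields $\Pi_{\rm pn} E_i^\dagger E_j \Pi_{\rm pn} = \langle \phi_i | \phi_j \rangle_R \,\Pi_{\rm pn}$, which is exactly \eqref{KL} with the Hermitian matrix $C_{ij} = \langle \phi_i | \phi_j \rangle_R$, establishing correctability.

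The main subtlety to be checked is that the extension step in the forward direction is legitimate, namely that every correctable Pauli set can be enlarged to a maximal one without losing correctability; this is already taken care of by the character-sector characterization in Proposition~\ref{prop:correctable_CharTh} and Corollary~\ref{cor:max_error_sets}, so no real obstacle remains. The heavy lifting on both sides is done by Theorem~\ref{thm_errorQRF} and its proof, and the corollary is essentially a repackaging of that theorem into a biconditional characterization of correctability at the level of a single, not necessarily maximal, error set.
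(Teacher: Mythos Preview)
Your proposal is correct and follows essentially the same strategy as the paper. The forward direction is identical: extend $\mathcal{E}$ to a maximal correctable Pauli set (the paper simply asserts this is possible, you cite Proposition~\ref{prop:correctable_CharTh} and Corollary~\ref{cor:max_error_sets} explicitly) and invoke Theorem~\ref{thm_errorQRF}. For the reverse direction, both arguments hinge on the same fact---that faithfulness and $-I_R\notin\mathcal{G}_R$ force the trivial isotype of $\Hil_R$ to be one-dimensional---but the paper phrases the conclusion conceptually (a one-dimensional code subspace on $\Hil_R$ admits no non-trivial logical operators, so \emph{every} error set is correctable there), whereas you carry out the explicit rank-one computation $\Pi_{\rm pn}E_i^\dagger E_j\Pi_{\rm pn}=\langle\phi_i|\phi_j\rangle_R\,\Pi_{\rm pn}$ to verify~\eqref{KL} directly. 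This is a cosmetic difference only.
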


\begin{proof}
    The ``only if'' direction is provided by Theorem~\ref{thm_errorQRF} because every correctable set of Pauli operators is part of some maximal one. Conversely, suppose a tensor product structure with the stated properties can be found, so that we can ignore the system factor $\mathcal{H}_S$ for our purposes and consider $\Pi_{\rm pn}^{R}(\mathcal{H}_R)$ as a code subspace in its own right, where now $\Pi_{\rm pn}^{R}=\frac{1}{|\mathcal{G}|}\sum_{g\in\mathcal{G}}\,U_R^g$. Since $\mathcal{G}=U(G)$ is faithful and devoid of $-I$, also the representation $U_R$ on $\mathcal{H}_R\simeq(\mathbb{C}^{2})^{\otimes(n-k)}$ is faithful and does not have $-I_R$ in its image, so it decomposes into $2^{n-k}$ isotypes, each of dimension $1$ (see Appendix \ref{app:stabilizer_reps}). Hence, there are no non-trivial logical operators. This implies that  \emph{any} error subset is correctable and in particular $\mathcal{E}$.\footnote{Here is an explicit error correction protocol: a) perform a projective measurement to determine in which one-dimensional isotype $(\cH_{R})_\chi$ the state lies; b) apply a unitary that maps $(\cH_{R})_\chi$ to $(\cH_{R})_1$ (any one of them will do because the two subspaces are one-dimensional).} 
    \end{proof}

\subsection{Error-correction in terms of dressing fields, and associated frame-dependent tensor product}\label{sec:dressing_field}

We have seen in Theorem \ref{thm_errorQRF} that a maximal Pauli error set defines a unique QRF splitting of the kinematical Hilbert space. The purpose of this section is twofold. We will first draw some analogies with gauge theory by explaining how such a splitting can be equivalently described in terms of a family of \emph{frame fields} $\{ R_\chi \}$ living on the constrained surface $\cH_{\rm pn}$, and taking values in the different isotypes labeled by $\chi$. A quantum error correction protocol can then be reinterpreted, in this language, as the act of \emph{dressing} gauge-variant operators (the errors) to form gauge-invariant composite operators, that hence preserve the code subspace. There is therefore a direct correspondence between (equivalence classes of) frame fields and (equivalence classes of) correctable error sets. Secondly, we will show that a complete set of frame fields $\{ R_\chi \}$ defines a unique tensor product on $\cH_{\rm kin}$, which provides a complementary way to understand the QRF splitting already derived in Theorem \ref{thm_errorQRF}.

\medskip

To begin with, let us introduce the relevant notion of \emph{frame field} for our purposes.
\begin{defn}\label{def:complete_frame} Let $\chi \in \hat{G}\setminus \{ 1\}$. A \emph{frame field} with \emph{charge number} $\chi$ on $\cH_{\rm pn}$ is defined to be an isometry 
\begin{equation}
    R_\chi : \cH_{\rm pn} \to \cH_\chi\,.
\end{equation} 
It transforms under the gauge group as
    \begin{equation}
        R_\chi \to U^g \triangleright R_\chi 
        :=U^g R_\chi U^{g^{-1}} =U^g R_\chi = \chi(g)\, R_\chi\,.
    \end{equation}
A \emph{complete set of frame fields} for the full Hilbert space $\cH_{\rm kin}$ is a collection $R= \{  R_\chi \,\vert\,  \chi \in \hat{G}\setminus \{1\}\}$ of frame fields, one for each non-trivial charge number.
\end{defn}
We will comment on the connection of complete frame fields with our previous notion of QRF shortly.

Let $R$ be a complete set of frame fields, $\mathcal{E} \subset \mathcal{P}_n$ a maximal set of correctable errors containing the identity, and $\chi \in \hat{G}\setminus \{1\}$. An elementary error $E_\chi \in C_\chi \cap \mathcal{E}$ brings the code subspace into the non-trivially charged sector $\cH_\chi$ (see Proposition \ref{prop:decomp_Pauli}). The restriction $E_\chi \Pi_{\rm pn}$ thus transforms in the same way as $R_\chi$ under the gauge group, namely:
\begin{equation}
    E_\chi \Pi_{\rm pn} \rightarrow U^g\triangleright ( E_\chi \Pi_{\rm pn} ) = \chi(g) \, E_\chi \Pi_{\rm pn}\,.
\end{equation}
Henceforth, we can use the reference frame as a dressing field to produce the gauge-invariant operator:
\begin{equation}
    R_\chi^{-1}E_\chi \Pi_{\rm pn}
    \,,
\end{equation}
which we can view as an element of $\cB(\cH_{\rm pn})$, that is to say a logical operation. Active error correction can then be performed in two steps.
\begin{enumerate}
    \item \emph{Error detection}: the projective measurement $\{ P_\chi\}_{\chi \in \hat{G}}$ is performed, where the projectors $\{ P_\chi \}$ are explicitly given in equation \eqref{eq:projectors_on_isotypes}. From a gauge-theoretic point of view, this amounts to measuring the \emph{charge} $\chi$ of the state. From a QECC point of view, this is in turn equivalent to a \emph{syndrome measurement}, which determines the eigenvalues of a generating subset of $\mathcal{G}$. 
    \item \emph{Error-correction}: if the outcome $\chi=1$ is obtained, the state is already gauge-invariant and nothing is done; if the charge $\chi \neq 1$ is obtained the state is \emph{dressed} by the unitary $R_\chi^{-1}$ to bring it back into the invariant sector. 
\end{enumerate}
In this scheme, an error $E_\chi \in C_\chi \cap \mathcal{E}$ with $\chi \neq 1$ (resp.\ $E_1 \in C_1 \cap \mathcal{E}$) is (appropriately) \emph{corrected} whenever $R_\chi^{-1}E_\chi \Pi_{\rm pn}$ (resp.\ $R_\chi^{-1}E_1 \Pi_{\rm pn}$) is proportional to $\Pi_{\rm pn}$. Any other error in $E\in C_\chi \cap \mathcal{E}$ with $\chi \neq 1$ is detected, but not appropriately corrected. Reciprocally, if we are given a maximal set of correctable errors $\mathcal{E}\subset \mathcal{P}_n$ (which do exist by Proposition~\ref{prop:decomp_Pauli}), we can construct a complete set of frame fields $R= \{R_\chi\}_{\chi \in \hat{G}}$ that appropriately corrects $\mathcal{E}$ via dressing: for any $\chi \neq 1$, choose an element $E_\chi \in \mathcal{E}\cap C_\chi$ and set
\begin{equation}\label{eq:frame-errors}
    R_\chi := \eta_\chi \restr{E_\chi}{\cH_1}\quad\mathrm{with}\quad \eta_\chi \in \mathbb{C}\,, \vert \eta_\chi \vert = 1\,.
\end{equation}
Such $R_\chi$ will correct any error from $\mathcal{E}\cap C_\chi$, since two errors from this set are related by a phase on $\cH_{\rm pn}$ (see Proposition \ref{prop:correctable_CharTh}). Moreover, it is clear that any frame $R=\{R_\chi\}_{\chi \in \hat{G}\setminus \{1\}}$ able to correct $\mathcal{E}$ has to be of the form~\eqref{eq:frame-errors}. 

The next proposition formalizes what we just described in the language of recovery channels. 
\begin{prop}\label{propo:frame-from-errors}
   Let $R=\{R_\chi\}_{\chi \in \hat{G}\setminus \{1\}}$ be a complete set of frame fields and $\mathcal{O}_R: \mathcal{B}(\cH_{\rm kin}) \to \mathcal{B}(\cH_{\rm pn})$ the quantum channel defined by:
\begin{equation}\label{eq:recovery_frame}
    \forall \rho \in \mathcal{B}(\cH_{\rm kin})\,, \qquad  \mathcal{O}_R (\rho) = \Pi_{\rm pn} \rho \Pi_{\rm pn} + \sum_{\chi \in \hat{G}\setminus \{1\}} R_\chi^{-1} P_\chi \rho P_\chi R_\chi\,. 
\end{equation}
$\Oh_R$ corrects any error set $\mathcal{E} \subset \mathcal{P}_n$ containing the identity such that: for any $E \in \mathcal{E}$,
\begin{itemize}
    \item if $E\in C_1$, then $E \Pi_{\rm pn}=\eta \, \Pi_{\rm pn}$ for some $\eta \in \mathbb{C}$, $\vert \eta \vert =1$;
    \item if $E\in C_\chi$ with $\chi \neq 1$, then $R_\chi^{-1} E \Pi_{\rm pn}= \eta \, \Pi_{\rm pn}$ for some $\eta \in \mathbb{C}$, $\vert \eta \vert =1$. 
\end{itemize}  
Reciprocally, given a maximal set of correctable errors $\mathcal{E}\subset \mathcal{P}_n$ containing the identity, there exists a complete set of frame fields $R$ such that $\mathcal{O}_R$ corrects $\mathcal{E}$. Moreover, $R=\{ R_\chi \}_{\chi\in \hat{G}\setminus \{1\}}$ is unique up to a choice of global phase for each $R_\chi$. 
\end{prop}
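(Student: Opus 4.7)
The plan is to prove the three assertions—sufficiency of the frame-field recovery, existence of a correcting frame given a maximal error set, and uniqueness of that frame up to per-character phases—in turn, leaning on the characterization of correctable Pauli errors in Proposition~\ref{prop:correctable_CharTh} and Corollary~\ref{cor:max_error_sets}. As a preliminary, I would verify that $\mathcal{O}_R$ is a well-defined CPTP channel: its Kraus operators $\{\Pi_{\rm pn}\}\cup\{R_\chi^{-1}P_\chi\}_{\chi\neq 1}$ satisfy $\Pi_{\rm pn}+\sum_{\chi\neq 1}P_\chi R_\chi R_\chi^{-1}P_\chi=\Pi_{\rm pn}+\sum_{\chi\neq 1}P_\chi=\sum_{\chi\in\hat G}P_\chi=I$, using that $R_\chi$ is a unitary between the equidimensional subspaces $\cH_{\rm pn}$ and $\cH_\chi$ (with $R_\chi R_\chi^{-1}=P_\chi$ and $R_\chi^{-1}R_\chi=\Pi_{\rm pn}$).

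For the forward direction, fix $\rho_{\rm pn}=\Pi_{\rm pn}\rho\Pi_{\rm pn}\in\cB(\cH_{\rm pn})$ and $E\in\mathcal{E}\cap C_\chi$. Since $E\Pi_{\rm pn}$ has range in $\cH_\chi$ by Proposition~\ref{prop:decomp_Pauli}, we have $P_{\chi'}E\rho_{\rm pn}E^\dag P_{\chi'}=\delta_{\chi,\chi'}\,E\rho_{\rm pn}E^\dag$, so only one term of \eqref{eq:recovery_frame} contributes to $\mathcal{O}_R(E\rho_{\rm pn}E^\dag)$. Invoking the hypothesis $R_\chi^{-1}E\Pi_{\rm pn}=\eta\,\Pi_{\rm pn}$ (resp.\ $E\Pi_{\rm pn}=\eta\,\Pi_{\rm pn}$ when $\chi=1$), combined with $\vert\eta\vert=1$, that term evaluates to $\eta\,\rho_{\rm pn}\,\eta^*=\rho_{\rm pn}$, proving elementwise correction. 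Linearity then promotes this to correction of any quantum operation built from $\mathcal{E}$, consistent with \eqref{qeccrel}.

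For the reverse direction, Corollary~\ref{cor:max_error_sets} allows me to assume without loss of generality that the maximal correctable set is of the form $\mathcal{E}=\{E_\chi\}_{\chi\in\hat G}$ with $E_1=I$ and $E_\chi\in C_\chi$. Setting $R_\chi:=\eta_\chi\,\restr{E_\chi}{\cH_{\rm pn}}$ for any unit-modulus $\eta_\chi$ produces an isometry $\cH_{\rm pn}\to\cH_\chi$ (Proposition~\ref{prop:decomp_Pauli}), whose covariance $U^g\triangleright R_\chi=\chi(g)R_\chi$ is inherited from $E_\chi\in C_\chi$; the correction conditions from the first part then hold by construction, and Proposition~\ref{prop:correctable_CharTh} ensures that any other representative $E\in\mathcal{E}\cap C_\chi$ differs from $E_\chi$ only by a phase on $\cH_{\rm pn}$, and hence is also corrected. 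For uniqueness, if another complete frame $R'$ corrects $\mathcal{E}$, the identity $R'^{-1}_\chi E_\chi\Pi_{\rm pn}\propto \Pi_{\rm pn}$ forces $R'_\chi$ and $R_\chi$ to agree on $\cH_{\rm pn}$ up to a unit-modulus scalar; since each frame field is fully determined by its action on $\cH_{\rm pn}$, this leaves exactly one independent phase of freedom per $\chi$, as claimed.

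The main obstacle I anticipate is purely bookkeeping: carefully aligning the isometry conventions (the identification of $R_\chi^{-1}$ with $R_\chi^\dag$ as partial isometries on $\cH_{\rm kin}$, the identities $P_\chi R_\chi=R_\chi$ and $R_\chi^{-1}P_\chi=R_\chi^{-1}$, etc.) and tracking how the various independent phases $\eta,\eta_\chi,\eta'$ appearing in the hypotheses and definitions pair up with their conjugates so that everything collapses to $\rho_{\rm pn}$. Once those conventions are fixed, each step reduces to a short computation inside a single isotype.
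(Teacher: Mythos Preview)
Your proposal is correct and follows essentially the same route as the paper. The paper does not supply a formal proof for this proposition; it instead relies on the discussion immediately preceding the statement, which outlines exactly the three ingredients you use: the elementwise recovery $R_\chi^{-1}E_\chi\Pi_{\rm pn}\propto\Pi_{\rm pn}$, the construction $R_\chi:=\eta_\chi\restr{E_\chi}{\cH_{\rm pn}}$ via Corollary~\ref{cor:max_error_sets}, and the uniqueness-up-to-phase argument from Proposition~\ref{prop:correctable_CharTh}. Your write-up simply makes these steps explicit and adds the CPTP verification, which the paper omits.
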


\noindent{\bf Remark.} The recovery channel $\mathcal{O}_R$ is \emph{gauge-invariant}, in the sense that
\begin{equation}
    \mathcal{O}_{U^g\triangleright R} = \mathcal{O}_R \,.
\end{equation}
Hence, the error-correction capabilities of a complete set of reference frames are preserved by gauge transformations.

\medskip

In the next theorem, we show that a complete set of frame fields $R$ (in the sense of Definition \ref{def:complete_frame}) provides a canonical splitting of the kinematical Hilbert space into logical degrees of freedom and frame degrees of freedom. Moreover, the frame degrees of freedom constitute an ideal quantum reference frame for the gauge group $\mathcal{G}$. Hence, this proves that the notion of reference frame introduced in Definition \ref{def:complete_frame} provides the exact same data as the QRF splitting induced by a maximal error set described in Theorem \ref{thm_errorQRF}.
\begin{thm}\label{thm:nonlocal_fact}
    Let $\mathcal{E}\subset \mathcal{P}_n$ be a maximal set of correctable errors containing the identity, and $R$ a complete set of frame fields adapted to $\mathcal{E}$ (in the sense of Proposition~\ref{propo:frame-from-errors}). There exists a (frame-dependent) tensor product $\otimes_R$ such that $\cH_{\rm kin} = \cH_{\rm pn}\otimes_R \cH_{\rm gauge}$ and: 
    \begin{enumerate}
        \item $\cH_{\rm gauge}$ is a Hilbert space of dimension $2^{n-k}$ carrying a unitary representation $U_R$ of $G=\mathbb{Z}_2^{\times (n-k)}$, such that: 
        \begin{equation}
            \forall g \in G\,, \qquad U^g = {\rm id}_{\cH_{\rm pn}} \otimes_R U_R^g\,.
        \end{equation}
        \item $\cH_{\rm gauge}$ admits an orthonormal basis $\{ \vert \chi \rangle \,, \chi \in \hat{G}\}$ such that:
        \begin{equation}
            \forall \chi \in \hat{G}\,, \qquad \cH_\chi = \cH_{\rm pn}\otimes_R \vert \chi \rangle
        \end{equation}
        \item For any code state $\vert \psi \rangle \in \cH_{\rm pn}$ and $\chi \in \hat{G}$:
        \begin{equation}
            R_\chi ( \vert \psi \rangle \otimes_R \vert 1 \rangle)  = \vert \psi \rangle \otimes_R \vert \chi \rangle\,.
        \end{equation}
        \item For any error $E_\chi \in \mathcal{E} \cap C_\chi$ ($\chi \in \hat{G}$), there exists $\eta \in \mathbb{C}$ with $\vert \eta \vert =1$ such that: 
        \begin{equation}
            \forall \vert \psi \rangle \in \cH_{\rm pn} \,, \qquad E_\chi ( \vert \psi \rangle \otimes_R \vert 1 \rangle)  = \eta \vert \psi \rangle \otimes_R \vert \chi \rangle\,.
        \end{equation}
        \item $U_R$ is isomorphic to the regular representation of $G$.
    \end{enumerate}
\end{thm}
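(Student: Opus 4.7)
The plan is to use the frame fields themselves to construct an explicit unitary isomorphism between $\cH_{\rm kin}$ and $\cH_{\rm pn}\otimes \mathbb{C}^{|\hat{G}|}$, and to define $\otimes_R$ as the pullback tensor product. First, I extend the frame family by declaring $R_1 := {\rm id}_{\cH_{\rm pn}}$, which is a trivially covariant isometry $\cH_{\rm pn}\to \cH_1 = \cH_{\rm pn}$. Then I introduce an auxiliary Hilbert space $\cH_{\rm gauge} := \mathbb{C}^{|\hat{G}|}$ with a chosen orthonormal basis $\{\ket{\chi}\}_{\chi \in \hat{G}}$, and define the linear map
\begin{equation*}
\Phi : \cH_{\rm pn} \otimes \cH_{\rm gauge} \to \cH_{\rm kin}\,, \qquad \ket{\psi}\otimes \ket{\chi} \longmapsto R_\chi \ket{\psi}\,.
\end{equation*}
Unitarity of $\Phi$ is the key step, but it follows quickly from two ingredients already established: (i) the isotype decomposition $\cH_{\rm kin} = \bigoplus_{\chi \in \hat{G}} \cH_\chi$ is orthogonal (Appendix~\ref{app:stabilizer_reps}), so vectors coming from different $R_\chi$ are orthogonal; (ii) each $R_\chi$ is an isometry onto $\cH_\chi$, and all non-trivial isotypes have the same dimension as $\cH_{\rm pn}$, so $\Phi$ is norm-preserving and surjective. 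The desired tensor product is then defined by $\ket{\psi}\otimes_R \ket{\chi} := \Phi(\ket{\psi}\otimes \ket{\chi}) = R_\chi \ket{\psi}$.

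Properties (2)--(4) are then immediate from the construction: property (3) is $\cH_\chi = \mathrm{Im}(R_\chi) = \cH_{\rm pn}\otimes_R \ket{\chi}$, and property (4) is $R_\chi(\ket{\psi}\otimes_R \ket{1}) = R_\chi R_1 \ket{\psi} = R_\chi \ket{\psi} = \ket{\psi}\otimes_R \ket{\chi}$. For property (5), Proposition~\ref{propo:frame-from-errors} (combined with the construction~\eqref{eq:frame-errors}) tells me that for each $\chi \in \hat{G}$ and any $E_\chi \in \mathcal{E}\cap C_\chi$, the restriction $E_\chi \Pi_{\rm pn}$ is proportional to $R_\chi$ with a phase, say $E_\chi \Pi_{\rm pn} = \bar{\eta}_\chi R_\chi$; therefore
\begin{equation*}
E_\chi (\ket{\psi}\otimes_R \ket{1}) = E_\chi \ket{\psi} = \bar{\eta}_\chi R_\chi \ket{\psi} = \bar{\eta}_\chi\, \ket{\psi}\otimes_R \ket{\chi}\,.
\end{equation*}

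For property (1), I use the gauge-covariance of the frame fields together with $U^g \ket{\psi} = \ket{\psi}$ for $\ket{\psi}\in \cH_{\rm pn}$:
\begin{equation*}
U^g \bigl(\ket{\psi}\otimes_R \ket{\chi}\bigr) = U^g R_\chi \ket{\psi} = \bigl(U^g R_\chi U^{g^{-1}}\bigr) U^g \ket{\psi} = \chi(g)\, R_\chi \ket{\psi} = \ket{\psi}\otimes_R \bigl(\chi(g)\ket{\chi}\bigr)\,.
\end{equation*}
Defining $U_R^g \ket{\chi} := \chi(g)\ket{\chi}$ therefore gives $U^g = {\rm id}_{\cH_{\rm pn}}\otimes_R U_R^g$, and multiplicativity $U_R^g U_R^h = U_R^{gh}$ follows from $\chi(g)\chi(h) = \chi(gh)$. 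Finally, property (6): since $G$ is abelian and $\hat{G}$ enumerates its irreducible characters, the diagonal basis $\{\ket{\chi}\}$ exhibits $U_R$ as the direct sum of all one-dimensional irreps of $G$, each appearing exactly once; this is precisely the regular representation of $G$, as in the end of the proof of Theorem~\ref{thm_errorQRF}.

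The only non-routine step is the well-definedness/unitarity of $\Phi$, and even that is immediate once the orthogonality of isotypes and the equal-dimension property from Appendix~\ref{app:stabilizer_reps} are invoked; everything else is direct substitution. The content of the theorem is thus that the frame-field data $\{R_\chi\}$ \emph{is} the QRF splitting of Theorem~\ref{thm_errorQRF}, now packaged as an explicit (nonlocal) tensor factorization.
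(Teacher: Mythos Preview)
Your proof is correct and follows essentially the same approach as the paper: you define $\otimes_R$ via the identification $\ket{\psi}\otimes_R\ket{\chi} := R_\chi\ket{\psi}$ and verify the claims exactly as the paper does, the only cosmetic difference being that you package the construction as an explicit unitary $\Phi$ (the ``extrinsic'' viewpoint the paper mentions in its remark immediately following the proof) rather than as a bilinear map satisfying the universal property. Note that your property labels are shifted by one throughout (what you call (3), (4), (5), (6) are the paper's claims 2, 3, 4, 5).
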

\begin{proof}
We first describe $\Hil_{\rm gauge}$ and then show that $\Hil_{\rm kin}$ can be written in the stated tensor product form.   Let $\cH_{\rm gauge}$ be a Hilbert space of dimension $2^{n-k}$ with orthonormal basis $\{ \vert \chi \rangle \,, \chi \in \hat{G}\}$. The representation $U_R$ is defined by:
    \begin{equation}
        \forall g \in G\,, \forall \chi \in \hat{G}\,, \qquad U_R^g \vert \chi \rangle := \chi (g) \vert \chi \rangle\,.
    \end{equation}
    $U_R$ contains one copy of each irreducible representation of $\mathbb{Z}_2^{\times (n-k)}$, so it must be isomorphic to the regular representation \cite{serre1977linear}. This observation can be made more explicit by introducing a new orthonormal basis $\{ \vert g \rangle \,, g \in \mathbb{Z}_2^{\times (n-k)}\}$, defined as:
    \begin{equation}\label{def:group_basis}
        \forall g \in G \,, \qquad \vert g \rangle := \frac{1}{\sqrt{2^{n-k}}} \sum_{\chi \in \hat{G}} \chi (g) \vert \chi \rangle \,. 
    \end{equation}
In other words, the basis $\{\vert g \rangle\}_{g \in G}$ and $\{\vert \chi \rangle\}_{\chi \in \hat{G}}$ are related by a Fourier transform, as described in Appendices \ref{app:Pontryagin} and \ref{app:reps}. For any $g$ and $h \in G$ we then have $U_R^g \vert h \rangle = \vert g h \rangle$, and 
\begin{equation}\label{eq:orthogonality_charac}
\langle g \vert h \rangle = \frac{1}{2^{n-k}} \sum_{\chi \in \hat{G}} \bar\chi (g) \chi (h)  = \delta_{g , h}
\end{equation}
follows from the orthogonality relations \eqref{eq:orthogonality_rel_dual} (which can be understood as orthogonality relations between columns of the character table of $G$, or equivalently, between lines of the character table of $\hat{G}$). This establishes claim number 5.
    
Next, we introduce a tensor product on $\cH_{\rm kin}$ as the bilinear map $\otimes_R:\cH_{\rm pn}\times \cH_{\rm gauge} \rightarrow \cH_{\rm kin}$ which maps $(\vert \varphi \rangle , \sum_{\chi\in\hat{G}}w_\chi \vert \chi \rangle)$ to
    \begin{equation}
        \vert \varphi \rangle \otimes_R \left( \sum_{\chi\in\hat{G}}w_\chi \vert \chi \rangle \right) := w_1 \vert \varphi \rangle + \sum_{\chi\in \hat{G},\, \chi \neq 1} w_\chi R_\chi \vert \varphi \rangle\,.
    \end{equation}
One can verify that $\otimes_R$ does obey the universal property defining a tensor product.\footnote{See e.g.\ \cite{geroch1985mathematical} for a physicist-friendly introduction to tensor products in terms of universal properties.} In other words, we are making the following identifications: for any $\vert \varphi \rangle \in \cH_{\rm pn}$,
\begin{align}
    \vert \varphi \rangle \otimes_R \vert 1 \rangle &\equiv \vert \varphi \rangle \,, \label{eq:def_tensor1}\\
    \forall \chi\in \hat{G}, \qquad \vert \varphi \rangle \otimes_R \vert \chi \rangle &\equiv R_\chi \vert \varphi \rangle \,. \label{eq:def_tensor2}
\end{align} 
In particular, claims number 2 and 3 hold. 

For any $g \in G$, any $\vert\varphi\rangle \in \cH_{\rm pn}$, and any $\chi\in \hat{G}$,  we have $\vert \varphi \rangle \otimes_R \vert \chi \rangle \in \cH_\chi$, so that:
\begin{equation}
    U^g \left( \vert \varphi \rangle \otimes_R \vert \chi \rangle \right) = \chi(g) \left( \vert \varphi \rangle \otimes_R \vert \chi \rangle \right) =  \vert \varphi \rangle \otimes_R \chi(g) \vert \chi \rangle  =   \vert \varphi \rangle \otimes_R U_R^g\vert \chi \rangle\,, 
\end{equation}
establishing claim 1.

Finally, claim 4 follows from claim 3 and Proposition~\ref{propo:frame-from-errors}: for any $\chi \in \hat{G}$ and $E_\chi \in C_\chi \cap \mathcal{E}$, $E_\chi \Pi_{\rm pn}$ coincides with $R_\chi \Pi_{\rm pn}$ up to a phase. 
\end{proof}

\noindent{\bf Remark.} Instead of defining a tensor product $\otimes_R$ on $\cH_{\rm kin}$, which provides an intrinsic definition of our tensor product factorization, we could have adopted an extrinsic point of view by introducing an isomorphism $t_R: \cH_{\rm kin} \to \cH_{\rm pn}\otimes \cH_{\rm gauge}$. More explicitly, we can define $t_R$ as the unique linear map verifying, for any $\vert \varphi \rangle \in \cH_{\rm pn}$,
\begin{align}
    t_R \vert \varphi \rangle &:= \vert \varphi \rangle \otimes \vert 1 \rangle \,, \\
    \forall \chi\in \hat{G}, \qquad t_R (R_\chi \vert \varphi \rangle)&:= \vert \varphi \rangle \otimes \vert \chi \rangle \,.
\end{align} 
These definitions are the analogs, in the extrinsic picture, of Eqs.~\eqref{eq:def_tensor1} and \eqref{eq:def_tensor2}. They provide an explicit isomorphism realizing the factorization $\cH_{\rm kin}\simeq \cH_{S} \otimes \cH_{R}$ from Theorem \ref{thm_errorQRF}, with $\cH_{S}= \cH_{\rm pn}$ and $\cH_{R}= \cH_{\rm gauge}$. Furthermore, we note that $t_R$ then defines an isometry, which corresponds, in the intrinsic picture, to the fact that the tensor product factorization $\cH_{\rm kin }= \cH_{\rm pn }\otimes_R \cH_{\rm gauge}$ is compatible with the inner product structures on $\cH_{\rm kin }$, $\cH_{\rm pn }$ and $\cH_{\rm gauge}$.  

\medskip

Let us illustrate Theorem \ref{thm:nonlocal_fact} on the $3$-qubit code. In that case, $\cH_{\rm kin}= (\mathbb{C}^2)^{\otimes 3}$ and we can define the representation $U:\mathbb{Z}_2^{\times 2} \to {\rm Aut}(\cH_{\rm kin})$ as:
\begin{equation}\label{eq:3qStab}
    U^{++} = I\,, \quad U^{+-} = Z_2 Z_3 \,, \quad U^{-+} = Z_1 Z_3 \,, \quad U^{--} = Z_1 Z_2\,.  
\end{equation}
We can then introduce explicit labels for the irreducible characters $\hat{G}= \{ \chi_1, \chi_2, \chi_3 ,\chi_4\}$, where:
\begin{equation}
    \begin{array}{llll}
\chi_1 (++) = 1\,, & \chi_1 (+-) = 1  \,, & \chi_1 (-+) = 1 \,, & \chi_1 (--) = 1 \,, \\   
\chi_2 (++) = 1\,, & \chi_2 (+-) = 1 \,, & \chi_2 (-+) = -1 \,, & \chi_2 (--) = -1 \,, \\   
\chi_3 (++) = 1\,, & \chi_3 (+-) = -1  \,, & \chi_3 (-+) = 1 \,, & \chi_3 (--) = -1 \,, \\
\chi_4 (++) = 1\,, & \chi_4 (+-) = -1  \,, & \chi_4 (-+) = -1 \,, & \chi_4 (--) = 1 \,, 
    \end{array}
\end{equation}
which in turn define the two-dimensional isotypes:
\begin{equation}
\begin{aligned}
    \cH_1 &= \cH_{\rm pn} = \Span\{ \vert 000\rangle ,\vert 111 \rangle \}\,, \\
    \cH_2 &= \Span\{ \vert 100\rangle ,\vert 011 \rangle \}\,, \\
    \cH_3 &=  \Span\{ \vert 010\rangle ,\vert 101 \rangle \}\,, \\
    \cH_4 &=  \Span\{ \vert 001\rangle ,\vert 110 \rangle \}\,.
\end{aligned}
\end{equation}
We will discuss three examples of maximal error sets. 

\begin{example}\label{ex:3qubit_single_flips} Let $\mathcal{E}= \{ I, X_1 , X_2 , X_3 \}$. Then $R=(R_2 , R_3 , R_4) := (\restr{X_1}{\cH_1} , \restr{X_2}{\cH_1} , \restr{X_3}{\cH_1})$ is a complete set of frame fields adapted to $\mathcal E$, and the channel $\mathcal{O}_R$ appropriately corrects any error from $\mathcal E$. It induces the tensor product $\otimes_R$, which decomposes $\cH_{\rm kin}$ into $\cH_{\rm pn}\otimes_R \cH_{\rm gauge}$ according to the following identifications of basis vectors:
\begin{align}
    \vert 000 \rangle \otimes_R \vert 1 \rangle \equiv \vert 000 \rangle \,, \qquad \vert 111 \rangle \otimes_R \vert 1 \rangle \equiv \vert 111 \rangle \,, \\
    \vert 000 \rangle \otimes_R \vert 2 \rangle \equiv \vert 100 \rangle \,, \qquad \vert 111 \rangle \otimes_R \vert 2 \rangle \equiv \vert 011 \rangle \,, \nonumber \\
    \vert 000 \rangle \otimes_R \vert 3 \rangle \equiv \vert 010 \rangle \,, \qquad \vert 111 \rangle \otimes_R \vert 3 \rangle \equiv \vert 101 \rangle \,, \nonumber \\
    \vert 000 \rangle \otimes_R \vert 4 \rangle \equiv \vert 001 \rangle \,, \qquad \vert 111 \rangle \otimes_R \vert 4 \rangle \equiv \vert 110 \rangle \,. \nonumber
\end{align}
Hence, logical information is encoded into the factor $\cH_{\rm pn}$ according to the \emph{majority rule} ($\vert 000 \rangle$ for a majority of zeroes, and $\vert 111 \rangle$ for a majority of ones), while the frame state tells us where a single-bit flip error may have occurred ($\vert 1 \rangle$ for no error, and $\vert j \rangle$ with $j\geq2$ for a bit flip error on the $(j-1)^{\rm th}$ qubit). This tensor factorization is useful because it cleanly separates gauge symmetries (aka syndrome measurements) from gauge-invariant (aka logical) data. Indeed, one has $U(\cdot) = {\rm id}_{\cH_{\rm pn}} \otimes_R U_R (\cdot)$ where $U_R$ takes the following diagonal form in the representation basis $(\vert 1 \rangle, \vert 2 \rangle, \vert 3 \rangle, \vert 4 \rangle)$:  
\begin{equation}\label{eq:matrix_rep-frame}
U_R^{+-} = \begin{pmatrix}
    1 & 0 & 0& 0 \\
    0 & 1 & 0& 0 \\
    0 & 0 & -1& 0 \\
    0 & 0 & 0& -1 
\end{pmatrix}\,, \quad 
U_R^{-+} = \begin{pmatrix}
    1 & 0 & 0& 0 \\
    0 & -1 & 0& 0 \\
    0 & 0 & 1 & 0 \\
    0 & 0 & 0& -1 
\end{pmatrix}\,, \quad 
U_R^{--} = \begin{pmatrix}
    1 & 0 & 0& 0 \\
    0 & -1 & 0& 0 \\
    0 & 0 & -1& 0 \\
    0 & 0 & 0& 1 
\end{pmatrix}\,.
\end{equation}
$U_R$ is isomorphic to the regular representation of $\mathbb{Z}_2 \times \mathbb{Z}_2$, which we can make explicit by introducing the group basis:
\begin{align}
 \vert ++\rangle &:= \frac{1}{2} \left( \vert 1 \rangle + \vert 2 \rangle + \vert 3 \rangle + \vert 4 \rangle\right) \,, \label{eq:group_basis-3bit}\\
  \vert +- \rangle &:= \frac{1}{2} \left( \vert 1 \rangle + \vert 2 \rangle - \vert 3 \rangle - \vert 4 \rangle\right)\,, \nonumber \\
   \vert -+ \rangle &:= \frac{1}{2} \left( \vert 1 \rangle - \vert 2 \rangle + \vert 3 \rangle - \vert 4 \rangle\right)\,, \nonumber \\
    \vert -- \rangle &:= \frac{1}{2} \left( \vert 1 \rangle - \vert 2 \rangle - \vert 3 \rangle + \vert 4 \rangle\right)\,. \nonumber
\end{align}
It is orthonormal and verifies $U_R^g \vert h \rangle = \vert gh \rangle$ for any $g,h \in  \mathbb{Z}_2\times \mathbb{Z}_2$. Finally, by construction, we have for any $\vert \varphi \rangle \in \cH_{\rm pn}$ and $j \in \{ 1, 2 , 3\}$:
\begin{equation}\label{eq:errors_3bit}
    X_j (\vert \varphi \rangle \otimes_R \vert 1 \rangle) = \vert \varphi \rangle \otimes_R \vert j+1 \rangle\,.  
\end{equation}
Errors that are correctable by the frame $R$ (that is to say, that are appropriately taken care of by the recovery operation $\mathcal{O}_R$ defined in \eqref{eq:recovery_frame}) can thus be identified with those operators in $\mathcal{B}(\cH_{\rm kin})$  whose restriction to $\cH_1 \equiv \cH_{\rm pn}\otimes_R \vert 1 \rangle \subset \cH_{\rm pn}\otimes_R \cH_{\rm gauge}$ only act on the frame Hilbert space $\cH_{\rm gauge}$. The quotient of this linear subspace by the equivalence relation $\sim$ can be identified with ${\rm id}_{\cH_{\rm pn}} \otimes_R \cB(\cH_{\rm gauge})$. Similarly, the algebra of logical operators can be identified with $\cB(\cH_{\rm pn})\otimes_R {\rm id}_{\cH_{\rm gauge}}$.
\eox
\end{example}

\begin{example}
    Let $\mathcal{E}' = \{ I , X_1 X_2 , X_2 X_3 , X_1 X_3 \}$, which is a legitimate maximal set of correctable errors (even though it is not relevant to practical implementations, since single-bit flip errors are typically more probable than double-bit flip errors). Then $R'=(R_2' , R_3' , R_4'):= (\restr{X_2 X_3}{\cH_1} , \restr{X_1 X_3}{\cH_1} , \restr{X_1 X_2}{\cH_1})$ is a complete set of frame fields adapted to $\mathcal{E}'$, which leads to the alternative tensor factorization $\otimes_{R'}$:
\begin{align}
    \vert 000 \rangle \otimes_{R'} \vert 1 \rangle \equiv \vert 000 \rangle \,, \qquad \vert 111 \rangle \otimes_{R'} \vert 1 \rangle \equiv \vert 111 \rangle \,, \\
    \vert 000 \rangle \otimes_{R'} \vert 2 \rangle \equiv \vert 011 \rangle \,, \qquad \vert 111 \rangle \otimes_{R'} \vert 2 \rangle \equiv \vert 100 \rangle \,, \nonumber \\
    \vert 000 \rangle \otimes_{R'} \vert 3 \rangle \equiv \vert 101 \rangle \,, \qquad \vert 111 \rangle \otimes_{R'} \vert 3 \rangle \equiv \vert 010 \rangle \,, \nonumber \\
    \vert 000 \rangle \otimes_{R'} \vert 4 \rangle \equiv \vert 110 \rangle \,, \qquad \vert 111 \rangle \otimes_{R'} \vert 4 \rangle \equiv \vert 001 \rangle \,.\nonumber
\end{align}
This tensor product identifies logical information in the non-trivially charged sectors indexed by $j \geq 2$ according to the \emph{minority rule} ($\vert 000 \rangle$ for a minority of zeroes, and $\vert 111 \rangle$ for a minority of ones), while the frame degrees of freedom identify which qubit has \emph{not} been affected by an eventual double-flip error (the frame is in state $\vert j \rangle$ whenever a double-flip has occurred on the two qubits other than the $(j-1)^{\rm th}$ one). Hence $\otimes_{R'}$ induces a splitting of the degrees of freedom that is somehow dual to that induced by $\otimes_R$. Formulas \eqref{eq:matrix_rep-frame}, \eqref{eq:group_basis-3bit} and \eqref{eq:errors_3bit} remain valid if one replaces $R$ by $R'$.
\eox
\end{example}

\begin{example}\label{ex:3qubit_third_qubit_logical} Finally, let $\mathcal{E}'' = \{ I , X_1 , X_2 , X_1 X_2 \}$, which is also a maximal set of correctable errors, even though it is even less relevant  to practical applications than $\mathcal{E}'$.\footnote{Indeed, the third qubit is now error-free, so it can be understood as a logical qubit. Nothing of practical value is gained by adjoining to it the first two ancilla qubits.} The compatible set of frame fields $R''=(R_2'' , R_3'' , R_4''):= (\restr{X_1}{\cH_1} , \restr{X_2}{\cH_1} , \restr{X_1 X_2 }{\cH_1})$ obeys the conditions of Proposition~\ref{propo:frame-from-errors} and Theorem~\ref{thm:nonlocal_fact}. It leads to yet another inequivalent tensor product $\otimes_{R''}$, defined as:
\begin{align}\label{eq:tensor_decomp_system1}
    \vert 000 \rangle \otimes_{R''} \vert 1 \rangle \equiv \vert 000 \rangle \,, \qquad \vert 111 \rangle \otimes_{R''} \vert 1 \rangle \equiv \vert 111 \rangle \,, \\
    \vert 000 \rangle \otimes_{R''} \vert 2 \rangle \equiv \vert 100 \rangle \,, \qquad \vert 111 \rangle \otimes_{R''} \vert 2 \rangle \equiv \vert 011 \rangle \,, \nonumber\\
    \vert 000 \rangle \otimes_{R''} \vert 3 \rangle \equiv \vert 010 \rangle \,, \qquad \vert 111 \rangle \otimes_{R''} \vert 3 \rangle \equiv \vert 101 \rangle \,, \nonumber\\
    \vert 000 \rangle \otimes_{R''} \vert 4 \rangle \equiv \vert 110 \rangle \,, \qquad \vert 111 \rangle \otimes_{R''} \vert 4 \rangle \equiv \vert 001 \rangle \,. \nonumber
\end{align}
Logical information in a charged sector $\cH_j$ with $j \geq 2$ can now be inferred by looking at the state of the third qubit alone (the logical state is $\vert 000 \rangle$ if the third qubit is 0 and $\vert 111 \rangle$ if the third qubit is 1). This is similar to what was observed relative to the kinematical decomposition $\cH_{\rm kin} = \mathbb{C}^2 \otimes (\mathbb{C}^2)^{\otimes 2}$ (see Section \ref{Sec:3qubitcodeasQRF}). However, we clearly see from Eq.~\eqref{eq:tensor_decomp_system1} that the non-local decomposition $\cH_{\rm kin} = \cH_{\rm pn} \otimes_{R''} \cH_{\rm gauge}$ is not equivalent to the kinematical decomposition.\footnote{The two tensor products had to be inequivalent since gauge transformations factorize differently relative to them: relative to $\otimes_{R''}$, the non-frame degrees of freedom are gauge-invariant, while relative to the original kinematical tensor product they are not. What Eqs.~\eqref{eq:tensor_decomp_system1} provide is an explicit isometry mapping the two tensor product structures into one another.}
\eox
\end{example}

\subsection{More general error-set/QRF relations}\label{Sec:generalerrorQRFcorresp}

In the previous two subsections we demonstrated that every maximal correctable Pauli error set generates a QRF $R$ and an associated tensor factorization $\Hil_{\rm kin}\simeq\Hil_R\otimes\Hil_S$ such that $\mathcal{G}$ acts trivially on its complement (cf.~Thms~\ref{thm_errorQRF} and \ref{thm:nonlocal_fact}).~Conversely, as per Theorem~\ref{thm_errorQRF}, any QRF with that property is associated with a unique equivalence class of correctable errors acting exclusively on its degrees of freedom.~Equivalently, recalling the expression \eqref{triv} of the QRF-disentangler/trivialization $T_R$ associated with a QRF $R$, every maximal set of correctable Pauli errors generates a QRF such that $T_R=I_R \otimes I_S$ is trivial.~Vice versa, any ideal QRF with $T_R=I_R \otimes I_S$ is associated with a unique equivalence class of correctable errors acting exclusively on it.

However, a generic QRF does not have the property that $\mathcal{G}$ acts nontrivially only on the frame. For example, what about the somewhat more general class of QRFs such that $\mathcal{G}$ still acts in a tensor product representation on $R$ and $S$, but non-trivially (though not necessarily faithfully) on $S$? Can we in that case also associate a unique equivalence class of correctable errors with $R$ as a generalization of the converse direction of Theorem~\ref{thm_errorQRF}? (When the representation is not of product form, the discussion is significantly more complicated.) The following observation tells us that in that case there are \emph{no} errors whatsoever whose code space restriction acts exclusively on $R$.

\begin{lem}\label{lem_none}
    Let $G$ be represented in a tensor product representation $g\mapsto U_R^g\otimes U_S^g$ on $\mathcal{H}_{\rm kin}=\mathcal{H}_R\otimes\mathcal{H}_S$ with $g\mapsto U^g_R$ faithful. An operator $E\in\cB(\mathcal{H}_{\rm kin})$ such that 
\begin{equation}
        E\,\Pi_{\rm pn}=e_R\otimes I_S
\end{equation}
    for some nonvanishing operator $e_R\in\cB(\mathcal{H}_{R})$ exists if only if $U_S^g=I_S$ for all $g\in G$.
\end{lem}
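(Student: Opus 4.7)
My plan is to prove the two implications separately. The reverse direction ($\Leftarrow$) is the easy one: if $U_S^g = I_S$ for every $g \in G$, then the gauge projector factorizes as $\Pi_{\rm pn} = \frac{1}{|G|} \sum_g U^g = \Pi_{\rm pn}^R \otimes I_S$, with $\Pi_{\rm pn}^R := \frac{1}{|G|} \sum_g U_R^g$ the projector onto the trivial isotype of the faithful representation on $\mathcal H_R$. Setting $E = I$ then immediately yields $E \Pi_{\rm pn} = \Pi_{\rm pn}^R \otimes I_S = e_R \otimes I_S$ with $e_R = \Pi_{\rm pn}^R$, which is nonvanishing as long as $\mathcal H_R$ carries at least one gauge-invariant vector (which is part of the standing setup, since otherwise the whole code subspace collapses).

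The forward direction ($\Rightarrow$) will hinge on the right-invariance $\Pi_{\rm pn} U^g = \Pi_{\rm pn}$, which follows directly from $\Pi_{\rm pn}$ being the projector onto the $\mathcal{G}$-invariant subspace. Multiplying the hypothesis $E \Pi_{\rm pn} = e_R \otimes I_S$ on the right by $U^g = U_R^g \otimes U_S^g$ and using this invariance gives, for every $g \in G$,
\begin{equation*}
e_R \otimes I_S \;=\; E \Pi_{\rm pn} U^g \;=\; (e_R \otimes I_S)(U_R^g \otimes U_S^g) \;=\; (e_R U_R^g) \otimes U_S^g\,.
\end{equation*}
Since $e_R \neq 0$ and $I_S \neq 0$, both sides are nonzero simple tensors in $\mathcal B(\mathcal H_R) \otimes \mathcal B(\mathcal H_S)$, and so the standard uniqueness-up-to-scalar of a rank-one tensor decomposition produces, for each $g$, a phase $\mu(g) \in \mathbb{C}^{*}$ with
\begin{equation*}
U_S^g \;=\; \mu(g)\, I_S \qquad \text{and} \qquad e_R U_R^g \;=\; \mu(g)^{-1} e_R\,.
\end{equation*}
In particular, $U_S^g$ is forced to be proportional to $I_S$ for every $g \in G$.

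To upgrade this to the literal statement $U_S^g = I_S$ of the lemma, I will absorb the phases $\mu(g)$ into the frame factor by setting $\tilde U_R^g := \mu(g)\, U_R^g$ and $\tilde U_S^g := I_S$, so that $U^g = \tilde U_R^g \otimes \tilde U_S^g$ is preserved. A short calculation using $U^g U^h = U^{gh}$ shows that $\tilde U_R$ is a unitary representation in its own right, and faithfulness of $U_R$ (together with the fact that distinct stabilizer operators cannot coincide up to a scalar in the settings of interest) is inherited by $\tilde U_R$. After this trivial rescaling, $U_S^g = I_S$ holds on the nose.

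The main obstacle I anticipate is this last bookkeeping step: the right-invariance argument only pins $U_S^g$ down up to a phase $\mu(g)$, and one has to justify carefully that this residual scalar ambiguity can always be shifted into $U_R$ without disturbing either the tensor-product form of the gauge representation or its faithfulness. Beyond that, the proof reduces to a one-line manipulation of $\Pi_{\rm pn} U^g = \Pi_{\rm pn}$ combined with the elementary uniqueness of rank-one tensor factorizations.
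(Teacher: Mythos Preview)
Your approach is correct and takes a genuinely different, more elementary route than the paper. The paper expands an arbitrary $E$ in a product basis $\ket{g}\!\bra{g'}_R \otimes e_i$, computes $E\,\Pi_{\rm pn}$, extracts the family of operator conditions $\sum_{i,g''} c_{g,g''g',i}\, e_i\, U_S^{g''} = k_{g,g'} I_S$, and then right-multiplies by some $U_S^{g_*} \neq I_S$ to manufacture a contradiction. Your one-line manipulation $e_R \otimes I_S = E\,\Pi_{\rm pn} = E\,\Pi_{\rm pn}\,U^g = e_R U_R^g \otimes U_S^g$, followed by the uniqueness of rank-one tensor factorizations, bypasses all of that bookkeeping and lands directly on $U_S^g = \mu(g) I_S$. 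Regarding the residual phase you flag as the main obstacle: your diagnosis is accurate, and the paper's own argument faces exactly the same issue---its assertion that $k_{g,g'} U_S^{g_*} \not\propto I_S$ already tacitly presumes $U_S^{g_*}$ is not a scalar multiple of $I_S$. Your proposed absorption of $\mu(g)$ into $U_R^g$ is the right resolution at the level of the intended meaning, though note it does not literally establish $U_S^g = I_S$ for the \emph{given} split; the lemma is really a statement modulo the inherent scalar ambiguity in how one factors $U^g$ into $R$ and $S$ pieces.
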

\begin{proof}
The proof can be found in App.~\ref{App:proofs}.
\end{proof}

We thus have to change the question we are asking. Instead, we find the following.
\begin{prop}\label{prop_nontriv}
    Consider a QRF-system partition $\mathcal{H}_{\rm kin}=\mathcal{H}_R\otimes\mathcal{H}_S$ with $\Hil_R\simeq(\mathbb{C}^{2})^{\otimes(n-k)}$ such that ${G}=\mathbb{Z}_2^{\times(n-k)}$ is represented with a unitary tensor product representation $g\mapsto U^g_R\otimes U_S^g$, where $g\mapsto U_R^g$ is faithful and $g\mapsto U_S^g$ non-trivial (though not necessarily faithful), and where the latter two representations on $R$ and $S$ are permitted to be projective. Then, there exists at least one orthonormal frame orientation basis  $\{\ket{g}_R\,|\,g\in{G}\}$, and for each such basis there is a set $\hat{\mathcal{E}}$ of $2^{n-k}$  errors $\hat{E}_\chi$ which
    \begin{itemize}
        \item[(i)] act trivially on $S$ in their \emph{unrestricted} form, 
        \item[(ii)] form a dual representation of the Pontryagin dual group $\hat{{G}}$, and
        \item[(iii)] define the \emph{unique} equivalence class $[\hat{\mathcal{E}}]$ of maximal sets of correctable errors  distinguished by the property that their code space restriction $\hat{E}_\chi\,\Pi_{\rm pn}$ acts trivially on the new system tensor factor ${\mathcal{H}}_S$  \emph{upon a TPS-refactorization with the QRF-disentangler}~\eqref{triv}. Namely,
\begin{equation}\label{Tprop:errors}
        {T}_R\,\hat{E}_\chi\,\Pi_{\rm pn}\,{T}^\dag_R = \ket{\chi}\!\bra{1}_R\otimes {I}_S,
        \end{equation}
        where $\ket{1}_R:=\frac{1}{\sqrt{|\mathcal{G}|}}\sum_g\ket{g}_R$ and $\ket{\chi}_R = \hat{U}^\chi \ket{1}_R$, $\chi \in \hat G$, with $\hat{E}_\chi = \hat{U}^\chi \otimes I_S$.
    \end{itemize}
\end{prop}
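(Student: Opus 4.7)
The plan is to construct $\hat{\mathcal{E}}$ explicitly using the dual-representation machinery of Proposition~\ref{prop_QRFPcorresp}, verify properties (i)--(iii) by direct computation with the QRF-disentangler $T_R$, and then exploit the unitarity of $T_R$ on all of $\mathcal{H}_{\rm kin}$ (valid for ideal QRFs) to pin down the equivalence class.

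I first need an orthonormal orientation basis $\{\ket{g}_R\}_{g\in G}$. Since $g\mapsto U_R^g$ is faithful on $\mathcal{H}_R\simeq(\mathbb{C}^2)^{\otimes (n-k)}$ and any ideal QRF structure automatically excludes $-I_R \in U_R(G)$ (otherwise two distinct orientation states $\ket{g'}_R,\ket{hg'}_R$ would have to be simultaneously orthogonal and proportional), the isotype-counting argument from the proof of Theorem~\ref{thm_errorQRF} shows that $\mathcal{H}_R$ carries the regular representation of $G$, and an orientation basis exists (equivalently via the seed-state construction of Theorem~\ref{thm:existence_of_local_qrfs}). For any such basis, Proposition~\ref{prop_QRFPcorresp} produces the dual representation $\hat{U}_R^\chi = \sum_g \chi(g)\ket{g}\!\bra{g}_R$ of $\hat{G}$, and I set $\hat{E}_\chi := \hat{U}_R^\chi \otimes I_S$ for $\chi\in\hat{G}$. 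Property (i) (trivial action on $S$ in unrestricted form) is then immediate from the definition, and (ii) (dual representation of $\hat{G}$) is exactly the content of Proposition~\ref{prop_QRFPcorresp}.

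For the core formula in (iii), a direct computation with $T_R = \sum_g \ket{g}\!\bra{g}_R\otimes (U_S^g)^\dagger$ gives $T_R (\hat{U}_R^\chi \otimes I_S) T_R^\dagger = \hat{U}_R^\chi \otimes I_S$, since the projectors $\ket{g}\!\bra{g}_R$ collapse the double sum to $g=h$, which in turn reduces $(U_S^g)^\dagger U_S^h$ to $I_S$. Combined with the standard identity $T_R \Pi_{\rm pn} T_R^\dagger = \ket{1}\!\bra{1}_R \otimes I_S$ (equivalent to $\Pi_{|R}=I_S$ for ideal QRFs, cf.~Sec.~\ref{sssec_PN}) and the unitarity of $T_R$ on all of $\mathcal{H}_{\rm kin}$, this yields
\begin{equation}
    T_R \,\hat{E}_\chi \,\Pi_{\rm pn}\, T_R^\dagger \;=\; (\hat{U}_R^\chi \otimes I_S)(\ket{1}\!\bra{1}_R \otimes I_S) \;=\; \ket{\chi}\!\bra{1}_R \otimes I_S\,,
\end{equation}
with $\ket{\chi}_R = \hat{U}_R^\chi \ket{1}_R = \tfrac{1}{\sqrt{|G|}}\sum_g \chi(g)\ket{g}_R$. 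Orthonormality of $\{\ket{\chi}_R\}_{\chi\in\hat{G}}$ (a group Fourier transform of an orthonormal basis) together with this formula immediately give $\Pi_{\rm pn}\hat{E}_\chi^\dagger \hat{E}_\eta \Pi_{\rm pn} = \delta_{\chi,\eta}\Pi_{\rm pn}$, so $\hat{\mathcal{E}}$ is correctable of rank $2^{n-k}$, hence maximal.

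For the uniqueness half of (iii), let $\tilde{\mathcal{E}}=\{\tilde{E}_j\}$ be any maximal correctable set satisfying the distinguishing condition $T_R \tilde{E}_j \Pi_{\rm pn} T_R^\dagger = \tilde{o}_j \otimes I_S$. Multiplying on the right by $T_R \Pi_{\rm pn} T_R^\dagger = \ket{1}\!\bra{1}_R \otimes I_S$ and using $\Pi_{\rm pn}^2 = \Pi_{\rm pn}$ with unitarity of $T_R$ forces $\tilde{o}_j = \ket{\tilde{v}_j}\!\bra{1}_R$ for some $\ket{\tilde{v}_j}_R\in\mathcal{H}_R$. Expanding $\ket{\tilde{v}_j}_R = \sum_\chi \alpha_{j\chi}\ket{\chi}_R$ in the orthonormal dual-character basis and conjugating back by $T_R^\dagger$ yields $\tilde{E}_j \Pi_{\rm pn} = \sum_\chi \alpha_{j\chi}\hat{E}_\chi \Pi_{\rm pn}$. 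Maximality then forces $\Span_\mathbb{C}\{\tilde{E}_j \Pi_{\rm pn}\} = \Span_\mathbb{C}\{\hat{E}_\chi \Pi_{\rm pn}\}$, so $\tilde{\mathcal{E}}\in[\hat{\mathcal{E}}]$. The main delicacy to handle cleanly is the cocycle of the projective representations in Step~1 (existence of the orientation basis) and in verifying $T_R(\hat{U}_R^\chi \otimes I_S) T_R^\dagger = \hat{U}_R^\chi \otimes I_S$ without spurious phases; the cocycle-gauge choices in App.~\ref{app_projrep} together with the diagonality of $\hat{U}_R^\chi$ in the orientation basis tame these phases.
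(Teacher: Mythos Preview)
Your proof is correct and follows essentially the same route as the paper: invoke the seed-state/Lemma~\ref{lemma:seedstate} construction for the orientation basis, build $\hat{E}_\chi=\hat{U}_R^\chi\otimes I_S$ via Proposition~\ref{prop_QRFPcorresp}, observe $[T_R,\hat{E}_\chi]=0$ from the diagonality of $\hat{U}_R^\chi$ in the $\{\ket{g}_R\}$ basis, and combine with $T_R\Pi_{\rm pn}T_R^\dagger=\ket{1}\!\bra{1}_R\otimes I_S$. Two minor remarks: the paper spells out the cocycle bookkeeping in establishing $T_R\Pi_{\rm pn}T_R^\dagger=\tfrac{1}{|G|}\sum_g V_R^g\otimes I_S$ for the projective case (which you invoke as a standard ideal-QRF identity), whereas you in turn supply the uniqueness argument for the equivalence class $[\hat{\mathcal{E}}]$ that the paper's proof leaves implicit; and your remark that an ideal QRF ``automatically excludes $-I_R\in U_R(G)$'' is circular as written (you need $-I_R\notin U_R(G)$ \emph{before} invoking Lemma~\ref{lemma:seedstate}, not as a consequence of the orientation basis), though in the local-QRF context of Sec.~\ref{Sec:genstabilizerQRF} this is satisfied by construction and the paper's own proof glosses over the same point.
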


\noindent {\bf Remark:}  Per Eq.~\eqref{eq:trivialization_projective_sharp},  if the QRF is ideal, then ${T}_R\,(U_R^g\otimes U_S^g){T}_R^\dag =V_R^g\otimes I_S$ where $V_R^g \ket{h}_R = \ket{gh}_R$ for any $g,h \in G$. Therefore, \emph{after} trivialization the stabilizer group acts exclusively on the new frame factor $R$, as for the frames in Theorem~\ref{thm_errorQRF}.

\begin{proof}
Lemma~\ref{lemma:seedstate} implies that at least one orthonormal orientation basis exists for the frame Hilbert space $\mathcal{H}_R$ when $\mathcal{G}$ acts faithfully on it. Let $\{\ket{g}_R\,|\,g\in{G}\}$ be any such basis. Proceeding as in the converse direction of the proof of Theorem~\ref{thm_errorQRF}, it is clear that one can build the errors fulfilling (i) and (ii) via Eqs.~\eqref{Uhat} and~\eqref{Uhat2}. Noting that $[{T}_R,\hat{E}_\chi]=0$ by construction and that $T_R \Pi_{\rm pn} T_R^\dagger = \frac{1}{|G|}\sum_{g \in G} V_R^g \otimes I_S$ allows us to establish (iii). A proof of this latter relation is: 
\begin{align}
    T_R \Pi_{\rm pn} T_R^\dagger &= \frac{1}{|G|}\sum_{g,h,k \in G} \left( \ket{g}\!\bra{g}_R U_R^h \ket{k}\!\bra{k}_R \right) \otimes \left( U_S^{(g^{-1})}  U_S^h U_S^k \right) \\
    &= \frac{1}{|G|} \sum_{g,h,k \in G} c(h,k) c^*(g^{-1},hk) c^*(h,k) \left( \ket{g}\!\bra{g}_R  \ket{hk}\!\bra{k}_R \right) \otimes \left( U_S^{g^{-1}hk}\right) \\
    &= \frac{1}{|G|} \sum_{h,k \in G}  \underbrace{c^*((hk)^{-1},hk)}_{=1}  \left( \ket{hk}\!\bra{k}_R \right) \otimes \left( U_S^{e}\right) \\
    &= \frac{1}{|G|} \sum_{h \in G} \underbrace{\left( \sum_{k \in G} \ket{hk}\!\bra{k}_R \right)}_{= V_R^h} \otimes I_S = \frac{1}{|G|} \sum_{g \in G} V_R^g \otimes I_S\,. 
\end{align}
Claim (iii) then follows:
\begin{align}
    T_R \hat{E}_\chi \Pi_{\rm pn} T_R^\dagger &= \hat{E}_\chi T_R \Pi_{\rm pn} T_R^\dagger = \hat{E}_\chi \frac{1}{|G|}\sum_{h,k \in G} \ket{hk}\!\bra{k}_R \otimes I_S = \hat{E}_\chi \frac{1}{|G|}\sum_{h,k \in G} \ket{h}\!\bra{k}_R \otimes I_S \\
    &= \hat{E}_\chi \ket{1}\!\bra{1}_R \otimes I_S =  \left( \hat{U}^\chi \otimes I_S \right) \left( \ket{1}\!\bra{1}_R \otimes I_S \right) = \ket{\chi}\!\bra{1}_R \otimes I_S
\end{align}
\end{proof}

Many codes will admit several orthonormal orientation bases per choice of QRF subsystem, such that the latter can be associated with  multiple \emph{inequivalent} trivialization maps and equivalence classes  $[\hat{\mathcal{E}}]$ of maximal sets of correctable errors. 

\begin{example}\label{3qubit:nontrivialUSprop}
Consider the 3-qubit bit-flip code and choose $R=12$.~It is clear that we can equally well use the $X$- or $Y$-basis of $R$ to define its covariant orthonormal orientation basis.~When we choose the $X$-basis, the set of errors $\hat{E}_{\chi}$ residing in the Pontryagin dual group of Proposition~\ref{prop_nontriv} is given by $\hat{\mathcal{E}}=\{I_{12},X_1,X_2,X_1X_2\}$, and when we choose the $Y$-basis it is $\hat{\mathcal{E}}'=\{I_{12},Y_1,Y_2,Y_1Y_2\}$.~Both act trivially on $S=3$.~When we choose the $X$-basis, the disentangler reads
\begin{equation}
\begin{split}
    {T}_{R_X} &= (\ketbra{++}{++}_R + \ketbra{--}{--}_R) \otimes I_S + (\ketbra{+-}{+-}_R + \ketbra{-+}{-+}_R) \otimes Z_S \\[2mm]
    &= I_{12} \otimes \tfrac{1}{2}(I_3 + Z_3) + X_1 X_2 \otimes \tfrac{1}{2}(I_3 - Z_3) \label{eq:T_decoder}
\end{split}
\end{equation}
and clearly commutes with $\hat{\mathcal{E}}$ but not $\hat{\mathcal{E}}'$. This means that not only $\hat{\mathcal{E}}$ will continue to look $R$-local upon applying the disentangler, but also its code space restriction $\hat{\mathcal{E}}\,\Pi_{\rm pn}$ becomes $R$-local, which due to Lemma~\ref{lem_none} it was not before the refactorization. Indeed, suppressing tensor products, e.g.\ $X_1 \Pi_\mrm{pn} = \tfrac{1}{4}(X_1(I + Z_2 Z_3) + i Y_1(Z_2 - Z_3))$ is not $12$-local. By contrast, we have, for example, 
\begin{equation}
 {T}_{R_X}\,Y_1=-Y_1\,{T}_{R_X}+Y_1\left(I_3+Z_3\right)\,,
\end{equation}
so that ${T}_{R_X}Y_1\,\Pi_{\rm pn}\,{T}_{R_X}^\dag$ will contain operators that act non-trivially on $S=3$. 

Of course, we could have chosen the trivialization ${T}_{R_Y}$ with respect to the $Y$-basis instead, with the same conclusions but for $X$ and $Y$ interchanged.~Hence, for the \emph{same} QRF $R$, there exist inequivalent trivializations and thereby also inequivalent new tensor product structures in which the stabilizer group acts exclusively on the frame factor.~In contrast to the case when the stabilizer group acts trivially on $S$, the code space restriction of different maximal error sets that before restriction act exclusively on $R$ can be inequivalent, as here $\hat{\mathcal{E}}\,\Pi_{\rm pn}\not\sim\hat{\mathcal{E}}'\,\Pi_{\rm pn}$.
\eox
\end{example}

In conclusion, when $\mathcal{G}$ acts non-trivially on $S$, one cannot associate a unique equivalence class of correctable errors with the frame, but for each choice of orthonormal orientation basis one can.

\subsection{Encoding, decoding and recovery from disentanglers}

Our starting point for extracting QRFs out of Pauli stabilizer QECCs was to observe that the stabilizer group $\mathcal{G}$ is a representation of a QRF symmetry group $G = \mathbb{Z}_2^{\times(n-k)}$.
We began by constructing frames and systems that were local with respect to the innate TPS of the physical qubits, using the elements of $\mathcal{G}$ itself as a representation $g \mapsto U^g = U_R^g \otimes U_S^g$ with $U^g_R$ and $U^g_S$ nontrivial.
We then found that it is instead always possible to find a different tensor product factorization of $\Hil_\mrm{kin}\simeq \Hil_R \otimes \Hil_S$ such that both errors and $U^g$ act trivially on $S$, i.e., a frame $R$ for which $U^g = U_R^g \otimes I_S$ and $E \Pi_{\rm pn} = (e_R \otimes I_S)$ (Thm.~\ref{thm_errorQRF}).
Generally, if handed a frame alone for which both $U_R^g$ and $U_S^g$ are nontrivial, as typically for QRFs that are local in the innate TPS, then it is possible to find frame orientation bases and corresponding error sets that act trivially on $S$ upon a disentangling operation $T_R$ (Prop.~\ref{prop_nontriv}).
In this last subsection, we return to the local TPS and study cases in which errors and $U^g$ \emph{both} act nontrivially on $R$ and $S$.
Closing the loop on the QECC/QRF dictionary, we will study the error correcting properties of QRF disentanglers, with the punchline being that it is still possible to decode and recover from errors spread across $R$ and $S$ with the disentangler being an integral part of the recovery operation.

\begin{example}[Three-qubit code]
\label{Sec:3qubitQRFdisentangler}

Consider once again the 3-qubit code and the local QRF consisting of the first two qubits constructed in Sec.~\ref{Sec:3qubitcodeasQRF} and Ex.~\ref{3qubit:nontrivialUSprop}, with $R = 12$ and $S = 3$.
Choosing the seed state to be the $X$ eigenstate $\ket{++}_R$, the disentangler $T_R$ is given in Eq.~\eqref{eq:T_decoder} (here we omit the subscript on $R$ for brevity).
Consider two maximal correctable error sets, $\cE = \{I, X_1, X_2, X_3\}$ and $\hat{\cE}' = \{I, X_1, X_2, X_1 X_2\}$, which are the conventionally considered error set for the 3-qubit code and an error set that is localized to $R$, respectively.
For any $\ket{\bar\psi} \in \Hil_\mrm{code}$, we have that
\begin{equation} \label{eq:3qubitTerror}
    \begin{aligned}
        T_R \ket{\bar\psi} &= \ket{00}_{12} \otimes \ket{\psi}_3 \\
        T_R X_1 \ket{\bar\psi} = X_1 T_R \ket{\bar\psi} &= \ket{10}_{12} \otimes \ket{\psi}_3 \\
        T_R X_2 \ket{\bar\psi} = X_2 T_R \ket{\bar\psi} &= \ket{01}_{12} \otimes \ket{\psi}_3 \\
        T_R X_3 \ket{\bar\psi} = X_1 X_2 X_3 T_R \ket{\bar\psi} &= \ket{11}_{12} \otimes X_3 \ket{\psi}_3 .
    \end{aligned}
\end{equation}
The state of $R = 12$ therefore carries information about whether an $X$ error occurred, and the operators $X_1 X_2$ and $X_3$ both present the same syndrome.
Consequently, whether we correct for the error set $\mathcal{E}$ or $\hat{\cE}'$ depends on what we choose to do if we find that the state of $R$ is $\ket{11}_{12}$.
Taking no further action amounts to correcting for $\hat{\cE}'$, while applying $X_3$ amounts to correcting for $\mathcal{E}$.
\eox
\end{example}

It turns out that it is possible in general to define encoding, decoding, and recovery protocols using $T_R$ to correct errors spread across $R$ and $S$, as for the standard error set $\mathcal{E}$ in the example above.
Given a local, ideal QRF with a basis for $R$ of orthonormal frame orientation states $\ket{g}_R$ (see Sec.~\ref{Sec:genstabilizerQRF}), denote by $\ket{1}_R$ the uniform superposition over all frame states defined in Eq.~\eqref{eq:R_ready} and $T_R$ the disentangler defined in Eq.~\eqref{triv}.
Extend $T_R$ to a unitary on all of $\Hil_\mrm{kin}$and let it function as an encoding map in the following way: for any $\ket{\psi} \in \Hil_\mrm{logical} \simeq \Hil_S$, let $\ket{\bar\psi} \in \Hil_{\rm pn} \equiv \Hil_\mrm{code}$ be
\begin{equation} \label{eq:gen_encoding}
    \ket{\bar\psi} = T_R^\dag(\ket{1}_R \otimes \ket{\psi}_S),
\end{equation}
where we abuse notation slightly and use the fact that $\Hil_\mrm{logical} \simeq \Hil_S$ to directly write the logical state into $\ket{\psi}_S$ above.
Since $T_R$ is unitary, it of course follows that $T_R \ket{\bar{\psi}} = \ket{1}_R \otimes \ket{\psi}_S$ for any $\ket{\bar{\psi}} \in \Hil_\mrm{pn}$.
With these definitions in place, we have the following result:
\begin{prop}[\textbf{Action of local disentangler on error states}] \label{prop:general_recovery}
    Let $\Hil_\mrm{kin} = \Hil_R \otimes \Hil_S$ define a local QRF-system partition for a $[[n,k]]$ Pauli stabilizer code with $\Hil_R \simeq (\mathbb{C}^2)^{\otimes(n-k)}$ consisting of $n-k$ physical qubits and such that $G = \mathbb{Z}_2^{\times(n-k)}$ is represented with the code's stabilizers $g \mapsto U_R^g \otimes U_S^g$, where $g \mapsto U_R^g$ is faithful (but possibly projective), $g \mapsto U_S^g$ is non-trivial, and $\braket{g}{h}_R = \delta_{g,h}$ (i.e. $R$ is ideal). Let $\mathcal{E} = \{E_1, \dots, E_m\}$ be a set of correctable Pauli errors.
Then, for any any $\ket{\bar{\psi}} = T_R^\dagger(\ket{1}_R\otimes\ket{\psi}_S) \in \Hil_\mrm{pn}$, it follows that
    \begin{equation}
        T_R E_i \ket{\bar{\psi}} = \ket{w(E_i)}_R \otimes L_S(E_i)\ket{\psi}_S,
    \end{equation}
    for some collection of orthonormal states $\ket{w(E_i)}_R$ and unitary operators $L_S(E_i)$.
\end{prop}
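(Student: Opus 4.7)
The plan is to exploit the isotype decomposition of $\Hil_{\rm kin}$ under the stabilizer representation together with the intertwining property of the disentangler $T_R$, and then to deduce the product structure of $T_R E_i \ket{\bar\psi}$ by Pontryagin duality. By Proposition~\ref{prop:correctable_CharTh} and Corollary~\ref{cor:max_error_sets}, every Pauli error $E_i \in \mathcal{P}_n$ lies in exactly one sector $C_{\chi_i}$ labeled by a character $\chi_i \in \hat{G}$, and the restriction $E_i \Pi_{\rm pn}$ maps the code subspace isometrically into the isotype $\Hil_{\chi_i}$. In particular, $E_i \ket{\bar\psi} \in \Hil_{\chi_i}$ for every $\ket{\bar\psi} \in \Hil_{\rm pn}$. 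Correctability of $\mathcal{E}$ further implies that any two $E_i, E_j$ belonging to the same sector $C_\chi$ are proportional on $\Hil_{\rm pn}$, so I may assume without loss of generality---by keeping one representative per occupied sector---that the characters $\chi_i$ labeling the errors in $\mathcal{E}$ are pairwise distinct.

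Next, I invoke the defining intertwining property of the disentangler for an ideal QRF, Eq.~\eqref{eq:trivialization_projective_sharp}, which reads
\begin{equation*}
    T_R (U_R^g \otimes U_S^g) T_R^\dagger = V_R^g \otimes I_S\,,
\end{equation*}
where $V_R$ is the regular representation of $G$ on $\Hil_R$. This relation implies that $T_R$ sends the $\chi$-isotype of $\Hil_{\rm kin}$ (under the stabilizer representation) isometrically onto the $\chi$-isotype of $\Hil_R \otimes \Hil_S$ (under $V_R \otimes I_S$). Since $\dim \Hil_R = |G| = 2^{n-k}$ and $V_R$ is regular, each irreducible character of $G$ occurs in $V_R$ exactly once; its one-dimensional isotype in $\Hil_R$ is spanned by the Fourier vector $\ket{\chi}_R := \tfrac{1}{\sqrt{|G|}} \sum_{h \in G} \chi(h) \ket{h}_R$. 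Consequently the $\chi$-isotype of $V_R \otimes I_S$ is exactly $\ket{\chi}_R \otimes \Hil_S$.

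Combining these two observations, $T_R E_i \ket{\bar\psi}$ must lie in $\ket{\chi_i}_R \otimes \Hil_S$, which forces the product decomposition $T_R E_i \ket{\bar\psi} = \ket{w(E_i)}_R \otimes L_S(E_i) \ket{\psi}_S$ with $\ket{w(E_i)}_R := \ket{\chi_i}_R$. The induced map $L_S(E_i):\Hil_S \to \Hil_S$ obtained by extracting the system factor is linear in $\ket{\psi}_S$ by construction, and unitarity of $E_i$ and $T_R$ ensures that it is norm-preserving, hence unitary on the finite-dimensional $\Hil_S$. Orthonormality of the family $\{\ket{w(E_i)}_R\}$ is then immediate from the orthonormality of the Fourier basis $\{\ket{\chi}_R\}_{\chi \in \hat{G}}$, a standard consequence of the character orthogonality relations on $G$.

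The one step that will require genuine care is the second: one must verify that the intertwiner $T_R$ not only trivializes the stabilizer action on the code subspace but also respects the full isotype decomposition of $\Hil_{\rm kin}$, and in particular pins the target of each $\Hil_{\chi_i}$ to the single one-dimensional factor $\ket{\chi_i}_R \otimes \Hil_S$. Once this Pontryagin-duality picture of the regular versus character bases of $\Hil_R$ is in place, the factorized form of $T_R E_i \ket{\bar\psi}$ and the orthogonality of the syndrome-carrying vectors $\ket{w(E_i)}_R$ both follow from representation theory, with no further case analysis on the explicit form of the Pauli errors or of the seed state used to build the orientation basis.
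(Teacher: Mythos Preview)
Your argument is correct and takes a genuinely different route from the paper. The paper proves the proposition by direct computation: it expands $T_R E_\chi \ket{\bar\psi}$ using the explicit form of $T_R$ and of $\ket{\bar\psi}$, pushes the commutation relation $U^g E_\chi U^g = \chi(g) E_\chi$ through the sums, and arrives at the factorized expression with the explicit identifications $\ket{w(E_\chi)}_R = |G|^{-1/2}\sum_h \chi(h)\ket{h}_R$ and $L_S(E_\chi) = |G|\,\bra{e}_R E_\chi \Pi_{\rm pn}\ket{e}_R$; it then verifies unitarity of $L_S(E_\chi)$ by a further brute-force calculation. By contrast, you sidestep all of this by observing that the intertwining identity \eqref{eq:trivialization_projective_sharp} holds on the whole of $\Hil_{\rm kin}$ for an ideal frame, so $T_R$ carries the $\chi$-isotype of the stabilizer representation unitarily onto the $\chi$-isotype of $V_R\otimes I_S$, which is $\ket{\chi}_R\otimes\Hil_S$ because $V_R$ is regular. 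The product form, the orthonormality of the $\ket{w(E_i)}_R$, and the unitarity of $L_S(E_i)$ then all drop out of representation theory and norm preservation with no case-by-case algebra.

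Your route is cleaner and more conceptual; the step you flag as ``requiring care'' is exactly the right one, and it is fully justified by \eqref{eq:trivialization_projective_sharp} together with the decomposition of the regular representation into one copy of each character. What the paper's computational approach buys in return is the closed formula $L_S(E_\chi) = |G|\,\bra{e}_R E_\chi \Pi_{\rm pn}\ket{e}_R$, which is directly usable in the explicit recovery protocol and in the worked $3$- and $5$-qubit examples that follow the proposition. Your argument establishes existence and unitarity of $L_S(E_i)$ but does not produce this formula without further unpacking.
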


\begin{proof}
The proof can be found in \App{App:proofs}.
\end{proof}

\medskip

\noindent
Prop.~\ref{prop:general_recovery} therefore gives us an explicit recovery protocol.
After decoding with $T_R$, perform an orthogonal measurement of $R$ onto $\{\ket{w(E)}_R\}_{E \in \mathcal{E}}$.
Upon obtaining the outcome $w(E)$, apply the unitary operator $L^{\dag}_S(E)$ to the state on $S$ to recover the logical state $\ket{\psi}_S$.

\begin{example}[Five-qubit code]

Consider again the 5-qubit code and its associated local QRF with $R=1234$ and $S=5$ that was introduced in Ex.~\ref{sec:5qubit}.
With the choice of seed state \eqref{eq:5qubitseed}, the disentangler \eqref{triv} again takes a compact form:
\begin{equation}
\begin{aligned}
    T_R &= IIII \otimes \tfrac{1}{4}(I + X + Y + Z) + IXXI \otimes \tfrac{1}{4}(I - X - Y + Z) \\
    &\qquad + XIIX \otimes \tfrac{1}{4} (I - X + Y - Z) + XXXX \otimes \tfrac{1}{4}(I + X - Y - Z)
\end{aligned}
\end{equation}
An encoded state $\ket{\bar{\psi}} \in \Hil_\mrm{code}$ decodes to
\begin{equation}
    T_R \ket{\bar{\psi}} = \ket{w_0}_{1234} \otimes X_5 \ket{\psi}_5,
\end{equation}
where the fixed fiducial state on $R = 1234$ is
\begin{equation}
\begin{aligned}
    \ket{w_0}_{1234} &= \frac{1}{4} \left( i \ket{0000} -\ket{0001}-\ket{0010}-\ket{0011}-\ket{0100}+\ket{0101}-i \ket{0110}-\ket{0111} \right. \\
    &\qquad \left. -\ket{1000}+i \ket{1001}+\ket{1010}+\ket{1011}-\ket{1100}+\ket{1101}-\ket{1110}-i \ket{1111} \right).
\end{aligned}
\end{equation}
The extra $X_5$ after decoding with $T_R$ is just a result of having started with the conventional choices \eqref{eq:5qubitencodedlogical01} for the logical $0$ and $1$ states instead of beginning with \eqref{eq:gen_encoding}; exchanging $\bar 0 \leftrightarrow \bar 1$ would of course get rid of the $X_5$.
It is straightforward but tedious to check that a single Pauli error $E_i = X_i, Y_i, Z_i$, $i \in \{1, \dots, 5\}$ results in a distinct orthogonal state on $R$, i.e.,
\begin{equation}
    T_R E_i \ket{\bar{\psi}} = \ket{w(E_i)}_{1234} \otimes L_5(E_i) \ket{\psi}_5.
\end{equation}
where $\braket{w(E_i)}{w(E_j)} = \delta_{ij}$ (and also $\braket{w(E_i)}{w_0} = 0$), and $L_5(E_i)$ is unitary.
A single qubit error can therefore be diagnosed by measuring $R$ and then conditionally rotating the state of $S$.
\eox
\end{example}

\section{Error duality}\label{sec_errorduality}

Let us now turn our focus to two classes of correctable error sets that turn out to be \emph{dual} to one another. In a gauge theory context, this has the flavor of electromagnetic duality and the errors can be interpreted as electric and magnetic charge excitations. This duality will have meaning more generally for QEC and QRFs, however, and be visible in the structural relations of these error sets as well as in their associated recovery operations. The abstract underpinning is provided by Pontryagin duality, which we review in Apps.~\ref{app:Pontryagin} and~\ref{app:reps} and which in the continuum indeed underlies electromagnetic duality \cite{Ben-Zvi}. There is one key difference to electromagnetic duality of the continuum: in our case, the representations of the stabilizer gauge group and its dual will \emph{not} commute, such that their charges turn out to be complementary and  cannot be determined simultaneously. Specifically, our magnetic charges will \emph{not} be stabilizer gauge-invariant, yet lead to meaningful correctable errors. This will also be illustrated in the surface codes of Sec.~\ref{sec:surfacecodes}, where both forms of eletro-magnetic duality arise. Owing to charge complementarity, our new duality is somewhat stronger.

This duality has already surfaced in the discussion of the error-set/QRF correspondence, especially in Theorem~\ref{thm_errorQRF}, and we shall extract it here more explicitly.~We begin by assuming that we are given unitary representations $U$ and $\hat U$ of the stabilizer group $G$ and its Pontryagin dual $\hat G$ on $\Hil_{\rm kin}$, respectively, that are \emph{dual} according to Definition~\ref{def:dual_reps_faithful}, i.e.\ that satisfy
\begin{equation}\label{dualrep}
    U^g\hat{U}^\chi=\chi(g)\hat{U}^\chi U^g\,,\qquad\forall\,g\in G\,,\chi\in\hat{G}\,,
\end{equation}
where $\chi\in\hat{G}$ labels the characters of the irreps of $G$.~Thus, $U(G),\hat{U}(\hat{G})$ do not commute.~As explained in App.~\ref{app:duality}, Eq.~\eqref{dualrep} implies that $U$ is \emph{faithful}.~This assumption is justified, as we have seen in Sec.~\ref{sec_genstab} that each $[[n,k]]$ stabilizer code with a faithful representation of $G=\mathbb{Z}_2^{\times(n-k)}$ gives rise to ideal QRFs in many different ways.~Furthermore, Proposition \ref{prop_QRFPcorresp} says that any orthonormal orientation basis of an ideal QRF associated with $\mathcal{G}=U(G)$ gives rise to a dual representation $\hat{\mathcal{G}}=\hat{U}(\hat{G})$.~Hence, each $[[n,k]]$ stabilizer code with faithful $U$ will admit a multitude of dual $\hat{U}$s.\footnote{Proposition \ref{prop_bijection} in App.~\ref{app:duality} provides a more general abstract characterization of all possible dual $(U,\hat U)$ pairs in terms of $*$-representations of a certain abstract $*$-algebra of functions on $\hat{G}\times G$, equipped with a twisted convolution product.} 

Before proceeding, we note that Eq.~\eqref{dualrep} is just a discrete version of the standard Weyl relations,
\begin{equation}\label{eq:Weyl}
U(s)\,V(t)=e^{-ist}\,V(t)\,U(s)\,,\qquad\forall\,t,s\in\mathbb{R}\,,
\end{equation}
in quantum mechanics, relating momentum and position space translations $U(s)=\exp(is\hat{q})$ and $V(t)=\exp(it\hat p)$, respectively, both of which are isomorphic to the translation group $(\mathbb{R},+)$ and Pontryagin duals of one another. Indeed, $\chi_t(s)=\exp(-ist)$ are the characters of position-space translations. Our notion of dual representations of two Pontryagin dual groups is thus nothing else than a discrete version of position-momentum duality and the standard Fourier transform in quantum mechanics. Our ensuing error duality may, therefore, also be read in these terms. To streamline our terminology, we nevertheless refer to the errors as ``electric'' and ``magnetic'' below, as this interpretation is somewhat more natural in a gauge theory context. Indeed, in a $\mathbb{Z}_2$ gauge theory, Pauli operators can quite literally correspond to electric charge excitations (e.g., see the surface codes in the next section).

\subsection{Pauli errors as  electric charge excitations}\label{ssec_pauliduality} 

We begin with a non-degenerate set of standard Pauli errors $E_\chi\in\mathcal{P}_n$ for a given $[[n,k]]$ stabilizer code, collecting their basic properties for contrast with the dual errors in Sec.~\ref{ssec_dualerrors}. In fact, the below equally holds for error sets comprised of any dual representation $\{\hat{U}^\chi\}_{\chi\in\hat{G}}$ of $\hat{G}$. In standard QEC settings, Pauli errors typically account for spin flip, phase flip, etc.\ errors. We will return to their interpretation for both sides of the QECC/QRF dictionary shortly. 

We recall from App.~\ref{app:stabilizer_reps} that the physical space can be decomposed orthogonally into the irreps of $G$
\begin{equation}
    \Hil_{\rm kin}=\bigoplus_{\chi\in\hat{G}}\Hil_\chi\,,
\end{equation}
where $\Hil_\chi$ is the isotype corresponding to  $\chi$ and the image of the orthogonal projector
\begin{equation}\label{pchi}
    P_\chi=\frac{1}{2^{n-k}}\sum_{g\in G}\chi(g)\,U^g\,.
\end{equation}
The code/perspective-neutral space is
\begin{equation}
    \Hil_{\rm pn}=\Hil_1\,
\end{equation}
and the errors, if they are correctable, map the code space unitarily into the other isotypes, which comprise the error spaces (cf.~Proposition \ref{prop:decomp_Pauli}):
\begin{equation}\label{E5}
    E_\chi(\Hil_{\rm pn})=\Hil_\chi.
\end{equation}
From an Abelian gauge theory perspective, where different irreps of the gauge group correspond to different electric charge sectors,\footnote{E.g., in electrodynamics, the spatial integral of the charge density $\rho$ in the Gauss law $\vec{\nabla}\cdot\vec{E}=\rho$ labels different charge sectors.} we may view the Pauli errors $E_\chi$ as \emph{electric charge excitations} and the characters $\chi$ as labeling  these electric charges.

The elements of the Pontryagin dual group translate between these error spaces
\begin{equation}\label{errortranslate}
    \hat{U}^\eta\left(\Hil_{\chi}\right)=\hat{U}^\eta\,P_\chi\left(\Hil_{\rm kin}\right)\underset{\eqref{Pchicov}}{=}\Hil_{\eta\chi}\,.
\end{equation}
So far we have not been specific about the choice of dual representation. A particularly natural restriction is to adapt the dual representation to the error structure of the code by demanding that
\begin{equation}
    \hat{U}^\eta\,E_\chi\,\Pi_{\rm pn}=E_{\eta\chi}\,\Pi_{\rm pn}\,.
\end{equation}
Setting $\chi=1$ tells us that the $\hat{U}^\eta$ act as errors on the code, thus comprising an error set with group structure as in the converse direction of Theorem~\ref{thm_errorQRF}. We will find this restriction to be realized in the examples below.  Note, however, that none of the following technical results in this section depend on this additional requirement.
Eq.~\eqref{E5} informs us that correctable Pauli errors map into the orthogonal complement of $\Hil_{\rm pn}$ and obey the KL condition in the simple form
\begin{equation}\label{KLsimple}
    \Pi_{\rm pn}\,E_\chi\,E_\eta\,\Pi_{\rm pn}=\delta_{\chi,\eta}\,\Pi_{\rm pn}\,.
\end{equation}
Now
\begin{equation} \label{ugpchi}U^g\,P_\chi=\chi(g)\,P_\chi\,,
\end{equation}
so that the error $E_\chi$ can be detected by measuring any generating set $\{U^g\}_{g\in G_{\rm gen}}$ of the stabilizer group $\mathcal{G}$.~The error syndrome is then given by the characters $\{\chi(g)\}_{g\in G_{\rm gen}}$.~When syndrome $\chi$ is found, one simply applies $E_\chi$ again to correct the error; this is a de-excitation of the electric charge $\chi$.~Alternatively, one can also measure the projectors $\{P_\chi\}_{\chi\in\hat{G}}$, as expressed by the following identity for arbitrary $\rho\in\mathcal{S}(\Hil_{\rm kin})$ (cf.~Lemma \ref{lem_det}):
\begin{equation}\label{projeqGtwirl}
\sum_{\chi\in\hat{G}}\,P_\chi\,\rho\,P_\chi = \frac{1}{2^{n-k}}\sum_{g\in G} U^g\,\rho\,U^g\,.
    \end{equation}
 Thus, the incoherent group averaging ($G$-twirl) over $G$ is equal to the projective measurement  $\{P_\chi\}_{\chi\in\hat{G}}$  of the $\chi$-charge sectors. Note that both sets of measurements $\{U^g\}_{g\in G_{\rm gen}}$ and $\{P_\chi\}_{\chi\in\hat{G}}$ are gauge-invariant and leave all isotypes invariant. In particular, they leave code states invariant.
 
 For a given error set $\mathcal{E}=\{E_\chi\}$ an error channel reads
 \begin{equation}
     \tilde{\mathcal{E}}:\mathcal{B}(\Hil_{\rm pn})\rightarrow\mathcal{B}(\Hil_{\rm kin})\,,\qquad\tilde{\mathcal{E}}(\rho_{\rm pn})=\sum_{k}\tilde{E}_k\rho_{\rm pn}\tilde{E}_k^\dag\,,
 \end{equation}
where the $\{\tilde{E}_k\}$ are  linear combinations of the $E_\chi$ such that $\sum_k\tilde{E}^\dag_k\tilde{E}_k = I$. The recovery channel returning the original perspective-neutral state, $\mathcal{O}(\tilde{\mathcal{E}}(\rho_{\rm pn}))=\rho_{\rm pn}$, may be checked to be given by
\begin{equation}\label{electricrecovery}
    \mathcal{O}:\mathcal{B}(\Hil_{\rm kin})\rightarrow\mathcal{B}(\Hil_{\rm pn})\,,\qquad\qquad\mathcal{O}(\rho)=\sum_{\chi\in\hat{G}}E_\chi^\dag\,P_\chi\,\rho\,P_\chi\,E_\chi\,.
\end{equation}
When we use a maximal set of Pauli errors, we can invoke Theorem \ref{thm_errorQRF} and the QRF $R$ generated by $\{E_\chi\}_{\chi\in\hat{G}}$ to write in the corresponding factorization $\Hil_{\rm kin}\simeq\Hil_R\otimes\Hil_S$ somewhat more explicitly
\begin{equation}\label{pchisimple}
    P_\chi=\ket{\chi}\!\bra{\chi}_R\otimes I_S\,,
\end{equation}
for an orthonormal basis $\{\ket{\chi}_R\}_{\chi\in\hat{G}
}$ on $\Hil_R$, given that the errors and stabilizer elements $U^g=U^g_R\otimes I_S$ act trivially on the system $S$. The isotypes then become
\begin{equation}\label{chiHS}
    \Hil_\chi\simeq \ket{\chi}_R\otimes \Hil_S\,
\end{equation}
and we can write the code space restriction of the errors as 
\begin{equation}\label{chione}
    E_\chi\,\Pi_{\rm pn}=\ket{\chi}\!\bra{1}_R\otimes I_S\,.
\end{equation}

\begin{example}[Three-qubit code]
\label{3qubit:errorduality1}
     Let us illustrate the above discussion for the simple example of the 3-qubit bit-flip code.~The stabilizer group $\mathcal G$ in Eq.~\eqref{eq:3qubitstab} furnishes a faithful unitary representation of $G= \mathbb{Z}_2 \times \mathbb{Z}_2$ on $\Hil_{\rm kin}\simeq(\mathbb C^2)^{\otimes3}$.~Explicitly (cfr.~Sec.~\ref{Sec:3qubitcodeasQRF})
\begin{equation}\label{eq:3qubitGrep}
    U^{++} = I\, , \qquad U^{+-} = Z_2 Z_3\, , \qquad
    U^{-+} = Z_1Z_3\, , \qquad U^{--} =Z_1 Z_2\,.
\end{equation}
The dual group $\hat{G}$ is the group of irreducible characters $\{\chi_i\}_{i=1,\dots,4}$ of $G$.~We take these to be
\begin{align}
\chi_1 (++) &= 1\,, \quad \chi_1 (+-) = 1  \,, \quad \chi_1 (-+) = 1 \,, \quad \chi_1 (--) = 1 \,, \label{3qubit:characters1}\\   
\chi_2 (++) &= 1\,, \quad \chi_2 (+-) = 1  \,, \quad \chi_2 (-+) = -1 \,, \quad \chi_2 (--) = -1 \,, \label{3qubit:characters2}\\   
\chi_3 (++) &= 1\,, \quad \chi_3 (+-) = -1  \,, \quad \chi_3 (-+) = 1 \,, \quad \chi_3 (--) = -1 \,, \label{3qubit:characters3} \\
\chi_4 (++) &= 1\,, \quad \chi_4 (+-) = -1  \,, \quad \chi_4 (-+) = -1 \,, \quad \chi_4 (--) = 1 \,,\label{3qubit:characters4} 
\end{align}
so that, as per Proposition~\ref{prop_QRFPcorresp}, choosing $R=12$ with orientation states given by the 2-qubit $X$-eigenstates $\{\ket{g}_R\}_{g\in G}=\{\ket{++},\ket{+-},\ket{-+},\ket{--}\}$, the QRF orientation basis associated with the representation \eqref{eq:3qubitGrep} of $G$ gives rise to a (faithful) unitary representation of the dual group $\hat{G}$ on $(\mathbb C^2)^{\otimes3}$ defined by the set of correctable Pauli errors $\hat{\mathcal E}=\{I, X_1,X_2,X_1X_2\}$ as
\begin{equation}\label{3qubit:dualGrep}
    \hat{U}^{\chi_1}=I \,, \qquad \hat{U}^{\chi_2} = X_1 \,, \qquad \hat{U}^{\chi_3} = X_2\,, \qquad \hat{U}^{\chi_4} = X_1 X_2\,.  
\end{equation}
It is in fact straightforward to check that the representations \eqref{eq:3qubitGrep}, \eqref{3qubit:dualGrep} are dual in the sense of Eq.~\eqref{dualrep} for the characters given in Eqs.~\eqref{3qubit:characters1}-\eqref{3qubit:characters4}.

$\Hil_{\rm kin}$ can be decomposed into the direct sum of the two-dimensional isotypes $\Hil_i$ corresponding to $\chi_i$
\begin{equation}\label{3qubit:Hchi}
\begin{aligned}
    \Hil_1 &= \cH_{\rm pn} = \Span\{ \vert 000\rangle ,\vert 111 \rangle \}\,, \\
    \cH_2 &= X_1(\cH_{\rm pn})=\Span\{ \vert 100\rangle ,\vert 011 \rangle \}\,, \\
    \cH_3 &= X_2(\cH_{\rm pn} )=\Span\{ \vert 010\rangle ,\vert 101 \rangle \}\,, \\
    \cH_4 &=  X_1X_2(\cH_{\rm pn} )=\Span\{ \vert 110\rangle ,\vert 001 \rangle \}\,.
\end{aligned}
\end{equation}
Clearly, the errors $\hat{U}^{\chi}\neq I$ translate between the different orthogonal isotypes (cf.~Eq.~\eqref{errortranslate}) and, as anticipated in Sec.~\ref{sssec_3qubit}, they can be detected by measuring any generating set of the stabilizer group \eqref{eq:3qubitGrep}.~The error syndrome consists of the strings of the corresponding characters in Eq.~\eqref{3qubit:characters1}-\eqref{3qubit:characters4}.

Lastly, let us emphasize that the representation of the dual group is not unique.~For example, choosing the orthonormal orientation basis $\{\ket{g}\}_{g\in G}$ to be given by the two-qubit $Y$-eigenstates rather than the $X$-eigenstates will give rise to a distinct faithful unitary representation $\hat{U}:\hat G=\mathbb Z_2\times\mathbb Z_2\to\text{Aut}((\mathbb C^2)^{\otimes 3})$ defined by the set of correctable Pauli errors $\hat{\mathcal E}'=\{I,Y_1,Y_2,Y_1Y_2\}.$
\eox
\end{example}

\subsection{Gauge fixing errors as magnetic charge excitations}\label{ssec_dualerrors}

There is another type of errors that do not irrevocably corrupt the logical data and that are natural in a gauge theory and QRF context. These are gauge fixing operations which map out of the gauge-invariant Hilbert space without an irreversible loss of gauge-invariant information. Any such gauge fixing freezes \emph{some} QRF into a particular orientation and is thus related to the Page-Wootters reductions into a QRF perspective that we encountered before. Pontryagin duality again helps to make this precise.

As discussed in App.~\ref{app:reps}, the physical space can also be decomposed orthogonally into dual irreps of the Pontryagin dual $\hat{G}$
\begin{equation}
    \Hil_{\rm kin}=\bigoplus_{g\in G}\hat{\mathcal{H}}_g\,,
\end{equation}
where $\hat\Hil_g$ is now the isotype corresponding to $g\in G$ (by Pontryagin duality the characters of $\hat{G}$ are now labeled by $g\in G$, cf.~App.~\ref{app:Pontryagin}) and the image of the orthogonal projector
\begin{equation}\label{PG}
    \hat{P}_g:=\frac{1}{2^{n-k}}\sum_{\chi\in\hat{G}}\chi(g)\, \hat{U}^\chi\,.
\end{equation}
In contrast to the case of Pauli errors, now each isotype overlaps with the code space
\begin{equation}\label{errorspacetranslate}
    \hat{P}_g\left(\Hil_{\rm pn}\right)=U^g\bigl(\hat{P}_e\left(\Hil_{\rm pn}\right)\bigr)\,,
\end{equation}
where we made use of the covariance in Eq.~\eqref{Pgcov} for dual representations applied to $G=\mathbb{Z}_2^{\times(n-k)}$,
\begin{equation}\label{Pgcov2}
    U^g\hat{P}_h U^g=\hat{P}_{gh}\,,
\end{equation}
and the gauge-invariance of $\Hil_{\rm pn}$. Thus, all overlaps with the code space are unitarily equivalent and related by stabilizer gauge transformations in $\Hil_{\rm kin}$; this is the analog of Eq.~\eqref{errortranslate}.~In fact, all overlaps are (up to a rescaling by $1/\sqrt{2^{n-k}}$) unitarily equivalent to the code space itself, $\sqrt{2^{n-k}}\hat{P}_g\left(\Hil_{\rm pn}\right)\simeq\Hil_{\rm pn}$, as appropriate for gauge fixings.~This is implied by each of the following two lemmas.~The first says that the projectors satisfy the KL condition, while the second provides a large class of unitaries whose code space restrictions coincide with these projectors. 

We are thus entitled to view $\{\hat{P}_g\}_{g\in G}$ both as a maximal set of correctable errors, as well as a set of valid gauge fixing projectors. As such, we shall refer to them as ``gauge fixing errors'' and emphasize that they are not errors in a gauge fixing operation, but the gauge fixing itself is the error mapping out of the invariant Hilbert space (similarly to how a Pauli error is not an error in the implementation of a Pauli operator).~Unlike Pauli errors, the gauge fixing ones do \emph{not} map into the orthogonal complement of the code/perspective-neutral space.  

\begin{lem}[\textbf{Gauge fixings are correctable}]
The projectors $\hat{P}_g$ onto the dual isotypes obey the KL condition \eqref{KL} \emph{diagonally}
\begin{equation}\label{KLgauge}
    \Pi_{\rm pn}\,\hat{P}_g\,\hat{P}_{g'}\,\Pi_{\rm pn}=\frac{1}{2^{n-k}}\delta_{g,g'}\,\Pi_{\rm pn}\,.
\end{equation}
\end{lem}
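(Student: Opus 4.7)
The plan is to reduce the claim to two ingredients: orthogonality of the dual projectors and the covariance relation \eqref{Pgcov2}. First, since $\{\hat{P}_g\}_{g\in G}$ are mutually orthogonal projectors onto the dual isotypes $\hat{\cH}_g$ (cf.~App.~\ref{app:reps}), we have $\hat{P}_g \hat{P}_{g'} = \delta_{g,g'}\,\hat{P}_g$. The statement therefore reduces to showing
\begin{equation}
\Pi_{\rm pn}\,\hat{P}_g\,\Pi_{\rm pn} = \frac{1}{2^{n-k}}\,\Pi_{\rm pn}\,,\qquad \forall\,g\in G\,.
\end{equation}

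The key step is to use the covariance \eqref{Pgcov2}, which together with $(U^h)^\dag = U^h$ (since $G = \mathbb{Z}_2^{\times(n-k)}$) gives the intertwining property $U^h\,\hat{P}_g = \hat{P}_{hg}\,U^h$. Inserting the definition $\Pi_{\rm pn} = \tfrac{1}{|G|}\sum_{h\in G} U^h$ and relabeling $k = hh'$ yields
\begin{equation}
\Pi_{\rm pn}\,\hat{P}_g\,\Pi_{\rm pn}
= \frac{1}{|G|^2}\sum_{h,h'\in G} \hat{P}_{hg}\,U^{hh'}
= \frac{1}{|G|}\Bigl(\sum_{h\in G}\hat{P}_{hg}\Bigr)\,\Pi_{\rm pn}\,.
\end{equation}
As $h$ ranges over $G$, so does $hg$, and completeness of the dual isotype decomposition gives $\sum_{h}\hat{P}_{hg} = \sum_{h'}\hat{P}_{h'} = I$. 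This yields the claimed prefactor $1/|G| = 1/2^{n-k}$.

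Combining the two, $\Pi_{\rm pn}\,\hat{P}_g\,\hat{P}_{g'}\,\Pi_{\rm pn} = \delta_{g,g'}\,\Pi_{\rm pn}\hat{P}_g\Pi_{\rm pn} = \tfrac{1}{2^{n-k}}\delta_{g,g'}\,\Pi_{\rm pn}$, which is the KL condition in diagonal form. I do not anticipate any serious obstacle: both the orthogonality $\hat{P}_g\hat{P}_{g'}=\delta_{g,g'}\hat{P}_g$ and the covariance~\eqref{Pgcov2} are already available from the preceding development, and the rest is a one-line manipulation with group averaging. The only thing worth double-checking is that the covariance~\eqref{Pgcov2} (which is an equality of the form $U^g\hat{P}_h U^g = \hat{P}_{gh}$) is applied in the correct direction, using the $\mathbb{Z}_2$ involutivity $(U^h)^\dag=U^h$ implicitly; this is immediate from the assumption that $\{U^g\}$ furnishes a (non-projective) representation of $G=\mathbb{Z}_2^{\times(n-k)}$ by Pauli strings.
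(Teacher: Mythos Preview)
Your proof is correct and follows essentially the same approach as the paper: both invoke orthogonality $\hat{P}_g\hat{P}_{g'}=\delta_{g,g'}\hat{P}_g$, the covariance~\eqref{Pgcov2}, and the completeness $\sum_h\hat{P}_{hg}=I$. The only cosmetic difference is that the paper expands just the left $\Pi_{\rm pn}$ and uses $U^h\Pi_{\rm pn}=\Pi_{\rm pn}$ to insert the second $U^h$ needed for~\eqref{Pgcov2}, whereas you expand both copies of $\Pi_{\rm pn}$ and use the intertwining form $U^h\hat{P}_g=\hat{P}_{hg}U^h$; the resulting manipulations are the same.
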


\begin{proof}
Invoking $\hat{P}_g\hat{P}_{g'}=\delta_{g,g'}\hat{P}_g$ and Eq.~\eqref{Pgcov2}, we have
\begin{equation}
\begin{split}
   \Pi_{\rm pn}\,\hat{P}_g\,\hat{P}_{g'}\,\Pi_{\rm pn}&=\delta_{g,g'}\frac{1}{2^{n-k}}\sum_{h\in G} U^h\hat{P}_g\,\Pi_{\rm pn}=\delta_{g,g'}\frac{1}{2^{n-k}}\sum_{h\in G} U^h\hat{P}_g\, U^h\,\Pi_{\rm pn}\\&=\delta_{g,g'}\frac{1}{2^{n-k}}\sum_{h\in G} \hat{P}_{hg}\,\Pi_{\rm pn}=\frac{1}{2^{n-k}}\delta_{g,g'}\Pi_{\rm pn}\,.
   \end{split}
\end{equation}
\end{proof}

These errors can be implemented unitarily.
\begin{lem}[\textbf{Unitary gauge fixing errors}]
Let $\{E_\chi\}_{\chi\in\hat{G}}$ be a maximal and non-degenerate set of correctable unitary Pauli errors with $E_1=I$ such that $E_\chi(\Hil_{\rm pn})\subset\Hil_{\rm pn}^\perp$ for $\chi\neq1$. Let also $h: \hat{G}\times G \to G$ be such that, for any $g\in G$, $h(\cdot, g): \hat{G}\to G$ is a bijection with $h(1,g)=g$. Then, for any $g \in G$, the operator
\begin{equation}\label{utilde}
    \hat{E}_g:=\sqrt{2^{n-k}}\sum_{\chi\in\hat{G}}\hat{P}_{h(\chi,g)}\,\Pi_{\rm pn}\,E_\chi\,,
\end{equation}
is a unitary on $\Hil_{\rm kin}$; furthermore, it obeys
\begin{equation}\label{eq:unitary gauge fixing}
    \hat{E}_g\,\Pi_{\rm pn}=\sqrt{2^{n-k}}\hat{P}_g\,\Pi_{\rm pn}\,,
\end{equation}
so that, up to scaling, the code space restriction of $\hat{E}_g$ agrees with $\hat{P}_g$.  
\end{lem}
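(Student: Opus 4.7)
The plan is to handle the two claims separately: first I would verify the code-space restriction \eqref{eq:unitary gauge fixing}, and then establish unitarity on all of $\Hil_{\rm kin}$.

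For the code-space restriction, I would simply compute
\[
\hat{E}_g \,\Pi_{\rm pn} \;=\; \sqrt{2^{n-k}}\sum_{\chi\in\hat{G}} \hat{P}_{h(\chi,g)}\,\Pi_{\rm pn}\,E_\chi\,\Pi_{\rm pn}\,.
\]
The hypothesis $E_\chi(\Hil_{\rm pn})\subset\Hil_{\rm pn}^\perp$ for $\chi\neq 1$, combined with $E_1=I$, immediately yields $\Pi_{\rm pn}\,E_\chi\,\Pi_{\rm pn} = \delta_{\chi,1}\,\Pi_{\rm pn}$, so only the $\chi=1$ term survives. Using $h(1,g)=g$, the right-hand side collapses to $\sqrt{2^{n-k}}\,\hat{P}_g\,\Pi_{\rm pn}$, as required.

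For unitarity, since $\Hil_{\rm kin}$ is finite dimensional it suffices to check $\hat{E}_g^\dagger\hat{E}_g = I$. Expanding,
\[
\hat{E}_g^\dagger \,\hat{E}_g \;=\; 2^{n-k}\sum_{\chi,\eta\in\hat{G}} E_\chi^\dagger\,\Pi_{\rm pn}\,\hat{P}_{h(\chi,g)}\,\hat{P}_{h(\eta,g)}\,\Pi_{\rm pn}\,E_\eta\,.
\]
The bijectivity of $h(\cdot,g):\hat{G}\to G$, together with the orthogonality $\hat{P}_{g'}\hat{P}_{g''}=\delta_{g',g''}\hat{P}_{g'}$, collapses the double sum to a diagonal one indexed by $\chi$. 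I would then apply the KL relation \eqref{KLgauge} (with $\hat{P}_{g'}^2=\hat{P}_{g'}$) to obtain $\Pi_{\rm pn}\hat{P}_{g'}\Pi_{\rm pn} = \tfrac{1}{2^{n-k}}\Pi_{\rm pn}$, which cancels the overall prefactor $2^{n-k}$ and leaves
\[
\hat{E}_g^\dagger\,\hat{E}_g \;=\; \sum_{\chi\in\hat{G}} E_\chi^\dagger\,\Pi_{\rm pn}\,E_\chi\,.
\]

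The final step is the identification $E_\chi^\dagger\,\Pi_{\rm pn}\,E_\chi = P_\chi$, where $P_\chi$ is the orthogonal projector onto the isotype $\Hil_\chi$. Here the Pauli hypothesis enters essentially: each $E_\chi$ is unitary on all of $\Hil_{\rm kin}$, and since the characters of $G=\mathbb{Z}_2^{\times(n-k)}$ are real-valued, adjointing the intertwining relation $U^g E_\chi = \chi(g)\,E_\chi U^g$ shows $E_\chi^\dagger \in C_\chi$ as well. Hence $E_\chi^\dagger$ maps $\Hil_{\rm pn}$ isometrically into $\Hil_\chi$, and by dimension counting onto $\Hil_\chi$; the self-adjoint idempotent $E_\chi^\dagger\,\Pi_{\rm pn}\,E_\chi$ must then equal $P_\chi$. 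Summing $\sum_\chi P_\chi = I$ (the orthogonal isotype decomposition) completes the proof.

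The only real subtlety is this last identification: ensuring that $E_\chi^\dagger \Pi_{\rm pn} E_\chi$ is the full isotype projector $P_\chi$ rather than an isometric image of $\Pi_{\rm pn}$ sitting inside $\Hil_\chi$. The global unitarity of each $E_\chi$ (the Pauli/non-degeneracy hypothesis) and the bijectivity of $h(\cdot,g)$ are both crucial: without the latter, cross-terms in the double sum would survive and obstruct the reduction to $\sum_\chi P_\chi$.
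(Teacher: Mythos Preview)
Your proof is correct and follows essentially the same route as the paper's: collapse the double sum via the bijectivity of $h(\cdot,g)$ and the orthogonality of the $\hat P_g$, invoke \eqref{KLgauge} to reduce to $\sum_\chi E_\chi^\dagger \Pi_{\rm pn} E_\chi$, and then identify each summand with $P_\chi$. The only cosmetic difference is that the paper establishes $E_\chi \Pi_{\rm pn} E_\chi = P_\chi$ by the explicit character formula \eqref{pchi} (using $E_\chi = E_\chi^\dagger$ for Pauli strings), whereas you argue it more conceptually via $E_\chi^\dagger \in C_\chi$ and dimension counting; both are equally valid.
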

There are therefore many correctable error sets $\{E_\chi\}_{\chi\in\hat{G}}$ defining a set of unitaries $\{\hat{E}_g\}_{g\in G}$ with the property that they implement the gauge fixing \eqref{eq:unitary gauge fixing}. For example, \emph{any}
maximal and non-degenerate set of correctable Pauli errors with $E_1=I$ will serve, as will \emph{any} dual representation of the Pontryagin dual, i.e.\ $\{E_\chi=\hat{U}^\chi\}_{\chi\in\hat{G}}$. Note also that we have the error set equivalences, according to Definition \ref{def_equiv}:
\begin{equation}\label{equivalences}
    \{\hat{E}_g\}_{g\in G}\sim\{\hat{P}_g\}_{g\in G}\sim\{\hat{U}^\chi\}_{\chi\in\hat{G}}\,.
\end{equation}
The first equivalence is provided by the lemma, the second follows from the fact that the elements of the Pontryagin dual representation $\hat{U}^\chi$ are linear combinations of the projectors $\{\hat{P}_g\}_{g\in G}$ as in Eq.~\eqref{Uhat}.

\begin{proof}
 Let $g\in G$. Using Eq.~\eqref{KLgauge}  and the fact that $h(\cdot, g)$ is a bijection, we have 
\begin{equation}
    \hat{E}_g^\dag\,\hat{E}_g=2^{n-k}\sum_{\chi, \eta\in \hat{G}}E_\chi\,\underbrace{\Pi_{\rm pn}\,\hat{P}_{h(\chi,g)}\,\hat{P}_{h(\eta,g)}\,\Pi_{\rm pn}}_{= \delta_{\chi , \eta }\, \Pi_{\rm pn}}\,E_\eta=\sum_{\chi\in \hat{G}}E_\chi\,\Pi_{\rm pn}\,E_\chi\,.
\end{equation}
Noting that, for any $\chi \in \hat{G}$,
\begin{equation}
    E_\chi\,\Pi_{\rm pn}\,E_\chi\underset{\eqref{Pj}}{=}\frac{1}{2^{n-k}}\sum_{g\in G}\,\chi(g)\,U^g\underset{\eqref{pchi}}{=}P_\chi\,,
\end{equation}
we conclude that $\hat{E}_g^\dag\,\hat{E}_g=\displaystyle \sum_{\chi \in \hat{G}}P_\chi = I$. Hence $\hat{E}_g$ is unitary.\footnote{We need not show that $\hat{E}_g\hat{E}_g^\dagger = I$ since the Hilbert spaces are finite dimensional.}

Finally, Eq.~\eqref{eq:unitary gauge fixing} follows from the KL conditions for the Pauli errors $\{E_{\chi}\}_{\chi \in \hat{G}}$ together with the condition $h(1,g)=1$:
\begin{equation}
    \hat{E}_g \Pi_{\rm pn} = \sqrt{2^{n-k}} \sum_{\chi \in \hat{G}} \hat{P}_{h(\chi, g)} \Pi_{\rm pn}\, E_\chi \Pi_{\rm pn} = \sqrt{2^{n-k}} \sum_{\chi \in \hat{G}} \hat{P}_{h(\chi, g)} \underbrace{\Pi_{\rm pn}\, E_\chi \, E_1\,  \Pi_{\rm pn}}_{\underset{{(\rm KL)}}{=} \delta_{\chi, 1}\, \Pi_{\rm pn}} 
    = \sqrt{2^{n-k}} \hat{P}_{g} \Pi_{\rm pn}\,.
\end{equation}
\end{proof}

Now, since for dual representations $U,\hat U$ (cf.~Eq.~\eqref{equaldim})
\begin{equation}
\dim\hat\Hil_g=\dim\Hil_\chi=\dim\Hil_{\rm pn}\,,\qquad\forall\,g\in G,\chi\in\hat{G}\,,
\end{equation}
we have by unitarity that
\begin{equation}\label{E5analog}
    \hat{E}_g\left(\Hil_{\rm pn}\right)=\hat\Hil_g\,,
\end{equation}
in analogy to Eq.~\eqref{E5}. Note that we now have that $E_e\neq I$ in contrast to the Pauli case where $E_\chi=I$. 

Perhaps somewhat more appropriately, we can therefore view $\{\hat{E}_g\}_{g\in G}$ as the set of unitary gauge fixing errors. Clearly, the two preceding lemmas imply the KL condition in the simple form analogous to Eq.~\eqref{KLsimple}
\begin{equation}\label{KLnice}
    \Pi_{\rm pn}\,\hat{E}^\dag_g\,\hat{E}_{g'}\,\Pi_{\rm pn}=\delta_{g,g'}\,\Pi_{\rm pn}\,.
\end{equation}
Given that the KL condition is obeyed, there exists a detection and recovery scheme. Since we also have
\begin{equation}\label{KLarbitrary}
    \Pi_{\rm pn}\,\hat{E}_g\,\Pi_{\rm pn}=\frac{1}{\sqrt{2^{n-k}}}\Pi_{\rm pn}\,,\qquad\forall\,g\in G\,,
\end{equation}
it might at first appear as though the incoherent group averaging operator $\sqrt{2^{n-k}}\,\Pi_{\rm pn}$ constitutes a blanket (i.e.\ error-independent) recovery for \emph{all} possible gauge fixing errors because it maps all gauge-fixed states back into the original perspective-neutral ones. However, for $n>k$, $\sqrt{2^{n-k}}\,\Pi_{\rm pn}$ is neither a projector nor a unitary and cannot be implemented in a quantum operation. There  do exist quantum operations invoking $\Pi_{\rm pn}$ as operation elements that correct gauge fixing errors but they are either trace-decreasing or non-deterministic such that the blanket recovery $\sqrt{2^{n-k}}\,\Pi_{\rm pn}$ or $\Pi_{\rm pn}$ is operationally not usefully implementable. We explain this in further detail in App.~\ref{app_nooperation}.

How can the gauge fixing errors be detected? In contrast to standard Pauli errors, we cannot measure the stabilizer measurements $\{U^g\}$ to determine the syndrome because these do not leave the error spaces invariant. In fact, Eqs.~\eqref{errorspacetranslate} and~\eqref{E5analog} tell us that they translate between the different gauge fixing error spaces. Similarly, we cannot determine the $g$-error by measuring the $\{P_\chi\}_{\chi\in\hat{G}}$ either because these too do not commute with the $\{\hat{P}_g\}_{g\in G}$ (cf.~Eq.~\eqref{Pscommutation}).

Instead, everything works in a dual way to the standard Pauli error case. Since
\begin{equation}\label{PgEgP1}
    \hat{P}_g\,\hat{E}_g\,\Pi_{\rm pn}=\hat{E}_g\,\Pi_{\rm pn}
\end{equation}
and, in analogy to Eq.~\eqref{ugpchi},
\begin{equation}\label{uchipg}
\hat{U}^\chi\,\hat{P}_g=\chi(g)\,\hat{P}_g\,,
\end{equation}
we can interpret the gauge fixing errors $\{\hat{E}_g\}$ as \emph{magnetic charge excitations} and the group elements (characters of the dual) $g\in G$ as labeling these dual magnetic charges. This is analogous to Abelian gauge theory, where the charges of the Pontryagin dual are magnetic.

Eq.~\eqref{uchipg} clarifies that we can now detect the error/charge $g$ by measuring any generating set $\{\hat{U}^\chi\}_{\chi\in\hat{G}_{\rm gen}}$ of the Pontryagin dual of the stabilizer group. The error syndrome is given by the set of characters for fixed $g$, $\{\chi(g)\}_{\chi\in\hat{G}_{\rm gen}}$, thus determining the magnetic charge $g$. When syndrome $g$ is found, the error is corrected by simply applying $\hat{E}_g^\dag$; this is a de-excitation of the magnetic charge $g$. One can equivalently detect the error by measuring the projectors $\{\hat{P}_g\}_{g\in G}$ instead. In analogy to Eq.~\eqref{projeqGtwirl}, this is manifested in the following identity for arbitrary $\rho\in\mathcal{S}(\Hil_{\rm kin})$ (whose proof is analogous to that of Lemma~\ref{lem_det} and thus omitted):
\begin{equation}
    \sum_{g\in G}\hat{P}_g\,\rho\,\hat{P}_g=\frac{1}{2^{n-k}}\sum_{\chi\in\hat{G}}\hat{U}^\chi\,\rho\,\hat{U}^\chi\,.
\end{equation}
The incoherent group average ($\hat{G}$-twirl) over the Pontryagin dual $\hat{G}$ therefore coincides with the projective measurement $\{\hat{P}_g\}_{g\in G}$ of the \emph{dual} $g$-charge sectors.

In contrast to the Pauli error case, the detection measurement is now performed with operator sets $\{\hat{U}^\chi\}_{\chi\in\hat{G}_{\rm gen}}$ or $\{\hat{P}_g\}_{g\in G}$, neither of which commute with the stabilizer group $\mathcal{G}$, thus neither of which leaves the code space invariant. This is because the gauge fixing errors do not map into the orthogonal complement of $\Hil_{\rm pn}$. Nevertheless, there is a functional recovery scheme. Let a gauge fixing error channel be given by
\begin{equation}
    \tilde{\mathcal{E}}:\mathcal{B}(\Hil_{\rm pn})\rightarrow\mathcal{B}(\Hil_{\rm kin})\,,\qquad \tilde{\mathcal{E}}(\rho_{\rm pn})=\sum_k\tilde{E}_k\,\rho_{\rm pn}\,\tilde{E}^\dag_k\,,
\end{equation}
where the $\{\tilde{E}_k\}$ are normalized linear combinations of the $\{\hat{E}_g\}_{g\in G}$ such that $\sum_k\tilde{E}^\dag_k\tilde{E}_k=I$. This accommodates the possibility that no gauge fixing error occurs. The appropriate recovery scheme is dual to the one for Pauli errors in Eq.~\eqref{electricrecovery}.

\begin{lem}[\textbf{Recovery schemes for gauge fixing errors}]\label{lem_gaugedetect}
    Consider a gauge fixing error set $\mathcal{E}=\{\hat{E}_g\}$ for some $[[n,k]]$ stabilizer code subject to a faithful representation of ${G}=\mathbb{Z}_2^{n-k}$. The error-correction operation $\mathcal{O}:\mathcal{B}(\Hil_{\rm kin})\rightarrow\mathcal{B}(\Hil_{\rm pn})$ is given by 
    \begin{equation}
        \mathcal{O}(\rho)=\sum_g\,\hat{E}_g^\dag\,\hat{P}_g\,\rho\,\hat{P}_g\,\hat{E}_g=\sqrt{2^{n-k}}\,\Pi_{\rm pn}\left(\sum_g\,\hat{P}_g\,\rho\,\hat{P}_g\right)\,\Pi_{\rm pn}\,.
    \end{equation}
Further, $\mathcal{O}(\rho)=\rho$ for $\rho\in\mathcal{S}(\Hil_{\rm pn})$, so code words are left intact by the correction if no error occurred.
\end{lem}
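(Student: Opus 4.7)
The plan is twofold: first derive the second (intrinsic) expression for $\mathcal O(\rho)$ from the first by exploiting \eqref{eq:unitary gauge fixing} together with the unitarity of $\hat E_g$ on $\mathcal H_{\rm kin}$; then verify invariance of code states by a direct computation that uses the same relation.

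The key algebraic ingredient I would establish up front is the identity
\begin{equation}
\hat E_g^\dag \hat P_g \;=\; \sqrt{2^{n-k}}\,\Pi_{\rm pn}\,\hat P_g\,,
\end{equation}
together with its adjoint $\hat P_g\hat E_g = \sqrt{2^{n-k}}\,\hat P_g\Pi_{\rm pn}$. To prove it, I would note that $\hat E_g$ is unitary on $\mathcal H_{\rm kin}$ and, by \eqref{E5analog}, isometrically maps $\mathcal H_{\rm pn}$ onto $\hat{\mathcal H}_g = \mathrm{Ran}(\hat P_g)$. Hence $\hat E_g^\dag$ sends $\mathrm{Ran}(\hat P_g)$ back into $\mathcal H_{\rm pn}$, which gives $\hat E_g^\dag\hat P_g = \Pi_{\rm pn}\hat E_g^\dag\hat P_g$. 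Idempotence of $\hat P_g$ and the adjoint of \eqref{eq:unitary gauge fixing} (namely $\Pi_{\rm pn}\hat E_g^\dag = \sqrt{2^{n-k}}\,\Pi_{\rm pn}\hat P_g$) then close the argument. Substituting both identities into $\sum_g \hat E_g^\dag\hat P_g\,\rho\,\hat P_g\hat E_g$ immediately factors out the twirled operator $\sum_g \hat P_g\rho\hat P_g$ and leaves the outer sandwich $\Pi_{\rm pn}(\,\cdot\,)\Pi_{\rm pn}$, yielding the second stated form up to an overall scalar that must be tracked carefully.

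For the second claim, $\mathcal O(\rho) = \rho$ for $\rho\in\mathcal S(\mathcal H_{\rm pn})$, I would write $\rho = \Pi_{\rm pn}\rho\Pi_{\rm pn}$ and apply \eqref{eq:unitary gauge fixing} in the reverse direction, $\hat P_g\Pi_{\rm pn} = \tfrac{1}{\sqrt{2^{n-k}}}\hat E_g\Pi_{\rm pn}$. Each summand then collapses to
\begin{equation}
\hat E_g^\dag \hat P_g\rho\hat P_g\hat E_g \;=\; \tfrac{1}{2^{n-k}}\,\hat E_g^\dag\hat E_g\,\rho\,\hat E_g^\dag\hat E_g \;=\; \tfrac{1}{2^{n-k}}\,\rho
\end{equation}
by unitarity, and summing over the $|G|=2^{n-k}$ group elements reproduces $\rho$ exactly. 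This also confirms that the recovery channel is trace-preserving when restricted to code input.

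The main obstacle I anticipate is not conceptual but computational bookkeeping: the prefactor $\sqrt{2^{n-k}}$ from \eqref{eq:unitary gauge fixing} appears twice in reducing the first form to the second (once from $\hat E_g^\dag\hat P_g$ and once from $\hat P_g\hat E_g$), and the resulting overall scalar must be consistent with the check on code states. I would treat the scalar reconciliation between the two stated expressions as the first thing to pin down precisely, since both representations must agree as operators on all of $\mathcal B(\mathcal H_{\rm kin})$, not merely on $\mathcal S(\mathcal H_{\rm pn})$.
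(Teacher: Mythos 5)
Your approach is sound and your two key computations are both correct, but let me flag three things.

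First, the scalar you were worried about is indeed the crux, and your bookkeeping instinct is right: the identity $\hat E_g^\dag\hat P_g=\Pi_{\rm pn}\hat E_g^\dag\hat P_g=\sqrt{2^{n-k}}\,\Pi_{\rm pn}\hat P_g$ (which you establish correctly from \eqref{E5analog} and the adjoint of \eqref{eq:unitary gauge fixing}) gets used \emph{twice}, so the substitution yields
\begin{equation}
\sum_g\hat E_g^\dag\hat P_g\,\rho\,\hat P_g\hat E_g \;=\; 2^{n-k}\,\Pi_{\rm pn}\Bigl(\sum_g\hat P_g\,\rho\,\hat P_g\Bigr)\Pi_{\rm pn}\,,
\end{equation}
with prefactor $2^{n-k}$, \emph{not} $\sqrt{2^{n-k}}$ as printed in the lemma. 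The printed prefactor is inconsistent with the second claim: since $\Pi_{\rm pn}\hat P_g\Pi_{\rm pn}=\tfrac{1}{2^{n-k}}\Pi_{\rm pn}$ by \eqref{KLgauge}, the right-hand form with prefactor $\sqrt{2^{n-k}}$ evaluates to $\tfrac{1}{\sqrt{2^{n-k}}}\rho$ on code states, whereas with $2^{n-k}$ it evaluates to $\rho$, in agreement with your direct check of the first form. So the lemma as stated contains a typo, and your proof strategy is exactly what exposes it.

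Second, your route differs from the paper's. The paper never proves the equality of the two displayed expressions at all; it only shows the normalization $\sum_g\hat P_g\hat E_g\hat E_g^\dag\hat P_g=I$ and then verifies directly that $\mathcal O$ inverts a general gauge-fixing error channel $\tilde{\mathcal E}(\rho_{\rm pn})=\sum_k\tilde E_k\rho_{\rm pn}\tilde E_k^\dag$ with $\tilde E_k=\sum_g c_{g,k}\hat E_g$, using $\hat P_g\tilde E_k\Pi_{\rm pn}=\sqrt{2^{n-k}}\,c_{g,k}\hat P_g\Pi_{\rm pn}$. Your argument buys the structural identity between the two forms (which the paper asserts but does not derive); the paper's buys the statement that $\mathcal O$ actually \emph{corrects} errors, which is the substance of calling it a recovery operation.

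Third, and this is the only genuine gap: you verify $\mathcal O(\rho)=\rho$ only for unerrored code states. You should add the one extra line showing that $\mathcal O$ undoes the errors themselves, e.g.\ $\hat P_{g'}\hat E_g\Pi_{\rm pn}=\sqrt{2^{n-k}}\,\hat P_{g'}\hat P_g\Pi_{\rm pn}=\sqrt{2^{n-k}}\,\delta_{g,g'}\hat P_g\Pi_{\rm pn}$, whence $\mathcal O(\hat E_g\rho_{\rm pn}\hat E_g^\dag)=2^{n-k}\,\hat E_g^\dag\hat P_g\rho_{\rm pn}\hat P_g\hat E_g=\rho_{\rm pn}$, and by linearity $\mathcal O\circ\tilde{\mathcal E}=\mathrm{id}$ on $\mathcal S(\Hil_{\rm pn})$ for any normalized channel built from the $\hat E_g$.
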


\begin{proof}
The proof is given in App.~\ref{App:proofs}.
\end{proof}

So far, we have been somewhat general and only assumed a dual pair of representations $U,\hat{U}$ of the stabilizer group and its Pontryagin dual to exist, but not made connection with QRFs. However, 
Proposition \ref{prop_QRFPcorresp} tells us that there is a one-to-one correspondence between covariant orientation bases of an ideal QRF and dual representations of $\hat{G}$ on the QRF Hilbert space $\Hil_R$. Furthermore, any stabilizer code with faithful representation of $G$ gives rise to a multitude of ideal QRFs and therefore dual representations.
In that case, we can always write
\begin{equation}\label{eq:Egdef}
\hat{P}_g:=\ket{g}\!\bra{g}_R\otimes I_S\,,
\end{equation}
where the $\{\ket{g}_R\}_{g\in G}$ comprise a covariant and orthonormal orientation basis of $\Hil_R$. 

This connects the discussion of gauge fixing errors with the Page-Wootters reduction maps into $R$'s perspective, as introduced in Sec.~\ref{sec:extvsintframe}, which are clearly equivalent
\begin{equation}
    \hat{P}_g=\ket{g}\!\bra{g}_R\otimes I_S\,\qquad\Leftrightarrow \qquad\mathcal{R}_R^g=\sqrt{2^{n-k}}(\bra{g}_R\otimes I_S)\,.
\end{equation}
In particular, we have the QRF-aligned states \cite{Hoehn:2021flk,Krumm:2020fws}
\begin{equation}\label{eq:PGpsipn}
    \hat{P}_g\,\ket{\psi}_{\rm pn}=\frac{1}{\sqrt{2^{n-k}}}\ket{g}_R\otimes\ket{\psi(g)}_{|R}\,,
\end{equation}
where $\ket{\psi(g)}_{|R}$ is the covariant Page-Wootters state in the perspective of $R$, so that
\begin{equation}
    U^{g'}\hat{P}_g\ket{\psi}_{\rm pn}=\frac{1}{\sqrt{2^{n-k}}}\ket{g'g}_R\otimes\ket{\psi(g'g)}_S\,
\end{equation}
and
\begin{equation}
    \hat{\Hil}_{g}=\ket{g}_R\otimes\Hil_{|R}\,
\end{equation}
with $\Hil_{|R}$ the reduced Hilbert space in $R$-perspective.

This holds for \emph{any} ideal QRF, including when $\mathcal{G}$ acts non-trivially also on $S$. For the error duality, we obtain a slightly stronger statement from a corollary of Theorem \ref{thm_errorQRF}, which says that any correctable set of Pauli errors from Sec.~\ref{ssec_pauliduality} is equivalent to errors gauge fixing the orientations of ideal QRFs on whose complement $\mathcal{G}$ acts trivially. By Proposition \ref{prop_QRFPcorresp}, each ideal orientation basis of such a QRF defines a dual representation $\hat{U}$ of $\hat{G}$.
\begin{cor}[\textbf{Equivalence of Pauli and gauge fixing error sets}] \label{cor:gaugefix-Pauli}
    Every maximal set of correctable Pauli errors in a stabilizer code with faithful representation of ${G}$ is equivalent, according to Definition~\ref{def_equiv}, to a complete set of gauge fixing errors $\mathcal{E}=\{\hat{P}_g=\ket{g}\!\bra{g}_R\otimes I_S\,|\,g\in\mathcal{G}\}$, where the $\ket{g}_R$ are the orientation states of some ideal QRF $R$ such that $U_S^g=I_S$ for all $g\in G$.
\end{cor}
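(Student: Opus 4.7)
The plan is to use Theorem~\ref{thm_errorQRF} to extract from the Pauli error set its associated ideal QRF, then invoke Proposition~\ref{prop_QRFPcorresp} and the discrete Fourier transform to pass from the dual representation $\{\hat{U}^\chi\}$ to the orientation projectors $\{\hat{P}_g\}$. The corollary is thus essentially a corollary of results already in hand; the actual work is bookkeeping the three equivalences in sequence.

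First, let $\mathcal{E}_P\subset\mathcal{P}_n$ be any maximal correctable Pauli error set. By Theorem~\ref{thm_errorQRF}, the $*$-algebra generated by $\{E\,\Pi_{\rm pn}\mid E\in\mathcal{E}_P\}$ defines a distinguished ideal QRF $R$, inducing a (generally nonlocal) factorization $\mathcal{H}_{\rm kin}\simeq\mathcal{H}_R\otimes\mathcal{H}_S$ with $\mathcal{H}_R\simeq(\mathbb{C}^2)^{\otimes (n-k)}$, for which the gauge representation takes the form $U^g=U_R^g\otimes I_S$ and for which the code restrictions of $\mathcal{E}_P$ all act as $(\cdot)_R\otimes I_S$. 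In particular, since $\mathcal{H}_R$ carries the regular representation of $G$, the trivial isotype of $\mathcal{H}_R$ is one-dimensional and spanned by $\ket{1}_R=\tfrac{1}{\sqrt{|G|}}\sum_g\ket{g}_R$, so that $\Pi_{\rm pn}=\ket{1}\!\bra{1}_R\otimes I_S$.

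Second, by the converse direction of Theorem~\ref{thm_errorQRF}, every maximal correctable error set whose code restriction acts trivially on $S$ belongs to the single equivalence class $[\hat{\mathcal{G}}]$ determined by any dual representation $\hat{U}(\hat{G})=\{\hat{U}^\chi\}_{\chi\in\hat{G}}$ on $\mathcal{H}_R$ (trivially extended by $I_S$), where by Proposition~\ref{prop_QRFPcorresp} one has $\hat{U}^\chi=\sum_{g\in G}\chi(g)\,\ket{g}\!\bra{g}_R$. Since $\mathcal{E}_P$ has this trivial-on-$S$ property by construction, it follows that $\mathcal{E}_P\sim\{\hat{U}^\chi\otimes I_S\}_{\chi\in\hat{G}}$.

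Third, I would compute the code-space restrictions directly:
\begin{equation*}
\hat{U}^\chi\,\Pi_{\rm pn}=\tfrac{1}{\sqrt{|G|}}\sum_{g\in G}\chi(g)\,\ket{g}\!\bra{1}_R\otimes I_S\,,\qquad \hat{P}_g\,\Pi_{\rm pn}=\tfrac{1}{\sqrt{|G|}}\,\ket{g}\!\bra{1}_R\otimes I_S\,.
\end{equation*}
The coefficient matrix $[\chi(g)]_{\chi,g}$ is, up to normalization, the unitary character table of the finite Abelian group $G$; by the orthogonality relations \eqref{eq:orthogonality_rel_dual} it is invertible, so the two families span the same $|G|$\nobreakdash-dimensional subspace of $\mathcal{B}(\mathcal{H}_{\rm pn},\mathcal{H}_{\rm kin})$. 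Hence $\{\hat{U}^\chi\otimes I_S\}_{\chi\in\hat{G}}\sim\{\hat{P}_g\}_{g\in G}$ in the sense of Definition~\ref{def_equiv}, and transitivity of $\sim$ yields $\mathcal{E}_P\sim\{\ket{g}\!\bra{g}_R\otimes I_S\mid g\in G\}$, as required.

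The only mildly nontrivial step is the last, namely verifying the Fourier-dual span equality; this is the discrete analog of position/momentum duality and is immediate from orthogonality of characters once one has written the restrictions explicitly using $\Pi_{\rm pn}=\ket{1}\!\bra{1}_R\otimes I_S$. Everything else is a direct invocation of Theorem~\ref{thm_errorQRF} and Proposition~\ref{prop_QRFPcorresp}.
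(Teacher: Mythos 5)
Your proposal is correct and follows essentially the same route as the paper: invoke Theorem~\ref{thm_errorQRF} (both directions) to place the given maximal Pauli set in the unique equivalence class attached to the error-generated ideal QRF, then identify that class with the gauge-fixing projectors via the Fourier relation between $\{\hat{U}^\chi\}$ and $\{\hat{P}_g\}$ (the paper cites Eq.~\eqref{equivalences} and Proposition~\ref{prop_QRFPcorresp} for this last step, which is exactly your character-orthogonality/invertible-character-table argument spelled out). Your explicit computation of the code-space restrictions using $\Pi_{\rm pn}=\ket{1}\!\bra{1}_R\otimes I_S$ is a correct and slightly more detailed rendering of what the paper leaves implicit.
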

 In other words, the information about \emph{every} correctable Pauli error also lies in the span of the Page-Wootters reduction maps $\{\mathcal{R}^g_R\}_{g\in G}$ of \emph{some} ideal QRF.
The converse of the corollary statement may not be true, i.e.\ not every maximal set of gauge fixing errors for a given QRF may be equivalent to a set of Pauli errors.

\begin{proof}
    This follows from the proof of Theorem~\ref{thm_errorQRF}, which establishes (i) that every maximal set of correctable Pauli errors generates a QRF $R$, such that the code space restriction of these errors acts exclusively on $R$ and generates the stabilizer group. Conversely, it establishes (ii) that for each QRF $R$ with the property that the stabilizer group acts trivially on its complement, there exists a dual representation $\hat{\mathcal{G}}$ of the Pontryagin dual, which defines the unique equivalence class of maximal sets of correctable errors whose code space restriction only acts on $R$. By Eq.~\eqref{equivalences} and Proposition \ref{prop_QRFPcorresp}, the Pontryagin dual is equivalent to gauge fixing errors with the stated properties.
\end{proof}

\begin{example}[Three-qubit code]
\label{3qubit:errorduality2}
To illustrate the \emph{magnetic} gauge fixing errors, let us continue with the three-qubit code of Example~\ref{3qubit:errorduality1} where $R=12$, $\{\ket{g}_R\}_{g\in G}=\{\ket{++},\ket{+-},\ket{-+},\ket{--}\}$ with $X\ket{\pm}=\pm\ket{\pm}$, and $S=3$.~Using the faithful representation \eqref{3qubit:dualGrep} of $\hat G$ provided by the maximal set of correctable errors $\hat{\mathcal E}=\{I,X_1,X_2,X_1X_2\}$ together with the characters \eqref{3qubit:characters1}-\eqref{3qubit:characters4}, the gauge fixing projectors \eqref{PG} read as
\begin{equation}\label{3qubitPG}
\begin{aligned}
    \hat{P}_{++}&=\frac{1}{4}(I+X_1+X_2+X_1X_2)=\ket{++}\!\bra{++}_R\otimes I_S\;,\\
     \hat{P}_{+-}&=\frac{1}{4}(I+X_1-X_2-X_1X_2)=\ket{+-}\!\bra{+-}_R\otimes I_S\;,\\
      \hat{P}_{-+}&=\frac{1}{4}(I-X_1+X_2-X_1X_2)=\ket{-+}\!\bra{-+}_R\otimes I_S\;,\\
       \hat{P}_{--}&=\frac{1}{4}(I-X_1-X_2+X_1X_2)=\ket{--}\!\bra{--}_R\otimes I_S\;.
\end{aligned}
\end{equation}
Their action on a generic code state $\ket{\psi_{\rm pn}}=a\ket{000}+b\ket{111}$, $|a|^2+|b|^2=1$, yields (cf.~Eq.~\eqref{eq:PGpsipn})
\begin{equation}
    \begin{aligned}
       \hat{P}_{++}\ket{\psi_{\rm pn}}&=\frac{1}{2}\ket{++}_R\otimes\ket{\psi(++)}_{|R}\\ 
       \hat{P}_{+-}\ket{\psi_{\rm pn}}&=\frac{1}{2}\ket{+-}_R\otimes\ket{\psi(+-)}_{|R}\\ 
       \hat{P}_{-+}\ket{\psi_{\rm pn}}&=\frac{1}{2}\ket{-+}_R\otimes\ket{\psi(-+)}_{|R}\\ 
       \hat{P}_{--}\ket{\psi_{\rm pn}}&=\frac{1}{2}\ket{--}_R\otimes\ket{\psi(--)}_{|R}
    \end{aligned}\qquad\text{with}\qquad \begin{aligned} \label{eq:givenRexample}
        \ket{\psi(++)}_{|R}&=\ket{\psi(--)}_{|R}=a\ket{0}+b\ket{1}\\
        \ket{\psi(+-)}_{|R}&=\ket{\psi(-+)}_{|R}=a\ket{0}-b\ket{1}\;.
    \end{aligned}
\end{equation}
$\Hil_{\rm kin}\simeq(\mathbb C^2)^{\otimes3}$ can be then decomposed into the two-dimensional orthogonal isotypes of $\hat G$
\begin{equation}\label{3qubit:hatGisotype}
    \begin{aligned}
\hat{\Hil}_{++}&=\ket{++}_R\otimes\Hil_{|R}\\ 
\hat{\Hil}_{+-}&=\ket{+-}_R\otimes\Hil_{|R}\\ 
\hat{\Hil}_{-+}&=\ket{-+}_R\otimes\Hil_{|R}\\ 
\hat{\Hil}_{--}&=\ket{--}_R\otimes\Hil_{|R}
\end{aligned}\qquad\text{with}\qquad\Hil_{|R}\simeq\Hil_S=\text{span}\{\ket0,\ket1\}\;.
\end{equation}
Unlike the isotypes $\Hil_{i\neq1}$ corresponding to $\chi_{i\neq 1}$, which were all orthogonal to the code space $\Hil_{\rm pn}=\Hil_1=\text{span}\{\ket{000},\ket{111}\}$ (cf.~Eq.~\eqref{3qubit:Hchi}), the dual isotypes \eqref{3qubit:hatGisotype} all overlap with $\Hil_{\rm pn}$.~Moreover, similarly to the dual transformations $\hat{U}^{\chi}$ in Eq.~\eqref{3qubit:dualGrep} which translate between the isotypes $\Hil_{i}$, the stabilizer gauge transformations $U^g$ in Eq.~\eqref{eq:3qubitGrep} translate between the isotypes \eqref{3qubit:hatGisotype}.

Using $\hat{\mathcal E}=\{I,X_1,X_2,X_1X_2\}$ as the maximal, non-degenerate set of correctable errors in Eq.~\eqref{utilde}, a unitary implementation of the gauge fixing errors \eqref{3qubitPG} on $\Hil_{\rm pn}$ can be checked to be

\begin{equation}\label{3qubit:unitaryGF}
    \begin{array}{lll}
        \displaystyle E_{++} =\frac{1}{2}(I-Y_1Y_2+Z_1X_2Z_3+X_1Z_2Z_3)\,, & \qquad & E_{++} \Pi_{\rm pn}= 2\hat{P}_{++}\Pi_{\rm pn}\;, \\[2mm]
        \displaystyle E_{+-}=\frac{1}{2}(I-iZ_1Y_2+X_1Z_2Z_3-iY_1X_2Z_3)\,, & \qquad & E_{+-} \Pi_{\rm pn}=2\hat{P}_{+-}\Pi_{\rm pn}\;, \\[2mm]
        \displaystyle E_{-+}=\frac{1}{2}(I-iY_1Z_2+Z_1X_2Z_3-iX_1Y_2Z_3)\,, & \qquad & E_{-+}\Pi_{\rm pn}=2\hat{P}_{-+}\Pi_{\rm pn}\;, \\[2mm]
        \displaystyle E_{--}=\frac{1}{2}(I-Y_1Y_2-iY_1Z_3-iY_2Z_3)\,, & \qquad & E_{--}\Pi_{\rm pn}=2\hat{P}_{--}\Pi_{\rm pn}\;,
    \end{array}
\end{equation}
with $\Pi_{\rm pn}=P_1$ given in Sec.~\ref{Sec:3qubitcodeasQRF}, Eq.~\eqref{eq:3qubit:PicodePipn}.~It is straightforward to check that Eqs.~\eqref{PgEgP1} and \eqref{uchipg} are satisfied for the projectors \eqref{3qubitPG} and their unitary implementation \eqref{3qubit:unitaryGF}.~The latter can then be interpreted as the \emph{magnetic charge excitations} labeled by the elements of $G$ (the characters of $\hat G$).~Lastly, the dual basis states $\{\ket{g}_R\}_{g\in G}$ and $\{\ket{\chi_i}_R\}_{i=1,\dots,4}$ are related via Fourier transform as 
\begin{equation}
\begin{aligned}
\ket{\chi_1}_R=\frac{1}{2}(\ket{++}_R+\ket{+-}_R+\ket{-+}_R+\ket{--}_R)=&\ket{00}_R\;,\\
\ket{\chi_2}_R=\frac{1}{2}(\ket{++}_R+\ket{+-}_R-\ket{-+}_R-\ket{--}_R)=&\ket{10}_R\;,\\
\ket{\chi_3}_R=\frac{1}{2}(\ket{++}_R-\ket{+-}_R+\ket{-+}_R-\ket{--}_R)=&\ket{01}_R\;,\\
\ket{\chi_4}_R=\frac{1}{2}(\ket{++}_R-\ket{+-}_R-\ket{-+}_R+\ket{--}_R)=&\ket{11}_R\;.
\end{aligned}
\end{equation}
\eox
\end{example}

Let us thus now invoke the error-induced TPS $\Hil_{\rm kin}\simeq\Hil_R\otimes\Hil_S$ from Theorem \ref{thm_errorQRF} such that $U^g=U^g_R\otimes I_S$ acts trivially on the system $S$. Then, in analogy to Eq.~\eqref{chione}, we can write more explicitly
\begin{equation}
\hat{E}_g\,\Pi_{\rm pn}=\ket{g}\!\bra{1}_R\otimes I_S\,,
\end{equation}
and, in analogy to Eq.~\eqref{chiHS},
\begin{eqnarray}
    \hat{\Hil}_{\rm g}\simeq\ket{g}_R\otimes\Hil_S\,.
\end{eqnarray}
Finally, using the duality of the representations $U,\hat U$ \eqref{dualrep}, we have for the basis states $\{\ket{\chi}_R\}_{\chi\in\hat{G}}$ of the error set from Eq.~\eqref{pchisimple}
\begin{equation}\label{MUB}
    \braket{\chi}{g}_R=\bra{1}\hat{U}^\chi_R \,U_R^g\ket{e}_R=\chi(g)\bra{1}U_R^g\,\hat{U}_R^\chi\ket{e}_R=\chi(g)\braket{1}{e}_R=\chi(g)\bra{1}E_e\ket{1}\underset{\eqref{KLarbitrary}}{=}\frac{\chi(g)}{\sqrt{2^{n-k}}}
\end{equation}
since $U^g_R\ket{1}_R=\ket{1}_R$ and $\hat{U}_R^\chi\ket{e}_R=\ket{e}_R$, given that $\chi=1$ and $g=e$ label the trivial representations of $G$ and $\hat{G}$, respectively. Thus, we have the Fourier transform
\begin{equation}\label{Fourier}
    \ket{g}_R=\frac{1}{\sqrt{2^{n-k}}}\sum_{\chi\in\hat{G}}\chi(g)\ket{\chi}_R\,,
\end{equation}
manifesting the duality between the two sets of errors and corresponding charges and leading to
\begin{equation}
    P_\chi\,\hat{P}_g = \frac{\chi(g)}{\sqrt{2^{n-k}}}\ket{\chi}\!\bra{g}_R\otimes I_S\,,\qquad\forall\,g\in G\,,\chi\in\hat{G}\,.
\end{equation}
The projectors onto the respective charge sectors thus do not commute. When we know the value of the charge $g$ we know nothing about the dual charge $\chi$, and vice versa. This becomes especially apparent in such a TPS but holds in general for charges associated to a dual pair of representations $U,\hat{U}$, as explained in App.~\ref{app:duality}.

\medskip

Let us finally look at nonlocal tensor product factorizations, but in the language of Sec.~\ref{sec:dressing_field}. To any  complete set of frame fields $R= \{R_\chi\}_{\chi \in \hat{G}\setminus\{1\}}$, one can naturally associate a representation of the dual group $\hat{U}_R$. Furthermore, such $\hat{U}_R$ turns out to be always dual to $U$, and thus defines projection operators $\{\hat{P}^R_g\}_{g\in G}$ whose action on the code subspace can be interpreted as gauge-fixing errors (modulo a rescaling by $\sqrt{|G|}$). Let us start by defining $\hat{U}_R$.
\begin{defn}\label{def:dual_rep_frome_frames}
    Let $R= \{R_\chi\}_{\chi \in \hat{G} \setminus\{1\}}$ be a complete set of frame fields, and let $\otimes_R$ be the tensor product defined in Theorem \ref{thm:nonlocal_fact}. One thus has $\cH_{\rm kin} = \cH_{\rm pn} \otimes_R \cH_{\rm gauge}$, with $\cH_{\rm gauge}$ spanned by the orthonormal basis $\{ \vert \chi \rangle \,, \, \chi \in \hat{G}\}$. For any $\vert \psi \rangle \in \cH_{\rm pn}$ and any $\chi, \eta \in \hat{G}$, we define:
    \begin{equation}
        \hat{U}^\chi_R (\vert \psi \rangle \otimes_R \vert \eta \rangle) := \vert \psi \rangle \otimes_R \vert \chi \eta \rangle \,.
    \end{equation}
    The map $\hat{U}_R : \chi \mapsto \hat{U}^\chi_R$ defines a unitary representation of $\hat{G}$, which is explicitly isomorphic to $2^k$ copies of the regular representation of $\hat{G}$.
\end{defn}
For notational convenience, let us set
\begin{equation}
    R_1 := I\,,
\end{equation}
so that a complete set of frame fields now defines a collection of isometries $\{R_\chi \}_{\chi \in \hat{G}}$ indexed by the full dual group. 
Note that, by construction, we have:
\begin{equation}
    \forall \chi \in \hat{G}\,,\quad \hat{U}^\chi_R \Pi_{\rm pn} = R_\chi \Pi_{\rm pn}\,. 
\end{equation}
It is straightforward to express $\hat{U}_R$ directly in terms of the frame fields and the representation $U$.
\begin{lem}
    For any $\chi \in \hat{G}$, we have
    \begin{equation}\label{eq:dual_rep_associated_to_frame}
        \hat{U}^\chi_R = \sum_{\eta \in \hat{G}} R_{\chi \eta} R_{\eta}^{-1} P_\eta = \frac{1}{|G|} \sum_{g \in G} \sum_{\eta \in \hat{G}} \eta(g) R_{\chi \eta} R_{\eta}^{-1} U^g\,.  
    \end{equation}
\end{lem}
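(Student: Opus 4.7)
The strategy is to verify the two equalities in turn by exploiting the tensor product structure $\cH_{\rm kin} = \cH_{\rm pn} \otimes_R \cH_{\rm gauge}$ of Theorem~\ref{thm:nonlocal_fact}. Since the vectors $\{\vert \psi \rangle \otimes_R \vert \eta \rangle \,:\, \vert \psi \rangle \in \cH_{\rm pn}\,, \eta \in \hat{G}\}$ span $\cH_{\rm kin}$, it suffices to check that both sides of \eqref{eq:dual_rep_associated_to_frame} have the same action on such vectors, and then to rewrite the result as an explicit sum over the group.

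For the first equality, I would use the identifications \eqref{eq:def_tensor1}--\eqref{eq:def_tensor2}, extended to $\eta = 1$ with the convention $R_1 = I$, so that $\vert \psi \rangle \otimes_R \vert \eta \rangle = R_\eta \vert \psi \rangle \in \cH_\eta$ for every $\eta \in \hat{G}$. Applying $\sum_{\eta \in \hat{G}} R_{\chi\eta} R_\eta^{-1} P_\eta$ to a generic basis vector $\vert \psi \rangle \otimes_R \vert \mu \rangle = R_\mu \vert \psi \rangle$, orthogonality of the isotypes picks out only the $\eta = \mu$ term, since $P_\eta R_\mu \vert \psi \rangle = \delta_{\eta, \mu} R_\mu \vert \psi \rangle$. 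The surviving term collapses to $R_{\chi\mu} R_\mu^{-1} R_\mu \vert \psi \rangle = R_{\chi\mu} \vert \psi \rangle = \vert \psi \rangle \otimes_R \vert \chi\mu \rangle$, using that $R_\mu^{-1} R_\mu = {\rm id}_{\cH_{\rm pn}}$ because $R_\mu$ is an isometry. By Definition~\ref{def:dual_rep_frome_frames}, this is precisely $\hat{U}^\chi_R (\vert \psi \rangle \otimes_R \vert \mu \rangle)$, establishing the first equality.

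The second equality then follows by substituting the standard character-sum expression for the isotype projector, $P_\eta = \frac{1}{|G|} \sum_{g \in G} \eta(g) U^g$ (compare \eqref{pchi}; here $G = \mathbb{Z}_2^{\times(n-k)}$ so all characters $\eta(g)$ are real), into the first equality, and swapping the order of summation. The only real subtlety lies in interpreting $R_\eta^{-1}$: since $R_\eta : \cH_{\rm pn} \to \cH_\eta$ is an isometry between Hilbert spaces of equal dimension, $R_\eta^{-1}$ is defined as its adjoint $R_\eta^\dagger : \cH_\eta \to \cH_{\rm pn}$, satisfying $R_\eta^{-1} R_\eta = {\rm id}_{\cH_{\rm pn}}$ and $R_\eta R_\eta^{-1} = P_\eta$; the projector $P_\eta$ on the right of $R_\eta^{-1}$ in \eqref{eq:dual_rep_associated_to_frame} is precisely what extends the partial isometry $R_{\chi\eta} R_\eta^{-1}$ from its natural domain $\cH_\eta$ to all of $\cH_{\rm kin}$. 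Beyond this bookkeeping I do not anticipate any serious obstacle; the argument is essentially a definitional unwinding.
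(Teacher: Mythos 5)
Your proof is correct and follows essentially the same route as the paper's: evaluate both sides on the spanning vectors $\vert\psi\rangle\otimes_R\vert\mu\rangle = R_\mu\vert\psi\rangle$, let the isotype projector $P_\eta$ kill all but the $\eta=\mu$ term, collapse via $R_\mu^{-1}R_\mu = \mathrm{id}_{\cH_{\rm pn}}$, and then obtain the second equality by substituting the character-sum formula for $P_\eta$. Your additional remarks on the meaning of $R_\eta^{-1}$ as the adjoint of an isometry and on the reality of the characters for $G=\mathbb{Z}_2^{\times(n-k)}$ are correct bookkeeping that the paper leaves implicit.
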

\begin{proof}
    Let $\chi \in \hat{G}$. For any $\vert \psi \rangle \in \cH_{\rm pn}$ and $\eta' \in \hat{G}$, we have $\vert \psi \rangle \otimes_R \vert \eta' \rangle:= R_{\eta'} \vert \psi \rangle$, so that:
    \begin{align*}
        \sum_{\eta \in \hat{G}} R_{\chi \eta} R_{\eta}^{-1} P_\eta (\vert \psi \rangle \otimes_R \vert \eta' \rangle) &= \sum_{\eta \in \hat{G}} R_{\chi \eta} R_{\eta}^{-1} P_\eta R_{\eta'} \vert \psi \rangle = R_{\chi \eta'} R_{\eta'}^{-1} R_{\eta'} \vert \psi \rangle \\ 
        &= R_{\chi \eta'} \vert \psi \rangle  = \vert \psi \rangle \otimes_R \vert \chi \eta'\rangle = \hat{U}_R^\chi (\vert \psi \rangle \otimes_R \vert \eta'\rangle)\,. 
    \end{align*}
    This establishes the first equality, and the second one follows by plugging-in the expression of the projector $P_\chi$ \eqref{eq:projector_isotype}.
\end{proof}
We can then check that $\hat{U}_R$ is indeed dual to the representation $U$. 
\begin{prop}
    The unitary representations $U$ and $\hat{U}_R$ are dual to each other (in the sense of Definition \ref{def:dual_reps_faithful}).
\end{prop}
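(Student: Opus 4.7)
The plan is to verify the duality relation $U^g \hat{U}^\chi_R = \chi(g)\, \hat{U}^\chi_R U^g$ for all $g \in G$ and $\chi \in \hat{G}$ by a direct calculation on a basis of $\cH_{\rm kin}$ adapted to the tensor product $\otimes_R$. Theorem~\ref{thm:nonlocal_fact} supplies exactly what is needed: a factorization $\cH_{\rm kin} = \cH_{\rm pn} \otimes_R \cH_{\rm gauge}$ with an orthonormal basis $\{|\chi\rangle\}_{\chi \in \hat{G}}$ of $\cH_{\rm gauge}$ on which the gauge representation acts diagonally as $U^g = \mathrm{id}_{\cH_{\rm pn}} \otimes_R U_R^g$, with $U_R^g |\chi\rangle = \chi(g)\,|\chi\rangle$. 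The dual representation $\hat{U}_R$ is defined (Definition~\ref{def:dual_rep_frome_frames}) to act by multiplication in the character basis, $\hat{U}^\chi_R (|\psi\rangle \otimes_R |\eta\rangle) = |\psi\rangle \otimes_R |\chi \eta\rangle$. Both representations thus localize to the gauge factor, so the duality reduces to a statement about $U_R$ and $\hat{U}_R$ acting on $\cH_{\rm gauge}$ alone.

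Concretely, I would fix $|\psi\rangle \in \cH_{\rm pn}$ and $\eta \in \hat{G}$, and compute
\begin{equation*}
U^g \hat{U}^\chi_R (|\psi\rangle \otimes_R |\eta\rangle) = U^g (|\psi\rangle \otimes_R |\chi\eta\rangle) = (\chi\eta)(g)\, |\psi\rangle \otimes_R |\chi\eta\rangle,
\end{equation*}
while
\begin{equation*}
\hat{U}^\chi_R U^g (|\psi\rangle \otimes_R |\eta\rangle) = \hat{U}^\chi_R (\eta(g)\, |\psi\rangle \otimes_R |\eta\rangle) = \eta(g)\, |\psi\rangle \otimes_R |\chi\eta\rangle.
\end{equation*}
Multiplicativity of characters, $(\chi\eta)(g) = \chi(g)\eta(g)$, then gives exactly $U^g \hat{U}^\chi_R = \chi(g)\, \hat{U}^\chi_R U^g$ on this basis, and the identity extends to all of $\cH_{\rm kin}$ by linearity. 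Faithfulness of $U$ is already guaranteed by the hypotheses on the stabilizer code (or equivalently by the analysis in Appendix~\ref{app:stabilizer_reps}), and the remark after Definition~\ref{def:dual_reps_faithful} in Appendix~\ref{app:duality} indicates this is all one needs to check.

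There is no real obstacle; the content has already been packed into Theorem~\ref{thm:nonlocal_fact} and Definition~\ref{def:dual_rep_frome_frames}. Alternatively, one could prove the same relation intrinsically from the explicit formula \eqref{eq:dual_rep_associated_to_frame}, by using the covariance $U^g R_\chi = \chi(g)\, R_\chi$ from Definition~\ref{def:complete_frame} together with $U^g P_\eta = \eta(g)\, P_\eta$ (cf.\ Eq.~\eqref{ugpchi}); this route avoids invoking the tensor product factorization but at the price of a slightly bulkier calculation, and I expect the tensor-product argument above to be the cleanest presentation.
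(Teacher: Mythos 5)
Your proof is correct and is essentially identical to the paper's: both verify the Weyl-type relation by acting on vectors of the form $\vert \psi \rangle \otimes_R \vert \eta \rangle$ and using $U^g = {\rm id}_{\cH_{\rm pn}}\otimes_R U_R^g$ together with multiplicativity of characters. Nothing to add.
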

\begin{proof}
    For any $\vert \psi\rangle \in \cH_{\rm pn}$, any $g \in G$, and any $\chi, \eta \in \hat{G}$, we have
    \begin{equation}
        U^g \hat{U}_R^\chi (\vert \psi \rangle \otimes_R \vert \eta \rangle ) = U^g (\vert \psi \rangle \otimes_R \vert \chi \eta \rangle ) = \vert \psi \rangle \otimes_R U^g_R \vert \chi \eta \rangle  = \chi(g) \eta(g) \vert \psi \rangle \otimes_R  \vert \chi \eta \rangle 
    \end{equation}
    and
    \begin{equation}
         \hat{U}_R^\chi U^g (\vert \psi \rangle \otimes_R \vert \eta \rangle ) = \hat{U}_R^\chi ( \vert \psi \rangle \otimes_R U^g_R \vert \eta \rangle ) = \eta(g) \hat{U}_R^\chi (\vert \psi \rangle \otimes_R \vert \eta \rangle )  = \eta(g) \vert \psi \rangle \otimes_R  \vert \chi \eta \rangle\,, 
    \end{equation}
    so that $U^g \hat{U}_R^\chi = \chi(g) \hat{U}_R^\chi U^g$.
\end{proof}
This result should be physically intuitive: it is an analogue for finite Abelian groups of the more standard observation that, in quantum mechanics, finite translations in momentum space are related to finite translations in configuration space by Weyl commutation relations. In the case of quantum mechanics on the line (resp.~circle), the relevant groups are $G=\mathbb{R}$ (resp.~$G = {\rm U}(1)$) and $\hat{G} = \mathbb{R}$ (resp.~$\hat{G} = \mathbb{Z}$).
Finally, according to Appendix \ref{app:reps}, rescaled orthogonal projectors $\{\sqrt{|G|} \hat{P}^R_g\}_{g \in G}$ which project on the various isotypes of the representation $\hat{U}_R$ can be obtained by means of a Fourier transform; namely:
\begin{equation}\label{eq:dual_proj_exact}
    \forall g \in G\,, \qquad \sqrt{|G|} \hat{P}^R_g := \frac{1}{\sqrt{\vert G\vert}} \sum_{\chi \in \hat{G}} \chi(g) \hat{U}^\chi_R\,. 
\end{equation}
As was explained above, the action of such operators on the code subspace $\cH_{\rm pn}$ can be interpreted as gauge-fixing errors. We thus have the following proposition.
\begin{prop}\label{prop:gauge-fixing_errors_frame}
    To a complete set of frame fields $R=\{ R_\chi \}_{\chi \in \hat{G}}$, we can associate a set of gauge-fixing errors $\{ \hat{E}_g\}_{g \in G}$, whose restrictions to $\cH_{\rm pn}$ obey
    \begin{equation}
        \hat{E}_g \Pi_{\rm pn} = \sqrt{\vert G\vert} \hat{P}^R_g \Pi_{\rm pn} = \frac{1}{\sqrt{|G|}} \sum_{\chi \in \hat{G}} \chi(g) R_\chi \Pi_{\rm pn}\,. 
    \end{equation}
\end{prop}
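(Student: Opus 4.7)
The plan is to reduce the proposition to the general machinery developed in Sec.~\ref{ssec_dualerrors}, applied to the specific dual representation $\hat{U}_R$ associated with the frame $R$. Since we have just established (in the proposition immediately preceding) that $(U, \hat{U}_R)$ is a dual pair of unitary representations of $(G, \hat{G})$, every statement in Sec.~\ref{ssec_dualerrors} that relies only on duality transfers verbatim with $\hat{U}$ replaced by $\hat{U}_R$, and accordingly with $\hat{P}_g$ replaced by $\hat{P}^R_g$.

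First I would verify the second equality in the displayed formula by an elementary Fourier computation. Substituting the definition \eqref{eq:dual_proj_exact} of $\hat{P}^R_g$ gives
\begin{equation}
\sqrt{|G|}\,\hat{P}^R_g\,\Pi_{\rm pn} = \frac{1}{\sqrt{|G|}}\sum_{\chi \in \hat{G}}\chi(g)\,\hat{U}^\chi_R\,\Pi_{\rm pn}\,,
\end{equation}
and the identity $\hat{U}^\chi_R\,\Pi_{\rm pn} = R_\chi\,\Pi_{\rm pn}$, which holds by construction of $\hat{U}_R$ (see the paragraph above the proposition, where $R_1 := I$ and the tensor decomposition $\cH_{\rm kin} = \cH_{\rm pn}\otimes_R\cH_{\rm gauge}$ identifies $\hat{U}_R^\chi$ with left multiplication by $R_\chi$ on the code subspace), immediately yields the second equality.

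Next, for the first equality I would invoke the ``Unitary gauge fixing errors'' Lemma around Eq.~\eqref{utilde}. That lemma takes as input a maximal non-degenerate correctable Pauli error set $\{E_\chi\}_{\chi \in \hat{G}}$ with $E_1 = I$ and $E_\chi(\cH_{\rm pn}) \subset \cH_{\rm pn}^\perp$ for $\chi \neq 1$, together with a set of projectors satisfying the diagonal KL relation \eqref{KLgauge}. Such a Pauli error set exists by Corollary~\ref{cor:max_error_sets}. The required KL-type orthogonality $\Pi_{\rm pn}\hat{P}^R_g\,\hat{P}^R_{g'}\Pi_{\rm pn} = \tfrac{1}{|G|}\delta_{g,g'}\Pi_{\rm pn}$ for our projectors follows from the very same proof as the earlier Lemma on gauge fixings being correctable (just below Eq.~\eqref{errorspacetranslate}), since that proof only uses the duality relation \eqref{dualrep} together with the covariance $U^g \hat{P}^R_{g'} U^g = \hat{P}^R_{gg'}$ (which is itself a consequence of duality, cf.~Eq.~\eqref{Pgcov2}). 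Plugging these ingredients into Eq.~\eqref{utilde} with a bijection $h(\cdot,g):\hat{G}\to G$ satisfying $h(1,g)=g$ then produces a unitary $\hat{E}_g$ on $\cH_{\rm kin}$ with $\hat{E}_g\,\Pi_{\rm pn} = \sqrt{|G|}\,\hat{P}^R_g\,\Pi_{\rm pn}$, completing the proof.

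The only mildly nontrivial point, which I would spell out carefully, is checking that the covariance and KL relations indeed survive the replacement $\hat{U} \to \hat{U}_R$ even though $\hat{U}_R$ acts nontrivially across the frame-dependent tensor factorization $\cH_{\rm kin} = \cH_{\rm pn}\otimes_R\cH_{\rm gauge}$ rather than on a kinematically local subsystem. This is not really an obstacle but rather a bookkeeping check, since everything follows from the abstract duality \eqref{dualrep} that was just established, independently of any concrete factorization.
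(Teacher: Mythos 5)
Your proof is correct and follows essentially the same route as the paper: the paper's own proof consists precisely of the Fourier computation $\hat{P}^R_g\Pi_{\rm pn}=\frac{1}{|G|}\sum_\chi\chi(g)\hat{U}^\chi_R\Pi_{\rm pn}=\frac{1}{|G|}\sum_\chi\chi(g)R_\chi\Pi_{\rm pn}$ using $\hat{U}^\chi_R\Pi_{\rm pn}=R_\chi\Pi_{\rm pn}$, leaving the existence of the unitaries $\hat{E}_g$ to the general machinery of Sec.~\ref{ssec_dualerrors}. Your additional explicit check that the covariance and diagonal KL relations transfer to $\hat{P}^R_g$ once duality of $(U,\hat{U}_R)$ is established is exactly the implicit content of the paper's ``from which the claim follows.''
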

\begin{proof}
    Since $\hat{U}_R^\chi$ and $R_\chi$ agree on the code subspace, we have 
    \begin{equation}\label{eq:dual_proj_code}
        \hat{P}^R_g \Pi_{\rm pn}= \frac{1}{\vert G\vert} \sum_{\chi \in \hat{G}} \chi(g) \hat{U}^\chi_R \Pi_{\rm pn} = \frac{1}{\vert G\vert} \sum_{\chi \in \hat{G}} \chi(g) R_\chi \Pi_{\rm pn} \,,
    \end{equation}
    from which the claim follows.
\end{proof}
We thus conclude that the gauge-fixing errors can be interpreted as Fourier transforms of the frame fields. Let us give an illustration of those findings in the $3$-qubit code.
\begin{example}
Let us consider the $3$-qubit code together with the complete set of frame fields $$R= (R_{1} , R_{2} , R_{3} , R_{4}) = (\restr{I}{\cH_1}, \restr{X_1}{\cH_1} , \restr{X_2}{\cH_1} , \restr{X_3}{\cH_1})\,,$$
as already discussed in example \ref{ex:3qubit_single_flips}. A dual representation and a dual error set has already been constructed for the same code in examples \ref{3qubit:errorduality1} and \ref{3qubit:errorduality2}. Recall that the representation $U$ is defined by \eqref{eq:3qubitGrep}, and the dual group is expressed in terms of the characters $(\chi_1 , \chi_2, \chi_3, \chi_4)$ from equations \eqref{3qubit:characters1}-\eqref{3qubit:characters4}. Definition \ref{def:dual_rep_frome_frames} yields, by direct evaluation of expression \eqref{eq:dual_rep_associated_to_frame} (which is straightforward but somewhat tedious), the following dual representation:
\begin{align}
    \hat{U}^{\chi_1}_R &= I\,, \\
    \hat{U}^{\chi_2}_R &= \frac{1}{2} \left( X_1 + X_2 X_3 + X_1 Z_2 Z_3 - X_2 X_3 Z_2 Z_3  \right)\,, \\
    \hat{U}^{\chi_3}_R &= \frac{1}{2} \left( X_2 + X_1 X_3 + X_2 Z_1 Z_3 - X_1 X_3 Z_1 Z_3  \right)\,, \\
    \hat{U}^{\chi_4}_R &= \frac{1}{2} \left( X_3 + X_1 X_2 + X_3 Z_1 Z_2 - X_1 X_2 Z_1 Z_2 \right)\,. 
\end{align}
Note in particular that $\hat{U}^{\chi_i}_R \Pi_{\rm pn} = R_i \Pi_{\rm pn}$ for every $1 \leq i \leq 4$. The dual representation so constructed is therefore adapted to the set of frame fields $R$, or equivalently, to the error set $\cE = \{I, X_1 , X_2, X_3\}$. By contrast, the construction of example \ref{3qubit:errorduality1} produced a dual representation adapted to $R''= (\restr{I}{\cH_1}, \restr{X_1}{\cH_1} , \restr{X_2}{\cH_1} , \restr{X_1 X_2}{\cH_1})$, or equivalently, to the error set $\cE'' = \{ I , X_1 , X_2 , X_1 X_2 \}$ (see example \ref{ex:3qubit_third_qubit_logical}). Equation \eqref{eq:dual_proj_exact} allows to determine the form of the dual projectors $\{\hat{P}_g^R\}_{g \in G}$. For instance, one finds:
\begin{align}
    \hat{P}^R_{++} &= \frac{1}{4} I + \frac{1}{8} \left( X_1 + X_2 + X_3 \right)+ \frac{1}{8} \left( X_2 X_3+ X_1 X_3 + X_1 X_2 \right) + \frac{1}{8} \left( X_1 Z_2 Z_3+ X_2 Z_1 Z_3 + X_3 Z_1 Z_2 \right) \\
    &\qquad - \frac{1}{8} \left( X_2 X_3 Z_2 Z_3 + X_1 X_3 Z_1 Z_3 + X_1 X_2 Z_1 Z_2 \right) \,.\nonumber 
\end{align}
When acting on the code subspace, a rescaled version of those projectors determine the action of the dual error set $\{ \hat{E}_g \}_{g \in G}$; according to \eqref{eq:dual_proj_code}, we have:
\begin{align}
    \hat{E}_{++} \Pi_{\rm pn } &= 2 \hat{P}^R_{++} \Pi_{\rm pn} = \frac{1}{2} \left( I + X_{1} + X_{2} + X_{3}  \right) \Pi_{\rm pn} \,, \\
    \hat{E}_{+-} \Pi_{\rm pn } &= 2 \hat{P}^R_{+-} \Pi_{\rm pn} = \frac{1}{2} \left( I + X_{1} - X_{2} - X_{3}  \right) \Pi_{\rm pn} \,, \\
    \hat{E}_{-+} \Pi_{\rm pn } &= 2 \hat{P}^R_{-+} \Pi_{\rm pn} = \frac{1}{2} \left( I - X_{1} + X_{2} - X_{3}  \right) \Pi_{\rm pn} \,, \\
    \hat{E}_{--} \Pi_{\rm pn } &= 2 \hat{P}^R_{-+} \Pi_{\rm pn} = \frac{1}{2} \left( I - X_{1} - X_{2} + X_{3}  \right) \Pi_{\rm pn} \,.
\end{align}
\end{example}

\subsection{Summary and interpretation of error duality}

In summary, we  have the duality: 
\begin{eqnarray}
    \text{stabilizer gauge group}\quad {G}&\longleftrightarrow& \hat{{G}}\quad\text{Pontryagin dual}\nonumber\\
    \text{stabilizer gauge transformation}\quad U^g&\longleftrightarrow&\hat{U}^\chi\quad\text{dual group transformation}\nonumber\\
    \text{electric charge label}\quad \chi\in\hat{{G}}&\longleftrightarrow& g\in{G}\quad\text{magnetic charge label}\nonumber\\
    \text{electric charge excitation}\quad E_\chi&\longleftrightarrow& \hat{E}_g\quad\text{magnetic charge excitation}\nonumber\\
    \text{electric charge sector projector}\quad P_\chi&\longleftrightarrow& \hat{P}_g\quad\text{magnetic charge sector projector}\nonumber\\
    \text{elec.\ charge measurem.}\quad\{U^g\}_{g\in G_{\rm gen}}\text{ or }\{P_\chi\}_{\chi\in\hat{G}}&\longleftrightarrow&\{\hat{U}^\chi\}_{\chi\in\hat{G}_{\rm gen}}\text{ or }
\{\hat{P}_g\}_{g\in G}\quad\text{magn.\ charge measurem.}\nonumber\\
\text{electric syndrome}\quad\{\chi(g)\}_{g\in G_{\rm gen}}&\longleftrightarrow&\{\chi(g)\}_{\chi\in\hat{G}_{\rm gen}}\quad\text{magnetic syndrome}\nonumber
\end{eqnarray}
This constitutes an \emph{error duality} between standard correctable Pauli errors (as well as dual representations $\hat{\mathcal{G}}$ of $\hat{G}$) and gauge fixing errors. The counterparts for gauge fixing errors to all relevant properties, syndrome measurements and recovery schemes of Pauli errors are obtained, according to this dictionary, by simply swapping the dual $\chi\leftrightarrow g$ labels in the relevant expressions.

In analogy to Abelian gauge theory, we interpreted the charges of the stabilizer gauge group and its dual as electric and magnetic, respectively. Accordingly, we interpreted the standard correctable Pauli errors as electric and the new gauge fixing errors as magnetic charge excitations. 
Owing to their duality according to \eqref{dualrep}, we have that the representations $\mathcal{G}=U(G)$ of the stabilizer gauge group and $\hat{\mathcal{G}}=\hat{U}(\hat{G})$ of its Pontryagin dual do \emph{not} commute. This stands in contrast to the electromagnetic duality in the continuum, where the $\rm{U}(1)$ gauge transformations commute with their Pontryagin dual such that both their charges a gauge-invariant. Consequently, we have seen that electric and magnetic charges are complementary in our case: maximal information about one is at the expense of any information about the other and they lead to bases for QRFs that are mutually unbiased, Eq.~\eqref{MUB}. Specifically, the magnetic charges are \emph{not} gauge-invariant. However, $\{U^g\}$ and $\{P_\chi\}$ commute, as do $\{\hat{U}^\chi\}$ and $\{\hat{P}_g\}$ which enables the electric and magnetic charge or, equivalently, syndrome measurements. 

This difference from the standard form of electromagnetic duality will also become visible in the surface codes of Sec.~\ref{sec:surfacecodes}, where we will encounter both. What we present here as error duality thus appears to be a new and somewhat stronger form of Abelian charge duality.

The error duality is not unique to a fixed pair of Pauli and gauge fixing error sets. In stabilizer QEC, we are usually given a fixed representation $U(G)$. Now fix also a dual representation $\hat{U}(\hat{G})$ leading to a fixed gauge fixing error set defined by $\{\hat{P}_{g}\}_{g\in G}$ in Eq.~\eqref{PG}. This gauge fixing error set is dual, according to the above dictionary, to \emph{any} maximal set of correctable Pauli errors, as well as the elements $\{\hat{U}^\chi\}_{\chi\in\hat{G}}$ of \emph{any} representation $\hat{U}$ of $\hat{G}$ that is dual to $U$. Conversely, fix any maximal set of correctable Pauli errors. This is dual to the gauge fixing error set of \emph{any} dual representation of the Pontryagin dual $\hat{G}$. Consequently, for every electric charge set there exist many dual magnetic charge sets, and vice versa. 

The arguably cleanest form of the duality arises for the ideal QRFs generated by any maximal set of correctable Pauli errors according to Theorem~\ref{thm_errorQRF}. In that case, all of the elements in the above dictionary act only on the QRF Hilbert space $\Hil_R$ and lead to the Fourier transform Eq.~\eqref{Fourier} between the dual error bases. Nevertheless, even then the duality is not unique, as a given ideal QRF will generally admit many orthonormal orientation bases and thus dual representations of $\hat{G}$.

We have here taken a gauge theory/QRF perspective on these dual sets of errors. However, as per our QECC/QRF dictionary, they also have interpretations in standard QEC or quantum communication scenarios. The Pauli errors account for the standard errors, such as bit and phase flips, in the former, and in the latter they constitute transformations mapping external-frame-independent information into superselection sectors that no longer are invariant under external frame transformations, as discussed in Sec.~\ref{sec_dictionary}. The dual gauge fixing errors have not previously been discussed in such scenarios and essentially amount to just removing redundancy.

\section{Surface codes}\label{sec:surfacecodes}

As a step forward in connecting QEC with actual gauge theories, we shall now illustrate our findings in the context of \emph{surface codes} \cite{Kitaev:1997wr, Bravyi:1998sy}, which can be understood as two-dimensional lattice gauge theories with finite structure group.~We will focus on the simplest possible choice of structure group, namely $\mathbb{Z}_2$, as any such surface code can indeed be interpreted as a stabilizer code.~Such a code is also an example of homological code, exploiting the $\mathbb{Z}_2$-homology of surfaces.~For illustrative purposes, we will start by discussing the case of a planar square lattice with boundary \cite{Bravyi:1998sy}.~We will then turn to codes defined on closed surfaces of genus $g \geq 1$, as originally introduced by Kitaev \cite{Kitaev:1997wr}; this includes the well-known toric code.~Depending on how exactly we approach the latter codes, they may not be directly covered by the general results of the previous sections, because they may involve a representation of the gauge group that is not faithful.~However, even in such a situation, they can be analyzed with similar methods.~Moreover, we will see that the non-faithful character of the representation is directly related to a key physical property of such systems, namely: global charge conservation. 

\subsection{Surface code with boundary: square lattice with $\mathbb{Z}_2$ structure group}\label{subsec:square_code}

\subsubsection{Combinatorial structure of the lattice}

Surface codes with boundaries were introduced in \cite{Bravyi:1998sy}. Let us focus on the simplest model described in that reference, which is based on a regular square lattice with rectangular shape, as illustrated in Fig.~\ref{fig:surface_code_disk}. We denote by $H$ (resp.\ $L$) the height (resp.\ width) of this architecture, which we define as the number of horizontal lines (resp.\ the number of vertical lines minus one) of the grid. Importantly, the lattice has two types of boundaries: the so-called \emph{rough} boundaries at the top and bottom, and the \emph{smooth} boundaries on the left and right sides. Let us denote by $\mathcal{E}$ and $\mathcal{V}$ the set of edges and vertices of the lattice. There are $(L+1)(H+1)$ vertical edges and $LH$ horizontal ones, therefore $|\cE|= 2LH + L +H +1$. As for the vertices, there are $\vert \cV \vert= LH +H$ of them. Finally we need to specify a set of \emph{faces} or \emph{plaquettes} for this architecture, which we denote by $\mathcal{F}$. A face is defined to be: 1) a collection of four edges lying on the boundary of a unit square in the lattice; or 2) a collection of three edges bordering an incomplete square which touches one of the two rough boundaries (see e.g.\ the structure of the plaquette labelled by $Z_{f_2}$ in Fig.~\ref{fig:surface_code_disk}). There are $ \vert \mathcal{F} \vert = LH+L$ faces in our architecture. Mathematically speaking, the graph structure together with the set of faces $\mathcal{F}$ defines a discrete orientable surface, also known as a \emph{combinatorial map}, with the topology of a disk. Obviously, experimental implementations of surface codes are also realized by laying quantum degrees of freedom on two-dimensional materials, hence the name. 

We will denote by $\cP(\cV)$,  $\cP(\cE)$ and $\cP(\cF)$ the power sets of $\cV$, $\cE$ and $\cF$ respectively. The discrete surface interpretation of our network also provides us with a \emph{boundary map} $\partial: \cP(\cV) \to \cP(\cE)$, which maps any $V \subset \cV$ to the subsets of its edges that lie on its boundary $\partial V \subset \cE$. More precisely, the map $\partial$ is defined as follows: if $v \in \cV$ has four incident edges $e_1, e_2, e_3, e_4$, then $\partial \{ v \} := \{ e_1 , e_2 , e_3, e_4\}$; if $v \in \cV$ has three incident edges $e_1, e_2, e_3$, then $\partial \{ v \} := \{ e_1 , e_2 , e_3 \}$; for any $V_1, V_2 \in \cV$, we furthermore impose $\partial (V_1 \Delta V_2 ) = (\partial V_1) \Delta (\partial V_2)$, where $\Delta$ denotes the symmetric difference of sets. In other words, the map $\partial$ is a group homomorphism from $(\cP(\cV), \Delta)$ to $(\cP(\cE), \Delta)$, which we have defined by its action on the set of $\cP(\cV)$-generators $\{ \{v\} \, \vert \, v \in \cV\}$. Likewise, we have a \emph{coboundary map} $\delta: \cP(\cF) \to \cP(\cE)$, which can be understood as the boundary map of the dual lattice.\footnote{In the dual lattice, faces take the role of dual vertices while vertices take the role of dual faces. By contrast, the notion of edge is self-dual, in the sense that an $e \in \cF$ can be considered both as an edge and a dual edge. We will repeatedly (and often implicitly) make use of this fact throughout the text.} Namely, if $f \in \cF$ is made up of four edges $e_1 , e_2 , e_3, e_4$ (resp. three edges $e_1 , e_2 , e_3$), then $\delta \{f\}:= \{e_1 , e_2 , e_3 , e_4\}$ (resp. $\delta \{f\}:= \{e_1 , e_2 , e_3 \}$), and for any $F_1, F_2 \subset \cF$ we have the group homomorphism relation $\delta (F_1 \Delta F_2) = (\delta F_1 ) \Delta (\delta F_2)$. 

\begin{figure}[htb]
    \centering
    \includegraphics[scale=.7]{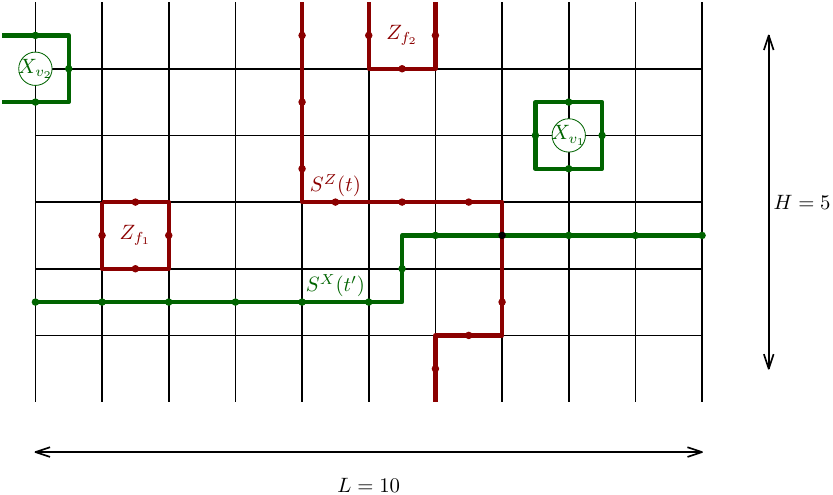}
    \caption{Surface code on a rectangular lattice of size $L \times H$; it has $(L+1)\times(H+1)$ vertical edges and $L\times H$ horizontal ones, supporting $2LH+L+H+1$ physical qubits in total. Illustrated in red and green are plaquettes and vertex operators, as well as non-trivial string operators which capture invariant data about the one logical qubit stored in the code subspace.}
    \label{fig:surface_code_disk}
\end{figure}

\subsubsection{Gauge group and its dual}

The abstract gauge group underlying this model can be taken to be the product of the power sets $\cP(\cV)$ and $\cP(\cF)$, with group law given by the symmetric difference of subsets. More precisely, we set
\begin{equation}\label{eq:group_surface}
G := \{ (V, F) \, \vert \, V \subset \cV \,, F \subset \cF \}     
\end{equation}
and, for any $(V_1 , F_1), (V_2 , F_2) \in G$,
\begin{equation}\label{eq:surface_group_law}
    (V_1 , F_1)(V_2 , F_2):= (V_1 \Delta V_2, F_1 \Delta F_2 )\,.
\end{equation}
Equation \eqref{eq:surface_group_law} defines an Abelian group structure with unit element $(\emptyset , \emptyset )$. Furthermore, for any $(V,F) \in G$ we have $(V,F)^{2}=(\emptyset , \emptyset )$, which makes $G$ isomorphic to $\mathbb{Z}_2^{\times (|\cV|+|\cF|)}$.\footnote{In a slightly different language, this definition of the abstract gauge group exploits the fact that the group law of $\mathbb{Z}_2$ is the same as exclusive OR in a Boolean algebra.}

We can build characters of $G$ in the following way. Given $(\hat{V},\hat{F}) \subset (\cV, \cF)$, we define
\begin{equation}\label{eq:charac_surface}
    \chi_{(\hat{V},\hat{F})}: G \to \{1,-1\}\,, \quad (V,F) \mapsto  (-1)^{| \hat{V} \cap  V|+ | \hat{F} \cap  F|}\,.
\end{equation}
This does define a character because, for any $(V_1 , F_1), (V_2, F_2) \in G$: 
\begin{align}
    \chi_{(\hat{V},\hat{F})} \left((V_1 , F_1) (V_2, F_2)\right) &=  \chi_{(\hat{V},\hat{F})} \left((V_1 \Delta V_2, F_1 \Delta F_2)\right) = (-1)^{|\hat{V} \cap (V_1 \Delta V_2)|} (-1)^{|\hat{F} \cap (F_1 \Delta F_2)|} \\
    &= (-1)^{|(\hat{V} \cap V_1) \Delta (\hat{V} \cap V_2)|}  (-1)^{|(\hat{F} \cap F_1) \Delta (\hat{F} \cap F_2)|} \\ 
    &   = (-1)^{|\hat{V} \cap  V_1| +|\hat{V} \cap V_2 | - 2 |\hat{V} \cap V_1 \cap V_2|} (-1)^{|\hat{F} \cap  F_1| +|\hat{F} \cap F_2 | - 2 |\hat{F} \cap F_1 \cap F_2|} \\
    & = (-1)^{|\hat{V} \cap  V_1| +|\hat{F} \cap  F_1| } (-1)^{|\hat{V} \cap V_2 | +|\hat{F} \cap F_2 |}  =   \chi_{(\hat{V},\hat{F})} \left((V_1 , F_1)\right) \, \chi_{(\hat{V},\hat{F})}\left( (V_2, F_2)\right) \,.
\end{align}

By symmetry of the roles of $G$ and $\hat{G}$ in the previous calculation, we also have: for any $(\hat{V}_1, \hat{F}_1), (\hat{V}_2, \hat{F}_2) \in \cP(\cV)\times \cP(\cF)$, 
\begin{equation}
    \chi_{(\hat{V}_1,\hat{F}_1)} \, \chi_{(\hat{V}_1,\hat{F}_2)} = \chi_{(\hat{V}_1 \Delta \hat{V}_2,\hat{F}_1 \Delta \hat{V}_2)}\,.
\end{equation}
This makes it clear that the characters $\{ \chi_{(\hat{V},\hat{F})} \, \vert \, \hat{V} \subset \cV\,, \hat{F}\subset \cF\}$ form a group isomorphic to $G$, hence they make up all of $\hat{G}$. Moreover, this shows that, similarly to $G$ itself, we can think of the group law in $\hat{G}$ as being represented by the symmetric difference.

\subsubsection{Representation of the gauge group, gauge constraints, and error spaces}

To the combinatorial map underlying the model, we associate a kinematical Hilbert space $\cH_{\rm kin}$ and a commuting set of constraint operators $\mathcal{G}$ which, together, define the invariant subspace $\mathcal{H}_{\rm pn}$. The kinematical degrees of freedom consist of one qubit per edge of the lattice, meaning that
\begin{equation}
    \cH_{\rm kin} = \bigotimes_{e \in \mathcal{E}} \mathbb{C}^2 \simeq (\mathbb{C}^2)^{\otimes \vert \cE \vert} \,. 
\end{equation}
For each edge $e$, we fix a computational basis $\Span\{ \vert 0 \rangle_e \,, \vert 1 \rangle_e\}$ in the associated $\mathbb{C}^2$, as well as Pauli operators $\{ I_e , X_e , Y_e, Z_e\}$ adapted to this basis. 

Next, we define \emph{vertex operators} and \emph{plaquette operators}. For any vertex $v \in \mathcal{V}$, the vertex operator $X_v$ is defined to be
\begin{equation}\label{eq:vertex_cst}
    X_v := \prod_{e \; \mathrm{incident}\; \mathrm{to} \; v } X_e\,.
\end{equation}
If $v$ lies on one of the smooth boundaries, then $X_v$ is a product of three bit-flip operators, and it is a product of four such operators otherwise (see Fig.~\ref{fig:surface_code_disk} for two examples). Similarly, for any $f \in \mathcal{F}$, the plaquette operator $Z_f$ is defined to be:
\begin{equation}\label{eq:face_cst}
    Z_f := \prod_{e \in f} Z_e\,.
\end{equation}
$Z_f$ is a product of three $Z$ operators if it lies on one of the rough boundaries, and is again a product of four such operators otherwise. Vertex operators obviously commute among themselves, and similarly for plaquette operators. Moreover, any pair of plaquette and vertex operators share zero or two edges, hence they also commute. Hence
\begin{equation}
    \mathcal{G} := \big\langle X_v , Z_f \, \vert \,  v \in  \mathcal{V} \,, f \in \mathcal{F} \big\rangle
\end{equation}
defines an Abelian subgroup of the Pauli group $\mathcal{P}_{\vert \cE \vert}$. Furthermore, we can view $\mathcal{G}$ as the image of the \emph{unitary  representation} $U:G \to {\rm Aut}(\cH_{\rm kin})\,, g \mapsto U^g$, defined by 
\begin{equation}\label{eq:U_surface_code}
    \forall (V, F) \in G\,, \qquad U^{(V,F)} := \prod_{v \in V} X_v \prod_{f \in F} Z_f = \prod_{e \in \partial V} X_e \prod_{e \in \delta F} Z_e \,.
\end{equation}
The representation $U$ verifies the technical assumption of Appendix \ref{app:stabilizer_reps}: it is \emph{faithful}\footnote{This claim can be justified by studying the $\mathbb{Z}_2$ homology of the discrete surface.} and $-I \notin {\rm Im}(U)= \mathcal{G}$. Hence $\uU: \mathbb{C}[G] \to \cB(\cH_{\rm kin})$ defines a faithful $*$-algebra representation (see Appendix \ref{app:reps}). In particular, we have 
\begin{equation}
    \mathcal{G} \simeq \mathbb{Z}_2^{\times (2LH+L+H)}\,.
\end{equation}
The pair $(\mathcal{H}_{\rm kin}, \mathcal{G})$ thus defines a stabilizer code with $n = 2LH + L + H + 1$ and $n-k = 2LH+L+H$. It can therefore encode
\begin{equation}
    k = 1
\end{equation}
logical qubit in the perspective-neutral Hilbert space
\begin{equation}
    \cH_{\rm pn} := \big\{ \vert \Psi \rangle \in \cH_{\rm kin} \, \vert \,  \forall v \in \mathcal{V}, \forall f \in \mathcal{F}\,, X_v \vert \Psi \rangle = \vert \Psi \rangle = Z_f \vert \Psi \rangle \big\} \simeq \mathbb{C}^2\,.
\end{equation}

At this point, we would like to emphasize that the gauge-theoretic interpretation of the model we are relying on here is \emph{different from the standard one}. Indeed, in the standard $\mathbb{Z}_2$-lattice gauge theory interpretation, the gauge group is generated by the vertex operators alone, while plaquette operators are interpreted as \emph{flatness constraints} for lattice $\mathbb{Z}_2$-holonomies. Note that those two sets of constraints are mapped into one-another by an \emph{electromagnetic duality} (analogous to the one of Maxwell theory) which exchanges the original lattice for its dual, as well as the roles of $X$ and $Z$. Which of those two sets we interpret as gauge conditions (resp. flatness conditions) is therefore a matter of choice. But for both choices, we can interpret $\cH_{\rm pn}$ as the \emph{flat sector} (or vacuum sector) of a $\mathbb{Z}_2$ lattice gauge theory. In this interpretation, the gauge group of the model is isomorphic to $\mathbb{Z}_2^{\times \vert \cV \vert}$ (or $\mathbb{Z}_2^{\times \vert \cF \vert}$ in the dual picture), while in our alternative interpretation it is isomorphic to $\mathbb{Z}_2^{\times (\vert \cV \vert+ \vert \cF \vert)}$. 

\medskip

Following the analysis of App.~\ref{app:stabilizer_reps}, $\mathcal{\cH_{\rm kin}}$ decomposes as a direct sum of $\vert \hat{G} \vert =  2^{\vert \cV \vert +\vert \cF \vert} = 2^{n-k}$ orthogonal isotypes, each of dimension $2^k = 2$. Each such isotype is labeled by an element of $\chi_{(\hat{V}, \hat{F})} \in \hat{G}$. Physically speaking, $\hat{V}$ (resp.\ $\hat{F}$) is the subset of vertices (resp.\ faces) that support \emph{gauge defects} or \emph{non-trivially charged quasiparticles}, where: we say that a state has a \emph{defect} or \emph{non-zero charge} at some vertex $v$ (resp.\ face $f$) whenever it is an eigenstate of $X_v$ (resp.\ $Z_f$) with eigenvalue $-1$. We then have the orthogonal decomposition:
\begin{equation}\label{eq:isotypes_surface}
    \cH_{\rm kin} = \bigoplus_{(\hat{V}, \hat{F})\,, \hat{V} \subset \mathcal{V}\,, \hat{F} \subset \mathcal{F}} \cH_{\hat{V},\hat{F}}\,, 
\end{equation}
where, for any $(\hat V, \hat F)$,\footnote{Here, $\mathds{1}_A$ denotes the indicator function of a set $A$.} 
\begin{equation}
\cH_{\hat V, \hat F} = \big\{ \vert \Psi \rangle \in \cH_{\rm kin} \, \vert \, \forall v \in \cV\,, X_v \vert \Psi \rangle = (-1)^{\mathds{1}_{\hat V}(v)} \vert \Psi \rangle   \; \mathrm{and}  \; \forall f \in \cF\,, Z_f \vert \Psi \rangle = (-1)^{\mathds{1}_{\hat F}(f) } \vert \Psi \rangle \big\} \simeq \mathbb{C}^2\,.
\end{equation}
In particular, the code subspace is the \emph{no-defect (or vacuum) sector}  
\begin{equation}
    \cH_{\rm pn} = \cH_{\emptyset , \emptyset}\,.
\end{equation}

Following Kitaev \cite{Kitaev:1997wr}, we can represent any error $E\in \mathcal{P}_{\vert \cE \vert}$ as a product of string operators, where string operators are defined as follows. For any open path $t$ in the direct lattice, we define
\begin{equation}
S^Z (t) = \prod_{e \in t} Z_e\,.
\end{equation}
Such an operator acts trivially on the vacuum $\cH_{\emptyset , \emptyset }$ whenever $t$ forms a loop, or when its two endpoints $v_1$ and $v_2$ lie on the same rough boundary; in that case $S^Z(t)$ can be written as a product of plaquette operators $Z_f$. When $v_1$ and $v_2$ lie on distinct rough boundaries (as illustrated in Fig.~\ref{fig:surface_code_disk}), $S^Z(t)$ still commutes with all the constraints so it leaves $\cH_{\rm  pn}$ invariant, but acts non-trivially on it. Such an operator therefore implements a non-trivial logical operation, which we can declare to be the logical $Z$ operator of the single logical qubit stored in $\cH_{\rm pn}$. In all the other cases, $S^Z (t)$ isometrically maps $\cH_{\emptyset , \emptyset}$ to a non-trivially charged sector $\cH_{V, \emptyset}$, where $V$ is the subset of vertices in $\{ v_1 , v_2\}$ that do not lie on a rough boundary. See Fig.~\ref{fig:dressing} for an example of a path $\gamma_v$ generating a single vertex defect $v$ (or ``electric'' charge). Finally, the restriction $\restr{S^Z (t)}{\cH_{\rm pn}}$ only depends on the (relative) homotopy class $[t]$ of the path $t$: indeed, the face constraints ensure that we can deform the path across a face at will.\footnote{In particular, note that an endpoint vertex $v$ that lies on a rough boundary can be translated at will along that same boundary (thanks to the incomplete face constraints). By contrast, endpoint vertices that do not lie on a rough boundary --- which support defects --- are kept fixed by such deformations.} A second set of string operators is introduced to generate face defects (or ``magnetic'' charges in the usual sense of electromagnetic duality, not to be confused with the error duality from Sec.~\ref{sec_errorduality}). For any path $t'$ in the \emph{dual} graph, we define
\begin{equation}
    S^X (t') = \prod_{e \in t'} X_e\,.
\end{equation}
Such an operator enjoys the same properties as $S^Z(t)$, up to a duality transformation that exchanges the roles of vertices and faces, as well as the roles of rough boundaries and smooth boundaries. In particular, if $t'$ connects the two smooth boundaries as illustrated in Fig.~\ref{fig:surface_code_disk}, $S^X (t')$ implements a non-trivial logical operation on $\cH_{\rm pn}$. Since it also anticommutes with $S^Z (t)$, we can take it to represent the logical $X$ operator of the qubit stored in $\cH_{\rm pn}$. Any endpoint of $t'$ which does not lie on a smooth boundary will create a face defect, as illustrated in Fig.~\ref{fig:dressing} (with a single face defect created by a dual string operator supported on $\gamma_f'$). Finally, the restriction $\restr{S^X (t')}{\cH_{\rm pn}}$ only depends on the homotopy class $[t']$ of $t'$, where allowed deformations are determined by the vertex constraints. In particular, an endpoint of $t'$ that lies on a smooth boundary is allowed to slide along this boundary, while endpoints that create face defects are kept fixed.     

Generalizing, any Pauli operator in $\mathcal{P}_{\vert \cE \vert}$ may be written (in general, non-uniquely) in the form 
\begin{equation}\label{eq:errors_string_op}
    E_{T , T'} := \eta \left( \prod_{i=1}^{n_1} S^Z(t_i) \right) \left( \prod_{i=1}^{n_2} S^X(t_i') \right)\,,
\end{equation}
where $\eta \in \{ \pm 1 , \pm i\}$, $T=\{ t_1, \ldots , t_{n_1} \}$ is a set of direct paths, and $T'=\{ t_1', \ldots , t_{n_2}' \}$ is a set of dual paths. For any defect sector $(\hat{V} , \hat{F})\neq (\emptyset , \emptyset) $, we can find a charge-creating error $E_{T,T'} \in C_{\chi_{(\hat{V}, \hat{F})}}$ (see Appendix \ref{app:stabilizer_reps}) which isometrically maps the code subspace $\cH_{\emptyset , \emptyset}$ to the defect subspace $\cH_{\hat{V} , \hat{F}}$. This allows one to create an arbitrary number of vertex and face defects over the vacuum.\footnote{String operators also allow to ``move those excitations around'', and thereby define a notion of spin-statistics. Famously, one finds that electric and magnetic excitations obey (Abelian) anyonic statistics relative to one another (even though they behave like bosons among themselves).} Moreover, the restriction $\restr{E_{T,T'}}{\cH_{\emptyset , \emptyset}}$ only depends on the respective homotopy classes $[T]$ and $[T']$ of the systems of paths $T$ and $T'$ (where, as in the case of a single path, allowed homotopy deformations are induced by the action of plaquette or vertex operators). 

Coming back to the qualitative discussion of Sec.~\ref{ssec_interlude}, the $\mathbb{Z}_2^{|\cV|+|\cF|}$ symmetry of this model is interpreted as a gauge symmetry at the level of the code subspace or vacuum sector $\cH_{\rm pn}= \cH_{\emptyset, \emptyset}$, while errors are interpreted as non-trivially charged excitations over this vacuum. The group $\mathbb{Z}_2^{|\cV|+|\cF|}$ acting non-trivially on such excitations, the gauge symmetry of the vacuum sector $\cH_{\emptyset, \emptyset}$ can be thought of as \emph{emergent}. This interpretation is particularly relevant when $\cH_{\emptyset , \emptyset}$ is realized as the lowest energy eigenspace of some Hamiltonian, in which case the restriction to $\cH_{\emptyset , \emptyset}$ can effectively --- but \emph{imperfectly} --- be implemented by cooling a thermal state to near-zero temperature. This is the scenario that was originally envisioned in \cite{Kitaev:1997wr}. 

\subsubsection{Homological description of the code subspace and two notions of duality}\label{sec:surface_code_homological_descr}

Even though we did not rely on that interpretation so far, let us recall the standard interpretation of this code as a $\mathbb{Z}_2$-homological code. This will allow us to further stress the differences between the two types of dualities that are at play in this model: standard electromagnetic duality on the one hand, which relates the homology of the surface to its cohomology; and the notion of error duality we spelled out in Sec.~\ref{sec_errorduality}, which relates the group $G$ to its dual. For the reader unfamiliar with (co)chain complexes, we note that the material of this subsection  will not be relied upon in our subsequent explicit illustration of error-generated QRFs and dual errors.

To start with, the boundary and coboundary homomorphisms described previously can be extended into full chain and cochain complexes
\begin{equation}
    \partial : \; \cP(\cV) \overset{\partial_1}{\to} \cP(\cE) \overset{\partial_2}{\to} \cP(\cF)\,, \qquad \mathrm{and} \quad  \qquad \delta : \; \cP(\cF) \overset{\delta_1}{\to} \cP(\cE) \overset{\delta_2}{\to} \cP(\cV)\,,
\end{equation}
such that\footnote{As usual in this context, we allow ourselves to omit the index $i$ in $\partial_i$ (resp. $\delta_i$) whenever it is clear from context which map we are referring to. With this convention, the maps $\partial$ and $\delta$ already introduced at the beginning of the present section are in fact $\partial_1$ and $\delta_1$.}
\begin{equation}\label{eq:homology_condition}
    \partial_2 \circ \partial_1 = 0\;(\mathrm{or}\;\partial^2 = 0) \quad \mathrm{and} \quad \delta_2 \circ \delta_1 = 0\;(\mathrm{or}\;\delta^2 = 0)\,.
\end{equation}
This is achieved as follows. For any edge $e\in \cE$, we define: $\partial_2 \{e\}$ as the set of (at most two) faces $e$ belongs to; and $\delta_2 \{e\}$ as the set of (at most two) vertices $e$ is incident to. The maps $\partial_2: \cP(\cE) \to \cP(\cF)$ and $\delta_2: \cP(\cE) \to \cP(\cV)$ are then the (unique) group homomorphisms respecting those conditions. 

Next, we can introduce the group of \emph{cycles} as the subgroup $\cZ:= \ker \partial_2 \leq \cP(\cE)$ and the group of \emph{boundaries} as $\cB:= {\rm Im}(\partial_1)$. Owing to \eqref{eq:homology_condition}, $\cB$ is a (normal) subgroup of $\cZ$ so we can define the \emph{homology group} by the quotient
\begin{equation}
    \faktor{\cZ}{\cB} \simeq \mathbb{Z}_2\,.
\end{equation}
Likewise, we can introduce the groups of \emph{cocycles} $\cZ^* := \ker{\delta_2}$ and \emph{coboundaries} $\cB^* := {\rm Im}(\delta_1)$, and construct the \emph{cohomology group}
\begin{equation}
    \faktor{\cZ^*}{\cB^*} \simeq \mathbb{Z}_2\,.
\end{equation}

To interpret the code in homological terms, it is convenient to introduce the following notation: for any subset of edges $E \subset \cE$, we denote by $\vert E\rangle \in \cH_{\rm kin}$ the computational basis state for which any qubit with support in $E$ is in state $\vert 1 \rangle$ while any other qubit is in state $\vert 0 \rangle$; in other words:
\begin{equation}
    \vert E \rangle := \left( \bigotimes_{e \notin E} \vert 0 \rangle \right) \otimes \left( \bigotimes_{e \in E} \vert 1 \rangle \right)\,.
\end{equation}
To construct our homological basis of $\cH_{\rm pn}$, we start out from the state $\vert \emptyset \rangle = \vert 0 \rangle^{\otimes \vert \cE\vert}$, which is left invariant by any plaquette operator $Z_f$, and average it over the subgroup of the stabilizer group generated by vertex operators. This defines a \emph{logical zero} state
\begin{equation}
    \vert \bar 0\rangle := \frac{1}{\sqrt{2^{\vert \cV\vert }}} \sum_{V \subset \cV} \left( \prod_{v \in V} X_v\right) \vert \emptyset \rangle = \frac{1}{\sqrt{2^{\vert \cV\vert }}} \sum_{V \subset \cV} \vert \partial V \rangle = \frac{1}{\sqrt{2^{\vert \cV\vert }}} \sum_{E \in \cB} \vert E \rangle\,.
\end{equation}
Acting with the logical $X$ operator represented by e.g. $S^X (t')$, yields the \emph{logical one} state: 
\begin{equation}
    \vert \bar 1\rangle := S^X(t') \vert \bar 0 \rangle = \frac{1}{\sqrt{2^{\vert \cV\vert }}} \sum_{E \in \cB} \vert E \Delta t' \rangle = \frac{1}{\sqrt{2^{\vert \cV\vert }}} \sum_{E \in \cZ , \, E \notin \cB} \vert E \rangle \,.
\end{equation}
The last equality holds since, for any $E \in \cB$, $E\Delta t'$ is in the same homology class as $t'$, and: $\partial_2 t' = \emptyset$ (hence $E\Delta t'$ is a cycle); $t'$, as a path that connects two distinct boundaries, is not a boundary (hence $E \Delta t'$ is not a boundary). We can therefore interpret $\vert \bar 0 \rangle$ (resp. $\vert \bar 1\rangle$) as an average over the trivial (resp. nontrivial) homology class of $\cZ$. In particular, the two-dimensional nature of $\cH_{\rm pn}$ can be traced back to the fact that the homology group $\faktor{\cZ}{\cB}$ of the lattice has order $2$.

In an analogous way, one could construct a basis of $\cH_{\rm pn}$ that is well adapted to the cohomology of the surface. This is most simply obtained by exchanging, in the previous paragraph: the boundary operator $\partial$ for the coboundary operator $\delta$, the plaquette constraints $\{ Z_f\}_{f \in \cF}$ for the vertex constraints $\{ X_v\}_{v \in \cV}$, the qubit basis $(\vert 0 \rangle , \vert 1 \rangle)$ (which diagonalizes $Z$ operators) for the basis $(\vert +\rangle , \vert - \rangle)$ (which diagonalizes $X$ operators), a nontrivial dual string operator such as $S^X(t')$ for a non-trivial string operator such as $S^Z(t)$ (see Figure \ref{fig:surface_code_disk}). This operation is most naturally interpreted as the action of the standard electromagnetic duality that exchanges the roles of the lattice and dual lattice in this system. From an algebraic point of view, it can be understood as exchanging the roles of $\mathbb{Z}_2$-homology and $\mathbb{Z}_2$-cohomology: 
\begin{equation}
    \faktor{\cZ}{\cB} \quad \underset{\mathrm{standard}\;\mathrm{EM}\;\mathrm{duality}}{\longleftrightarrow} \quad  \faktor{\cZ^*}{\cB^*} \,.
\end{equation}
Furthermore, $\faktor{\cZ^*}{\cB^*}$ can naturally be understood as the Pontryagin dual of $\faktor{\cZ}{\cB}$ with the following canonical pairing: for any $[t'] \in \faktor{\cZ}{\cB}$ (where $t'$ denotes a path in the dual lattice and $[t']$ its equivalent class) and $[t] \in \faktor{\cZ^*}{\cB^*}$ (where $t$ denotes a path in the direct lattice and $[t]$ its equivalent class), one sets $\langle [t],[t']\rangle:= (-1)^{|t\cap t'|}$.\footnote{The quantity $|t\cap t'| \,{\rm mod}\, 2$ is independent of the choice of representatives $t,t'$, and defines the \emph{intersection number} of the classes $[t]$, $[t']$.} Electromagnetic duality can in this sense be understood as Pontryagin duality for the pair of dual groups $\left( \faktor{\cZ}{\cB}, \faktor{\cZ^*}{\cB^*}\right)$

By contrast, the error duality spelled out in Sec.~\ref{sec_errorduality} relates the full gauge group $G= \cP(\cV) \times \cP(\cF)$ and its dual $\hat{G}$. We can use the maps $\partial_1$ and $\delta_1$, which are both injective in the present model, to identify $G$ with $\cB \times \cB^*$. Hence, in the language of (co)homology, we have:
\begin{equation}
    G=\cB \times \cB^* \quad \underset{\mathrm{error}\;\mathrm{duality}}{\longleftrightarrow} \quad  \hat{G}=\widehat{\cB \times \cB^*}\,. 
\end{equation}
This makes it clear that, even though the standard electromagnetic duality of this model and the general notion of error duality introduced in Sec.~\ref{sec_errorduality} can both be understood as instances of Pontryagin duality, they are very much distinct.

\subsubsection{Refactorization associated to a complete set of frame fields}

Let $(\hat{V}, \hat{F}) \in \cV \times \cF \setminus \{(\emptyset , \emptyset )\} $. A frame for the defect sector $\cH_{\hat{V} , \hat{F}}$ is an isometry $R_{\hat{V} , \hat{F}}: \cH_{\emptyset , \emptyset}\to \cH_{\hat{V} , \hat{F}}$ induced by an error $E_{T_{\hat V} , T_{\hat F}'}$ (of the form \eqref{eq:errors_string_op} with, for definiteness, $\eta=1$), where $T_{\hat V}$ (resp.\ $T_{\hat F}'$) is a system of paths with endpoints $\hat V$ (resp.\ $\hat F$). Since $R_{\hat V , \hat F}$ only depends on the homotopy classes $[T_{\hat V}]$ and $[T_{\hat F}']$, we reach the interesting conclusion that \emph{a frame for $\cH_{\hat V, \hat F}$ (in the sense of Definition \ref{def:complete_frame}) is determined by a choice of homotopy class for a system of paths and dual paths connecting the defects $(\hat V , \hat F)$} (among themselves or to one of the boundaries). Our observations together with Proposition~\ref{propo:frame-from-errors} tell us that a complete frame $R= \{ R_{\hat V , \hat F} = \restr{E_{T_{\hat V} , T_{\hat F}'}}{\cH_{\rm pn}} \,, (\hat V , \hat F)\neq (\emptyset , \emptyset) \}$ will correct any non-trivial error of the form \eqref{eq:errors_string_op} that is homotopically equivalent to one of the individual frames, i.e.\ such that $([T], [T'])= ([T_{\hat V}], [T_{\hat F}'])$ for some $(\hat V , \hat F)\neq (\emptyset , \emptyset)$.

Given a complete set of frame fields $R= \{ R_{ \hat V , \hat F}  \,, (\hat V , \hat F)\neq (\emptyset , \emptyset) \}$, we can also define the tensor product $\otimes_R$ provided by Theorem~\ref{thm:nonlocal_fact} (or equivalently Theorem~\ref{thm_errorQRF}), which makes the identification: 
\begin{equation}
    \cH_{\rm kin} = \cH_{\emptyset , \emptyset } \otimes_R \cH_{\rm gauge}\,,
\end{equation}
where $\cH_{\rm gauge}$ is spanned by a canonical orthonormal basis labeled by characters 
\begin{equation}\label{eq:surface_code_square_factorization}
    \cH_{\rm gauge} = \Span \big\{ \vert \chi_{(\hat V , \hat F)} \rangle \, \vert \, \hat V \subset \mathcal{V} \,, \hat F \subset \mathcal{F} \big\}\,.
\end{equation}
In terms of the homological encoding introduced in the previous subsection, we have, for any $(\hat{V}, \hat{F})\in \hat{G}$:
\begin{align}
    \vert \bar 0\rangle \otimes_R \vert \chi_{(\hat{V}, \hat{F})}\rangle &:= \left( \prod_{t \in T_{\hat{V}}} S^Z(t)\right) \left( \prod_{t' \in T'_{\hat{F}}} S^X(t') \right) \vert \bar 0 \rangle = \frac{1}{\sqrt{2^{|\cV|}}} \sum_{E\in \cB} \left( \prod_{t \in T_{\hat{V}}} S^Z(t)\right) \vert E \Delta \left( \underset{t' \in T'_{\hat{F}}}{\Delta}  t' \right)\rangle \\
    &= \frac{1}{\sqrt{2^{|\cV|}}} \sum_{E\in \cB} \prod_{t \in T_{\hat{V}}}(-1)^{\vert t \cap E \vert} \prod_{t' \in T'_{\hat{F}}} (-1)^{|t \cap t'|} \vert E \Delta \left( \underset{t' \in T_{\hat{F}}}{\Delta}  t' \right)\rangle
\end{align}
and  
\begin{align}
    \vert \bar 1\rangle \otimes_R \vert \chi_{(\hat{V}, \hat{F})}\rangle &:= \left( \prod_{t \in T_{\hat{V}}} S^Z(t)\right) \left( \prod_{t' \in T'_{\hat{F}}} S^X(t') \right) \vert \bar 1 \rangle  \\
    &= \frac{1}{\sqrt{2^{|\cV|}}} \sum_{E\in \cZ ,\, E \notin \cB} \prod_{t \in T_{\hat{V}}}(-1)^{\vert t \cap E \vert} \prod_{t' \in T'_{\hat{F}}} (-1)^{|t \cap t'|} \vert E \Delta \left( \underset{t' \in T_{\hat{F}}}{\Delta}  t' \right)\rangle\,.
\end{align}
The coefficients in those expressions are function of the intersection numbers of paths and dual open paths anchored at $\hat{V}$ and $\hat{F}$, which can be understood in terms of the homology of a punctured surface (i.e. the original surface with any vertex from $\hat{V}$ and any face from $\hat{F}$ removed).

Relative to the decomposition \eqref{eq:surface_code_square_factorization}, the action of the gauge group $\mathcal{G}$ can be encoded in a new representation 
\begin{equation}
    \mathcal{G}_R = \langle X_v^R , Z_f^R \,\vert\, v \in \mathcal{V}\,,  f \in \mathcal{F} \rangle 
\end{equation}
of $\mathbb{Z}_2^{\times (\vert \cV \vert+\vert \cF \vert)}$ on $\mathcal{H}_{\rm gauge}$, defined as 
\begin{equation}
    X_v^R \vert \chi_{(\hat V , \hat F)} \rangle = (-1)^{\mathds{1}_{\hat V}(v)} \vert \chi_{(\hat V , \hat F)} \rangle\,, \qquad Z_f^R \vert \chi_{(\hat V , \hat F)} \rangle = (-1)^{\mathds{1}_{\hat F}(f)} \vert \chi_{(\hat V , \hat F)} \rangle\,.
\end{equation}
In particular, it is clear that those new plaquette and vertex operators commute and square to the identity, as they should. For any $v \in \mathcal{V}$ and $f \in \mathcal{F}$, we then have:
\begin{equation}
    X_v = {\rm id}_{\cH_{\emptyset , \emptyset}} \otimes_R X_v^R\,, \qquad Z_f = {\rm id}_{\cH_{\emptyset , \emptyset}} \otimes_R Z_f^R\,. 
\end{equation} 
Any operator in $\mathcal{G}_R$ can be written uniquely as a product of the form 
\begin{equation}\label{eq:binary_encoding}
U_R^{(V,F)} = \prod_{v \in V} X_v^R \prod_{f \in F} Z_{f}^R \,, 
\end{equation}
where $(V,F) \in G$. The map $U_R: G \to {\rm Aut}(\cH_{\rm kin}), g \mapsto U_R^g$ then defines a unitary representation isomorphic to the regular representation. Its associated group basis (see \eqref{def:group_basis}) can be taken to be
\begin{equation}
    \forall (V,F) \in G\,, \qquad \vert (V,F) \rangle :=  
    \frac{1}{\sqrt{2^{\vert \cV \vert +\vert \cF \vert}}} \sum_{V \subset \mathcal{V}, F \subset \mathcal{F}} (-1)^{\vert \hat{V}\cap V \vert + \vert \hat{F}\cap F \vert }  \, \vert   \chi_{(\hat{V} , \hat{F})} \rangle  \,.
\end{equation}
With this definition at hand, we then have $U_R^g \vert h \rangle = \vert gh \rangle$ and $\langle g \vert h \rangle = \delta_{g, h}$ for any $g,h \in G$, hence $U_R$ is indeed isomorphic to the regular representation of the gauge group $G$ and $R$ furnishes an ideal QRF.

Finally, to conclude our illustration of Theorem~\ref{thm:nonlocal_fact}, we observe that an error $E_{T, T'} \in \mathcal{P}_{\vert \cE \vert}$ is going to be correctable by the frame $R$ if, for any code state $\vert \Psi_{\rm code} \rangle \in \cH_{\rm pn}$ (cf.~Eq.~\eqref{eq:def_tensor1}):
\begin{equation}
    E_{T,T'} \vert \Psi_{\rm code} \rangle \equiv E_{T,T'} \left( \vert \Psi_{\rm code} \rangle \otimes_R \vert \emptyset , \emptyset \rangle \right) = \eta  \vert \Psi_{\rm code} \rangle \otimes_R \vert \chi_{(\hat V , \hat F)} \rangle\,,
\end{equation}
for some phase $\eta$ and defect labels $(\hat V , \hat F )$. In other words, correctable errors are those that only affect the state of the defect Hilbert space $\mathcal{H}_{\rm gauge}$ defined by the frame. The frame degrees of freedom are in this sense redundant. However, as already observed repeatedly in previous sections, a different choice of frame will in general lead to an inequivalent splitting between logical and redundant degrees of freedom. 

From a practical QECC point of view, the code subspace $\mathcal{H}_{\emptyset , \emptyset}$ may be realized as the degenerate ground state of a many-body 2d material. In such scenario, one is interested in correcting the most probable errors. For example, in an independent and identically distributed (i.i.d.) error model with low error probability, it is more likely to have small collections of defects rather than large swathes. So, constructing a complete reference frame may be an overkill for practical applications, as only a small number of defect types are expected to occur with significantly high probability. For instance, if one only cares about correcting errors that bring a code state into the single-defect Hilbert space (which is generated by defects of the form $(V, F)= (\{ v\}, \emptyset)$ or $(V, F) = (\emptyset , \{ f\})$), the notion of frame can be seen to coincide with a notion of \emph{dressing field} which should be familiar from standard gauge theory. Indeed, in these sectors of the Hilbert space, choosing a frame amounts to choosing a system of paths (resp.\ dual paths) connecting any vertex (resp.\ face) to one of the rough boundaries (resp.\ smooth boundaries). This is a discrete realization of a  frame field for the pertinent gauge group defined by Wilson lines anchored at a boundary where gauge transformations act trivially \cite{Carrozza:2021gju,Araujo-Regado:2024dpr, Araujo-Regado:2025ejs}. This choice can be specified by two maps: $\cV \ni v \mapsto \gamma_v$ and $\cF \ni f \mapsto \gamma_f'$. We then define $R_{\{ v\}, \emptyset} \equiv S^Z(\gamma_v)$ and $R_{ \emptyset , \{ f\}} \equiv S^X(\gamma_f')$. The collection of such operators provides us with a kind of dressing field for a single-particle defect which may be created anywhere on the lattice or its dual. For instance, suppose that a single defect is created at vertex $v$ by some error $E= S^Z(t_v)$, where $t_v$ is a simple path connecting $v$ to one of the rough boundaries. $E$ is not gauge-invariant but $R_{\{v\}, \emptyset} E = S^Z(t_v \cup \gamma_v)$ is, since $t_v \cup \gamma_v$ is a path hooked at both of its ends to one of the rough boundaries. If $t_v \cup \gamma_v$ is connected to the same boundary at both ends, then $\restr{S^Z(t_v \cup \gamma_v)}{\cH_{\emptyset , \emptyset}} = {\rm id}_{\cH_{\emptyset, \emptyset}}$ which means that $E$ is appropriately corrected by $R_{\{v\}, \emptyset}$. By contrast, if $t_v \cup \gamma_v$ connects the top and the bottom boundaries, $R_{\{v\}, \emptyset} E$ acts like the logical $Z$ operator on $\cH_{\emptyset , \emptyset}$, which means that $E$ is not appropriately corrected by $R_{\{v\}, \emptyset}$. 

This construction is illustrated in Fig.~\ref{fig:dressing}. It only partially fixes the tensor factorization $\cH_{\rm kin}= \cH_{\rm {pn}}\otimes_R \cH_{\rm gauge}$, but again, that might be all one needs in a practical situation.
Nonetheless, having a complete set of frame fields in hand opens the door to other partial fixings of the logical-gauge tensor factorization that could be adapted to a given noise model with biased or correlated noise.

\begin{figure}
    \centering
    \includegraphics[scale=0.7]{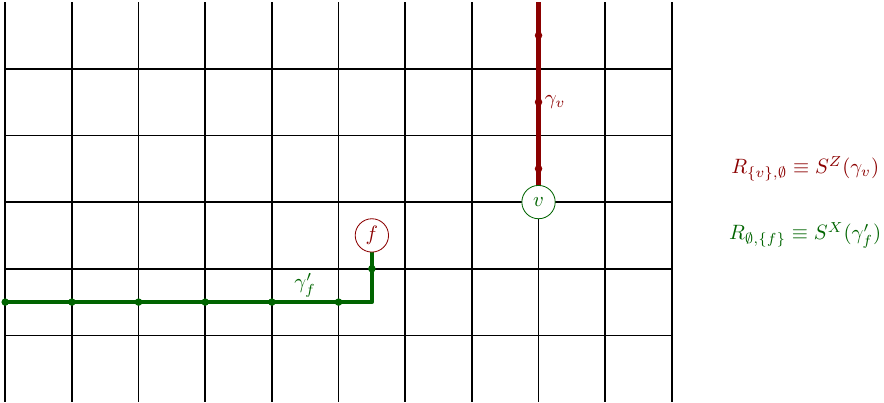}
    \caption{In the single-defect sector, the choice of reference frame is specified by a choice of path $\gamma_v$ for every vertex $v$, and a choice of dual path $\gamma_f'$ for every face $f$. An error creating a defect at location $v$ (resp.\ $f$) is dressed by $R_{\{v\}, \emptyset} = S^Z (\gamma_v)$ (resp.\ $R_{\emptyset , \{f\}} = S^X (\gamma_f')$) to form a gauge-invariant (or, equivalently, a logical) operator.}
    \label{fig:dressing}
\end{figure}

\subsubsection{Dual representation associated to a complete set of frame fields, and dual errors}\label{subsec:duality_square}

Next, let us illustrate the error duality of Sec.~\ref{sec_errorduality} in surface codes with boundary.
Given the faithful representation $U$ introduced in \eqref{eq:U_surface_code}, we can look for a dual representation $\hat{U}$, as defined in Definition \ref{def:dual_reps_faithful}. The dual representation $\hat{U}$ is far from unique, but let us provide two explicit construction methods. We first construct a dual representation adapted to the frame $R = \{ R_{\hat{V}, \hat{F}} \,, \, (\hat{V}, \hat{F}) \in \hat{G}\}$ considered in the previous section (with the additional convention $R_{\emptyset , \emptyset} := I$), and discuss how dual errors act in this case. The resulting representation will not take value in the Pauli group; for completeness, we will construct another dual representation that does take value in the Pauli group in the next section. 

Following \eqref{eq:dual_rep_associated_to_frame}, we can define the dual representation associated to $R$ as follows: for any $(\hat{V}, \hat{F}) \in \cP(\cV)\times \cP (\cF)$, we set
\begin{equation}\label{eq:surface_code_nonlocal_dual_rep}
\hat{U}_R^{\chi_{(\hat{V}, \hat{F})}} := \sum_{\hat{V}'\subset \cV ,\, \hat{F}'\subset \cF} E_{T_{\hat{V} \Delta \hat{V}'}, T'_{\hat{F} \Delta \hat{F}'}} E_{T_{ \hat{V}'}, T'_{\hat{F}'}} P_{\chi_{(\hat{V}', \hat{F}')}}\,.
\end{equation}
For each $(\hat{V}', \hat{F}') \in \cP(\cV) \times \cP(\cF)$, the operator $E_{T_{\hat{V} \Delta \hat{V}'}, T'_{\hat{F} \Delta \hat{F}'}} E_{T_{ \hat{V}'}, T'_{\hat{F}'}}$ appearing in this expression is a product of string and dual string operators anchored at $\hat{V}$ and $\hat{F}$, supplemented by a (possibly empty) product of loop and dual loop operators. It is in particular an element of the Pauli group. However, this operator is in general dependent on the defect sector $(\hat{V}', \hat{F}')$ one is considering, so that $\hat{U}_R^{\chi_{(\hat{V}, \hat{F})}}$ is in general \emph{not} itself an element of the Pauli group.

Continuing with our illustration of error duality, for every group element $(V, F) \in \cP(V) \times \cP(F)$, the isotype projector defined in \eqref{eq:dual_proj_exact} takes the form:
\begin{equation}
    \hat{P}^R_{(V,F)} := \frac{1}{2^{|\cV|+|\cF|}} \sum_{\hat{V} \subset \cV,\, \hat{F} \subset \cF} \chi_{(\hat{V},\hat{F})}((V,F)) \, \hat{U}_R^{\chi_{(\hat{V}, \hat{F})}} = \frac{1}{2^{|\cV|+|\cF|}} \sum_{\hat{V} \subset \cV,\, \hat{F} \subset \cF} (-1)^{\vert \hat{V}\cap V |+ |\hat{F}\cap F|} \, \hat{U}_R^{\chi_{(\hat{V}, \hat{F})}}\,.
\end{equation}
Following Proposition \ref{prop:gauge-fixing_errors_frame}, the dual representation $\hat{U}^R$ is associated to a maximal set of gauge-fixing errors $\{ \hat{E}_{(V,F)}\}_{(V,F)\in G}$ whose restrictions to the code subspace obey
\begin{align}\label{eq:surface_code_nonlocal_dual_errors}
    \forall V \subset \cV\,, \forall F \subset \cF\,,\qquad  \hat{E}_{(V,F)} \Pi_{\rm pn} =& \sqrt{2^{|\cV|+|\cF|}} \hat{P}^R_{(V,F)} \Pi_{\rm pn} \\
    &= \frac{1}{\sqrt{2^{|\cV|+|\cF|}}} \sum_{\hat{V}\subset \cV,\, \hat{F}\subset \cF} (-1)^{|\hat{V}\cap V|+|\hat{F}\cap F|} E_{T_{\hat{V}}, T'_{\hat{F}}} \Pi_{\rm pn}\,. \nonumber
\end{align}
Such errors are entirely determined by the homotopy classes of the systems of paths and dual paths $( T_{\hat{V}}, T'_{\hat{F}})_{(\hat{V}, \hat{F})\in \hat{G}}$. The non-trivial sum entering their definition (which can be interpreted as a kind of Fourier transform) implies that they cannot in general be represented by operators from the Pauli group.

\subsubsection{Example of dual representation taking value in the Pauli group}\label{sec:surface_code_duality_Pauli}

For the $3$-qubit code, it was show in example \ref{3qubit:errorduality1} that it is possible to construct dual representations that take value in the Pauli group, and are in particular ``local'' relative to the kinematical tensor product structure. The resulting gauge-fixing errors are also local in the sense that they effectively only act on the first two qubits (see example \ref{3qubit:errorduality2}). By contrast, the dual representation \eqref{eq:surface_code_nonlocal_dual_rep} does not take value in the Pauli group and the dual errors \eqref{eq:surface_code_nonlocal_dual_errors} act in a nonlocal fashion on the lattice. It is thus natural to wonder whether dual representations that are local in this sense also exist for the surface code. The answer is yes, and the purpose of this subsection is to provide one such example.

To this effect, let us first introduce a \emph{spanning forest}\footnote{A \emph{forest} is a subgraph without loop, and its connected components are \emph{trees}. We will say that it is \emph{spanning} if it visits every $v \in \cV$.} $T \subset \mathcal{E}$ in the direct lattice and a \emph{dual spanning forest $T' \subset \mathcal{E}$}\footnote{$T'$ is therefore a subgraph of the dual network without loop, and its connected components are dual trees. We will say that it is \emph{spanning} if it visits every $f \in \cF$. Remember also that the set of dual edges can be canonically identified with the set of edges $\cE$, which allows us to view the dual subgraph $T'$ as a subset of $\cE$.} in the dual lattice such that: a) $T \cap T' = \emptyset$; b) for any vertex $v \in \cV$, there is a unique simple path $\gamma_v$ in $T$ which connects it to one of the rough boundaries; c) for any face $f \in \cF$, there is a unique simple dual path $\gamma_f'$ in $T'$ which connects it to one of the smooth boundaries. In particular, each tree in $T$ (resp.\ dual tree in $T'$) has a unique edge lying on the rough (resp.\ smooth) boundary, which we call its \emph{root-edge}. It is possible to show that the condition imposed on $(T,T')$ can always be satisfied; we will postpone a complete justification of this fact to footnote \ref{foot:forests} in Sec.~\ref{subsec:example_dual_rep_toric_code}, where we will have to deal with a similar but slightly simpler combinatorial problem, but see Figure~\ref{fig:surface_code-tree} for an example of such a pair $(T,T')$. Given any such pair $(T,T')$, we can then define:
\begin{equation}\label{eq:dual_rep_square}
    \forall \hat{V} \subset \cV\,, \forall \hat{F} \subset \cF \,, \qquad \hat{U}^{\chi_{(\hat V, \hat F)}} :=  \prod_{v \in \hat V} S^Z (\gamma_v)  \prod_{f \in \hat F} S^X (\gamma_f') = \left( \prod_{v \in \hat V} \prod_{e \in \gamma_v} Z_e \right) \left( \prod_{f \in \hat F} \prod_{e \in \gamma_f'} X_e \right) \,,
\end{equation}
where $\gamma_v$ and $\gamma'_f$ are the unique simple paths defined above. For any $(\hat V , \hat{F} ) \subset \cV \times \cF \setminus \{ (\emptyset , \emptyset) \}$, $\hat{U}^{\chi_{(\hat V, \hat F)}}$ is a product of string operators which create defects (resp.\ dual defects) over the vacuum at any $v \in \hat V$ (resp.\ $f \in \hat F$). Furthermore, we have
\begin{align}\label{eq:proof_duality_square}
    U^{(V,F)} \hat{U}^{\chi_{(\hat V, \hat F)}} &= \prod_{v \in V} X_v \prod_{f \in F} Z_f \prod_{\hat{v}\in \hat{V}} S^Z (\gamma_{\hat{v}}) \prod_{\hat{f}\in \hat{F}} S^X (\gamma_{\hat{f}}') \\ 
    &= \left( \prod_{v \in V} X_v  \prod_{\hat{v}\in \hat{V}} S^Z (\gamma_{\hat{v}}) \right) \left( \prod_{f \in F} Z_f \prod_{\hat{f}\in \hat{F}} S^X (\gamma_{\hat{f}}') \right)  \nonumber \\ 
    &= \left( (-1)^{\vert \hat{V} \cap V \vert}  \prod_{\hat{v}\in \hat{V}} S^Z (\gamma_{\hat{v}}) \prod_{v \in V} X_v \right) \left( (-1)^{\vert \hat{F} \cap F \vert }  \prod_{\hat{f}\in \hat{F}} S^X (\gamma_{\hat{f}}') \prod_{f \in F} Z_f \right)  \nonumber \\
    &= (-1)^{\vert \hat{V} \cap V \vert + \vert \hat{V} \cap V \vert} \prod_{\hat{v}\in \hat{V}} S^Z (\gamma_{\hat{v}})  \prod_{\hat{f}\in \hat{F}} S^X (\gamma_{\hat{f}}') \prod_{v \in V} X_v \prod_{f \in F} Z_f  \nonumber \\
    &= \chi_{(\hat V, \hat F)}((V,F)) \,  \hat{U}^{\chi_{(\hat V, \hat F)}} U^{(V,F)} \nonumber 
\end{align}
for any $(V, F) \in G$ and $\chi_{(\hat V , \hat F)} \in \hat{G}$. Therefore, the defining relation of duality is verified (see Definition \ref{def:dual_reps_faithful}). Moreover, the properties of $(T,T')$ ensure that $\hat{U}$ does define a (faithful) representation. In particular, the fact that $T \cap T' = \emptyset$ is crucial to guarantee that we get an ordinary representation, and not merely a projective representation. Indeed, for any $(\hat{V_1}, \hat{F_1}), (\hat{V_2}, \hat{F_2})  \in \cV \times \cF$, we find:
\begin{align}\label{eq:proof_rep_square} 
 \hat{U}^{\chi_{(\hat{V_1}, \hat{F_1})}} \hat{U}^{\chi_{(\hat{V_2}, \hat{F_2})}} &= \prod_{v_1 \in \hat{V_1}} S^Z (\gamma_{v_1})  \prod_{f_1 \in \hat{F_1}} S^X (\gamma_{f_1}') \prod_{v_2 \in \hat{V_2}} S^Z (\gamma_{v_2})  \prod_{f_2 \in \hat{F_2}} S^X (\gamma_{f_2}')  \\ 
 & = \prod_{v_1 \in \hat{V_1}} S^Z (\gamma_{v_1})  \prod_{v_2 \in \hat{V_2}} S^Z (\gamma_{v_2})  \prod_{f_1 \in \hat{F_1}} S^X (\gamma_{f_1}')  \prod_{f_2 \in \hat{F_2}} S^X (\gamma_{f_2}') \nonumber  \\
 & = \prod_{v \in \hat{V_1} \Delta \hat{V_2}}  S^Z (\gamma_{v})  \prod_{f \in \hat{F_1}\Delta \hat{F_2}} S^X (\gamma_{f}')  = \hat{U}^{\chi_{(\hat{V_1}\Delta \hat{V_2}, \hat{F_1}\Delta \hat{F_2})}} = \hat{U}^{\chi_{(\hat{V_1}, \hat{F_1})}\chi_{(\hat{V_2}, \hat{F_2})}}\,,    \nonumber 
\end{align}
where $T \cap T' = \emptyset$ was used to commute direct and dual string operators in the second line, and the fact that any string operator squares to the identity was used in the third. As a result, $\hat{U}$ is a representation of the Pontryagin dual $\hat{G}$, and it is dual to $U$ in the sense of Definition \ref{def:dual_reps_faithful}. 

\begin{figure}
    \centering
    \includegraphics[scale=.7]{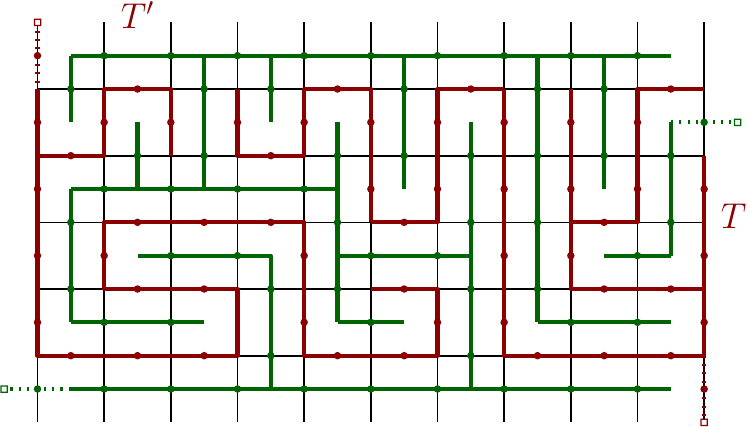}
    \caption{Example of a spanning forest $T$ and a dual spanning forest $T'$ (we have defined $T'$ as a subset of $\cE$, but it is here represented as a subset of dual edges to facilitate visualization). Here $T$ and $T'$ have two connected components each. For any $v \in \cV$ (resp.\ $f\in \cF$), there is a unique simple path $\gamma_v \in T$ (resp.\ dual path $\gamma_f' \in T'$) which connects $v$ (resp.\ $f$) to one of the rough (resp.\ smooth) boundaries. For convenience, the root-edges of $T$ and $T'$ are represented as dotted lines. Note that the edge in the top-right corner does not belong to either $T$ or $T'$.}
    \label{fig:surface_code-tree}
\end{figure}

To conclude, we note that the dual representation we just constructed has support on all the qubits of the lattice save one (i.e. the one located at the top-right corner edge of Fig.~\ref{fig:surface_code-tree}). In the spirit of what was done in example \ref{3qubit:errorduality1} for the $3$-qubit code, we can therefore interpret that particular qubit as a logical subsystem $S$, while the qubits with support on the forests $(T,T')$ constitute a local frame $R$ for that subsystem. In other words, we have obtained a local tensor factorization of the kinematical Hilbert space of the type considered in Sec.~\ref{Sec:genstabilizerQRF}. The dual errors $\{\hat{E}_{(V,F)}\}_{(V,F) \in G}$ associated to the dual representation $\hat{U}$ (which we will not write down explicitly, even though it could be done) act on the code subspace in the same way as the rescaled projectors $\{\sqrt{2^{|\cV|+|\cF|}}\hat{P}_{(V,F)} \}_{(V,F)\in G}$, which can be expressed via a Fourier transform in terms of the representation $\hat{U}$. In particular, those errors act locally on the forests $(T,T')$; each can be interpreted as projecting down on a gauge-fixed description of the code, in which the state of the frame qubits from $(T,T')$ has been fixed to a particular orientation state.

\subsection{$\mathbb{Z}_2$ surface code on a closed orientable surface of arbitrary genus}

\subsubsection{Hilbert space and charge sectors}

Consider now a surface code on an orientable surface of genus $g\geq 1$, defined as the flat sector of a $\mathbb{Z}_2$ gauge theory \cite{Kitaev:1997wr}; the case $g=1$ constitutes the well-known toric code.  That is, keeping the notations of the previous section, we now have a closed combinatorial map with $|\cF|$ faces (or plaquettes), $|\cE|$ edges and $|\cV|$ vertices with
\begin{equation}
2 - 2g = |\cF| - |\cE| + |\cV|\,.    
\end{equation}
We impose no restriction on the degree of the vertices or faces; in particular, the underlying graph need not be regular. The kinematical Hilbert space is again
\begin{equation}
\cH_{\rm kin} = \bigotimes_{e \in \mathcal{E}} \mathbb{C}^2 = (\mathbb{C}^2)^{\otimes |\cE |} \,,
\end{equation}
and we select a canonical basis in each of the factors to define a Pauli group $\mathcal{P}_{|\cE|}$. Vertex and plaquette operators are defined as in Eqs.~\eqref{eq:vertex_cst} and \eqref{eq:face_cst}.\footnote{No incomplete plaquette or vertex operator is necessary since the combinatorial map has no boundary.} The constraints all commute (because any pair of plaquette and vertex operators share an even number of edges), generating some Abelian gauge group $\mathcal{G}$. The code subspace $\mathcal{H}_{\rm pn}$ is again defined as the gauge-invariant subspace of $\cH_{\rm kin}$. In contrast to the previous code, the plaquette and vertex operators are not independent: there are two non-trivial relations (which result from the fact that any edge is the boundary of exactly two faces, and similarly for the dual lattice)
\begin{equation}\label{eq:relations}
    \prod_{v\in \mathcal{V}} X_v = I\,, \qquad \prod_{f\in \mathcal{F}} Z_f = I\,. 
\end{equation}
As a result, $\mathcal{G}$ defines a unitary representation of $\mathbb{Z}_2^{\times (\vert \cV \vert +\vert \cF \vert)}$, but it is \emph{not faithful}. We cannot directly apply Lemma~\ref{lemma:dim_Hj} to decompose $\cH_{\rm kin}$ into isotypes, but all is not lost as following the same method will lead to the desired result. 

To make this precise, let us rely on the same definition of abstract gauge group $G$ as in \eqref{eq:group_surface} and \eqref{eq:surface_group_law}. We then introduce a unitary representation $U: G \to {\rm Aut}(\cH_{\rm kin})$ as defined in \eqref{eq:U_surface_code}.  By construction, we have ${\rm Im}(U)= \mathcal{G}$, but $U$ is not faithful since:
\begin{equation}
U^{-1}(\{I\}) = \{ (\emptyset, \emptyset) , (\cV, \emptyset), (\emptyset, \cF), (\cV, \cF) \} \simeq \mathbb{Z}_2 \times \mathbb{Z}_2\,. 
\end{equation}
Decomposing $U$ into isotypes leads to the direct sum
\begin{equation}
    \cH_{\rm kin} = \bigoplus_{(\hat{V}, \hat{F})\,, \hat{V} \subset \mathcal{V}\,, \hat{F} \subset \mathcal{F}} \cH_{\hat{V},\hat{F}}\,, 
\end{equation}
where, as in the previous subsection, we have labeled isotypes of $\cH_{\rm kin}$ by the locus of vertex and face defects. Let us determine the respective dimensions of these isotypes. Let $(\hat{V} , \hat{F}) \in \cV \times \cF$ be a defect configuration. Applying the same character theory argument as in Lemma~\ref{lemma:dim_Hj}, we find that 
\begin{align}
    \dim (\cH_{\hat V , \hat F}) &= \frac{1}{2^{|\cV|+|\cF|}} \sum_{h \in \{ (\emptyset, \emptyset) , (\cV, \emptyset), (\cV, \emptyset), (\cV, \cF) \}} \underbrace{\overline{\chi_U (h)}}_{2^{\vert \cE\vert }} \chi_{(\hat V , \hat F)} (h) \\
    &= 2^{2 g -2} \left( (-1)^{|\hat{V}\cap \emptyset|+|\hat{F}\cap \emptyset|} + (-1)^{|\hat{V}\cap \cV|+|\hat{F}\cap \emptyset|} + (-1)^{|\hat{V}\cap \emptyset|+|\hat{F}\cap \cF|} + (-1)^{|\hat{V}\cap \cV|+|\hat{F}\cap \cF|} \right) \\
    &= 2^{2 g -2} \left( 1 + (-1)^{\vert \hat V \vert } + (-1)^{\vert \hat F \vert } + (-1)^{\vert \hat V \vert + \vert \hat F \vert} \right) \\
    &= 2^{2 g -2} \left( 1 + (-1)^{\vert \hat V \vert } \right) \left( 1 + (-1)^{\vert \hat F \vert } \right) \,,
\end{align}
which allows us to conclude that
\begin{equation}
    \dim (\cH_{\hat V , \hat F}) = \begin{cases}
       2^{2g} \quad \mathrm{if} \quad \vert \hat V \vert \in 2\mathbb{N} \quad \mathrm{and}\quad \vert \hat F \vert \in 2\mathbb{N}\,,\\
      0 \qquad \mathrm{otherwise}\,.
    \end{cases}
\end{equation}
In other words, the fact that the representation is not faithful translates into global charge conservation, which imposes that vertex (resp.\ face) defects can only be created \emph{by pairs}.\footnote{This can also directly be gleaned from the identities in Eq.~\eqref{eq:relations}.} This is the key (and obviously well-known) difference between the present model and the one discussed in the previous subsection. It is however nice to recover this fact from representation theory alone.

To summarize, this code decomposes into $2^{|\cV|+|\cF|-2}$ isotypes of the group $G \simeq \mathbb{Z}_2^{\times(|\cV|+|\cF|)}$, each of dimension $2^{2g}$. This means that a fourth of all possible isotypes appear in the decomposition of $\mathcal{H}_{\rm kin}$. In particular, the code subspace can support $2g$ logical qubits:
\begin{equation}
    \mathcal{H}_{\rm pn} \equiv \cH_{\emptyset , \emptyset } \simeq (\mathbb{C}^2)^{\otimes 2g}\,.
\end{equation}
From there, the analysis of the previous section generalizes straightforwardly. In particular, selecting a complete reference frame $R$ is equivalent to fixing the homotopy classes of a system of defect-connecting paths in each sector. Once this is done, we can factorize the kinematical Hilbert space as:
\begin{equation}
    \cH_{\rm kin} = \cH_{\emptyset , \emptyset} \otimes_R \cH_{\rm gauge}\,, \quad \mathrm{with} \quad  \cH_{\rm gauge}\simeq (\mathbb{C}^2)^{\otimes(|\mathcal{V}|+|\mathcal{F}|-2)}\,.
 \end{equation}
The main observation made in the previous examples, that gauge transformations act solely on $\cH_{\rm gauge}$ via some unitary representation $U_R$, remains true. The only thing that changes is that $U_R$ is not isomorphic to the regular representation of $G$ anymore: indeed, it only contains a copy of one in four of its irreducible representations. Thus, a complete frame $R$ now furnishes a QRF that no longer is ideal.

\subsubsection{Alternative gauge group and its faithful representation}

In the way we have formulated it in Sec.~\ref{sec_errorduality} and definition  \ref{def:dual_reps_faithful}, our notion of error duality cannot be directly applied to the representation $U:G \to {\rm Aut}(\cH_{\rm kin})$, since it is not faithful. However, taking the quotient by $U^{-1}(\{I\})= \{ (\emptyset, \emptyset) , (\cV, \emptyset), (\emptyset, \cF), (\cV, \cF) \}$, we obtain a faithful representation $\mathtt{U}: \mathtt{G} \to {\rm Aut}(\cH_{\rm kin})$ of the quotient group 
\begin{equation}
    \mathtt{G}:= G / U^{-1}(\{I\}) \simeq \mathbb{Z}_2^{\times (|\cV|+ |\cF|)} /(\mathbb{Z}_2 \times \mathbb{Z}_2)\simeq \mathbb{Z}_2^{\times (|\cV|+ |\cF| - 2)}\,.
\end{equation}
$\mathtt{G}$ can just as well serve as abstract gauge group. From a discrete $\mathbb{Z}_2$-homology point of view, it is nothing but the product of the group of boundaries with the group of coboundaries of the discrete surface i.e. $\mathtt{G}= \cB \times \cB^*$ (see Sec.~\ref{sec:surface_code_homological_descr}). Indeed, the group $(\cP(\cV), \Delta)$ considered modulo the equivalence relation: \begin{equation}
\forall V, W\subset \cV\,, \qquad V \sim W \quad \Leftrightarrow \quad V \sqcup W = \cV
\end{equation}
can be identified with the group of \emph{boundaries} of the surface, whose elements are of the form $\partial V$ ($V \in \cV$). Likewise, the group $(\cP(\cF), \Delta)$ modulo the identification of $\hat{F} \subset \cF$ with its complement $\cF \setminus \hat{F}$  can be identified with the group of \emph{coboundaries} of the surface, whose elements are in the image of the co-boundary map $\delta$ (see Sec.~\ref{sec:surface_code_homological_descr}). Thus, the elements of $\mathtt{G}$ can be labeled by equivalence classes $(\mathtt{V},\mathtt{F})$, where $V,W\in\mathtt{V}$ if and only if $W=\mathcal{V}\setminus V$ and $F,F'\in\mathtt{F}$ if and only if $F'=\mathcal{F}\setminus F$.

As compared to $G$, which is generated by \emph{local} structures (vertices and dual vertices), $\mathtt{G}$ has a more \emph{non-local} character, being generated by boundaries and coboundaries. Hence, $\mathtt{G}$ would probably not qualify as a local gauge group associated to a local gauge theory. But it can nonetheless serve as a gauge group in the broad sense we are relying on in the present paper. In the remainder of this subsection and in the next, we will consider $\mathtt{G}$ as the defining gauge group of closed surface codes. 

$\mathtt{G}$ being the quotient of $G$ by a normal subgroup of order $4$, the group of characters of $\mathtt{G}$ is isomorphic to a subgroup of index $4$ of $\hat{G}$: namely, $\hat{\mathtt{G}}$ can be identified with the subgroup of characters $\chi_{(\hat{V}, \hat{F})} \in \hat{G}$ labeled by $(\hat{V} , \hat{F})\in \cV \times \cF$ with both $|\hat{V}|$ and $|\hat{F}|$ \emph{even}. Indeed, in that case, definition \eqref{eq:charac_surface} descends well to the quotient labeled by the equivalence classes $(\mathtt{V},\mathtt{F})$ since, for any $(V,F) \subset \cV \times \cF$,
\begin{equation}\label{nonfaithrep}
    \begin{split}
    \hat{V}\cap (\cV \setminus V)= \hat{V} \setminus (\hat{V}\cap V) &\quad \Rightarrow \quad (-1)^{|\hat{V}\cap (\cV \setminus V)|} = (-1)^{|\hat{V}|- |\hat{V}\cap V|} \underset{|\hat{V}|\in 2 \mathbb{N}} = (-1)^{|\hat{V}\cap V|}\,,  \\
        \hat{F}\cap (\cF \setminus F)= \hat{F} \setminus (\hat{F}\cap F) &\quad \Rightarrow \quad (-1)^{|\hat{F}\cap (\cF \setminus F)|} = (-1)^{|\hat{F}|- |\hat{F}\cap F|} \underset{|\hat{F}|\in 2 \mathbb{N}} = (-1)^{|\hat{F}\cap F|}\,.
\end{split}
\end{equation}
This provides us with $2^{|\cV|+ |\cF|-2}= |\mathtt{G}|$ independent characters, which therefore make up for all of $\hat{\mathtt{G}}$.

\subsubsection{Dual representations and error duality}\label{subsec:example_dual_rep_toric_code}

As in the case of the surface code with boundary from Sec.~\ref{subsec:square_code}, we can construct two types of representations that are dual to the defining representation $U$. On the one hand, given some complete set of frame fields $R=(R_{{\chi}_{(\hat{V}, \hat{F})}})_{(\hat{V}, \hat{F})\in \hat{\mathtt{G}}}$, which depends on a choice of system of paths and dual paths for each sector $(\hat{V}, \hat{F})\in \hat{\mathtt{G}}$, we can define a dual representation $\hat{U}_R$ that is naturally associated to $R$. Such dual representation does not take value in the Pauli group and its associated gauge-fixing errors act nonlocally relative to the kinematical tensor product structure (by contrast, they act locally relative to the tensor product structure $\otimes_R$ induced by $R$). Such construction proceeds exactly as in Sec.~\ref{subsec:duality_square}, so we will not comment further on it. On the other hand, it is possible to construct dual representations that take value in the Pauli group, and whose associated gauge-fixing errors only act on a subset of the kinematical qubits, which collectively constitute a local frame. To get there, we can proceed similarly as in Sec.~\ref{sec:surface_code_duality_Pauli}, with some modifications that we now describe.

To this effect, let $(T,T')$ be two subsets of $\cE$ such that: a) $T \cap T' = \emptyset$; b) $T$ is a spanning tree of the direct lattice; c) $T'$ is a spanning tree of the dual lattice. It is clear that such conditions can be met, for instance in the following way: select first a spanning tree $T$ of the direct combinatorial graph; one can show that the dual combinatorial map  minus $T$ is connected, so one can select a spanning tree $T'$ in this map; clearly, $T'$ is also spanning for the original dual map. Furthermore, there are exactly $2g$ edges left from the original map which are not in $T \cup T'$.\footnote{\label{foot:forests}The same argument can be upgraded to justify the existence of the forests $(T,T')$ from Sec.~\ref{sec:surface_code_duality_Pauli}. Starting from a planar lattice such as the one represented in Fig.~\ref{fig:surface_code-tree}, we add one new vertex $v_{\rm sink}$ and one new face $f_{\rm sink}$ such that: any edge from the rough boundaries is incident to $v_{\rm sink}$ and any edge from the smooth boundaries belongs to $f_{\rm sink}$. We obtain in this way a combinatorial map with the topology of a closed non-orientable surface of \emph{non-orientable genus} equal to one. In our notation, this means that $2g =1$. Hence, by the same argument as in the main text, we can construct a spanning tree $\tilde{T}$ and a dual spanning tree $\tilde{T}'$ which are disjoint and cover all the edges of the map but $2g=1$ of them. Removing the vertex $v_{\rm sink}$ and the face $f_{\rm sink}$ to return to the original lattice, the edges from $\tilde{T}$ produce a spanning forest $T$ while the edges from $\tilde{T}'$ produce a dual spanning forest (indeed, since we are removing a vertex and a face, connectedness may be lost). The pair $(T,T')$ obeys all the conditions postulated in Sec.~\ref{sec:surface_code_duality_Pauli}.
} Furthermore, let $v_0 \in \cV$ (resp.\ $f_0 \in \cF$) denote a marked vertex (resp.\ a marked face) adjacent to an edge of $T$ (resp.\ to an edge of $T'$) which we will call \emph{root-vertex} (resp.\ \emph{root face}). See Figure \ref{fig:duality_torus} for an example in genus $1$. Associated to the pair $(T,T')$ we have a map $\cV \times \cV \ni (v_1 , v_2) \mapsto \gamma_{v_1 , v_2}$ and a map $\cF \times \cF \ni (f_1 , f_2) \mapsto \gamma_{f_1 , f_2}'$ where: $\gamma_{v_1 , v_2}$ denotes the unique simple path in $T$ with $\{ v_1 , v_2 \}$ as endpoints; and $\gamma_{f_1 , f_2}'$ denotes the unique simple dual path in $T'$ with $\{ f_1 , f_2 \}$ as endpoints. Note that, for any $(v_1 , v_2) \in \cV \times \cV$ and $(f_1 , f_2) \in \cF \times \cF$, we have $\gamma_{v_1, v_2} \cap \gamma_{f_1, f_2}$ so that
\begin{equation}
    [S^Z (\gamma_{v_1 , v_2}), S^X (\gamma_{f_1 , f_2}')] = 0\,.
\end{equation}
We also have
\begin{equation}
    \gamma_{v_1, v_2} = \gamma_{v_0 , v_1} \Delta \gamma_{v_0 , v_2}\,, \qquad \gamma_{f_1, f_2}' = \gamma_{f_0 , f_1}' \Delta \gamma_{f_0 , f_2}'\,,
\end{equation}
and therefore
\begin{equation}\label{eq:pairs_surface}
    S^Z( \gamma_{v_1, v_2}) = S^Z( \gamma_{v_0 , v_1}) S^Z(  \gamma_{v_0 , v_2}) \,, \qquad S^X( \gamma_{f_1, f_2}' ) = S^X( \gamma_{f_0 , f_1}' ) S^X( \Delta \gamma_{f_0 , f_2}')\,.
\end{equation}
We can thus define a representation of $\hat{\mathtt{G}}$ similarly to \eqref{eq:dual_rep_square}, namely: for any $\hat{V} \subset \cV$ and $\hat{F} \subset \cF$ with $|\hat{V}|, |\hat{F}|\in 2 \mathbb{N}$, 
\begin{equation}\label{eq:dual_rep_closed}
    \hat{\mathtt{U}}^{\chi_{(\hat V, \hat F)}} :=  \prod_{v \in \hat V} S^Z (\gamma_{v_0, v})  \prod_{f \in \hat F} S^X (\gamma_{f_0 , f}') \,.  
\end{equation}
Owing to \eqref{eq:pairs_surface}, this product can be expressed as a product of operators $S^Z( \gamma_{v_1, v_2})$ and $S^Z( \gamma_{f_1, f_2}')$ where $\{ v_1 , v_2\}$ (resp.\ $\{ f_1 , f_2\}$) run over disjoint pairs of vertices (resp.\ faces) in $\hat{V}$ (resp.\ $\hat{F}$). Such operators thus provide a particular family of charge-creating errors, as considered in the original reference \cite{Kitaev:1997wr}, where charges are created by pairs and located at the endpoints of a string. By requiring the strings and dual strings to lie in $T$ and $T'$, we are making sure that the family of errors so obtained form a group. Moreover, this group is dual to $\mathcal{G}={\rm Im}(U)= {\rm Im}(\mathtt{U})$. Indeed, the same algebraic manipulations as those performed in \eqref{eq:proof_duality_square} and \eqref{eq:proof_rep_square} (while keeping \eqref{nonfaithrep} in mind) allow to establish that: $\hat{\mathtt{U}}$ does define a representation of the Pontryagin dual $\hat{\mathtt{G}}$, and it is dual to the representation $\mathtt{U}$ of $\mathtt{G}$ (in the sense of Definition \ref{def:dual_reps_faithful}).

\begin{figure}
    \centering
    \includegraphics[scale=0.7]{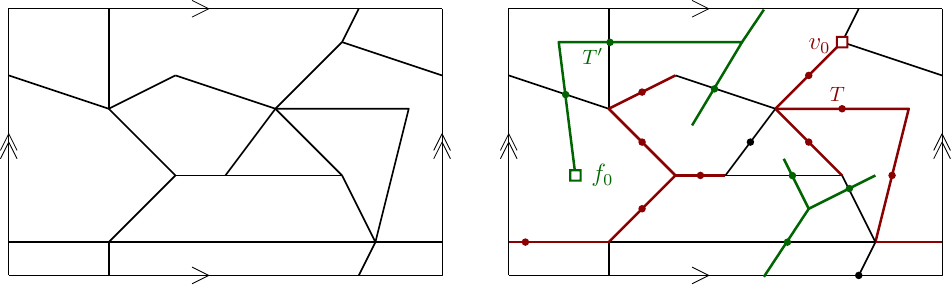}
    \caption{Left: a (non-regular) combinatorial map of genus $g=1$ (horizontal and vertical sides of the rectangle are identified according to the arrows). Right: a pair of non-overlapping spanning trees $(T,T')$ on the same map. Edges (and qubits) from $T$ are represented in red, while dual edges (and qubits) from $T'$ are represented in green. The root-vertex $v_0$ and the root-face $f_0$ are also emphasized, and the $2g =2$ qubits represented in black belong to neither of the two trees: they constitute a local subsystem $S$ of the type investigated in Sec.~\ref{Sec:genstabilizerQRF}, while qubits with support in $T\cup T'$ serve as frame degrees of freedom.}
    \label{fig:duality_torus}
\end{figure}

Analogously to our discussion from Sec.~\ref{sec:surface_code_duality_Pauli}, we see that the dual representation $\hat{\mathtt{U}}$ takes value in the Pauli group and has support on a strict subset of the kinematical qubits. In the spirit of Sec.~\ref{Sec:genstabilizerQRF}, we can interpret those as degrees of freedom supporting a local QRF $R$, while the leftover qubits can be understood as logical qubits pertaining to a subsystem $S$. In the present case, there are $2g$ qubits in $S$ and $|\cE| - 2g$ qubits in $R$ (see Fig.~\ref{fig:duality_torus}). The dual errors $\{\hat{E}_{(V,F)}\}_{(V,F) \in \mathtt{G}}$ associated to $\hat{\mathtt{U}}$ are gauge-fixing errors that force the state of the frame qubits to point in a particular definite orientation of the gauge group.

\section{Discussion}
\label{sec:discussion}

In this section, we give a technical discussion of how our results relate to other characterizations of redundancy and correctability in QEC, as well as how our results relate to operator algebra quantum error correction (OAQEC).
Non-technical remarks about our results and future directions follow in the Conclusion.

\subsection{QRF perspective on redundancy and correctability}\label{ssec_redundancy}

Quantum reference frames, in particular via Cor.~\ref{cor:KL-via-QRFs} and Prop.~\ref{prop_nontriv}, offer a clean perspective on the relationship between redundancy and the correctability of errors in stabilizer quantum error correcting codes.
Following the corollary, given a set of correctable errors $\mathcal{E}$, it is always possible to construct a QRF $R$ and a tensor product structure $\Hil_\mrm{physical} = \Hil_R \otimes \Hil_S$ such that both the action of the errors and the stabilizers are entirely localized to $R$.
For a $[[n,k]]$ Pauli stabilizer code, it follows that $\log_2(\dim \Hil_R) = n-k$.
Therefore, as a joint $+1$ eigenstate of all the stabilizers, any encoded state $\ket{\bar\psi} \in \Hil_\mrm{code}$ necessarily takes the form $\ket{\bar\psi} = \ket{0}_R \otimes \ket{\psi}_S$, where $\ket{0}_R$ is \emph{the} joint $+1$ eigenstate of the stabilizers \emph{as operators on R}.
This makes concrete, via QRFs, the picture of a $[[n,k]]$ stabilizer code as consisting of $n-k$ virtual stabilizer qubits and $k$ logical qubits \cite{PreskillNotes,Poulin:2005wry,Kao:2023ehh}.

The QRF picture nicely complements other information theoretic characterizations of correctability.
Recalling the formulation of error correction with quantum channels (Eq.~\eqref{qeccrel}), the question of whether a given error channel $\tilde{\mathcal{E}}$ (constructed out of errors from $\mathcal{E}$) is correctable becomes the question of whether the recovery channel $\Oh$ exists.
A necessary and sufficient condition for $\Oh$ to exist is that, for any $\bar{\rho}, \bar{\sigma} \in \cB(\Hil_\mrm{code})$, their relative entropy should satisfy \cite{Petz:1986tvy,Petz:1988usv}
\begin{equation}
    D(\tilde{\mathcal{E}}(\bar{\rho}) \, \Vert \, \tilde{\mathcal{E}}(\bar\sigma)) = D(\bar\rho \, \Vert \, \bar\sigma).
\end{equation}
In other words, recalling that relative entropy is a measure of the distinguishability of states, the error channel should not decrease the distinguishability of encoded states.\footnote{In general, the quantum data processing inequality says that acting with a quantum channel can only decrease or preserve relative entropy \cite{UhlmannRelativeEntropy}.}
Now if $\tilde{\mathcal{E}} = \tilde{\mathcal{E}}_R \otimes \mrm{id}_S$, $\bar \rho = \ketbra{0}{0}_R \otimes \rho_S$, and $\bar{\sigma} = \ketbra{0}{0}_R \otimes \sigma_S$, a straightforward computation shows that
\begin{equation}
    D(\tilde{\mathcal{E}}(\bar{\rho}) \, \Vert \, \tilde{\mathcal{E}}(\bar\sigma)) = D(\rho_S \, \Vert \, \sigma_S) = D(\bar\rho \, \Vert \bar\sigma).
\end{equation}

Another equivalent criterion comes from dilating the error channel $\tilde{\mathcal{E}}$ to an isometry $V : \Hil_\mrm{physical} \to \Hil_\mrm{physical} \otimes \Hil_E$ with an environment $\Hil_E$ such that $\tilde{\mathcal{E}}(\rho) = \Tr_E[V \rho V^\dagger]$.
Further introduce a reference system $S'$ with $\dim S' = \dim \Hil_\mrm{code}$, and let
\begin{equation}
    \ket{\Gamma}_{RSS'} = \sum_{\mu=1}^{\dim\Hil_\mrm{code}} \ket{\bar{\mu}}_{RS}\otimes\ket{\mu}_{S'}
\end{equation}
denote a maximally-entangled state across the code subspace and $S'$.
Then, a necessary and sufficient condition for $\tilde{\mathcal{E}}$ to be correctable is that
\begin{equation}
    \rho_{S'E} = \Tr_{RS}\left[ V \otimes I_{S'} \ketbra{\Gamma}{\Gamma}_{RSS'} (V \otimes I_{S'})^\dagger  \right]
\end{equation}
factorizes across $S'$ and $E$, i.e. $\rho_{S'E} = \rho_{S'} \otimes \rho_E$.
In other words, if we view errors as being due to uncontrolled interactions with an unmonitored environment, then they are correctable only if the environment develops no correlations with the logical data that is stored in the code subspace.
This condition follows immediately if we know that we can write $V = V_R \otimes I_S$ and $\ket{\bar\mu}_{RS} = \ket{0}_R\otimes\ket{\mu}_S$, as guaranteed by the QRF perspective.

\subsection{Connection with operator algebra QEC}

Proposition \ref{prop_nontriv} also reveals a connection between QRFs and operator algebra quantum error correction (OAQEC) \cite{BenyOAQEC1,BenyOAQEC2}.
The point of contact with OAQEC is twofold.
First, the QRF-disentangler $T_R$ associated with the frame orientation basis $\{\ket{g}_R\}_{g\in G}$ of Proposition \ref{prop_nontriv} allows us to explicitly factorize the kinematical (or equivalently, physical) space into the standard form for subspace QEC.
Second, the algebra of relational observables of the logical system $S$ relative to the frame $R$ is correctable for the errors $\{\hat{E}_{\chi}\}_{\chi\in\hat G}$ in a sense to be clarified shortly.  

In OAQEC, the physical space decomposes as $\Hil_{\rm physical}=\oplus_k (A_k\otimes B_k)\oplus\mathcal K$, where $A_k$ and $B_k$ are the ancillary and logical sectors, respectively, and $\mathcal K$ is the total error space.
The usual setting of subspace QECCs is recovered as the special case in which $\Hil_{\rm physical}=(A\otimes B)\oplus\mathcal K$ with $\dim A=1$.
Denoting by $\Pi$ the projector onto $A\otimes B$, a set of operators $\Lambda$ on $\Hil_{\rm physical}$ is said to be correctable for an error channel $\tilde{\mathcal E}:\cB(\Pi(\Hil_{\rm physical}))\to\cB(\Hil_{\rm physical})$ on states in $\Pi(\Hil_{\rm physical})$ if there exists a recovery channel $\mathcal O$ such that $\Pi(\mathcal O\circ\tilde{\mathcal E})^\dag(L)\Pi=\Pi L\Pi$, for any $ L \in\Lambda$ (cf.~\cite[Definition 8]{BenyOAQEC2}).

In our present QRF setting, the refactorized physical space decomposes as 
\begin{equation}
T_R(\Hil_{\rm kin})=(\ket{1}_R\otimes\Hil_S)\oplus\mathcal K\;,
\end{equation}
with the total error space $\mathcal K$ given by
\begin{equation}
\mathcal K=\bigoplus_{\chi\neq1} T_R(\Hil_{\chi})\;,
\end{equation}
with $\Hil_{\chi}=P_{\chi}(\Hil_{\rm kin})$, $\chi\neq1$, the isotypes of the non-trivial representations of the stabilizer group $G$ (cf.~App.~\ref{app:reps}). The frame $R$ thus serves as the ancillary system encoding all the gauge-redundant information which, after trivialization, is in the fixed ready state $\ket{1}_R$ on which the errors \eqref{Tprop:errors} as well as the gauge transformations solely act.~In contrast, the non-redundant logical information is entirely encoded in the system $S$.~Correspondingly, as a direct consequence of claim \emph{(iii)} in Proposition \ref{prop_nontriv}, we have
\begin{equation}\label{eq:commrefactor}
[{T}_R\Pi_{\rm pn}\,\hat{E}_{\chi}^\dagger \hat{E}_{\eta}\,\Pi_{\rm pn}\,{T}^\dag_R, \ket{1}\!\bra{1}_R\otimes f_S]=0\qquad\forall\, \hat{E}_\chi, \hat{E}_\eta\in\hat{\mathcal E}\,,\; f_S\in\cB(\Hil_S)\,.
\end{equation}
Equivalently, owing to the unitarity of $T_R$ on $\Hil_{\rm kin}$ for ideal frames, we have
\begin{equation}\label{eq:commrefactor2}
[\Pi_{\rm pn}\,\hat{E}_{\chi}^\dagger \hat{E}_{\eta}\,\Pi_{\rm pn}, \Pi_{\rm pn}(\ket{e}\!\bra{e}_R\otimes f_S)\Pi_{\rm pn}]=0\qquad\forall\, \hat{E}_{\chi}, \hat{E}_{\eta}\in\hat{\mathcal E}\,,\; f_S\in\cB(\Hil_S)\,.
\end{equation}
In \cite[Thm.~2]{BenyOAQEC1} and \cite[Thm.~9]{BenyOAQEC2}, the above commutation condition was identified as the generalization of the Knill-Laflamme condition to operator algebras in that it provides the necessary and sufficient condition for the algebra of operators under consideration to be correctable for a set of errors.~Therefore, we see that the algebra of relational observables describing $S$ relative to $R$ in orientation $e\in G$,\footnote{Strictly speaking, for the operators in Eq.~\eqref{eq:relationalSobs} to be observables, $f_S$ should be Hermitian.~The discussion of this section however applies also to generic frame-dressed/relational operators with $f_S\in\cB(\Hil_S)$ so that we shall not restrict them to be Hermitian.} i.e.,
\begin{equation}\label{eq:relationalSobs}
\mathcal A_{S|R}^e=\bigl\{|G|\cdot\Pi_{\rm pn}(\ket{e}\!\bra{e}_R\otimes f_S)\Pi_{\rm pn}\in\cB(\Hil_{\rm pn})\;,\;f_S\in\cB(\Hil_S)\bigr\}
\end{equation}
is correctable for the errors $\{\hat{E}_{\chi}\}_{\chi\in\hat{G}}$ on states in $\Hil_{\rm pn}$.~That is, given the trace-preserving error channel
\begin{equation}
    \tilde{\mathcal E}(\bullet)=\frac{1}{|G|}\sum_{\chi\in\hat G}\hat{E}_{\chi}\bullet\hat{E}_{\chi}^{\dagger}\;,
\end{equation}
there exists a channel $\hat{\mathcal O}$ such that
\begin{equation}
\Pi_{\rm pn}(\hat{\mathcal O}\circ\tilde{\mathcal E})^\dag(\mathcal A^e_{S|R})\Pi_{\rm pn}=\mathcal A^e_{S|R}\;.
\end{equation}
By inspection, the recovery channel is given by the electric error recovery \eqref{electricrecovery} specialized to the errors $\hat{E}_{\chi}$, namely,
\begin{equation}
\hat{\mathcal O}(\bullet)=\sum_{\chi\in\hat{G}}\hat{E}_{\chi}^\dag P_{\chi}\bullet P_{\chi}\hat{E}_{\chi}\;.
\end{equation}
Equivalently, the channel $\mathcal T_R\circ\hat{\mathcal O}\circ\mathcal T_R^\dag$, with $\mathcal T_R(\bullet)=T_R\bullet T_R^\dag$, is the recovery channel for correcting the operators $\ket{1}\!\bra{1}_R\otimes f_S$ for the errors $\hat{E}_{\chi}$ on the refactorized code space $T_R(\Hil_{\rm pn})=\ket{1}_R\otimes\Hil_S$ (recall that $[{T}_R,\hat{E}_\chi]=0$).~Lastly, we note that $\mathcal T_R(\mathcal A_{S|R}^e)=\ket{1}\!\bra{1}_R\otimes\cB(\Hil_S)=\cB(T_R(\Hil_{\rm pn}))$, so that it contains all subalgebras of $\cB(T_R(\Hil_{\rm pn}))$ correctable for the errors $\hat{E}_{\chi}$ on states in $T_{R}(\Hil_{\rm pn})$ (cf.~\cite[Corollary 10]{BenyOAQEC2}).

\section{Conclusion}\label{sec:conclusion}

In this work, we proposed a correspondence between quantum gauge theories and quantum error correcting codes and we took the initial steps towards formalizing this correspondence through the apparatus of quantum reference frames.
The result was a QECC/QRF dictionary, summarized in Table~\ref{tab_dictionary} and elaborated for Pauli stabilizer codes in the text's main body.

A $[[n,k]]$ Pauli stabilizer QECC generically gives rise to QRFs with symmetry group $\mathbb{Z}_2^{\times(n-k)}$.
A multitude of QRFs can be extracted from any given QECC, corresponding to a multitude of ways of separating logical information from redundant degrees of freedom.
Moreover, this separation of the code into frame and system degrees of freedom can be local with respect to the innate tensor product structure of the physical qubits (such as with the local frames in Sec.~\ref{sec_genstab}) or it can endow the physical Hilbert space with an entirely different tensor product structure (such as with the nonlocal frames in Sec.~\ref{sec_genstab}).

A choice of QRF is closely connected with the choice of correctable error set given a code subspace.
One of our core results (Thm~\ref{thm_errorQRF} and Cor.~\ref{cor:KL-via-QRFs}) was to show that there is a one-to-one correspondence between (equivalence classes of) correctable error sets and QRFs such that errors (and stabilizers) act only on frame degrees of freedom, leaving the (logical) system degrees of freedom untouched.
The correspondence between gauge-invariant information in gauge theories and logical information in QECCs is thus further substantiated.

A significant part of our investigations focused on errors.
Initially it might seem a bit strange to think of errors in a gauge theory context, given the correspondence between the (gauge-invariant) perspective-neutral Hilbert space and the code subspace.
Errors can take encoded states outside of the code subspace, yet in gauge theories, we do not usually think of taking a state outside of the perspective-neutral Hilbert space into a configuration that violates the gauge constraint.
Nevertheless, we can view the process of violating gauge-invariance by fixing a gauge as an error, and such gauge-fixing errors end up being totally general:
given a Pauli stabilizer QECC that corrects a specified set of Pauli errors, there exists a QRF with respect to which these errors act as gauge-fixing errors, and they act only on the frame degrees of freedom (Cor.~\ref{cor:gaugefix-Pauli}).
Furthermore, we can think of a Pauli error as a sort of electric excitation (Sec.~\ref{ssec_pauliduality}) and the corresponding gauge-fixing error as a sort of magnetic excitation (Sec.~\ref{ssec_dualerrors}), with the two being related by Pontryagin duality.
We then illustrated all of these findings in surface codes (Sec.~\ref{sec:surfacecodes}), which are veritable stabilizer code/gauge theory hybrids.

We finish by mentioning a handful of directions for future work.
Two immediate goals are of course to connect more general gauge theories to quantum error correction, and to further elaborate the QRF-to-QECC direction of the dictionary.
For the former, we aim to formalize the notion of errors as charge excitations and electromagnetic duality in more general contexts \cite{Masazumi1,Masazumi2}.
It will also be interesting to study more involved lattice gauge theories (e.g., non-Abelian lattice gauge theories) or lattice gauge theories of direct physical interest to high energy physics (e.g., lattice quantum electrodynamics or lattice quantum chromodynamics).
For the latter, the PN-framework for general QRFs \cite{delaHamette:2021oex} has the structure of group-based QECCs.
We thus aim to systematically study the error correcting properties of QRFs as group-based QECCs and to investigate the types of codes that one obtains from QRFs beyond the ideal QRFs with symmetry group $\mathbb{Z}_2^{\times(n-k)}$ that we considered here.

To begin with, our results should generalize straightforwardly to QECCs based on representations of $\mathbb{Z}_p^{\times (n-k)}$, taking value into a generalized Pauli group associated to a collection of $n$ $p$-level systems (with $p \geq 3$). Beyond that, we will aim at an extension of our analysis to arbitrary finite Abelian groups, and more generally still, to arbitrary locally-compact Abelian groups. Some of the mathematical structures that we rely on, such as Fourier analysis, will no longer be available for more general groups, at least not in the guise of Pontryagin duality. We should therefore expect new technical and conceptual challenges as we move towards non-Abelian gauge theory and non-Abelian generalizations of stabilizer codes. Nonetheless, it will be especially interesting to construct QECCs out of QRF setups that have non-Abelian and non-compact symmetry groups, as well as non-ideal QRFs that have fuzzy orientation states.
Viewed as a recipe for building codes, QRFs with non-Abelian symmetry groups immediately provide tools to systematically write down and characterize the error correcting properties of non-Abelian stabilizer codes. 
Existing examples of such codes include $XS$ \cite{Ni:2014clx} and $XP$ \cite{Webster:2022kdn} codes, but the QRF formalism opens up the possibility of constructing non-Abelian stabilizer codes in principle from any non-Abelian symmetry group.
In a similar vein, QRFs with compact and non-compact symmetry groups should provide systematic ways to construct QECCs that are suitable for continuous variable and bosonic systems.

Non-ideal QRFs with unsharp frame states---i.e., frame states that still span the frame Hilbert space but are not orthogonal---are particularly intriguing, as these will give rise to approximate quantum error correcting codes (AQECCs).
AQECCs that have a lower resource draw than exact QECCs are of increased interest for near-term quantum devices, but much less is known about them in comparison to their exact counterparts.
In terms of QRFs, one can deliberately construct fuzzy orientation states for theories that would otherwise admit ideal frames.
One can also force all frames to be fuzzy by, e.g., choosing a symmetry group that is larger than what can be faithfully represented by the frame. 
Concretely, this latter strategy amounts to attempting to ``squeeze'' larger codes into smaller representatives at the expense of having fuzzy codewords.

In the QECC-to-QRF direction, code distance (the minimum weight of a logical operator) did not feature in our QECC/QRF dictionary.
As a quantity that mostly has to do with specifics of the group representation, it is less clear what its corresponding entry might be on the QRF side of the dictionary; nevertheless, it has been speculated that code distance might be related to energy gaps in gauge theories \cite{Bao:2023lrb}.
We leave this question to future work.

A key \emph{practical} application can be anticipated to arise in quantum simulations of gauge theories and the Standard Model (SM), which is a field that is rapidly expanding after the successful simulation of simple gauge systems on simple quantum computers \cite{Banuls:2019bmf,Martinez:2016yna,Kokail:2018eiw,Yang:2020yer,Bauer:2022hpo,DiMeglio:2023nsa,Chakraborty:2020uhf,Honda:2021aum}.
Furthermore, initial steps toward quantum simulations of features of quantum gravity have been taken \cite{Brown:2019hmk,Gharibyan:2020bab,Shapoval:2022xeo,Li:2017gvt,Li:2019holentropysim}, as well as toward simulating cosmological particle creation \cite{Maceda:2024rrd} and false vacuum decay \cite{Darbha:2024cwk,Darbha:2024srr}.
Classical simulations of gauge theories have so far almost exclusively relied on Monte-Carlo methods.
However, these are prone to the notorious sign problem in state sum simulations and are incapable of extracting more from a lattice gauge theory than static information (such as particle masses). By contrast, simulating dynamical information and especially emergent phenomena (such as bound particle states) of the SM necessitates a wholly different approach because the parameter space becomes far too large for a classical computer to handle. This is why the particle physics community invests great hope into simulations on a quantum computer \cite{Bauer:2022hpo,DiMeglio:2023nsa}. Crucially, quantum simulations sidestep the sign problem on account of relying on a Hamiltonian formulation.

The current quantum simulations of lattice gauge theories and quantum gravity aspects are not yet sophisticated enough to deal with QEC and rather invoke error mitigation techniques. However, a current roadmap for quantum simulations of particle physics \cite{Bauer:2022hpo,DiMeglio:2023nsa} foresees the implementation of QECCs in these simulations in the coming years.
Indeed, significant progress has been made recently in the experimental implementation of QEC protocols \cite{Ryan-Anderson:2021wzx,Postler:2021ddz,Zhao:2021chc,Acharya:2024btg,Reichardt:2024saa,Putterman:2024kpg,Brock:2024vkc,GoogleQuantumAIandCollaborators:2024efv}, fuelling the anticipation that full-fledged quantum simulations will involve QEC in the long run. 

There are, in fact, first works exploring the idea to align the gauge group of a lattice gauge theory with the stabilizer group, and this can be shown to lead to more efficient QECCs for simulating lower-dimensional Abelian lattice models \cite{Rajput:2021trn,Spagnoli:2024mib}. While these works take a purely practical perspective on the alignment of the code space with the gauge-invariant Hilbert space of a lattice gauge theory, they are compatible with our correspondence and provide a promising example for the practical usefulness of our more fundamental perspective.

\section*{Acknowledgments}
\addcontentsline{toc}{section}{Acknowledgments}

We thank Ning Bao, ChunJun (Charles) Cao, Tobias Haas, Masazumi Honda, Aleksander Kubica, Ognyan Oreshkov, and John Preskill for helpful discussions during the preparation of this manuscript, as well as Stefan Eccles and Joshua Kirklin for initial discussions about this line of research. PAH thanks Mischa Woods in particular for discussion and feedback on an early note set of this QECC/QRF correspondence in 2020. SC acknowledges support from IMB, which receives support from the EIPHI Graduate School (contract ANR-17-EURE-0002). SC also acknowledges support of the Institut Henri Poincar\'{e} (UAR 839 CNRS-Sorbonne Universit\'{e}), and LabEx CARMIN (ANR-10-LABX-59-01), for a research stay during which part of the present work was done. The work of ACD and PAH was supported in part by funding from the Okinawa Institute of Science and Technology Graduate University. ACD is also grateful for support from the Centre for Quantum Information and Communication (QuIC) at the Universit\'e de Bruxelles, which hosted ACD for a cross-consortium visit during which work on this manuscript was completed. PAH was further supported by the Foundational Questions Institute under grant number FQXi-RFP-1801A. FMM's research at Western University is supported by Francesca Vidotto's Canada Research Chair in the Foundation of Physics, and NSERC Discovery Grant ``Loop Quantum Gravity:~from Computation to Phenomenology.'' Western University is located in the traditional territories of Anishinaabek, Haudenosaunee,  L\=unaap\'eewak and Chonnonton Nations. This work is also supported by the NSF grant PHY-2409402, PHY-2110273. FMM also thanks the Okinawa Institute of Science and Technology and the Perimeter Institute for hospitality through the last stages of this work. Research at Perimeter Institute is supported in part by the Government of Canada through the Department of Innovation, Science and Economic Development and by the Province of Ontario through the Ministry of Colleges and Universities. This project/publication was also made possible through the support of the ID\# 62312 grant from the John Templeton Foundation, as part of the \href{https://www.templeton.org/grant/the-quantum-information-structure-of-spacetime-qiss-second-phase}{\color{black}\textit{``The Quantum Information Structure of Spacetime''} Project (QISS)}.~The opinions expressed in this project/publication are those of the author(s) and do not necessarily reflect the views of the John Templeton Foundation.

\appendix

\section{QRFs and tensor product structure changes}\label{app_QRFchange}

In Sec.~\ref{ssec_PN}, we briefly described QRF transformations between two choices of QRF, $R$ and $R'$, which in general split the kinematical Hilbert space differently into redundant frame and complementary system information, $\Hil_{\rm kin}\simeq\Hil_R\otimes\Hil_S$ and $\Hil_{\rm kin}\simeq\Hil_{R'}\otimes\Hil_{S'}$. That is, each defines a TPS on $\Hil_{\rm kin}$, which is nothing but an equivalence class of unitaries
\begin{equation}
    t_R:\Hil_{\rm kin}\rightarrow\Hil_R\otimes\Hil_S\,,
\end{equation}
and similarly for $R'$, where $t_R\sim\tilde{t}_R$ if $t_R\circ \tilde{t}_R^{\,\dag}$ is a combination of tensor product unitaries $W_R\otimes W_S$ and permutations of tensor factors of equal dimension (e.g., see \cite{Cotler:2017abq}). 
The two partitions of $\Hil_{\rm kin}$ are thus related by $v:\Hil_R\otimes\Hil_S\rightarrow\Hil_{R'}\otimes\Hil_{S'}$, where 
\begin{equation}
    v=t_{R'}\circ t_R^\dag\,,
\end{equation}
which is a change of TPS when $v$ is non-local and which, for notational simplicity, we had left implicit in the main text. It is, however, crucial for the generalization of the QRF changes from the standard case that starts from a rigid split $\Hil_{\rm kin}=\Hil_R\otimes\Hil_{R'}\otimes\Hil_S$ to the general product partitions\footnote{One can easily generalize to the case when $\Hil_{\rm kin}$ is isomorphic to a direct sum of tensor factors, but this will not be relevant here.} of $\Hil_{\rm kin}$ that become important in this work. A more explicit discussion of QRF transformations among a continuum of distinct choices will appear in \cite{TBH}, including examples, but we summarize those changes diagrammatically in Fig.~\ref{fig:QRFchange}. 
\begin{figure}[h!]
 \[
    \xymatrix{
        && \ar[dll]
        _-{t_{R}}\underset{(\text{with }U^g)}{\mathcal{H}_{\rm kin}} \ar[dd]|(0.525)\hole|(0.58)\hole|(0.645)\hole
        ^(.335){\Pi_{\rm pn}} \ar[drr]^-{t_{R'}} &&\\
\ar[dd]|(.475){t_R\Pi_{\rm pn}t_R^\dag=\frac{1}{|\mathcal G|}\sum_g U_R^g\otimes U_S^g\qquad}\underset{(\text{with }U_R^g\otimes U_S^g)}{\mathcal{H}_{R}\otimes\mathcal{H}_S} \ar[rrrr]_{v\,=\,t_{R'} \circ\, t_R^\dag} &&&& \underset{(\text{with }U_{R'}^g\otimes U_{S'}^g)}{\mathcal{H}_{R'}\otimes\mathcal{H}_{S'}}\ar[dd]|(.475){\qquad t_{R'}\Pi_{\rm pn}t_{R'}^\dag=\frac{1}{|\mathcal G|}\sum_g U_{R'}^g\otimes U_{S'}^g}\\
      && \ar[dll]
        _-{t_{R}}\Hil_{\rm pn}\ar[drr]^-{t_{R'}} &&\\
     \ar[dd]_-{\mathcal R_R^g}t_R(\Hil_{\rm pn})\ar[rrrr]_{v\,=\,t_{R'} \circ\, t_R^\dag} &&&& t_{R'}(\Hil_{\rm pn})\ar[dd]^{\mathcal R_{R'}^{g'}}\\
     &&&&\\
     \Hil_{|R}\ar[rrrr]^(.428){V_{R\to R'}\,=\,\mathcal R_{R'}^{g'}\,\circ\,v\,\circ\, (\mathcal R_{R}^{g})^\dag}_(.56){=\,(\mathcal R_{R'}^{g'}\,\circ\,t_{R'})\,\circ\,(\mathcal R_{R}^{g}\,\circ\,t_{R})^\dag} &&&& \Hil_{|R'}
} \]
\caption{QRF changes as quantum coordinate transformations in the general context of refactorizing $\Hil_{\rm kin}$.~Note that the intermediate TPS change $v$ on $\Hil_{\rm kin}$ does not affect the compositional ``coordinate change'' form of the transformation familiar from the standard case \cite{delaHamette:2021oex,Hoehn:2019fsy,Hoehn:2021flk,Vanrietvelde:2018dit,Vanrietvelde:2018pgb}.~Note also that $V_{R\to R'}$ is a change of TPS on $\Hil_{\rm pn}$ \cite{Hoehn:2023ehz}.~This generalizes previous constructions in the literature and will be discussed in greater detail in \cite{TBH}.}\label{fig:QRFchange}
\end{figure}

In particular, the usual compositional form of QRF transformations as ``quantum coordinate transformations'' survives in the more general case thanks to the compositional form of the kinematical TPS changes
\begin{eqnarray}
    V_{R\to R'}&=&\left(\mathcal{R}_{R'}^{g'}\circ t_{R'}\right)\circ\left(\mathcal{R}_R^g\circ t_R\right)^\dag\\
    &\underset{\eqref{PWred}}{=}&|\mathcal{G}|\left(\bra{g'}_{R'}\otimes I_{S'}\right)(t_{R'}\Pi_{\rm pn}t_{R'}^\dag)\,v\,t_R\,\Pi_{\rm pn}t_R^\dag\left(\ket{g}_R\otimes I_S\right)\nonumber\\
    &=&|\mathcal{G}|\left(\bra{g'}_{R'}\otimes I_{S'}\right)t_{R'}\Pi_{\rm pn}\,t_R^\dag\left(\ket{g}_R\otimes I_S\right)\,.
\end{eqnarray}
In the special case that $S=R'\otimes\tilde{S}$, $v$ becomes trivial (a change of local basis) and we recover the explicit standard form of QRF transformations, e.g.\ see \cite{delaHamette:2021oex,Hoehn:2021flk,Hoehn:2019fsy,delaHamette:2020dyi},
\begin{equation}
    V_{R\to R'}=\sum_{g''}\ket{g''g}_R\otimes\bra{g''^{-1}g'}_{R'}\otimes U_{\tilde{S}}^{g''}\,.
\end{equation}
This constitutes a change of TPS on $\Hil_{\rm pn}$ \cite{Hoehn:2023ehz}.

The same discussion as above and in Fig.~\ref{fig:QRFchange} applies of course also to the change of trivialization/disentangler, which in the general case reads
\begin{equation}
    T_{R'}\circ v\circ T_R^\dag=\left(T_{R'}\circ t_{R'}\right)\circ\left(T_R\circ t_R\right)^\dag\,.
\end{equation}

\section{Projective unitary representations and cocycles}\label{app_projrep}

In Sec.~\ref{sec:extvsintframe}, we considered a possibly projective unitary representation of some group $G$ on the QRF Hilbert space $\mathcal{H}_R$:\begin{equation}\label{eq:projective_rep_R}
    U^g_R\,U^{h}_R=c(g,h)\,U^{gh}_R\,,\quad\forall\,g,h\in G\,,
\end{equation}
where $|c(g,h)|=1$ is a phase term that differs from $1$ whenever $U_R$ is projective. As $U_R^e=I_R$, we have the condition that $c(g,e)=c(e,g)=1$, for all $g\in G$. We now exhibit additional constraints on the phases and some freedom in choosing them. These results will be invoked in proving several technical results across the main body. Note that in this section, $G$ need not be finite Abelian.

As a result of associativity, the phases $\{c(g,h)\}_{(g,h) \in G\times G}$ must obey the following \emph{cocycle conditions}:
\begin{equation}\label{eq:cocycle_cond}
\forall g,h , k \in G\,, \qquad c(g,hk) c(h,k) = c(g,h) c(gh , k )\,,
\end{equation}
and we will refer to the function $c$ as a \emph{cocycle}. For any $g \in G$, we have the freedom to redefine $U_R^g$ as $\tilde{U}_R^g := \lambda_g U_R^g$, where $\lambda_g$ is a phase (with the condition that $\lambda_e = 1$ given our assumption that the neutral element $e$ must always be represented by the identity $I_R$). $\tilde{U}_R$ then defines a unitary projective representation with cocycle $\tilde{c}$ obeying:
\begin{equation}\label{eq:change_of_cocycle}
\forall g,h \in G\,, \qquad     \tilde{c}(g,h) \lambda_{gh} = c(g,h) \lambda_g \lambda_h\,.
\end{equation}
$U_R$ and $\tilde{U}_R$ are considered equivalent as projective representations. Similarly, we say that two cocycles $c$ and $\tilde{c}$ are equivalent whenever there exist phases $\{\lambda_g\}_{g \in G}$ (with $\lambda_e = 1$) such that \eqref{eq:change_of_cocycle} holds. Finally, a cocycle $c$ is said to be trivial if it is equivalent to the unit cocycle, namely: there exist phases $\{\lambda_g\}_{g \in G}$ (with $\lambda_e = 1$) such that
\begin{equation}\label{eq:trivial_cocycle}
    \forall g,h \in G\,, \qquad c(g,h) = \frac{\lambda_{gh}}{\lambda_g \lambda_h}\,.
\end{equation}
In such a situation, the projective representation $U_R$ from \eqref{eq:projective_rep_R} is in fact equivalent to a non-projective representation.

Before proceeding, let us use the freedom to redefine phases to fix some of the coefficients of the cocycle $c$. First, we note that, for any $g \in G$
\begin{equation}
    c(g,g^{-1}) = c(g^{-1},g)\,.
\end{equation}
Indeed, since $U_R(g)U_R(g^{-1}) = c(g,g^{-1})I_R$, we have $U_R(g)^{-1}=c^*(g,g^{-1}) U_R(g^{-1})$; it follows that $I_R = c^*(
g,g^{-1}) U_R(g^{-1}) U_R(g) = c^*(g,g^{-1}) c(g^{-1},g) I_R$, and therefore $c^*(g,g^{-1}) c(g^{-1},g)=1$. Now, if for any $g \in G$, we define $\lambda_g = \lambda_{g^{-1}}$ to be a square-root of $c^*(g,g^{-1})= c^*(g^{-1},g)$ (with $\lambda_e = 1$) and use \eqref{eq:change_of_cocycle} to introduce an equivalent cocycle $\tilde{c}$, we find that
\begin{equation}
    \tilde{c}(g,g^{-1}) = c(g,g^{-1}) \frac{\lambda_g \lambda_{g^{-1}}}{\lambda_e} = c(g,g^{-1}) c^*(g, g^{-1}) = 1\,.
\end{equation}
This shows that we are free to redefine the cocycle $c$ in \eqref{eq:projective_rep_R} so that:
\begin{equation}\label{eq:cocycle_gauge_restriction2}
    \forall g \in G\,, \qquad c(g, g^{-1})=1\,,
\end{equation}
which we will assume from now on.\footnote{In Sec.~\ref{Sec:genstabilizerQRF}, the $U_R^g$ will be strings of Pauli operators, and so by construction they already satisfy $c(g,g^{-1}) = 1$, since strings of Pauli operators square to the identity.} As a consequence of \eqref{eq:cocycle_gauge_restriction2}, the following useful properties hold: for any $g \in G$,
\begin{equation}\label{eq:inverse_of_UR2}
    (U_R^g)^\dagger = (U_R^g)^{-1} = U_R^{(g^{-1})}\,; 
\end{equation}
and for any $g,h \in G$,
\begin{equation}\label{eq:conjugate_of_cocycle2}
    c^{*}(g,h) = c(h^{-1} , g^{-1} ) \,.
\end{equation}
Indeed, a proof of the preceding property goes as follows:
\begin{align*}
I_R &= U_R(gh) U_R(gh)^{-1} = U_R(gh) U_R(h^{-1} g^{-1}) = c(g,h) c(h^{-1}, g^{-1}) U_R(g)U_R(h)U_R(h^{-1})U_R(g^{-1}) \\
&= c(g,h) c(h^{-1} ,g^{-1}) U_R(g)U_R(h)U_R(h)^{-1} U_R(g)^{-1} = c(g,h) c(h^{-1} ,g^{-1}) I_R\,,
\end{align*}
which can only hold if $c(g,h) c(h^{-1} ,g^{-1}) =1$, that is, if $c^{*}(g,h) = c(h^{-1} , g^{-1})$.

\section{Harmonic analysis on a finite Abelian group}\label{app:Pontryagin}

Let $G$ be a finite Abelian group (the relevant case for the main body of the paper is $G= \mathbb{Z}_2^{\times (n-k)}$). The \emph{Pontryagin dual} of $G$ is defined to be the group of homomorphisms from $G$ to the unit circle $\hat{G}:= {\rm Hom}(G, {\rm U}(1))$, equipped with the pointwise product law: that is, for any $\chi , \eta \in \hat{G}$, we define $\chi\eta: g \mapsto \chi(g)\eta(g)$. Given that $G$ is Abelian, $\hat{G}$ is nothing but the group of irreducible characters of $G$, and we have $|\hat{G}|= |G|$. Since $G$ has finite order, there is also some $k \geq 2$ such that, for any $\chi \in \hat{G}$, ${\rm Im}(\chi)$ is a subset of the subgroup of $k^{\rm th}$-roots of unity (in the main body of the paper, we have ${\rm Im}(\chi) \subset \{1, -1\}$, hence we can take $k=2$). We also have a canonical pairing between $G$ and $\hat{G}$, defined by
\begin{align}
    \langle \cdot , \cdot \rangle: \hat{G}\times G &\to {\rm U}(1) \\
    (\chi , g)& \mapsto \chi(g) \nonumber
\end{align}

The \emph{Pontryagin duality theorem} states that $\hat{\hat{G}}$ is canonically isomorphic to $G$. Indeed, it turns out that the irreducible characters of $\hat{G}$ are provided by the evaluation maps ${\rm ev}_g: \hat{G}\ni \chi \mapsto \chi(g)$ ($g \in G$), which verify ${\rm ev}_{g_1}{\rm ev}_{g_2}= {\rm ev}_{g_1 g_2}$ for any $g_1 , g_2 \in G$, and therefore provide a canonical isomorphism with $G$.

\medskip

The \emph{group algebra} of $G$ is the algebra of formal linear combinations of elements of $G$ with complex coefficients or, equivalently, the algebra of complex functions on $G$ with convolution as product. In more detail, one first defines $\ell^2(G)$ as the complex vector space of formal linear combinations of elements in $G$. Given some $f \in \ell^2(G)$, we will use the notation
\begin{equation}
f = \sum_{g \in G} f(g)\, g\,,    
\end{equation}
where, for any $g \in G$, $f(g) \in \mathbb{C}$ denotes the value of the function $f$ at $g$. In particular, $\ell^2(G)$ can equivalently be thought as the vector space of complex functions on $G$ (which are always square-integrable since $G$ is finite), hence the notation. One can then equip $\ell^2(G)$ with a product $\star: \ell^2(G) \times \ell^2(G) \to \ell^2(G)$, defined to be the unique bilinear map such that:
\begin{equation}
    \forall g , h \in G \subset \ell^2(G)\,, \qquad g \star h := gh\,.
\end{equation}
We recognize $\star$ as a convolution product since, for any $f_1,f_2 \in \ell^2(G)$, we we have:
\begin{equation}
f_1 \star f_2 := \sum_{g\in G} \sum_{h \in G} f_1(g) f_2(h) \, gh =  \sum_{g \in G} \Biggl( \underbrace{\sum_{h \in G} f_1(gh^{-1}) f_2 (h)}_{f_1 \star f_2 (g)} \Biggr) \, g\,.
\end{equation}
This makes $\left( \ell^2(G), \star \right)$ an algebra. 

Finally, one can introduce an involution $^*: \ell^2(G) \to \ell^2(G)$, defined as the unique \emph{antilinear} map such that:
\begin{equation}
\forall g \in G\subset \ell^2(G)\,, \qquad g^{*} := g^{-1}\,.    
\end{equation}
In other words, for any $f \in \ell^2(G)$ and $g \in G$, we have
\begin{equation}
    f^*(g):= \overline{f(g^{-1})}\,.
\end{equation}
One can check that, for any $f_1 , f_2 \in \ell^2(G)$
\begin{equation}
    \left( f_1 \star f_2 \right)^{*} = f_2^{*} \star f_1^{*}\,,
\end{equation}
which makes $\left( \ell^2(G), \star , ^* \right)$ a $*$-algebra. 

\medskip

We have an isomorphism between the vector spaces $\ell^2(G)$ and $\ell^2(\hat{G})$, given by the unique linear map $\cF: \ell^2(G) \to \ell^2(\hat{G})$ such that:
\begin{align}
\forall g \in G\,, \qquad \cF[g] := \frac{1}{\sqrt{|G|}}\sum_{\chi \in \hat{G}} \bar\chi(g)\, \chi\,.
\end{align}
In other words, for any $f\in \ell^2(G)$ and $\chi \in \hat{G}$, we have:
\begin{equation}
    \mathcal{F}[f](\chi)= \frac{1}{\sqrt{|G|}} \sum_{g \in G} f(g) \bar\chi (g)\,.
\end{equation}
$\cF$ defines a \emph{Fourier transform}; its inverse is the unique linear map $\cF^{-1}: \ell^2(\hat{G}) \to \ell^2(G)$ such that:
\begin{equation}
\forall \chi \in \hat{G}\,, \qquad \cF^{-1}[\chi] := \frac{1}{\sqrt{|G|}}\sum_{g \in G} \chi(g)\, g\,.
\end{equation}
Equivalently, for any $\hat{f} \in \ell^2(\hat{G})$ and $g \in G$, we have 
\begin{equation}
    \mathcal{F}^{-1}[\hat{f}](g)= \frac{1}{\sqrt{|G|}} \sum_{\chi \in \hat{G}} \hat{f}(\chi) \chi (g)\,.
\end{equation}

As usual, the Fourier transform maps the convolution product to the pointwise product, which with our notations can be expressed as:
\begin{align}
\forall f_1 , f_2 \in \ell^2(G)\,, \chi \in \hat{G}\,, \qquad \cF [f_1 \star f_2] (\chi) &= \cF[f_1](\chi) \, \cF[f_2](\chi)  \,,  \\
\forall \hat{f}_1 , \hat{f}_2 \in \ell^2(\hat{G})\,, g \in G\,, \qquad \cF^{-1} [\hat{f}_1 \star \hat{f}_2] (g) &= \cF^{-1}(\hat{f}_1)(g) \, \cF^{-1}(\hat{f}_2)(g) \,.
\end{align}
In particular, note that $\mathcal{F}$ does \emph{not} define a $*$-algebra isomorphism from $(\ell^2(G), \star , *)$ to $(\ell^2(\hat{G}), \star , *)$. Rather, it is a $*$-algebra isomorphism from $(\ell^2(G), \star , *)$ to $(\ell^2(\hat{G}), \cdot  , \overline{\phantom{x}})$ where $\cdot$ (resp. $\overline{\phantom{x}}$) denotes the pointwise product (resp. standard complex conjugation) on the vector space of complex functions on $\hat{G}$. 

\section{Representations and operator-valued Fourier transform}\label{app:reps}

Let $G$ be a finite Abelian group and $U: G \to {\rm Aut}(\cH_{\rm kin}), g \mapsto U^g$ a unitary representation of $G$ on a finite-dimensional Hilbert space $\cH_{\rm kin}$.\footnote{In the main body of the paper, the image of $g$ by $U$ is denoted $U^g$. This notation is convenient as it allows to add extra labels referring to subsystems and frames as subscripts. Since we do not need such extra labels in the present Appendix, we find more convenient to use the notation $U(g)$.} We can decompose $\cH_{\rm kin}$ into irreducible representations labeled by characters \cite{serre1977linear, simon1996representations}, or in other words, by elements of the Pontryagin dual $\hat{G}$. Given $\chi \in \hat{G}$, the \emph{isotype} $\cH_\chi$ is defined as the span of all the irreducible subspaces in $\cH_{\rm kin}$ which are isomorphic to the representation $\chi$. We then have a canonical decomposition of the full Hilbert space into isotypes:
\begin{equation}\label{eq:decomp_isotypes}
\cH_{\rm kin} = \bigoplus_{\chi \in \hat{G}} \cH_\chi \,,    
\end{equation}
where for any $\chi \in \hat{G}$, $\cH_\chi$ turns out to be the image of the orthogonal projector \cite{serre1977linear, simon1996representations}
\begin{equation}\label{eq:projector_isotype}
   P_\chi := \frac{1}{|G|} \sum_{g \in G} \bar\chi(g) \, U(g)\,. 
\end{equation}
Owing to the unitarity of $U$, the decomposition \eqref{eq:decomp_isotypes} is orthogonal, which means that $P_\chi P_\eta = \delta_{\chi, \eta} P_\chi$ for any $\chi , \eta \in \hat{G}$.

Another useful way of understanding the unitary representation $U$ is as a $*$-algebra representation. For this purpose, let $\mathcal{B}(\cH_{\rm kin})$ denote the $*$-algebra of linear operators on $\cH_{\rm kin}$ (with the operator product as product, and the adjoint $^\dagger$ as involution). There is a unique $*$-algebra representation $\uU: \ell^2(G) \to \mathcal{B}(\cH_{\rm kin})$ associated to $U$: $\uU$ is defined to be the unique linear map verifying $\uU(g):= U(g)$ for any $g \in G$. One can think of $U$ and $\uU$ interchangeably as they encode the exact same data. For instance, the fact that $U$ is unitary is equivalent to the fact that $\uU$ preserves the $*$-structure. However, $U$ being faithful (as a group representation) is \emph{not} equivalent to $\uU$ being faithful (as a $*$-algebra representation). If $\uU$ is faithful (meaning that $\ker \uU = \{ 0\}$), then $U$ is faithful (meaning that $U^{-1}({\rm id}_{\rm \cH_{\rm kin}}) = \{ e \}$),\footnote{Indeed, if $U$ is not faithful, then we have $g \neq e$ such that $U(g)={\rm id}_{\rm \cH_{\rm kin}}$. Hence $\uU(g - e)= U(g)- {\rm id}_{\rm \cH_{\rm kin}} = 0$, and since $g - e \neq 0$ we conclude that $\uU$ is not faithful.} but the converse is not true because $\{U(g)\}_{g\in G}$ need not be linearly independent.

\medskip

Let us now suppose that the dual group $\hat{G}$ is also represented on $\cH_{\rm kin}$, by a unitary representation $\hat{U}: \hat{G} \to {\rm Aut}(\cH_{\rm kin}), \chi \mapsto \hat{U}(\chi) = \hat{U}^\chi$.\footnote{In the main body of the paper, the image of $\chi$ by $\hat{U}$ is denoted $\hat{U}^\chi$. In the present Appendix, we find it more convenient to use the notation $\hat{U}(\chi)$.} Employing the same notations as above, we denote by $\hat{\uU}: \ell^2(\hat{G}) \to \mathcal{B}(\cH_{\rm kin})$ the unique $*$-algebra representation that linearly extends $\hat{U}$. $\hat{U}$ induces a second decomposition of $\cH_{\rm kin}$:
\begin{equation}\label{eq:decomp_isotypes-dual}
\cH_{\rm kin} = \bigoplus_{g \in G}  \hat{\cH}_{g}\,, 
\end{equation}
where $\hat{\cH}_g$ is the isotype associated to ${\rm ev}_g \in \hat{\hat{G}} \simeq G$. Note how we are relying on the Pontryagin duality theorem to view an irreducible character of $\hat{G}$ as an element of the group $G$. For any $g \in G$, $\hat{\cH}_g$ is the image of the orthogonal projector 
\begin{equation}\label{eq:projector_isotype-dual}
    \hat{P}_g  := \frac{1}{|G|} \sum_{\chi \in \hat{G}} \bar\chi(g) \, \hat{U}(\chi)\,,
\end{equation}
and we have $\hat{P}_g \hat{P}_h = \delta_{g,h} \hat{P}_g$ for any $h\in G$.

\medskip

Let us assume, in addition, that $\uU$ and $\hat{\uU}$ are both \emph{faithful} (that is, as group algebra representations). We can then induce an isomorphism $\tilde{\cF}: \uU\left( \ell^2(G) \right) \to \hat{\uU}( \ell^2(\hat{G}) )$ by requiring the following diagram to commute:
$$
\xymatrix{
\ell^2(G)\ar[rr]^-{\uU}\ar@<-2.5pt>[dd]_{\mathcal F} && \uU(\ell^2(G))\ar@<-25pt>[dd]^-{\tilde{\mathcal F}}\subset\mathcal B(\Hil_{\rm kin})\\
&&\\
\ell^2(\hat{G})\ar@<-2.5pt>[uu]_-{\mathcal F^{-1}}\ar[rr]^-{\hat \uU} && \hat \uU(\ell^2(\hat{G}))\ar@<30pt>[uu]^-{\tilde{\mathcal F}^{-1}}\subset\mathcal B(\Hil_{\rm kin})
}
$$
$\tilde{\cF}$ then defines an \emph{operator-valued Fourier transform}, which proves useful to relate $P_\chi$ to $\hat{U}(\chi)$ (resp. $\hat{P}_g$ to $U(g)$). To see this, let us first observe that
\begin{align}
    \forall \chi \in \hat{G}\,, \qquad \uU\left( \cF^{-1}[\bar\chi]\right) &= \sqrt{|G|} P_\chi  \,,\\
    \forall g \in G\,,  \qquad \hat{\uU}\left( \cF[g]\right) &= \sqrt{|G|} \hat{P}_g  \,.
\end{align}
We can compose those two equations by $\tilde{\cF}$ and $\tilde{\cF}^{-1}$ respectively, to obtain
\begin{align}
    \forall \chi \in \hat{G}\,, \qquad \hat{U}(\bar\chi) &=  \tilde{\cF}\left( \sqrt{|G|} P_\chi \right)  \,,\label{dualfourier}\\
    \forall g \in G\,,  \qquad U(g) &=  \tilde{\cF}^{-1}\left(\sqrt{|G|} \hat{P}_g  \right)\,.
\end{align}
Hence, each representation can be related to its projectors by means of the operator-valued Fourier transform. 

\section{Dual representations}\label{app:duality}

Let $G$ be a finite Abelian group. We can introduce a notion of duality for representations as follows.
\begin{defn}[Duality of representations]\label{def:dual_reps_faithful}
Let $U:G \to {\rm Aut}(\cH_{\rm kin})$ and $\hat{U}:\hat{G} \to {\rm Aut}(\cH_{\rm kin})$ be two unitary representations on a finite-dimensional Hilbert space $\cH_{\rm kin}$. We will say that $U$ and $\hat{U}$ are \emph{dual to each other} if: for any $g \in G$ and any $\chi \in \hat{G}$, 
\begin{equation}\label{eq:dual_reps_faithful}
U(g) \hat{U}(\chi) = \chi(g) \, \hat{U}(\chi) U(g)\,.
\end{equation}
\end{defn}
As explained around Eq.~\eqref{eq:Weyl}, this notion of dual representations is simply a discrete version of position-momentum duality and the Weyl relations in standard quantum mechanics.

The commutation relations among elements of two dual representations are thus entirely determined by the canonical pairing between the group and its dual (see Appendix \ref{app:Pontryagin}). If $U$ is given, this does not fix $\hat{U}$ uniquely, but as we will see, this provides interesting structure. 

Note that Definition \ref{def:dual_reps_faithful} can be overly constraining, in the sense that if $U$ is fixed, a dual $\hat{U}$ does not necessarily exist. This is for instance the case whenever $U$ fails to be faithful, since equation \eqref{eq:dual_reps_faithful} trivially fails for $g \in U^{-1}(\{ {\rm id}_{\cH_{\rm kin}}\}) \setminus \{ e\}$. It is actually possible to refine the definition of duality in order to accommodate non-faithful representations. However, since in the present paper we will only apply this notion to a faithful $U$ (even $\uU$ will always be faithful in our examples), we will refrain from doing so here. We also note that Definition \ref{def:dual_reps_faithful} is compatible with the homomorphism property of the representations. For instance, if we have $g_1, g_2 \in G$ and $\chi \in \hat{G}$ such that $U(g_1)\hat{U}(\chi) = \chi(g_1)\hat{U}(\chi)U(g_1)$ and $U(g_2)\hat{U}(\chi) = \chi(g_2)\hat{U}(\chi)U(g_2)$, then it follows from the homomorphism property of $U$ that $U(g_1 g_2)\hat{U}(\chi) = \chi(g_1g_2)\hat{U}(\chi)U(g_1 g_2)$. A similar statement holds for the dual representation $\hat{U}$, so in practice, we only need to establish \eqref{eq:dual_reps_faithful} for $g$ and $\chi$ drawn from generator subsets in order to prove that $U$ and $\hat{U}$ are dual. 

Let us assume from now on that $U:G \to \mathrm{Aut}(\cH_{\rm kin})$ and $\hat{U}:\hat{G} \to \mathrm{Aut}(\cH_{\rm kin})$ are dual to each other. The duality condition from Definition \ref{def:dual_reps_faithful} immediately leads to: 
\begin{align}
    \forall g,h \in G \,, \qquad U(h)^\dagger \hat{P}_g U(h) &= \hat{P}_{gh} \,, \label{Pgcov}\\
    \forall \chi,\eta \in \hat{G} \,, \qquad \hat{U}(\eta) P_\chi \hat{U}(\eta)^\dagger &= P_{\chi \eta}\,.\label{Pchicov}
\end{align}
The first equation defines a right-action of $G$ on $\{ \hat{P}_g \,|\, g \in G\}$, while the second defines a left-action of $\hat{G}$ on $\{ P_\chi \,|\, \chi \in \hat{G}\}$ (when the characters of $G$ take value in $\{1, -1\}$, as is the case in stabilizer codes, the distinction between right- and left-action becomes irrelevant). In particular, the two previous equations make clear that all the projectors in the family $\{ \hat{P}_g\}_{g \in G}$ (resp. $\{ P_\chi \}_{\chi \in \hat{G}}$) necessarily have the same trace, hence: $\forall g \in G, \chi \in \hat{G}$,
\begin{equation}\label{equaldim}
    \dim (\cH_\chi) =  \frac{\dim(\cH_{\rm kin})}{|G|} = \dim (\hat{\cH}_g)\,.
\end{equation}
We will see that this equation holds in particular for stabilizer codes, and can be derived in that context without relying on the existence of a dual representation (see Appendix \ref{app:stabilizer_reps}). 

The commutation relations also imply that, when we perform a coherent group-average on a dual projector (resp.\ a projector) by the representation $U$ (resp.\ by the dual representation $\hat{U}$), the result is proportional to the identity. This has a nice interpretation in terms of quantum states: the state $\hat{\Psi}_g := \frac{|G|}{\dim(\cH_{\rm kin})} \hat{P}_g$ (which describes a maximally-mixed state on $\hat{\cH}_g$, that is to say a state with definite dual charge $g$ about which we have no other information) is mapped under coherent group-averaging by $U$ to the maximally-mixed state on $\cH_{\rm kin}$
\begin{equation}
   \frac{1}{|G|} \sum_{h \in G} U(h) \hat{\Psi}_g U(h)^\dagger = \frac{1}{\dim(\cH_{\rm kin})} \sum_{h\in G} \hat{P}_{gh^{-1}} = \frac{1}{\dim(\cH_{\rm kin})} \sum_{h\in G} \hat{P}_{h} = \frac{{\rm id}_{\cH_{\rm kin}}}{\dim(\cH_{\rm kin})}\,.
\end{equation}
Likewise, a coherent group-average by the dual representation $\hat{U}$ of the definite-charge maximally-mixed state $\Psi_\chi := \frac{|G|}{\dim(\cH_{\rm kin})} P_\chi$ produces the maximally-mixed state on $\cH_{\rm kin}$:
\begin{equation}
   \frac{1}{|G|} \sum_{\eta \in \hat{G}} \hat{U}(\eta) \Psi_\chi \hat{U}(\eta)^\dagger =  \frac{{\rm id}_{\cH_{\rm kin}}}{\dim(\cH_{\rm kin})}\,.
\end{equation}
Those relations show that the charge and dual charges are complementary: if one knows the value of one of them with certainty, one has essentially no information about the other.

The previous commutation relations also imply the following  identities: for any $g \in G$ and $\chi \in \hat{G}$, we have 
\begin{equation}
\begin{split}\label{Pscommutation}
    \hat{P}_g P_\chi &= \frac{1}{|G|} \sum_{h \in G} \bar{\chi}(h) U(h) \hat{P}_{hg} = \frac{1}{|G|} \sum_{\eta \in \hat{G}} \bar{\eta}(g) P_{\eta \chi} \hat{U}(\eta)\,,\\
    {P}_\chi \hat{P}_g& = \frac{1}{|G|} \sum_{h \in G} \chi(h) \hat{P}_{hg} U(h)  = \frac{1}{|G|} \sum_{\eta \in \hat{G}} \eta(g) \hat{U}(\eta) P_{\eta \chi} \,.
    \end{split}
\end{equation}

\medskip

To conclude this appendix, let us provide a characterization of the duality structure of Definition \ref{def:dual_reps_faithful} in terms of $*$-algebra representations. To this effect, we introduce a $*$-algebra $\mathcal{A}(G)$ which can be understood as a \emph{twisted} version of the standard group algebra $\ell^2(\hat{G} \times G)$. As a vector space, $\mathcal{A}(G)$ is the space of complex functions on $\hat{G} \times G$. As before, we identify an $f\in \mathcal{A}(G)$ to the formal sum of elements of the group
\begin{equation}
    f= \sum_{(\chi, g) \in \hat{G} \times G} f(\chi, g)\; (\chi, g)\,.
\end{equation}
We can then define a product as the unique bilinear map $\star:\mathcal{A}(G) \times \mathcal{A}(G) \to \mathcal{A}(G)$ such that: for any $(\chi_1 , g_1), (\chi_2 , g_2) \in \hat{G} \times G$,
\begin{equation}
    (\chi_1 , g_1)\star(\chi_2 , g_2) = \chi_2 (g_1) \;(\chi_1 \chi_2 , g_1 g_2) \,.
\end{equation}
Note that this product is \emph{noncommutative}. This makes $(\mathcal{A}(G), \star)$ an algebra, thanks to the following lemma.
\begin{lem}
    The product $\star$ is associative. Furthermore, for any $f_1 , f_2 \in \mathcal{A}(G)$, and any $(\chi, g) \in \hat{G} \times G$, we have
    \begin{equation}\label{eq:twisted_conv}
        f_1 \star f_2 (\chi , g) = \sum_{(\eta , h) \in \hat{G} \times G} f_1 (\chi \bar\eta , g h^{-1}) \, \eta(g h^{-1}) \, f_2 (\eta , h) \,.
    \end{equation}
\end{lem}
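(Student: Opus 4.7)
My plan is to handle the two claims in reverse order: first derive the explicit convolution formula, and then use it (or, equivalently, work directly on basis elements) to verify associativity. Both claims are in the end consequences of bilinearity together with the multiplicativity of characters, so no genuinely hard ingredient is needed.

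For the formula, I would expand $f_1$ and $f_2$ in the basis of delta functions,
\begin{equation*}
    f_1 = \sum_{(\chi_1, g_1) \in \hat{G} \times G} f_1(\chi_1, g_1) \, (\chi_1, g_1) \,, \qquad f_2 = \sum_{(\chi_2, g_2) \in \hat{G} \times G} f_2(\chi_2, g_2) \, (\chi_2, g_2) \,,
\end{equation*}
apply bilinearity of $\star$, and use its defining action on basis elements:
\begin{equation*}
    f_1 \star f_2 = \sum_{(\chi_1, g_1), (\chi_2, g_2)} f_1(\chi_1, g_1) f_2(\chi_2, g_2) \, \chi_2(g_1) \, (\chi_1 \chi_2, g_1 g_2) \,.
\end{equation*}
Reading off the coefficient of $(\chi, g)$ amounts to enforcing $\chi_1 \chi_2 = \chi$ and $g_1 g_2 = g$. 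Renaming $(\chi_2, g_2) =: (\eta, h)$, this forces $\chi_1 = \chi \bar\eta$ and $g_1 = g h^{-1}$, so that the twist factor becomes $\chi_2(g_1) = \eta(g h^{-1})$ and one recovers precisely \eqref{eq:twisted_conv}.

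For associativity, by bilinearity it suffices to verify the identity on basis elements $(\chi_i, g_i)$, $i=1,2,3$. On the one hand,
\begin{equation*}
    \bigl( (\chi_1, g_1) \star (\chi_2, g_2) \bigr) \star (\chi_3, g_3) = \chi_2(g_1) \, (\chi_1 \chi_2, g_1 g_2) \star (\chi_3, g_3) = \chi_2(g_1) \chi_3(g_1 g_2) \, (\chi_1 \chi_2 \chi_3, g_1 g_2 g_3) \,.
\end{equation*}
On the other hand,
\begin{equation*}
    (\chi_1, g_1) \star \bigl( (\chi_2, g_2) \star (\chi_3, g_3) \bigr) = \chi_3(g_2) \, (\chi_1, g_1) \star (\chi_2 \chi_3, g_2 g_3) = \chi_3(g_2) (\chi_2 \chi_3)(g_1) \, (\chi_1 \chi_2 \chi_3, g_1 g_2 g_3) \,.
\end{equation*}
The two sides then coincide by the homomorphism property of the character $\chi_3$, which gives $\chi_3(g_1 g_2) = \chi_3(g_1) \chi_3(g_2)$ and hence the common prefactor $\chi_2(g_1) \chi_3(g_1) \chi_3(g_2)$.

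The whole argument is essentially a bookkeeping exercise; the only spot where one has to be careful is the asymmetry of the twist $\chi_2(g_1)$ in the definition of $\star$, which makes it tempting to symmetrize incorrectly. The key observation that unlocks associativity is precisely that the twist factors arising on the two sides (namely $\chi_3(g_1 g_2)$ on the left versus $(\chi_2 \chi_3)(g_1)$ on the right) match once one exploits that characters are multiplicative both in the group entry and in the character entry. No deeper structural input is needed.
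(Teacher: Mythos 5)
Your proposal is correct and follows essentially the same route as the paper: both claims are verified on basis elements $(\chi_i, g_i)$ and extended by bilinearity, with the multiplicativity of characters supplying the matching of twist factors $\chi_2(g_1)\chi_3(g_1 g_2) = \chi_3(g_2)(\chi_2\chi_3)(g_1)$, and the convolution formula obtained by the same relabeling $(\chi_1, g_1) = (\chi\bar\eta, gh^{-1})$. The only difference is the order in which the two claims are treated, which is immaterial.
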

\begin{proof}
   Let $(\chi_1 , g_1), (\chi_2 , g_2), (\chi_3 , g_3) \in \hat{G} \times G$. We straightforwardly have
   \begin{align}
       \left( (\chi_1 , g_1)\star (\chi_2 , g_2)\right)\star (\chi_3 , g_3) &= \chi_2 (g_1) \, (\chi_1 \chi_2 , g_1 g_2) \star (\chi_3 , g_3) = \chi_2 (g_1) \chi_3 (g_1 g_2) \, (\chi_1 \chi_2 \chi_3 , g_1 g_2 g_3) \nonumber\\
       &= \chi_2 (g_1) \chi_3 (g_1) \chi_3 (g_2) \, (\chi_1 \chi_2 \chi_3 , g_1 g_2 g_3) = \chi_3 (g_2) \chi_2\chi_3 (g_1)  \, (\chi_1 \chi_2 \chi_3 , g_1 g_2 g_3) \nonumber \\
       &= \chi_3(g_2)\, (\chi_1 , g_1) \star (\chi_2 \chi_3 , g_2 g_3 )  = (\chi_1 , g_1) \star \left( \chi_3 (g_2) \,  (\chi_2 \chi_3 , g_2 g_3 ) \right) \nonumber \\
       &= (\chi_1 , g_1) \star \left( (\chi_2 , g_2 ) \star ( \chi_3 ,  g_3 ) \right)\,.
   \end{align}
   $\star$ being a bilinear map, it follows that for any $f_1 , f_2 , f_3 \in \mathcal{A}(G)$, $f_1 \star (f_2 \star f_3) = (f_1 \star f_2 ) \star f_3$.

   Moreover, for any $f_1 , f_2 \in \mathcal{A}(G)$ we have
   \begin{align}
       f_1 \star f_2 &= \sum_{(\chi , g) \in \hat{G} \times G} \sum_{(\eta , h) \in \hat{G} \times G} f_1(\chi , g) \, f_2(\eta , h)\; (\chi, g) \star (\eta , h) \\ 
       &= \sum_{(\chi , g) \in \hat{G} \times G} \sum_{(\eta , h) \in \hat{G} \times G} f_1(\chi , g) \, f_2(\eta , h)\eta(g)\; (\chi \eta, g h)  \\
       &= \sum_{(\chi , g) \in \hat{G} \times G} \sum_{(\eta , h) \in \hat{G} \times G} f_1(\chi \bar\eta , gh^{-1}) \, f_2(\eta , h)\eta(g h^{-1})\; (\chi , g ) \\
       &= \sum_{(\chi , g) \in \hat{G} \times G} \left( \sum_{(\eta , h) \in \hat{G} \times G} f_1(\chi \bar\eta , gh^{-1}) \, \eta(g h^{-1}) \, f_2(\eta , h) \right) \; (\chi , g )\,,
   \end{align}
   hence $f_1 \star f_2 (\chi , g) = \underset{{(\eta , h) \in \hat{G} \times G}}{\sum} f_1 (\chi \bar\eta , g h^{-1}) \, \eta(g h^{-1}) \, f_2 (\eta , h)$ for any $(\chi , g ) \in \hat{G} \times G$.
   \end{proof}
Formula \eqref{eq:twisted_conv} shows that $\star$ is a \emph{twisted convolution product}: were it not for the term $\eta(gh^{-1})$, we would indeed obtain the standard convolution product for the group $\hat{G} \times G$. 
We can also introduce an antilinear map $^*: \mathcal{A}(G) \to \mathcal{A}(G)$, which to any $f \in \mathcal{A}(G)$ associates the function: 
\begin{equation}
    f^* = \sum_{(\chi, g) \in \hat{G} \times G} \chi(g) \overline{f(\chi, g)}\; (\bar\chi, g^{-1})\,.
\end{equation}
In other words, we have: $f^*(\chi ,g) = \chi(g) \overline{f(\bar\chi, g^{-1})}$ for any $(\chi, g )\in \hat{G} \times G$. This makes $(\mathcal{A}(G) ,\star , ^*)$ a $*$-algebra thanks to the next lemma.
\begin{lem}
    The antilinear map $^*$ is an involution, and it is compatible with the product $\star$ in the sense that: for any $f_1 , f_2 \in \mathcal{A}(G)$,
    \begin{equation}
        \left( f_1 \star f_2 \right)^* = f_2^* \star f_1^* 
    \end{equation}
As a result, $(\mathcal{A}(G) ,\star , ^*)$ defines a $*$-algebra.
\end{lem}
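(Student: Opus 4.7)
The strategy is to reduce both claims to identities on the natural basis $\{(\chi, g)\}_{(\chi,g) \in \hat G \times G}$ of $\mathcal{A}(G)$, and then extend by (anti)linearity. The compatibility $(f_1 \star f_2)^* = f_2^* \star f_1^*$ is bilinear in $f_1$ and $f_2$ when we view both sides as maps $\mathcal{A}(G) \times \mathcal{A}(G) \to \mathcal{A}(G)$ — antilinearity of $^*$ on the left balances against antilinearity of $^*$ applied separately to $f_1$ and $f_2$ on the right — so it suffices to verify the identity on pairs of basis elements. Likewise, involutivity $(f^*)^* = f$ is linear in $f$ (antilinearity composed with antilinearity is linear), so it is enough to check it on single basis elements.

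For involutivity, I would compute directly: applied to $(\chi,g)$, one antilinear pass gives $\chi(g)\,(\bar\chi, g^{-1})$, and a second pass yields $\overline{\chi(g)}\,\bar\chi(g^{-1})\,(\chi,g) = \chi(g^{-1})\chi(g^{-1})^{-1}(\chi,g) = (\chi,g)$, using $|\chi(g)|=1$ together with $\chi(g)\chi(g^{-1}) = \chi(e)=1$. The only subtlety is tracking complex conjugates carefully; writing $\bar\chi$ for the pointwise conjugate of the character and using $\overline{\bar\chi(g^{-1})} = \chi(g^{-1})$ resolves this cleanly.

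For the compatibility with $\star$, I will compute both sides on $(\chi_1, g_1)$ and $(\chi_2, g_2)$. On the one hand,
\begin{equation*}
((\chi_1, g_1) \star (\chi_2, g_2))^* = \overline{\chi_2(g_1)}\, \chi_1\chi_2(g_1 g_2)\,(\bar\chi_1 \bar\chi_2,\, g_2^{-1}g_1^{-1}),
\end{equation*}
and after expanding $\chi_1\chi_2(g_1 g_2) = \chi_1(g_1)\chi_1(g_2)\chi_2(g_1)\chi_2(g_2)$ the phase simplifies to $\chi_1(g_1)\chi_1(g_2)\chi_2(g_2)$. On the other hand,
\begin{equation*}
(\chi_2, g_2)^* \star (\chi_1, g_1)^* = \chi_2(g_2)\chi_1(g_1)\, (\bar\chi_2, g_2^{-1}) \star (\bar\chi_1, g_1^{-1}) = \chi_2(g_2)\chi_1(g_1)\bar\chi_1(g_2^{-1})\,(\bar\chi_1\bar\chi_2, g_2^{-1}g_1^{-1}),
\end{equation*}
and using $\bar\chi_1(g_2^{-1}) = \chi_1(g_2)$ we recover exactly the same expression. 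The two sides therefore agree on generators, and the result propagates to all of $\mathcal{A}(G)$.

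The main conceptual point (rather than a genuine obstacle) is to make sure the (anti)linearity bookkeeping is right, especially when promoting the basis identities to general elements: one must take the scalar factor in $\lambda f$ on the left-hand side of compatibility and see it reappear as $\bar\lambda$ on the right via $(\lambda f)^* = \bar\lambda f^*$, with a second $\bar\lambda$ absorbed by the bilinearity of $\star$ on the other argument. Apart from this, the proof is a direct unfolding of the definitions and the multiplicativity $\chi(g_1 g_2) = \chi(g_1)\chi(g_2)$ of characters, together with $\chi(g)\chi(g^{-1})=1$.
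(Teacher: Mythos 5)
Your proof is correct and takes essentially the same route as the paper's: a direct unfolding of the definitions of $\star$ and $^*$, using multiplicativity of characters and $\vert\chi(g)\vert=1$. The only difference is organizational — you verify the identities on the basis elements $(\chi,g)$ and extend by (anti)linearity, whereas the paper evaluates $(f^*)^*(\chi,g)$ and $(f_1\star f_2)^*(\chi,g)$ pointwise for arbitrary $f,f_1,f_2$; the two computations are the same.
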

\begin{proof}
Let us check that $^*$ is an involution. For any $f \in \mathcal{A}(G)$, $\chi \in \hat{G}$ and $g \in G$, we have
\begin{equation}
    \left(f^{*}\right)^{*} (\chi, g)= \chi(g) \overline{f^{*} (\bar\chi , g^{-1})} = \chi(g) \overline{\bar\chi(g^{-1}) \overline{f (\chi , g)} } = \chi(g) \bar\chi(g) f (\chi , g)  = f(\chi , g)\,.
\end{equation}

Let us then verify that the involution is compatible with the twisted convolution. Let $f_1 , f_2 \in \mathcal{A}(G)$ and $(\chi , g) \in \hat{G}\times G$. We have
\begin{align}
    \left( f_1 \star f_2 \right)^* (\chi, g) &= \chi(g) \overline{f_1 \star f_2 (\bar\chi , g^{-1})} = \chi(g)\sum_{(\eta, h) \in  \hat{G}\times G} \overline{f_1(\bar\chi \bar\eta , g^{-1}h^{-1})} \, \overline{\eta(g^{-1} h^{-1})} \, \overline{f_2(\eta , h)} \\
    &= \chi(g)\sum_{(\eta, h) \in  \hat{G}\times G}  \overline{\eta \chi}(hg) f_1^* ( \eta \chi , hg) \, \eta(h g) \, \eta(h^{-1})f_2^* (\bar\eta , h^{-1}) \\
    &= \sum_{(\eta, h) \in  \hat{G}\times G}   f_1^* ( \eta \chi , hg) \, \overline{\eta\chi}(h) \, f_2^* (\bar\eta , h^{-1}) = \sum_{(\eta, h) \in  \hat{G}\times G} f_1^* ( \eta , h ) \, \bar\eta(hg^{-1}) \, f_2^* (\chi \bar\eta , gh^{-1}) \\
    &= \sum_{(\eta, h) \in  \hat{G}\times G} f_2^* (\chi \bar\eta , gh^{-1}) \, \eta(gh^{-1}) \,  f_1^* ( \eta , h ) = \left( f_2^* \star f_1^* \right) (\chi , g)\,.
\end{align}
\end{proof}

We then have the following equivalence.

\begin{prop}\label{prop_bijection}
    Let $G$ be a finite Abelian group and $\cH_{kin}$ a finite-dimensional Hilbert space. There is a bijection between the following two sets.
    \begin{enumerate}
        \item The set of pairs $(\hat{U}, U)$ where: $U$ is a unitary representation of $G$ on $\cH_{\rm kin}$; $\hat{U}$ is a unitary representation of $\hat{G}$ on $\cH_{\rm kin}$; $\hat{U}$ and $U$ are dual to each other.
        \item The set of $*$-representations of $\mathcal{A}(G)$ on $\mathcal{B}(\cH_{\rm kin})$.
    \end{enumerate}
\end{prop}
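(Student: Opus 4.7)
\medskip

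\noindent\textbf{Proof plan.} The plan is to exhibit explicit maps in each direction and verify they are mutually inverse. For the forward map $(\hat{U}, U) \mapsto \pi$, I would define $\pi:\mathcal{A}(G)\to\mathcal{B}(\cH_{\rm kin})$ to be the unique linear extension of
\begin{equation*}
\pi(\chi, g) := \hat{U}(\chi)\, U(g)\,, \qquad (\chi, g) \in \hat{G}\times G\,.
\end{equation*}
For the inverse map $\pi \mapsto (\hat{U}, U)$, I would set $U(g) := \pi(1, g)$ and $\hat{U}(\chi) := \pi(\chi, e)$, where $1 \in \hat{G}$ is the trivial character and $e \in G$ the neutral element.

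The core verification is that $\pi$ is a $*$-algebra homomorphism. Compatibility with the twisted product follows directly from the duality relation \eqref{eq:dual_reps_faithful}: commuting $U(g_1)$ past $\hat{U}(\chi_2)$ produces exactly the twist $\chi_2(g_1)$ that appears in the definition of $\star$, namely
\begin{equation*}
\pi(\chi_1, g_1)\pi(\chi_2, g_2) = \hat{U}(\chi_1) U(g_1) \hat{U}(\chi_2) U(g_2) = \chi_2(g_1)\, \hat{U}(\chi_1\chi_2) U(g_1 g_2) = \pi\left((\chi_1, g_1)\star(\chi_2, g_2)\right).
\end{equation*}
Compatibility with the involutions amounts to checking $\pi((\chi, g)^*) = \pi(\chi, g)^\dagger$, which after using $(\chi, g)^* = \chi(g)(\bar\chi, g^{-1})$ reduces again to a single application of the duality relation to move $U(g^{-1})$ past $\hat{U}(\bar\chi)$; the resulting phase $\chi(g)$ matches.

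For the reverse direction, one checks that $U$ and $\hat{U}$ defined from $\pi$ are unitary representations (unitarity from $\pi$ being a $*$-homomorphism, together with $(1,g)^* = (1,g^{-1})$ and $(\chi, e)^* = (\bar\chi, e)$; the homomorphism property from the restriction of $\star$ to the subsets $\{1\}\times G$ and $\hat{G}\times\{e\}$, on which the twist is trivial). Duality follows by computing $\pi(1,g)\pi(\chi, e)$ and $\pi(\chi, e)\pi(1, g)$ and noting they differ exactly by the factor $\chi(g)$. Finally, the two constructions are mutually inverse: starting from $\pi$ and recovering $\pi'(\chi, g) = \hat{U}(\chi) U(g) = \pi(\chi, e)\pi(1, g) = \pi(\chi, g)$, and starting from $(\hat{U}, U)$ and reading off $\pi(1, g) = U(g)$, $\pi(\chi, e) = \hat{U}(\chi)$.

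There is no real obstacle here: once one has written down the correct twisted algebra $\mathcal{A}(G)$ and the correct involution, the verification is essentially bookkeeping. The only delicate point is keeping the phases straight in the involution check — one must use that $\bar\chi(g^{-1}) = \chi(g)$ for characters into $U(1)$ — but no deep input beyond Definition~\ref{def:dual_reps_faithful} is needed.
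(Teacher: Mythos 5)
Your proposal is correct and follows essentially the same route as the paper's proof: the same assignment $(\chi,g)\mapsto\hat{U}(\chi)U(g)$ in one direction, the same reading-off of $U(g)=\pi(1,g)$ and $\hat{U}(\chi)=\pi(\chi,e)$ in the other, with the duality relation supplying both the twist in the product and the phase in the involution check. Your explicit verification that the two constructions are mutually inverse is a small completeness bonus over the paper's write-up, but the substance is identical.
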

\begin{proof}
    Let $(\hat{U},U)$ be a pair of dual representations, as defined in \emph{1}. Let $\mathcal{U}: \mathcal{A}(G) \to \cB(\cH_{\rm kin})$ be the linear map defined by 
    \begin{equation}
        \forall ( \chi , g ) \in \hat{G}\times G\,, \qquad \mathcal{U}((\chi,g)):= \hat{U}(\chi)U(g)\,. 
    \end{equation} 
    For any $(\chi , g),\, (\eta , h) \in \hat{G}\times G$, we have:
    \begin{align}
        \mathcal{U}\left((\chi,g)\right)\mathcal{U}((\eta,h)) &= \hat{U}(\chi)U(g)\hat{U}(\eta)U(h) = \eta(g)\, \hat{U}(\chi)\hat{U}(\eta) U(g)U(h) \\
        &= \eta(g)\, \hat{U}(\chi\eta) U(gh) = \eta(g)\, \mathcal{U}((\chi\eta , gh)) = \mathcal{U}( \eta(g) (\chi\eta , gh)) \\
        &= \mathcal{U}( (\chi,g) \star (\eta , h)) \,, 
    \end{align}
    where the duality between $\hat{U}$ and $U$ was used in the first line. Secondly, for any $(\chi , g)\in \hat{G}\times G$ we have:
    \begin{align}
        \mathcal{U}\left((\chi,g)\right)^\dagger &= \left( \hat{U}(\chi)U(g) \right)^\dagger = U(g)^\dagger \hat{U}(\chi)^\dagger = U(g^{-1}) \hat{U}(\bar\chi) \\   
        &= \bar\chi (g^{-1}) \, \hat{U}(\bar\chi) U(g^{-1}) = \chi(g) \, \mathcal{U}((\bar \chi , g^{-1})) = \mathcal{U}\left(\chi(g) \, (\bar \chi , g^{-1})\right) \\
        &= \mathcal{U}\left( (\chi, g)^* \right)\,,
        \end{align}
    where the duality was used in the second line. As a result, $\mathcal{U}$ is a $*$-representation of $\mathcal{A}(G)$. 
    
Reciprocally, given $\mathcal{U}$ a $*$-representation of $\mathcal{A}(G)$, we define $U: G \to {\rm Aut}(\cH_{\rm kin})$ and $\hat{U}: \hat{G} \to {\rm Aut}(\cH_{\rm kin})$ by
\begin{equation}
        \forall (\chi , g) \in \hat{G}\times G\,, \qquad U(g):=\mathcal{U}((1,g))\,, \qquad \hat{U}(\chi) := \mathcal{U}((\chi, e))\,. 
    \end{equation} 
    Then $U$ is a unitary representation of $G$ since, for any $g, h \in G$:
    \begin{equation}
        U(gh) = \mathcal{U}\left( (1,gh) \right) = \mathcal{U}\left( (1,g) \star (1, h) \right) = \mathcal{U}\left( (1,g) \right) \mathcal{U}\left( (1,{h}) \right) = U(g)U(h)\,,
    \end{equation}
    and
    \begin{equation}
        U(g)^\dagger = \mathcal{U}\left( (1,g) \right)^\dagger = \mathcal{U}\left( (1,g)^* \right) = \mathcal{U}\left( (1,g^{-1}) \right) = U(g^{-1})\,. 
    \end{equation}
    Likewise, $\hat{U}$ is a unitary representation of $\hat{G}$ since, for any $\chi, \eta \in \hat{G}$:
    \begin{equation}
        \hat{U}(\chi \eta) = \mathcal{U}\left( (\chi \eta,e) \right) = \mathcal{U}\left( (\chi,e) \star (\eta, e) \right) = \mathcal{U}\left( (\chi,e) \right) \mathcal{U}\left( (\eta,e) \right) = \hat{U}(\chi){\hat{U}}(\eta)\,,
    \end{equation}
    and
    \begin{equation}
        \hat{U}(\chi)^\dagger = \mathcal{U}\left( (\chi,e) \right)^\dagger = \mathcal{U}\left( (\chi,e)^* \right) = \mathcal{U}\left( (\bar\chi,e) \right) = \hat{U}(\bar\chi)\,. 
    \end{equation}
Finally, $U$ and $\hat{U}$ are dual to each other since, for any $g\in G$ and $\chi \in \hat{G}$:
\begin{equation}
\begin{split}
    U(g)\hat{U}(\chi)&=\mathcal{U}((1,g))\,\mathcal{U}((\chi,e))=\mathcal{U}((1,g)\star(\chi,e))=\mathcal{U}(\chi(g)(\chi,g))\\
    &=\chi(g)\,\mathcal{U}((\chi,g))=\chi(g)\,\mathcal{U}((\chi,e)\star(1,g))=\chi(g)\,\hat{U}(\chi)U(g)\,.
    \end{split}
\end{equation}
This establishes the claimed bijection. 
\end{proof}

\section{Representation-theoretic analysis of stabilizer codes}\label{app:stabilizer_reps}

Let us apply the representation-theoretic tools recalled in Appendices \ref{app:Pontryagin} and \ref{app:reps} to stabilizer codes. We fix some integer $n \in \mathbb{N}^*$ and a kinematical Hilbert space $\cH_{\rm kin}= (\mathbb{C}^2)^{\otimes n}$ equipped with its standard inner product $\langle \cdot \vert \cdot \rangle$. Let also $\mathcal{P}_n$ denote the Pauli group canonically acting on $\cH_{\rm kin}$. A \emph{stabilizer code} encoding $k \in \mathbb{N}^{*}$ ($k < n$) logical qubits can be defined and analyzed in terms of the representation theory of the group $G = \mathbb{Z}_2^{\times (n-k)}$. More precisely, a code is fully specified by a choice of \emph{faithful} \emph{unitary} representation $U:G \to {\rm Aut}(\cH_{\rm kin})$, such that $\mathcal{G}:= {\rm Im}(U) \subset \mathcal{P}_n \setminus \{ - I \}$. We first observe that the two technical assumptions we make about $U$ have a simple $*$-algebra interpretation. 
\begin{lem}\label{lem:faithful_alg_rep}
    Let $U:G \to {\rm Aut}(\cH_{\rm kin})$ a unitary representation such that ${\rm Im}(U)\subset \cP_n$, and $\uU: \ell^2(G) \to \mathcal{B}(\cH_{\rm kin})$ its associated $*$-algebra representation. Then the following two propositions are equivalent:
    \begin{enumerate}
        \item $U$ is faithful (as a group representation) and $-I \notin {\rm Im}(U)$;
        \item $\uU$ is faithful (as a $*$-algebra representation).
    \end{enumerate}
\end{lem}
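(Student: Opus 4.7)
The plan is to prove the two equivalent directions separately, with the forward direction (2) $\Rightarrow$ (1) being essentially a direct contrapositive argument, and the reverse direction (1) $\Rightarrow$ (2) being established via a Hilbert-Schmidt orthogonality argument exploiting the Pauli structure and the fact that every element of $G=\mathbb{Z}_2^{\times(n-k)}$ squares to the identity. The key technical observation to be used throughout is that, since $g^2 = e$ for all $g \in G$, one has $U(g)^2 = I$; combined with the fact that $U(g) \in \cP_n$, this forces the global prefactor of the Pauli string $U(g)$ to lie in $\{+1,-1\}$ (a Pauli string with prefactor $\pm i$ would square to $-I$). Hence each $U(g)$ is a Hermitian involution and thus either equals $\pm I$ or is a non-trivial Pauli string with vanishing trace.

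For direction (2) $\Rightarrow$ (1), I would argue by contraposition on each of the two sub-conditions. If $U$ fails to be faithful, there exist $g \neq h$ in $G$ with $U(g) = U(h)$, and then $\uU(g - h) = 0$ with $g - h \neq 0$ in $\ell^2(G)$, so $\uU$ is not faithful. Likewise, if $-I \in {\rm Im}(U)$, pick $g \in G$ with $U(g) = -I$; necessarily $g \neq e$ since $U(e) = I$, and thus $\uU(g + e) = U(g) + U(e) = -I + I = 0$ with $g + e \neq 0$, again contradicting faithfulness of $\uU$.

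For direction (1) $\Rightarrow$ (2), the plan is to establish linear independence of $\{U(g)\}_{g \in G}$ in $\cB(\cH_{\rm kin})$ via an orthogonality calculation with the Hilbert-Schmidt inner product. For any $g \in G$, using unitarity and the representation property, one has $U(g)^\dagger U(h) = U(g^{-1}) U(h) = U(g^{-1}h)$. By the preliminary observation, $U(g^{-1}h)$ is a Hermitian Pauli string equal to $\pm I$ or traceless. The assumption $-I \notin {\rm Im}(U)$ rules out $U(g^{-1}h) = -I$, and faithfulness of $U$ ensures that $U(g^{-1}h) = I$ happens only when $g = h$. Thus
\begin{equation}
\Tr\bigl(U(g)^\dagger U(h)\bigr) = \dim(\cH_{\rm kin}) \, \delta_{g,h}\,,
\end{equation}
which establishes mutual Hilbert-Schmidt orthogonality, hence linear independence, of the family $\{U(g)\}_{g \in G}$. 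Linear independence in turn is precisely the statement that the unique linear extension $\uU$ is injective, i.e.\ faithful as a $*$-algebra representation.

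The main (mild) obstacle will be rigorously justifying that a Pauli string $P \in \cP_n$ with $P^2 = I$ necessarily has prefactor in $\{+1, -1\}$, and that any such $P \neq \pm I$ is traceless; both facts follow from the standard structure of $\cP_n$, but are worth stating cleanly since they are what unlocks the orthogonality argument. Once that is in hand, the rest of the argument is essentially the standard character-theoretic orthogonality for Abelian groups, adapted to the Pauli setting.
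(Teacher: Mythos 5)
Your proposal is correct and follows essentially the same route as the paper: the (2)$\Rightarrow$(1) direction via the same contrapositive witnesses $g-h$ and $g+e$, and the (1)$\Rightarrow$(2) direction via the observation that $U(g)\neq \pm I,\pm iI$ for $g\neq e$ forces $\Tr(U(g))=0$, so that multiplying a vanishing linear combination by $U(g)^\dagger$ and tracing kills each coefficient — which is exactly your Hilbert--Schmidt orthogonality relation $\Tr(U(g)^\dagger U(h))=\dim(\cH_{\rm kin})\,\delta_{g,h}$ in different words. The only cosmetic difference is that you exclude the $\pm i$ prefactor using $g^2=e$, while the paper excludes $\pm iI$ by noting its square $-I$ would lie in the group ${\rm Im}(U)$; both are valid here.
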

\begin{proof}
    Suppose that $\uU$ is faithful. Then $U$ is also faithful (see Appendix \ref{app:reps}). Furthermore, we necessarily have $-I \notin {\rm Im}(U)$ otherwise we could find some $g \in G\setminus \{e\}$ such that $U(g)= - I$, which would imply that $\uU(g +e)= U(g)+ I = 0$, and therefore that $\uU$ is not faithful. 

    Reciprocally, suppose that $U$ is faithful and $- I \notin {\rm Im}(U)$. Then, for any $g \in G \setminus \{ e\}$, we have $U(g) \notin \{I , -I , i I , - i I\}$; indeed: $I$ is excluded since $U$ is faithful; $-I$ is excluded by assumption; $i I$ and $-i I$ are excluded otherwise their square $- I$ would also be in ${\rm Im}(U)$. Given that any Pauli matrix has vanishing trace, this implies that $\Tr(U(g))=0$ for any $g \neq e$. To prove that $\uU$ is faithful, consider $f \in \ell^2(G)$ such that $\uU(f) = 0$. This translates into the condition 
    \begin{equation}
        \sum_{h \in G} f(h) \, U(h) = 0\,.
    \end{equation}
    For any $g \in G$, we obtain after multiplying the previous equation by $U(g)^\dagger$ and taking its trace:
    \begin{equation}
        \sum_{h \in G} f(h) \Tr(U(hg^{-1})) = 0 \quad \Rightarrow \quad f(g) \Tr(I) =0 \quad \Rightarrow \quad f(g)=0\,.
    \end{equation}
 It results that $f =0$, from which we conclude that $\uU$ is faithful.
\end{proof}

From now on, let $U$ be as in the previous lemma, so that $\uU$ is a faithful $*$-algebra representation of $\ell^2(G)$. We then have the orthogonal decomposition of $\cH_{\rm kin}$ into isotypes of $U$ (see equation \eqref{eq:decomp_isotypes}):
\begin{equation}\label{eq:decomp_isotypes-stab}
    \cH_{\rm kin} = \bigoplus_{\chi \in \hat{G}} \cH_\chi \,,
\end{equation}
where $\chi$ labels the $2^{n-k}$ inequivalent irreducible representations of $G= \mathbb{Z}_2^{\times (n-k)}$ \cite{serre1977linear, simon1996representations}. The isotype associated to the trivial character $\chi=1$ is the invariant subspace of $\cH_{\rm kin}$. In other words, it is the \emph{logical} or \emph{perspective-neutral} Hilbert space:
\begin{equation}\label{eq:H_trivial_code_pn}
    \cH_1 = \cH_{\rm code} = \cH_{\rm pn}\,.
\end{equation} 
Given $\chi \in \hat{G} \simeq \mathbb{Z}_2^{\times (n-k)}$, the orthogonal projector onto $\cH_\chi$ (see equation \eqref{eq:projector_isotype}) is given by
\begin{equation}\label{eq:projectors_on_isotypes}
    P_\chi:= \frac{1}{2^{n-k}} \sum_{g \in \mathbb{Z}_2^{\times (n-k)}} \chi(g) \, U(g) \,,
\end{equation}
where we used the fact that $\chi(g) \in \{ +1 , -1\}$ is real for any $g \in \mathbb{Z}_2^{\times (n-k)}$. Note that this provides a generalization of the coherent group-averaging projector to arbitrary charge sectors; indeed, we have $P_1=\Pi_{\rm pn}$. 

Let us also recall some useful facts from character theory \cite{serre1977linear, simon1996representations}. First, we have a canonical inner-product $(\cdot , \cdot)$ on the vector space of class functions on $G=\mathbb{Z}_2^{\times (n-k)}$, defined as follows: for any two complex class functions $f_1 , f_2$ on $G$, 
\begin{equation}\label{eq:inner_class}
    (f_1 , f_2) := \frac{1}{|G|} \sum_{g \in G} \overline{f_1 (g)} f_2 (g)\,.
\end{equation}
The irreducible characters of $G$, that is to say elements of $\hat{G}$, form an orthonormal basis of the space of class functions. We have in particular the orthogonality relations
\begin{equation}\label{eq:orthogonality_rel_charac}
    \forall \chi, \eta \in \hat{G}\,, \qquad (\chi , \eta) = \frac{1}{2^{n-k}} \sum_{g \in G} \chi(g) \eta(g)= \delta_{\chi , \eta}\,. 
\end{equation}
The character $\chi_U$ of the representation $U$ is the class function defined by $\chi_U(g):= \Tr(U(g))$ for any $g \in G$. We can decompose it in the orthonormal basis $\{ \chi \}_{\chi \in \hat{G}}$, from which one can infer that:
\begin{equation}\label{eq:dim_charac}
  \forall \chi \in \hat{G}\,,\qquad \dim(\cH_\chi) = (\chi_U , \chi)\,.   
\end{equation}
Finally, the orthogonality relations \eqref{eq:orthogonality_rel_charac} applied to the dual group $\hat{G}$ lead to the dual relations:
\begin{equation}\label{eq:orthogonality_rel_dual}
    \forall g, h \in G\,, \qquad  \frac{1}{2^{n-k}} \sum_{\chi \in \hat{G}} \chi(g) \chi(h)= \delta_{g , h}\,. 
\end{equation}

We first observe that, given our assumptions, all the isotypes of $U$ are non-trivial and of equal dimension. 
\begin{lem}\label{lemma:dim_Hj}
For any $\chi \in \hat{G}$, $\dim (\cH_\chi) = 2^k$.
\end{lem}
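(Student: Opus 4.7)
The plan is to compute $\dim(\mathcal{H}_\chi)$ directly from the character-theoretic formula \eqref{eq:dim_charac}, namely $\dim(\mathcal{H}_\chi) = (\chi_U , \chi)$, by evaluating the character $\chi_U(g) = \Tr U(g)$ of the representation $U$ on each $g \in G$. The key observation is that, under the hypotheses on $U$ (faithful, with ${\rm Im}(U) \subset \mathcal{P}_n$ and $-I \notin {\rm Im}(U)$), the character $\chi_U$ will be strongly localized at the identity: $\chi_U(e) = 2^n$ and $\chi_U(g) = 0$ for every $g \neq e$. Once this is established, the orthonormality calculation becomes a one-line exercise.

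To prove the vanishing of $\chi_U(g)$ for $g \neq e$, I would recycle the argument already given in the proof of Lemma~\ref{lem:faithful_alg_rep}. Namely, for any $g \in G \setminus \{e\}$, $U(g)$ is an element of the Pauli group $\mathcal{P}_n$. It cannot equal $I$ (by faithfulness), nor $-I$ (by the assumption $-I \notin \mathcal{G}$), nor $\pm i I$ (else its square would be $-I \in \mathcal{G}$). Therefore $U(g)$ must be a Pauli string that contains at least one non-identity tensor factor ($X$, $Y$, or $Z$), and hence $\Tr U(g) = 0$ since every non-identity single-qubit Pauli matrix is traceless and the trace factorizes over tensor products. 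On the other hand, $U(e) = I$ gives $\chi_U(e) = \Tr I_{\mathcal{H}_{\rm kin}} = 2^n$.

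Plugging these values into the character-theoretic formula \eqref{eq:dim_charac}, and using the definition \eqref{eq:inner_class} of the class-function inner product together with $\chi(e) = 1$ for every character of $G$,
\begin{equation}
\dim(\mathcal{H}_\chi) = (\chi_U , \chi) = \frac{1}{|G|} \sum_{g \in G} \overline{\chi_U(g)}\, \chi(g) = \frac{1}{2^{n-k}} \cdot 2^n \cdot 1 = 2^k\,,
\end{equation}
which is the desired conclusion, independently of $\chi \in \hat{G}$.

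The main (and only) obstacle is really the justification that $U(g)$ is traceless for $g \neq e$; but since the intricate part of that argument---excluding the four central elements of $\mathcal{P}_n$---has already been carried out in the proof of Lemma~\ref{lem:faithful_alg_rep}, the proof of the present lemma reduces to a brief character-theoretic computation and a remark that it can be invoked verbatim. No new technical machinery is required.
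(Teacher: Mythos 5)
Your proof is correct and follows essentially the same route as the paper's: both invoke the formula $\dim(\cH_\chi) = (\chi_U,\chi)$ from \eqref{eq:dim_charac}, observe (exactly as in the proof of Lemma~\ref{lem:faithful_alg_rep}) that faithfulness and $-I\notin\mathcal{G}$ force $\chi_U(g)=\Tr U(g)=0$ for $g\neq e$, and then evaluate the single surviving term $\chi_U(e)\chi(e)/|G| = 2^n/2^{n-k}=2^k$. No gaps.
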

\begin{proof}
    Let $\chi \in \hat{G}$. From \eqref{eq:dim_charac}, we know that 
    \begin{equation}
        \dim(\cH_\chi) = (\chi_U , \chi) = \frac{1}{2^{n-k}} \sum_{g \in G} \overline{\chi_U (g)} \chi (g)\,,
    \end{equation}
where $(\cdot,\cdot)$ is the standard inner-product defined in \eqref{eq:inner_class}. As already explained in the proof of Lemma \ref{lem:faithful_alg_rep}, the fact that $U$ is faithful and $- I \notin {\rm Im}(U)$ implies that $\chi_U(g) = {\rm tr} (U(g)) = 0$ unless $g=e$. As a result, we have:
\begin{equation}
    \dim(\cH_\chi) = \frac{1}{2^{n-k}} \underbrace{\overline{\chi_U(e)}}_{2^n} \underbrace{\chi(e)}_{1} = 2^k\,.
\end{equation}
\end{proof}
Next, let us determine how Pauli operators, which model errors, interact with the decomposition of the Hilbert space $\cH_{\rm kin}$. To this effect, we first observe that the $*$-algebra $\mathcal{B}(\cH_{\rm kin})$ can be decomposed into isotypes of the adjoint representation associated to $U$.
\begin{lem}\label{lem:decomp_algebra}
    Let us endow $\cB(\cH_{\rm kin})$ with its standard Hilbert-Schmidt inner product,\footnote{That is, for any $A,B \in \cB(\cH_{\rm kin})$, we set $\langle A \vert B \rangle := \Tr\left( A^\dagger B\right)$.} which makes it a Hilbert space. We have the orthogonal decomposition
    \begin{equation}
        \mathcal{B}(\cH_{\rm kin}) = \bigoplus_{\chi \in \hat{G}} \mathcal{B}_\chi\,,
    \end{equation}
    where:
\begin{equation}
    \forall \chi \in \hat{G}\,, \qquad \cB_\chi := \{ \varphi \in \cB(\cH_{\rm kin}) \, \vert \, \forall g \in G\,, \; U(g) \varphi U(g)^\dagger = \chi(g) \varphi \}\,.
\end{equation}
Furthermore, for any $\chi \in \hat{G}$, $\cB_\chi$ is a $*$-subalgebra of $\cB(\cH_{\rm kin})$.
\end{lem}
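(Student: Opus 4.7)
The natural approach is to reformulate the decomposition as the isotypic decomposition of an auxiliary unitary representation on the Hilbert space $\bigl(\mathcal{B}(\cH_{\rm kin}),\langle\cdot,\cdot\rangle_{\rm HS}\bigr)$. Specifically, I would introduce the adjoint representation ${\rm Ad}_U: G \to {\rm Aut}(\mathcal{B}(\cH_{\rm kin}))$ defined by ${\rm Ad}_U(g)(\varphi) := U(g)\varphi U(g)^\dagger$. It is a routine check that ${\rm Ad}_U$ is a group homomorphism (using that $U$ is itself a representation), and that each ${\rm Ad}_U(g)$ is unitary with respect to the Hilbert-Schmidt inner product (this uses cyclicity of the trace together with the fact that $U(g)$ is unitary on $\cH_{\rm kin}$). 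Hence $\mathcal{B}(\cH_{\rm kin})$ inherits the structure of a finite-dimensional unitary $G$-representation.

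Since $G = \mathbb{Z}_2^{\times (n-k)}$ is finite Abelian, the standard isotypic decomposition of such a representation (cf.\ the discussion of $\cH_{\rm kin}$ earlier in the appendix, in particular \eqref{eq:decomp_isotypes}) yields an orthogonal direct sum indexed by the irreducible characters $\chi \in \hat{G}$. The isotype associated with $\chi$ is by definition the $\chi$-eigenspace of ${\rm Ad}_U$, which coincides exactly with the set $\mathcal{B}_\chi := \{ \varphi \, \vert \, U(g)\varphi U(g)^\dagger = \chi(g)\varphi \;\forall g\in G\}$ appearing in the lemma. The orthogonal projector onto $\mathcal{B}_\chi$ is given, in direct analogy with \eqref{eq:projectors_on_isotypes}, by $\varphi \mapsto \frac{1}{|G|}\sum_{g\in G}\chi(g)\, U(g)\varphi U(g)^\dagger$, and orthogonality of distinct isotypes with respect to $\langle\cdot,\cdot\rangle_{\rm HS}$ follows from the character orthogonality relation \eqref{eq:orthogonality_rel_charac}, applied just as in the proof that isotypes of a unitary representation of a finite group are mutually orthogonal.

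For the $*$-subalgebra claim, I would observe that each $\mathcal{B}_\chi$ is clearly a complex linear subspace, and that closure under the adjoint $\varphi \mapsto \varphi^\dagger$ follows from $U(g)\varphi^\dagger U(g)^\dagger = (U(g)\varphi U(g)^\dagger)^\dagger = \overline{\chi(g)}\varphi^\dagger = \chi(g)\varphi^\dagger$, where we used $\chi(g)\in\{\pm 1\}\subset\mathbb{R}$ for $G = \mathbb{Z}_2^{\times(n-k)}$ (more generally $\bar\chi = \chi^{-1}$). For closure under products, I would compute, for $\varphi_1 \in \mathcal{B}_{\chi_1}$ and $\varphi_2 \in \mathcal{B}_{\chi_2}$, that $U(g)\varphi_1\varphi_2 U(g)^\dagger = (\chi_1\chi_2)(g)\,\varphi_1\varphi_2$, so the $\{\mathcal{B}_\chi\}_{\chi \in \hat G}$ endow $\mathcal{B}(\cH_{\rm kin})$ with the structure of a $\hat G$-graded $*$-algebra; in particular, since $\chi^2 = 1$ in the stabilizer case, $\mathcal{B}_\chi \cdot \mathcal{B}_\chi \subseteq \mathcal{B}_1$, and $\mathcal{B}_1$ itself is a genuine unital $*$-subalgebra (the commutant of $\mathcal{G}$), while each $\mathcal{B}_\chi$ is a $*$-closed $\mathcal{B}_1$-bimodule.

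I do not anticipate any serious obstacle: the main content is a direct application of isotypic decomposition to the adjoint representation, and all the properties reduce to short computations with the defining covariance relation and character orthogonality. The only delicate point is that the final subalgebra claim is strictly correct only for $\chi=1$ (with the $\mathcal{B}_\chi$ otherwise being bimodules in a $\hat G$-graded algebra structure); the proof plan above makes this explicit so that the statement of the lemma can be read consistently.
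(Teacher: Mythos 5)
Your proposal is correct and follows essentially the same route as the paper: decompose $\cB(\cH_{\rm kin})$ into isotypes of the adjoint representation $\varphi \mapsto U(g)\varphi U(g)^\dagger$, which is unitary for the Hilbert--Schmidt inner product, and then check stability under $^\dagger$ using the reality of the characters. Notably, your careful treatment of products is \emph{more} correct than the paper's own proof, which asserts that $\varphi_1\varphi_2 \in \cB_\chi$ whenever $\varphi_1,\varphi_2 \in \cB_\chi$; as you observe, one in fact gets $U(g)\varphi_1\varphi_2 U(g)^\dagger = \chi(g)^2\,\varphi_1\varphi_2$, so $\cB_\chi\cdot\cB_\chi\subseteq\cB_{\chi^2}=\cB_1$, and for $\chi\neq 1$ the subspace $\cB_\chi$ is not closed under multiplication (e.g.\ $X_1\in\cB_\chi$ with $X_1^2=I\in\cB_1$ in the $3$-qubit code). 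Your reading of the $\{\cB_\chi\}$ as a $\hat G$-grading, with $\cB_1$ the genuine unital $*$-subalgebra and each $\cB_\chi$ a $*$-closed $\cB_1$-bimodule, is the correct statement and is all that is used downstream (in particular in Proposition~\ref{prop:decomp_Pauli}).
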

\begin{proof}
     The adjoint action $\cB(\cH_{\rm kin}) \ni \varphi \mapsto U(g) \triangleright \varphi := U(g) \varphi U(g)^\dagger$ ($g \in G$) defines a \emph{unitary representation} of $G$ on $\cB(\cH_{\rm kin})$. We can thus decompose $\cB(\cH_{\rm kin})$ into orthogonal isotypes $\cB_\chi$ labeled by $\chi \in \hat{G}$ (see equation \eqref{eq:decomp_isotypes}). For any $\chi \in \hat{G}$, $\cB_\chi$ is a union of one-dimensional representations isomorphic to $\chi$; in other words, $\cB_\chi$ is the subspace of operators $\varphi$ verifying $U(g)\triangleright\varphi = \chi(g) \varphi$ for any $g\in G$. Furthermore, it is clear that this relation is stable under product, so: $\varphi_1 \varphi_2 \in \cB_\chi$ whenever $\varphi_1 \in \cB_\chi$ and $\varphi_2 \in \cB_\chi$. Owing to the fact that $\chi(g)$ is real for any $g \in G$, $\cB_\chi$ is also stable under the adjoint $^\dagger$.\footnote{Indeed, for any $\varphi \in \cB_\chi$ and $g\in G$, we have $U(g) \triangleright \varphi^\dagger = \left( U(g) \triangleright \varphi\right)^\dagger = \left( \chi(g) \varphi\right)^\dagger = \bar\chi(g) \varphi^\dagger = \chi(g) \varphi^\dagger$.} Hence, $\cB_\chi$ is a $*$-subalgebra of $\cB(\cH_{\rm kin})$.
\end{proof}

As for the decomposition of $\cH_{\rm kin}$, character theory allows us to determine the dimension of each isotype of $\cB(\cH_{\rm kin})$.
\begin{lem}\label{lemma:dim_B}
For any $\chi \in \hat{G}$, $\dim (\cB_\chi) = 2^{n+k}$.
\end{lem}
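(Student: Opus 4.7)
The plan is to mirror the proof of Lemma~\ref{lemma:dim_Hj}: the dimensions of the isotypes of a unitary representation are determined entirely by the value of its character, which in the present case is extremely constrained by the assumptions that $U$ is faithful and $-I\notin{\rm Im}(U)$.

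First, I would identify the character of the adjoint representation. The action $\rho(g): \varphi \mapsto U(g)\varphi U(g)^\dagger$ of $G$ on $\mathcal{B}(\cH_{\rm kin})$ is, under the standard identification $\mathcal{B}(\cH_{\rm kin}) \simeq \cH_{\rm kin}\otimes \cH_{\rm kin}^*$, isomorphic to $U\otimes \overline{U}$. Its character is therefore
\begin{equation}
\chi_\rho(g) = \Tr\bigl(\rho(g)\bigr) = \Tr(U(g))\,\overline{\Tr(U(g))} = |\chi_U(g)|^2\,, \qquad \forall g \in G\,.
\end{equation}

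Next, I would invoke the key fact already exploited in the proof of Lemma~\ref{lem:faithful_alg_rep}: since $U$ is faithful with $-I \notin {\rm Im}(U)$, and since ${\rm Im}(U)\subset \mathcal{P}_n$, for any $g \neq e$ we have $U(g) \notin \{I, -I, iI, -iI\}$, so $U(g)$ is a non-scalar Pauli string and thus traceless. Combined with $\chi_U(e) = 2^n$, this yields
\begin{equation}
|\chi_U(g)|^2 = 2^{2n}\,\delta_{g,e}\,, \qquad \forall g \in G\,.
\end{equation}

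Finally, I would apply the character-theoretic dimension formula \eqref{eq:dim_charac} to the representation $\rho$: for any $\chi \in \hat{G}$,
\begin{equation}
\dim(\mathcal{B}_\chi) = (\chi_\rho , \chi) = \frac{1}{2^{n-k}} \sum_{g \in G} \overline{\chi_\rho(g)}\,\chi(g) = \frac{1}{2^{n-k}}\cdot 2^{2n}\cdot \chi(e) = 2^{n+k}\,,
\end{equation}
which is the claimed result. As a sanity check one can verify that $\sum_{\chi \in \hat{G}} \dim(\mathcal{B}_\chi) = 2^{n-k}\cdot 2^{n+k} = 2^{2n} = \dim \mathcal{B}(\cH_{\rm kin})$, consistent with the orthogonal decomposition of Lemma~\ref{lem:decomp_algebra}. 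There is no real obstacle here: the argument is a direct character-theoretic computation, entirely parallel to Lemma~\ref{lemma:dim_Hj}, with the only extra ingredient being that the character of the adjoint representation is $|\chi_U|^2$ rather than $\chi_U$ itself.
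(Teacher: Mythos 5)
Your proof is correct and follows essentially the same route as the paper: identify the character of the adjoint representation as $|\chi_U|^2$, use faithfulness together with $-I\notin{\rm Im}(U)$ to conclude that $\chi_U(g)=0$ for $g\neq e$, and apply the dimension formula \eqref{eq:dim_charac}. The sanity check on the total dimension is a nice addition but the argument is otherwise identical to the paper's.
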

\begin{proof}
Given that $\cB(\cH_{\rm kin}) \simeq \cH_{\rm kin}^* \otimes \cH_{\rm kin}$, the character $\eta_U$ of the adjoint representation obeys:
    \begin{equation}
        \forall g \in G\,, \qquad \eta_U (g)= \bar\chi_U (g)\chi_U(g) = \vert \chi_U(g) \vert^2\,.
    \end{equation}
  Reasoning as in Lemma \ref{lemma:dim_Hj} above, we have for any $\chi \in \hat{G}$
    \begin{equation}
        \dim(\cB_\chi) = (\eta_U , \chi) = \frac{1}{2^{n-k}} \underbrace{\vert \chi_U (e) \vert^2}_{(2^n)^2} \underbrace{\chi(e)}_{1} = 2^{n+k}\,. 
    \end{equation}
    Hence, all the isotypes are non-trivial and of the same dimension.     
\end{proof}
A crucial observation is that any element from the Pauli group $\mathcal{P}_n$ lies in one of the isotypes of $\mathcal{B}(\cH_{\rm kin})$, and this fully determines how errors act on the code subspace. 
\begin{prop}\label{prop:decomp_Pauli}
    The Pauli set $\mathcal{P}_n$ decomposes as the disjoint union
      \begin{equation}
    \mathcal{P}_n = \bigsqcup_{\chi \in \hat{G}} C_\chi\,, \qquad {\rm where:}\qquad \forall \chi \in \hat{G}\,,\quad  C_\chi := \cP_n \cap \cB_\chi \,.
\end{equation}
Moreover, we have the following properties: for any $\chi , \eta \in \hat{G}$ and $E_\chi \in C_\chi$
\begin{enumerate}
    \item $\Span(C_\chi ) = \cB_\chi$ and $\dim( \Span(C_\chi ) ) = 2^{n+k}$ ;
    \item $E_\chi (\cH_\eta ) = \cH_{\chi\eta }$ ;
    \item $E_\chi P_\eta E_\chi = P_{\chi\eta}$ .
\end{enumerate}
\end{prop}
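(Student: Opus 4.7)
The strategy is to exploit the key feature that any Pauli operator conjugated by a Pauli is again that same Pauli up to a sign, which immediately labels each $E \in \cP_n$ by a character of $G$. The spanning claim then follows from the fact that Pauli strings form a basis of $\cB(\cH_{\rm kin})$ together with the orthogonality of the isotype decomposition from Lemma \ref{lem:decomp_algebra}. Items 2 and 3 reduce to short covariance computations using the defining property of $C_\chi$.

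First I would establish the disjoint union. Given $E \in \cP_n$ and $g \in G$, both $E$ and $U(g) \in \mathcal{G} \subset \cP_n$ are Paulis (up to phase), so they either commute or anticommute; thus there is a unique sign $\chi_E(g) \in \{\pm 1\}$ with $U(g) E U(g)^\dagger = \chi_E(g)\, E$. A direct calculation using $U(gh) = U(g) U(h)$ gives $\chi_E(gh) = \chi_E(g)\chi_E(h)$, so $\chi_E \in \hat G$ and $E \in C_{\chi_E}$. This yields $\cP_n = \bigcup_{\chi} C_\chi$. Disjointness is immediate: if $E \in C_\chi \cap C_\eta$ then $(\chi(g) - \eta(g))\, E = 0$ for all $g$, forcing $\chi = \eta$ since $E \neq 0$.

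Next I would prove Item 1. The inclusion $\Span(C_\chi) \subset \cB_\chi$ is immediate from the definitions. For the converse, recall that the $4^n$ Pauli strings (with unit phase) form a basis of $\cB(\cH_{\rm kin})$; hence any $\varphi \in \cB_\chi$ expands as $\varphi = \sum_{E} \lambda_E\, E$ over such strings. Projecting this expansion onto the orthogonal isotypes of Lemma \ref{lem:decomp_algebra}, and using the disjoint partition $\cP_n = \sqcup_\eta C_\eta$ just established, only terms with $E \in C_\chi$ survive. Thus $\varphi \in \Span(C_\chi)$ and the two coincide; their common dimension is $2^{n+k}$ by Lemma \ref{lemma:dim_B}.

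Finally, Items 2 and 3 are direct covariance computations. For Item 2, pick $\vert \psi\rangle \in \cH_\eta$ and any $g \in G$; using $E_\chi U(g) = \chi(g) U(g) E_\chi$ (which follows from $E_\chi \in C_\chi$ and unitarity of $U(g)$) gives $U(g) (E_\chi \vert\psi\rangle) = \chi(g)\eta(g) (E_\chi \vert\psi\rangle) = (\chi\eta)(g)(E_\chi \vert\psi\rangle)$, so $E_\chi \vert\psi\rangle \in \cH_{\chi\eta}$. Since $E_\chi$ is unitary and all isotypes have dimension $2^k$ by Lemma \ref{lemma:dim_Hj}, we get $E_\chi(\cH_\eta) = \cH_{\chi\eta}$. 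For Item 3,
\begin{equation}
E_\chi P_\eta E_\chi^\dagger = \frac{1}{2^{n-k}} \sum_{g \in G} \eta(g)\, E_\chi U(g) E_\chi^\dagger = \frac{1}{2^{n-k}} \sum_{g \in G} \chi(g)\eta(g)\, U(g) = P_{\chi\eta}\,;
\end{equation}
choosing $E_\chi$ to be a Hermitian representative of $C_\chi$ (possible since multiplying by $\pm i$ keeps us in $C_\chi$ and a Pauli string with unit phase is Hermitian) gives $E_\chi^\dagger = E_\chi$ and yields the stated identity. The main subtlety to handle carefully will be the spanning step: one must clearly invoke the Pauli basis together with the orthogonality of the $\cB_\chi$ decomposition; a secondary point is the phase convention in Item 3, which forces the Hermitian choice of representative to make the identity literal rather than true up to a sign.
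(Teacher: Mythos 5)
Your proof is correct and follows essentially the same route as the paper's: the commute-or-anticommute dichotomy to place each Pauli string in a unique isotype, the Pauli-basis spanning argument combined with the orthogonal decomposition of Lemma~\ref{lem:decomp_algebra} for Item~1, and the same covariance computations for Items~2 and~3. Your explicit remark that Item~3 requires a Hermitian representative (the paper simply writes $(E_\chi)^2 = I$, which fails for representatives carrying a $\pm i$ phase) is a small but genuine improvement in care over the published argument.
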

\begin{proof}
Let $E\in \mathcal{P}_n$. Notice that any two Pauli operators either commute or anti-commute. Hence, for any $g \in G$, one has either $U(g)E = E U(g)$ or $U(g)E=-E U(g)$, so $U(g) \triangleright E = \pm E$. It follows that $E$ generates a one-dimensional stable subspace of the adjoint representation, which is thus irreducible and necessarily lies in one of the isotypes of $\mathcal{B}(\cH_{\rm kin})$. As a result, $\mathcal{P}_n = \underset{\chi \in \hat{G}}{\bigsqcup} C_\chi$.

Next, using that $\Span (\cP_n) = \cB(\cH_{\rm kin})$ together with Lemma \ref{lem:decomp_algebra}, we obtain:
\begin{equation}
    \bigoplus_{\chi \in \hat{G}} \cB_\chi = \bigoplus_{\chi \in \hat{G}} \Span(C_\chi)\,.
\end{equation}
In addition, we have $\Span(C_\chi) \subset \cB_\chi$ for any $\chi \in \hat{G}$, so the previous identity can only hold if $\Span(C_\chi) = \cB_\chi$ for any $\chi \in \hat{G}$. Lemma \ref{lemma:dim_B} implies in particular that $\dim( \Span(C_\chi ) ) = 2^{n+k}$ for any $\chi \in \hat{G}$, which establishes \emph{1}.

Let $\chi, \eta \in \hat{G}$, and $E_\chi \in C_\chi$. For any $v \in \cH_\eta$ we have:
\begin{equation}
\forall g \in G\,, \qquad    U(g) E_\chi v = \underbrace{U(g)E_\chi U(g)^\dagger}_{=\chi(g)E} \underbrace{U(g)v}_{= \eta(g)v} = \chi\eta (g) \, E_\chi v\,,
\end{equation}
hence $E_\chi v \in \cH_{\chi \eta}$. We thus have $E_\chi(\cH_\eta)\subset \cH_{\chi \eta}$, and given that $E_\chi$ is unitary, $E_\chi(\cH_\eta)= \cH_{\chi \eta}$. This establishes \emph{2}. Finally, we can employ the explicit expression \eqref{eq:projectors_on_isotypes} to write: 
\begin{align}
    E_\chi P_\eta E_\chi &= \frac{1}{2^{n-k}}\sum_{g \in G} \eta(g) E_\chi U(g) E_\chi = \frac{1}{2^{n-k}}\sum_{g \in G} \eta(g) E_\chi \underbrace{U(g) E_\chi U(g)^\dagger}_{= \chi(g) E_\chi} U(g) \nonumber  \\ &= \frac{1}{2^{n-k}}\sum_{g \in G} \chi\eta(g)  \underbrace{(E_\chi)^2}_{= I} U(g) = P_{\chi \eta}\,. 
\end{align}
This establishes \emph{3}.
\end{proof}

In view of those results, we can call an element $E_\chi \in C_\chi$ ($\chi \in \hat{G}$) an \emph{error of type $\chi$}. Indeed, any such $E_\chi$ isometrically maps the code subspace $\cH_{\rm code}= \cH_1$ to the \emph{error subspace} or \emph{charged sector} $\cH_\chi$.  

Let us focus on $\chi=1$ as a special case. By definition, $C_1$ is nothing but the centralizer of $\mathcal{G}$ in $\mathcal{P}_n$: $C_1 = C_{\mathcal{P}_n}(\mathcal{G})$. Alternatively, by Proposition \ref{prop:decomp_Pauli}, we can characterize $C_1$ as the subset of Pauli operators that preserve the code subspace. Such operators can therefore be used to implement logical operations. More precisely, we have
\begin{equation}
    \Span (C_1) = \cB_1 = \{ \varphi \in \mathcal{B}(\cH_{\rm kin}) \, \vert \, \forall g \in G\,, [U(g) , \varphi] = 0\}
\end{equation}
which is equivalent to 
\begin{equation}
\Span (C_{\mathcal{P}_n}(\mathcal{G})) = C_{\mathcal{B}(\cH_{\rm kin})} (\mathcal{G}) \; \left(= C_{\Span(\cP_n)} (\mathcal{G}) \right)\,.
\end{equation}
In other words, \emph{logical operations are represented by gauge-invariant operators on $\cH_{\rm kin}$}. Furthermore, two operators $\varphi , \tilde{\varphi} \in \Span(C_1) = \cB_1$ are logically equivalent ($\varphi \sim \tilde{\varphi}$) if and only if they coincide on the code subspace, namely: 
\begin{equation}
    \varphi \sim \tilde{\varphi} \qquad \Leftrightarrow \qquad 
\restr{\varphi}{\cH_{1}} = \restr{\tilde{\varphi}}{\cH_{1}}
    \,.
\end{equation}
Since any element of $\Span(C_1)=\cB_1$ preserves any isotype $\cH_\chi$ of $\cH_{\rm kin}$ (by Proposition \ref{prop:decomp_Pauli}), $\Span(C_1)=\cB_1$ can be identified with a subspace of
\begin{equation}
     \bigoplus_{\chi \in \hat{G}} \mathcal{B}(\cH_\chi)\,.
\end{equation}
Owing to Lemma~\ref{lemma:dim_Hj} we have $\dim (\mathcal{B}(\cH_\chi)) = 2^k \times 2^k$ for any $\chi \in \hat{G}$, and $\dim( \Span{C_1}) = 2^{n+k}$ by Proposition~\ref{prop:decomp_Pauli}. Hence, by dimensionality, we necessarily have 
\begin{equation}
\Span(C_1) = \cB_1 \simeq \bigoplus_{\chi \in \hat{G}} \mathcal{B}(\cH_\chi)\,.
\end{equation}
Taking the quotient by the equivalence relation $\sim$, we conclude that
\begin{equation}\label{eq:logical_operations}
\faktor{\Span(C_1)}{\sim} = \faktor{\cB_1}{\sim} \simeq \cB(\cH_1)\,,
\end{equation}
which we identify with the vector space of logical operators. To summarize, \emph{logical operators are nothing but gauge-invariant operators on $\cH_{\rm kin}$, considered modulo equivalence on $\cH_1$}. 

This observation can be generalized to an arbitrary $\eta \in \hat{G}$ as follows. Thanks to Proposition \ref{prop:decomp_Pauli}, we can establish the isomorphism 
\begin{equation}
\Span(C_\eta)= \cB_\eta \simeq     \bigoplus_{\chi \in \hat{G}} \cB(\cH_\chi, \cH_{\chi \eta})\,,
\end{equation}
where, for any $\chi \in \hat{G}$, $\mathcal{B}(\cH_\chi, \cH_{\chi\eta})$ denotes the vector space of linear operators from $\cH_\chi$ to $\cH_{\chi\eta}$. Taking the quotient by $\sim$, we then obtain
\begin{equation}\label{eq:errors_modulo_eq}
\faktor{\Span(C_\eta)}{\sim} = \faktor{\cB_\eta}{\sim} \simeq \mathcal{B}(\cH_1, \cH_\eta)\,,
\end{equation} 
We can interpret this quotient as the vector space of \emph{errors of type $\eta$}, which are also, in gauge-theoretic language, \emph{creation-operators carrying a charge $\eta$}. 

\section{No useful blanket recovery operation for gauge fixing errors}\label{app_nooperation}

Here, we briefly elucidate why neither $\sqrt{2^{n-k}}\,\Pi_{\rm pn}$ nor $\Pi_{\rm pn}$ constitute an operationally useful blanket recovery for arbitrary gauge fixing errors, despite Eq.~\eqref{KLarbitrary}. 

Let us begin with $\sqrt{2^{n-k}}\,\Pi_{\rm pn}$ which is neither a projector nor a unitary and thus cannot be implemented in a quantum operation. To see this, recall that a quantum operation can be written as
\begin{equation}
    \mathcal{O}(\rho)=\sum_i K_i\rho K_i^\dag\,,\qquad\qquad\qquad
    \sum_i K_i^\dag K_i\leq I.
\end{equation}
Now, suppose that $K_1=\sqrt{2^{n-k}}\,\Pi_{\rm pn}$. Then we must have
\begin{equation}
    0\leq\sum_{i\geq2}K_i^\dag K_i= I-2^{n-k}\,\Pi_{\rm pn}\,,
\end{equation}
which, however, is unattainable because $I-2^{n-k}\,\Pi_{\rm pn}$ is not a non-negative operator; indeed, any code state is an eigenstate with eigenvalue $1-2^{n-k}<0$ for $n>k$. This also means that there can be no valid linear combination $K'_1=K_1+\sum_{j\neq2}c_j K_j$ without violating the normalization.

There is, however, a \emph{trace-decreasing} quantum operation that does correct the errors, though not deterministically:
\begin{equation}
    \mathcal{O}(\rho)=\Pi_{\rm pn}\rho\Pi_{\rm pn}.
\end{equation}
If $\rho\in\mathcal{S}(\mathcal{H}_{\rm code})$, then $\mathcal{O}(\rho)=\rho$, leaving the trace invariant. If, on the other hand, $\rho'=\hat{E}_g\,\rho\,\hat{E}_g^\dag$, then by virtue of Eq.~\eqref{KLarbitrary}
\begin{equation}
    \mathcal{O}(\rho')=\frac{1}{2^{n-k}}\,\rho\,,
\end{equation}
so that the trace has decreased. (The same conclusion holds for the full error channel $\rho'=1/2^{n-k}\sum_g \hat{E}_g\,\rho\,\hat{E}_g^\dag$.)

Of course, there is also a valid projective measurement channel, which is trace-preserving and contains the previous operation:
\begin{equation}
    \rho\mapsto \Pi_{\rm pn}\rho\Pi_{\rm pn}+\Pi_{\rm pn}^\perp\,\rho\,\Pi_{\rm pn}^\perp
\end{equation}
with $\Pi_{\rm pn}^\perp=I-\Pi_{\rm pn}$. 
But, assuming $\braket{\psi}{\psi}_{\rm pn}=1$, we have thanks to Eq.~\eqref{KLarbitrary}
\begin{equation}
   \bra{\psi}\hat{E}_g\,\Pi_{\rm pn}\,\hat{E}_g^\dag\,\ket{\psi}_{\rm pn}=\frac{1}{2^{n-k}}\,.
\end{equation}
This projective measurement channel is thus a non-deterministic one that corrects a gauge fixing error with probability $1/2^{n-k}$ and this can be arbitrarily close to zero for large $n-k$.

We conclude that the blanket recovery $\sqrt{2^{n-k}}\,\Pi_{\rm pn}$ or $\Pi_{\rm pn}$ is operationally not usefully implementable.

\section{Proofs}\label{App:proofs}

In this appendix we collect the proofs of some statements from the main body. \\

\noindent\textbf{Lemma~\ref{lem:PicodeasGA} (Equality of the code and perspective-neutral projectors).}
\emph{The orthogonal projector $\Pi_{\rm code}$ onto $\mathcal \Hil_{\rm code}\subset\Hil_{\rm physical}$ can be written as a coherent group averaging projector over the stabilizer group.
That is,
\begin{equation}\label{eq:app:stabilizerGA}
    \Pi_{\rm code}=\frac{1}{|G|}\sum_{g\in G}U^g \equiv \Pi_\mrm{pn}\;,
\end{equation}
with $U^g$, $g\in G=\mathbb Z_2^{\times(n-k)}$, a unitary representation of the stabilizer group on $\Hil_{\rm physical}$.
}

\begin{proof}
Let $\Pi_{\rm pn}=\frac{1}{|G|}\sum_{g\in G}U^g$.~Clearly, $\Pi_{\rm pn}^{\dagger}=\Pi_{\rm pn}$ and $\Pi_{\rm pn}\circ\Pi_{\rm pn}=\Pi_{\rm pn}$, so $\Pi_{\rm pn}$ is an orthogonal projector.~Moreover, $\Pi_{\rm pn}\ket{\psi_{\rm code}}=\ket{\psi_{\rm code}}$ for any $\ket{\psi_{\rm code}}\in\Hil_{\rm code}$, that is $\Hil_{\rm code}\subset \text{Im}(\Pi_{\rm pn})$.~Similarly, for any $\ket{\psi}\in\Hil_{\rm physical}$ and $U^g$, $g\in\mathcal G$, we have $U^g\Pi_{\rm pn}\ket{\psi}=\Pi_{\rm pn}\ket{\psi}$, that is $\text{Im}(\Pi_{\rm pn})\subset \Hil_{\rm code}$.~Therefore, $\text{Im}(\Pi_{\rm pn})=\Hil_{\rm code}$.
\end{proof}

\noindent\textbf{Lemma~\ref{lemma:URexistence} (Existence of a faithful representation of $G$ on a subset of physical qubits).}
\emph{Let $0 \leq k \leq n$ be fixed.
    Let $\mathcal{G} = \{U^g ~ | ~ g \in G = \mathbb{Z}_2^{\times(n-k)}\}$, where $U^e = I^{\otimes n}$, be a Pauli stabilizer group on $\Hil \simeq (\mathbb{C}^2)^{\otimes n}$, i.e., $-I \notin \mathcal{G}$ and $\forall~g,h\in G$: $U^g \in \mathcal{P}_n$, $[U^g,U^h] = 0$, $U^g U^h = U^{gh} \in \mathcal{G}$, and $U^g = U^h$ if and only if $g = h$.
    Then, there exists a choice of $A \subseteq \{1, \dots, n\}$ consisting of $k$ distinct elements such that $\pi_A(U^g) = I^{\otimes n}$  if and only if $g=e$, where $\pi_A = \prod_{a \in A} \pi_a$ and with
    $$
    \begin{aligned}
        \pi_a ~ : ~ &\mathcal{P}_n \to \mathcal{P}_n \\
        & i^\lambda \, \Oh_1 \cdots \Oh_{a-1} \Oh_a \Oh_{a+1} \cdots \Oh_n ~ \mapsto ~ \Oh_1 \cdots \Oh_{a-1} I \Oh_{a+1} \cdots \Oh_n
    \end{aligned}
    $$
    denoting the operator that sets the $a^\text{th}$ letter of an $n$-qubit Pauli string to the identity and removes any prefactor that is not 1.}

\begin{proof}
Let us start by recalling that the Pauli group up to phases can be viewed as a $GF(2)$-vector space as follows (see, e.g., \cite{gottesman2006quantum}).
An $n$-qubit Pauli string is mapped onto an element of $GF(2)^{2n}$ by concatenating two length-$n$ strings of zeros and ones.
The entries of the first string are $0$ when the corresponding letter of the Pauli string is $I$ or $Z$ and $1$ when the corresponding letter is $X$ or $Y$, and the entries of the second string are essentially inverted: $0$ for $I$ or $X$ and $1$ for $Z$ or $Y$.
For example, a Pauli string $XIIZY$ has a corresponding vector representation $(10001 \, | \, 00011)$ (it is customary to separate the two substrings with a vertical bar).
Multiplication of two Pauli strings \emph{modulo the phase} therefore corresponds to addition of vectors in $GF(2)^{2n}$ and scalar multiplication is defined over the field $GF(2)$. That $-I \notin \mathcal{G}$ is important for the vector representation of the stabilizer group to be faithful.
In particular, this prevents $U^g$ and $\pm i U^g$, which would have the same representations as vectors in $GF(2)^{2n}$, from simultaneously being in $\mathcal{G}$ (on account of appropriate powers of them multiplying to $-I$).

Lifted to the representation on $GF(2)^{2n}$, the projection $\pi_A$ for $A \subset \{1, \dots, n\}$ sets the $i^\text{th}$ and $(n+i)^\text{th}$ entries of the vector $\mbf{u}(g)$ representing $g$ to $0$ if $i \in A$. Here we abuse notation and use $\mathcal{G}$ to also refer to the representation of $G = \mathbb{Z}_2^{\times(n-k)}$ on $GF(2)^{2n}$, and we let $\pi_A$ directly act on this representation as just described, too.~In this language, Lemma~\ref{lemma:URexistence} can be restated as follows.
\begin{lem} \label{lem:Rchoice}
    Let $\mathcal{G} = \{\mbf{u}(g)\}$ be a faithful representation of $G = \mathbb{Z}_2^{\times (n-k)}$ on the vector space $GF(2)^{2n}$ as described above.~There exists a choice of $A \subset \{1, \dots, n\}$ with $|A| = k$ such that $\{\pi_A \mbf{u}(g)\}$ is still a faithful representation of $G$ on $GF(2)^{2n}$.
\end{lem}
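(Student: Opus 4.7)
The plan is to recast the lemma as a linear-algebraic statement about column spaces of a binary matrix over $GF(2)$, and then to establish it by a greedy augmentation argument. The faithfulness of the binary representation $\{\mbf{u}(g)\}_{g \in G}$ means exactly that $\{\mbf{u}(g)\}_{g \in G}$ is an $(n-k)$-dimensional linear subspace $V$ of $GF(2)^{2n}$, namely the image of an injective linear map from $G \cong GF(2)^{n-k}$. I will pick any basis of this image and arrange its elements as rows of an $(n-k) \times 2n$ matrix $M$ of rank $n-k$. The operator $\pi_A$ acts on $GF(2)^{2n}$ by zeroing out the entries at positions $A \cup (n+A)$, which at the level of $M$ amounts to deleting the columns indexed by $A \cup (n+A)$. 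Consequently, the restricted representation $\{\pi_A \mbf{u}(g)\}_{g \in G}$ is faithful if and only if the $(n-k) \times 2(n-k)$ submatrix $M_B$ of $M$ whose columns are indexed by $B \cup (n+B)$, with $B := \{1,\ldots,n\}\setminus A$, still has rank $n-k$. So the lemma reduces to the existence of some $B \subset \{1, \ldots, n\}$ with $|B| = n-k$ such that $M_B$ has full rank.

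For each qubit $i \in \{1, \ldots, n\}$, I will introduce the subspace $W_i := \Span_{GF(2)}\{\mbf{c}_i, \mbf{c}_{n+i}\} \subseteq V$, where $\mbf{c}_j$ denotes the $j$-th column of $M$. Because the columns of $M$ collectively span $V$, one has $\sum_{i=1}^n W_i = V$. I then build $B$ step by step via a greedy procedure: initialize $B = \emptyset$ and $U = \{0\}$; at each iteration, if $U \subsetneq V$ pick some $j \in \{1,\ldots,n\}\setminus B$ with $W_j \not\subseteq U$ and update $B \leftarrow B \cup \{j\}$ and $U \leftarrow U + W_j$; if instead $U = V$ already, pick any $j \notin B$ as filler. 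I iterate $n-k$ times. Iterations of the first kind strictly increase $\dim U$ while iterations of the second kind leave $\dim U = n-k$ intact, so since $\dim U \leq n-k$ at all times and the total number of steps is $n-k$, one must have $U = V$ at termination, i.e., $M_B$ has rank $n-k$. Setting $A := \{1, \ldots, n\} \setminus B$ then yields the required subset of size $k$.

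The one point that needs justification is the existence of the augmenting qubit $j$ at each step with $U \subsetneq V$. If instead every $j \notin B$ satisfied $W_j \subseteq U$, then $\sum_{j \notin B} W_j \subseteq U$, and combining with the identity $\sum_{j \in B} W_j = U$ would give $V = \sum_{j=1}^n W_j \subseteq U \subsetneq V$, a contradiction. This pigeonhole-style step is really the only non-routine ingredient in the proof; crucially, it relies only on the fact that the full binary image has dimension exactly $n-k$, so the symplectic commutation structure of the Paulis and the condition $-I \notin \mathcal{G}$ enter only upstream — in ensuring that $\{\mbf{u}(g)\}$ is injective as a map from $G$ to $GF(2)^{2n}$ with image of dimension $n-k$ — and play no direct role in the augmentation argument itself.
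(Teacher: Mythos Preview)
Your proof is correct, modulo a small notational slip: once you arrange the chosen basis of $V$ as rows of $M$, the columns $\mbf{c}_i$ live in $GF(2)^{n-k}$, not in $V \subseteq GF(2)^{2n}$, so the subspaces $W_i$ and the running span $U$ should be taken inside $GF(2)^{n-k}$ (or you should make the basis isomorphism $V \cong GF(2)^{n-k}$ explicit). With that adjustment the greedy argument goes through cleanly: each augmenting step raises $\dim U$ by at least one, so after at most $n-k$ steps $U = GF(2)^{n-k}$, and the existence of an augmenting index follows exactly from $\sum_{i=1}^n W_i = GF(2)^{n-k}$ as you say.

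Your route is genuinely different from the paper's and considerably more elementary. The paper argues by contradiction: it supposes that for \emph{every} $k$-subset $A_\mu$ there is some nonzero $\mbf{u}(g_\mu)$ annihilated by $\pi_{A_\mu}$, stacks these as columns of a $2n \times \binom{n}{k}$ matrix, and then invokes a separately proved technical lemma (established by induction on $n$, with several base cases and a block decomposition of the matrix) to show that such a matrix must have rank at least $n-k+1$, contradicting $\dim V = n-k$. Your direct greedy construction bypasses all of this machinery: you simply note that the column-pair subspaces $W_i$ together span $GF(2)^{n-k}$ and then pick $n-k$ indices whose $W_i$ still span. This is both shorter and more transparent, and it is constructive rather than existential. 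The paper's approach yields as a by-product an explicit lower bound on the rank of the obstruction matrix, but that bound is not used anywhere else in the paper.
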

Note that it immediately follows that, if $\{\mbf{u}(g)\}$ is a representation of $G$, then $\{\pi_A \mbf{u}(g)\}$ is also a representation of $G$ (but possibly unfaithful) for any $A$ since $\pi_A$ is a linear operator on $GF(2)^{2n}$ and since the group operation is represented by addition of vectors.~Furthermore, we note that in this language, the vectors in a set $\{\mbf{v}^1, \dots, \mbf{v}^m\} \subset GF(2)^{2n}$ are linearly independent if the unique solution to $\alpha_1 \mbf{v}^1 + \cdots + \alpha_m \mbf{v}^m = 0$ with $\alpha \in GF(2)^m$ is $\alpha = 0$.

We shall now establish a technical result that forms the crux of the Lemma~\ref{lem:Rchoice}.
\begin{lem}\label{lem:technicalZ2}
    Let $n \geq 2$ and $1 \leq k \leq n-1$. Let $N = \binom{n}{k}$, and let $\mathcal{A} = \{A_\mu \subset \{1, \dots, n\}, |A_\mu| = k\}_{\mu=1}^N$ be the list of the distinct $k$-index subsets of $\{1, \dots, n\}$, ordered according to some order that we choose.~Let $M$ be a $2n \times N$ matrix with entries in $GF(2)$ and with the following properties:
    \begin{enumerate}
        \item $[M]_{(2i-1)\mu} = [M]_{(2i)\mu} = 0$ whenever $i \notin A_\mu$ for $i = 1, \dots, n$.
        \item In each column $\mu$, there is at least one index $i$ such that either or both $[M]_{(2i-1)\mu} \neq 0$, $[M]_{(2i)\mu} \neq 0$.
    \end{enumerate}
    Then $M$ has at least $n-k+1$ linearly independent columns in $GF(2)^{2n}$.
\end{lem}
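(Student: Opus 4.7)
\textbf{Proof plan for Lemma~\ref{lem:technicalZ2}.} I would prove the bound by induction on $n$, the base case $n=2$ (forcing $k=1$) being immediate: the two columns are supported on disjoint qubits and are nonzero by condition~(2), hence linearly independent, matching $n-k+1=2$. For the inductive step, fix $n\geq 3$ and $1\leq k\leq n-1$, and partition the $N$ columns of $M$ according to whether the indexing subset $A_\mu$ contains the last qubit $n$:
\begin{align*}
X &:= \{\mu\,|\,n\notin A_\mu\}\,,\qquad Y := \{\mu\,|\,n\in A_\mu\}\,.
\end{align*}
By condition~(1), all columns in $X$ vanish in rows $2n-1,2n$, and the distinction between whether columns in $Y$ do so or not will drive the case split. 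The special range $k=1$ can be dispatched directly, since then every column is supported on a distinct singleton and all $n$ columns are linearly independent.

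Assume $k\geq 2$. First suppose some column $\mathbf{v}^{\mathsf{y}}\in Y$ has a nonzero entry in row $2n-1$ or $2n$. After dropping those two rows, the restriction of $M$ to the columns in $X$ becomes a $(2n-2)\times\binom{n-1}{k}$ matrix still satisfying conditions (1) and (2) for the pair $(n-1,k)$; the inductive hypothesis yields $(n-1)-k+1=n-k$ linearly independent columns among them, and appending $\mathbf{v}^{\mathsf{y}}$ (whose support hits a row where all the previous columns vanish) produces $n-k+1$ linearly independent columns in $M$. The remaining alternative is that every column in $Y$ vanishes in rows $2n-1,2n$, so rows $2n-1,2n$ of $M$ are identically zero. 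In this case, restrict to the columns in $Y$ and drop rows $2n-1,2n$: each column is now supported on the $(k-1)$-subset $A_\mu\setminus\{n\}$ of $\{1,\dots,n-1\}$ and remains nonzero, so we obtain a $(2n-2)\times\binom{n-1}{k-1}$ matrix to which the inductive hypothesis applies with parameters $(n-1,k-1)$, yielding $(n-1)-(k-1)+1=n-k+1$ linearly independent columns directly.

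The main obstacle I anticipate is the careful handling of the two boundary values $k=1$ and $k=n-1$, where one of the two inductive subproblems $(n-1,k)$ or $(n-1,k-1)$ falls outside the admissible range $1\leq k'\leq n'-1$. The $k=1$ end is handled by the disjoint-support argument above. The $k=n-1$ end is absorbed into the case analysis: in the first alternative, group $X$ reduces to the single nonzero column supported on $\{1,\dots,n-1\}$, which together with $\mathbf{v}^{\mathsf{y}}$ already gives the required $n-k+1=2$ linearly independent columns; in the second alternative, the shift $k\mapsto k-1=n-2$ is within range for $(n-1,\cdot)$ and induction applies. Once the lemma is in place, it plugs into Lemma~\ref{lem:Rchoice} via the standard contrapositive: were no admissible $A$ to exist, one could pick for each $k$-subset $A_\mu$ a nonzero stabilizer vector supported on its complement, assembling a matrix $M$ with $n-k+1$ linearly independent columns inside a vector space of dimension at most $n-k$, a contradiction.
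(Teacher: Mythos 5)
Your proof is correct and takes essentially the same route as the paper: both partition the columns by whether the indexing subset contains the last qubit, split on whether any column in the second group hits the last two rows, and reduce to the inductive subproblems $(n-1,k)$ and $(n-1,k-1)$, with the same boundary cases $k=1$ and $k=n-1$ requiring separate treatment. The only differences are cosmetic (induction stated downward rather than upward, and the $k=n-1$ edge case folded into the case split rather than handled by an explicit matrix form).
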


\begin{proof} For future use, we begin by making the following observations about $M$.
The assumptions imply that in any given column, there are at least $2(n-k)$ zeros and at most $2n-1$ zeros.
For any given $i \in \{1, \dots, n\}$, are at least $\binom{n-1}{n-k-1} = \binom{n-1}{k}$ columns in which $[M]_{(2i-1)\mu} = [M]_{(2i)\mu} = 0$ is guaranteed.

We prove the claim by induction.
As a base case, let $n = 2$ and $k = 1$.
Up to exchanging its columns, we may uniquely write $M$ as
\begin{equation}
    M = \left( \begin{array}{cc}
    x_1 & 0 \\
    y_1 &  0 \\
    0 & x_2 \\
    0 & y_2
    \end{array} \right)
\end{equation}
where $x_i$ and $y_i$ cannot both be zero for each $i \in \{1, 2\}$.
The two columns are clearly linearly independent, and so $\mathrm{rank}\, M = 2$.

Now, suppose that the claim holds for $n \geq 2$ and $1 \leq k \leq n-1$.
Let $N' = \binom{n+1}{k'}$ for $1 \leq k' \leq n$ and let $M'$ be a $2(n+1)\times N'$ matrix with the assumed properties.
We must show that $\mathrm{rank}\, M' \geq n - k' +2$.

Choose the order on $\mathcal{A}$ such that $[M']_{(2n+1)\mu} = [M']_{(2n+2)\mu} = 0$ is guaranteed for $\mu = 1, \dots, N$, where here $N = \binom{n}{k'}$.
Consequently, for $1 \leq k' \leq n-1$, the submatrix $M_L \equiv [M']_{1:2n,1:N}$ satisfies the inductive assumptions.\footnote{The notation $[B]_{i:j,\mu:\nu}$ denotes the submatrix consisting of rows $i$ through $j$ and columns $\mu$ through $\nu$ of $B$.}
(We will treat $k' = n$ as a special case that we will come back to later.)
\begin{figure}[h!]
    \centering
    \includegraphics[scale=0.6]{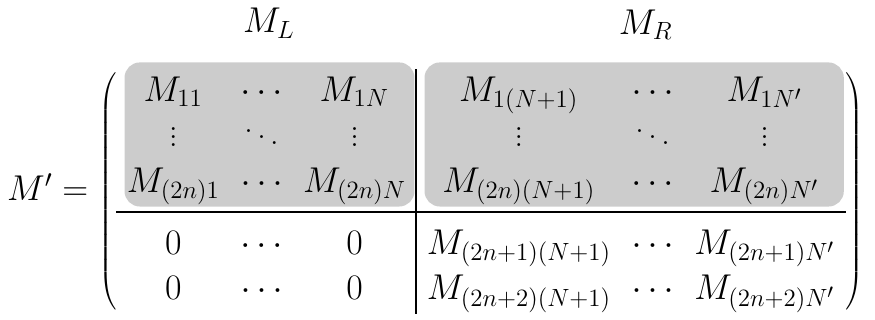}
\end{figure}

\noindent To see this, note that each column of $M_L$ contains at least $2(n-k')$ zeros and at most $2n-1$ zeros.
Furthermore, the columns can be indexed by the distinct subsets $A_\mu$ of $\{1, \dots, n\}$ that specify (or rather, whose complements specify) the entries of $M_L$ that necessarily must be zero. 
It follows that $M_L$, and hence also $M'$, have rank at least $n-k'+1$.

If, in the last $\binom{n}{k'-1}$ columns of $M'$, there is at least one column $\mu$ such that $[M']_{(2n+1)\mu} \neq 0$ or $[M']_{(2n+2)\mu} \neq 0$, then $\mathrm{rank}\, M' \geq n-k'+2$ as required.
Otherwise, if $[M']_{(2n+1)\mu} = [M']_{(2n+2)\mu} = 0$ for all $1 \leq \mu \leq N'$, then the submatrix $M_R \equiv [M']_{1:2n,(N+1):N'}$ satisfies the inductive assumptions for $1 \leq k'-1 \leq n-1$, or equivalently $2 \leq k' \leq n$.
(Similarly, we will come back to $k' = 1$ as a special case later.)
This is because, since none of the $[M']_{(2n+1)\mu}$ or $[M']_{(2n+2)\mu}$ were any of the necessarily-zero entries for $N+1 \leq \mu \leq N'$, there must still be at least $2(n-k'+1)$ zeros in each column of $M_R$ and at least one non-zero entry.
(In other words, $M_R$ is a ``$(n,k'-1)$'' instance of the base case.)
The rank of $M_R$, and hence also $M'$, is therefore at least $n-k'+2$.

For $n \geq 2$ and $1 \leq k \leq n-1$, the last step is to prove the special cases $k = 1$ and $k = n-1$.
When $k = 1$, up to swaps of its columns, $M$ is a pseudo-diagonal matrix of the form
\begin{equation}
    M = \left(
    \begin{array}{cccc}
        x_1 & 0 & \cdots & 0 \\
        y_1 & 0 & \cdots & 0 \\
        0 & x_2 & \cdots & 0 \\
        0 & y_2 & \ddots & 0 \\
        \vdots & \vdots & \ddots & \vdots \\
        0 & 0 & \cdots & x_n \\
        0 & 0 & \cdots & y_n
    \end{array}
    \right)
\end{equation}
in which $x_i$ and $y_i$ cannot both be zero for each $i = 1, \dots, n$, and so $\mathrm{rank} \, M = n$.
The case $k = n-1$ is a sort of inverted version of the $k=1$ case.
Here, the columns of $M$ can be arranged such that the pseudo-diagonal is necessarily zero, $[M]_{(2i-1)i} = [M]_{(2i)i} = 0$ for $1 \leq i \leq n$, i.e.,
\begin{equation}
    M = \left(
    \begin{array}{cccc}
        0 & M_{12} & \cdots & M_{1n} \\
        0 & M_{22} & \cdots & M_{2n} \\
        M_{31} & 0 & \cdots & M_{3n} \\
        M_{41} & 0 & \ddots & M_{4n} \\
        \vdots & \vdots & \ddots & \vdots \\
        M_{(2n-1)1} & M_{(2n-1)2} & \cdots & 0 \\
        M_{(2n)1} & M_{(2n)2} & \cdots & 0
    \end{array}
    \right)
\end{equation}
Given that any column must contain at least one non-zero entry, we see that $\mathrm{rank} \, M \geq 2$.
\end{proof}
We can now finally prove Lemma~\ref{lem:Rchoice}.~For any $n \geq 1$, the cases $k = 0$ and $k = n$ are trivial.~For $n \geq 2$, we prove the claim by contradiction.~Let us assume the opposite, namely, that for any $1 \leq k \leq n-1$ and any $A_\mu \in \mathcal{A}$, the set $\{\pi_{A_\mu} \mbf{u}(g)\}$ is an unfaithful representation of $\mathbb{Z}_2^{\times (n-k)}$ as vectors on $GF(2)^{2n}$.~Equivalently, for each $A_\mu$, there exists a vector $\mbf{u}(g_\mu) \neq 0$ such that $\pi_{A_\mu} \mbf{u}(g_\mu) = 0$.~Per Lemma~\ref{lem:technicalZ2}, the matrix constructed by concatenating column vectors $(\mbf{u}(g_{A_1}) ~ \mbf{u}(g_{A_2}) ~ \cdots ~ \mbf{u}(g_{A_N}))$ has rank greater than or equal to $n-k+1$.~Since each $\mbf{u}(g_\mu) \in \mathcal{G}$, this contradicts the fact that $\mathcal{G}$ has $n-k$ linearly independent vectors, which follows because $\mathbb{Z}_2^{\times(n-k)}$ has $n-k$ independent generators.
\end{proof}   

\noindent\textbf{Lemma~\ref{lemma:seedstate} (Existence of a seed state for local QRFs).}
    \emph{Let $\{P^0 = I^{\otimes m}, P^1, \dots, P^{2^m-1}\} \subset \mathcal{P}_m$, $m\geq1$, be a set of distinct Pauli strings with no non-unit phases that is closed under multiplication up to constants,~i.e., for all $i, j$, $P^i \neq P^j$ if $i \neq j$ and $P^i P^j \propto P^k$ for some $k$.~Then, there exists a state $\ket{\phi} \in (\mathbb{C}^2)^{\otimes m}$ such that $\bra{\phi} P^i \ket{\phi} = 0$ $\forall~i\neq 0$.}

\begin{proof}
    Let $\mathcal{W} =\Span\{P^i ~ | ~ i \neq 0\}$ denote the subspace of $\cB(\Hil)$, $\Hil\simeq(\mathbb{C}^2)^{\otimes m}$, spanned by the non-identity $P^i$'s.
Let us also equip the vector space of linear operator with the Frobenius inner product,
\begin{equation}
    \langle A, B \rangle = \Tr(A^\dagger B) \quad \forall \; A, B \in \cB(\Hil).
\end{equation}
Note that the elements of $\mathcal{P}_m$ therefore form an orthogonal basis for $\cB(\Hil)$
\begin{equation}
    \langle P^i, P^j \rangle = 2^m \delta_{ij}\,,
\end{equation}
where we implicitly let the index to run beyond $i = 2^m-1$ all the way to $i = 4^m-1$ to label all elements of the Pauli group.~The vanishing expectation value condition on $\ket{\phi}$ can then be rewritten as
\begin{equation}
    0 = \bra{\phi} P^i \ket{\phi} = \Tr(P^i \ketbra{\phi}{\phi}) = \langle P^i, \ketbra{\phi}{\phi} \rangle \quad i = 1, \dots, 2^m-1.
\end{equation}
If $\Phi \in \cB(\Hil)$ is a projector, then $\Phi^2 = \Phi$, and if it is rank-1, then $\Vert \Phi \Vert = 1$.~This latter point follows from the fact that the operator norm induced by the Frobenius inner product can also be written as the sum of the squares of the eigenvalues of the operator, i.e.
\begin{equation}
    \Vert A \Vert^2 = \langle A, A \rangle = \sum_{i} \lambda_i(A)^2.
\end{equation}
Therefore, we may equivalently state the objective of the lemma as follows.~We must show that there exists a unit-norm operator $\Phi$ in the orthogonal complement of $\mathcal{W}$ satisfying $\Phi^2 = \Phi$.

Let us look at the cases $m=1$ and $m=2$ first, and then consider the $m\geq 3$ case separately.~For $m=1$, $\mathcal W$ is spanned by a single Pauli operator $P^1\in\{X,Y,Z\}$.~The orthogonal complement $\mathcal W^{\perp}$ is thus spanned by the identity $I$ and the remaining two Pauli operators $P^2,P^3\neq P^1$.~$\Phi=\ket\phi\!\bra\phi$ with $\ket\phi$ any of the two eigenstates of $P^2$ or $P^3$ can then be taken as the desired projector.  

For $m=2$, $\mathcal W=\{A_1A_2,B_1B_2,(A_1A_2)(B_1B_2)\}$ with $A_i,B_i\in\{I,X,Y,Z\}$ and $A_i\neq B_i$ for at least one $i$, which amounts to the following four distinct possibilities:
\begin{enumerate}
\item $\mathcal W$ is generated by $IA$ and $IB$ with $A\neq B\neq I$;\footnote{For our purposes, there is no need to further distinguish between this case and $\mathcal W=\{AI, BI, (AB)I\}$ as it works analogously.}
\item $\mathcal W$ is generated by $IA$ and $BI$ with $A,B\neq I$;
\item $\mathcal W$ is generated by $IA$ and $B_1B_2$ with $A,B_1,B_2\neq I$ and $A\neq B_2$;
\item $\mathcal W$ is generated by $A_1A_2$ and $B_1B_2$ with $A_i\neq B_i\neq I$, $i=1,2$.
\end{enumerate}
In the first case, $\Phi=\ket\phi\!\bra\phi$ with $\ket\phi$ any of the four two-qubit Bell states will be such that $\Tr(P\Phi)=0$ for any $P\in\mathcal W$, thus providing us with the desired projector.~Cases 2.-4. are those for which it is possible to find an operator $\mathcal O=O_1O_2$ with $O_i\neq I$ such that, for each element of $\mathcal W$, there is at least one $i\in\{1,2\}$ for which $O_i$ anticommutes with the i-th qubit operator contanined in the element of $\mathcal W$.~Specifically, we have: $O_1\neq I, B$ (2 choices left), $O_2\neq I,A$ (2 choices left) for case 2;~$O_1\neq I, B_1$ (2 choices left), $O_2\neq I,A, B_2$ (1 choice left) for case 3;~and $\mathcal O=A_1B_2$ or $\mathcal O=A_2B_1$ for case 4.~In either of these cases, we can take $\Phi=\ket\phi\!\bra\phi$ with $\ket\phi$ given by the tensor product of eigenstates of $O_1$ and $O_2$. 

For $m\geq3$, we will use the following lemma.
\begin{lem}\label{lem:mgeq3}
    When $m \geq 3$, There exists a commuting subgroup of $\mathcal{P}_m$ generated by $\ell \geq 2$ non-identity generators that are contained in $\mathcal{W}^\perp$.
\end{lem}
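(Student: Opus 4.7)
The plan is to reduce the claim to a counting argument on phase-free Pauli strings. Because $\cB(\cH)$ admits the $4^m$ phase-free Pauli strings as an orthonormal basis under the Frobenius inner product, and $\mathcal{W}$ is spanned by the $2^m - 1$ distinct strings $P^1, \ldots, P^{2^m-1}$, the orthogonal complement $\mathcal{W}^\perp$ is spanned by the remaining $4^m - 2^m + 1$ phase-free Pauli strings, including the identity. Equivalently, a non-identity element of $\cP_m$ lies in $\mathcal{W}^\perp$ if and only if, up to an overall phase, it is a Pauli string different from all the $P^i$ with $i\neq 0$.

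The strategy is then to exhibit two commuting, independent, non-identity Pauli strings in $\mathcal{W}^\perp$. First I would pick any non-identity $O_1 \in \mathcal{W}^\perp$, of which there are $4^m - 2^m$ choices. Next I would count the Pauli strings that commute with $O_1$: they form an index-$2$ subgroup of the phase-free Pauli group, of cardinality $2^{2m-1}$. Since $\{P^i\}_{i\in G}$ has only $2^m$ elements, at most $2^m$ of these commutants lie among them, so the number of valid choices for $O_2$ -- non-identity Pauli strings commuting with $O_1$, in $\mathcal{W}^\perp$, and distinct from $O_1$ -- is at least $2^{2m-1} - 2^m - 1$. This lower bound is positive for $m \geq 3$ (it equals $23$ at $m=3$), so such an $O_2$ exists. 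The subgroup $\langle O_1, O_2\rangle \subset \cP_m$ is then Abelian of order four, with $\ell = 2$ independent non-identity generators $O_1, O_2$ both in $\mathcal{W}^\perp$, which is exactly what the lemma asserts.

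The main technical point is ensuring $O_2$ is genuinely independent of $O_1$, rather than equal to it or to the identity; this is handled by explicitly subtracting both $I$ and $O_1$ in the count above, and it is precisely there that the hypothesis $m \geq 3$ is used to keep the lower bound positive. I note that the lemma only requires the \emph{generators} to lie in $\mathcal{W}^\perp$, not the entire subgroup, so no further checking of the product $O_1 O_2$ is needed for the statement as given. Beyond this, there is no serious obstacle: $\mathcal{W}$ consumes only a negligible fraction of the $4^m$-dimensional operator algebra, and the existence of well-behaved commuting pairs outside $\mathcal{W}$ is then essentially automatic.
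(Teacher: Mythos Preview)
Your counting argument is correct for the lemma as literally phrased, and is in the same spirit as the paper's: both exploit that $\mathcal{W}$ occupies only $2^m-1$ of the $4^m$ phase-free Pauli strings, leaving ample room in $\mathcal{W}^\perp$ to find commuting pairs once $m\geq 3$.

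The one substantive difference is that the paper proves more than the bare wording requires. Its proof shows that $Q^1,Q^2$ can be chosen so that the product $Q^1 Q^2$ \emph{also} lies in $\mathcal{W}^\perp$; this stronger conclusion is exactly what is invoked immediately after the lemma, where the ansatz $\Phi=\sum_i c_i Q^i$ with $\mathcal{V}=\Span\{Q^0,\ldots,Q^{2^\ell-1}\}\subseteq\mathcal{W}^\perp$ is used so that $\Phi$ is automatically orthogonal to every $P^i\in\mathcal{W}$. Your explicit decision not to check $O_1O_2\in\mathcal{W}^\perp$ is defensible as a reading of the sentence, but it leaves a gap for that application. The paper achieves closure by a slight variant of your count: rather than directly intersecting the commutant of $Q^1$ with $\mathcal{W}^\perp$, it first counts the $N_{\rm closed}=4^m-2^{m+1}$ candidates $Q^2\in\mathcal{W}^\perp\setminus\{I,Q^1\}$ with $Q^1 Q^2\in\mathcal{W}^\perp$ (excluding the coset $Q^1\cdot\mathcal{W}$ in addition to $\mathcal{W}$), and then observes that $N_{\rm closed}$ exceeds the number $\tfrac12\,4^m$ of strings anticommuting with $Q^1$ precisely when $m\geq 3$. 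You can upgrade your own argument to the stronger conclusion by the same device: additionally exclude the $2^m-1$ strings $O_1 P^i$ from your pool of candidate $O_2$'s; the resulting lower bound $2^{2m-1}-2^{m+1}$ is still positive for $m\geq 3$.
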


\begin{proof}
We will show that there exist two distinct Pauli strings $Q^1, Q^2 \in \mathcal{W}^\perp$ such that $Q^1 Q^2 = Q^2 Q^1$ and $Q^1 Q^2 \in \mathcal{W}^\perp$.~Note that if $Q^1$ and $Q^2$ commute, then the product $Q^1 Q^2$ commutes with $Q^1$ and with $Q^2$ automatically.~Let now $Q^1 \in \mathcal{W}^\perp$, $Q^1 \neq I^{\otimes m}$ and let us count the number of operators $Q^2 \in \mathcal{W}^\perp$, $Q^2 \neq Q^1$ and $Q^2 \neq I^{\otimes m}$ such that $Q^1 Q^2 \in \mathcal{W}^\perp$.~We can count this number by process of elimination.~First, there are $4^m - (2^m - 1)$ operators in $\mathcal{W}^\perp$, and we must additionally subtract 1 for $Q^1$ and 1 for $I^{\otimes m}$.~Next, for every $P^i \in \mathcal{W}$, we must rule out the string $Q^1 P^i$, which we know must be in $\mathcal{W}^\perp$ since $\mathcal{W}$ is closed under multiplication by Paulis (i.e., if $Q^1 P^i \in \mathcal{W}$ and $P^i \in \mathcal{W}$, then $Q^1 = (Q^1 P^i) P^i \in \mathcal{W}$, in contradiction with our assumption).~There are $2^m-1$ choices of $P^i$, and so altogether, the number of possible choices for $Q^2$ such that $Q^1 Q^2 \in \mathcal{W}^\perp$ is
\begin{equation}
N_\mrm{closed} = (4^m - (2^m-1)) - (2^m - 1) - 1 - 1 = 4^m - 2^{m+1}.
\end{equation}
Next, the total number of Pauli strings that anticommute with $Q^1$ is
\begin{equation}
    N_\mrm{anticomm} = \tfrac{1}{2} 4^m.
\end{equation}
One way to see this is to remark that $N_\mrm{anticomm} = 4^m - N_\mrm{comm}$.
Any Pauli string with binary vector $(x \, | \, z)$ that commutes with $Q^1$ satisfies $(a_1 \, | \, b_1) \cdot (x \, | \, z) = 0$, which has a $(2m-1)$-dimensional solution space and hence $2^{(2m-1)} = \tfrac{1}{2}4^m$ distinct strings as solutions.
For $m \geq 3$, it follows that $N_\mrm{anticomm} < N_\mrm{closed}$, and every possible choice of $Q^2$ cannot anticommute with $Q^1$.
Therefore, there must exist a $Q^2 \in \mathcal{W}^\perp$ that commutes with $Q^1$ and such that $Q^1 Q^2 \in \mathcal{W}^\perp$.
\end{proof}

Per the above Lemma~\ref{lem:mgeq3}, let $\mathcal{V} = \Span\{Q^0, Q^1, \dots, Q^{2^\ell-1}\} \subseteq \mathcal{W}^\perp$ be the span of a set of commuting Pauli operators that are closed under multiplication, with $Q^0 = P^0 = I^{\otimes m}$ and $\ell \geq 3$.
As an ansatz for $\Phi$, write
\begin{equation} \label{eq:Phi_ansatz}
    \Phi = \sum_{i=0}^{2^\ell-1} c_i Q^i.
\end{equation}
We shall now show that there exists at least one non-trivial solution for the unknown coefficients $c_i$ such that $\Phi^2=\Phi$ and $\Vert \Phi \Vert = 1$.

The condition $\Vert \Phi \Vert = 1$ reads
\begin{equation}
    1 = \langle \Phi, \Phi \rangle = \sum_{i,j} \bar{c}_i c_j \langle Q^i, Q^j \rangle = \sum_{i,j} \bar{c}_i c_j 2^m \delta_{ij} = 2^m \sum_i |c_i|^2\,,
\end{equation}
so we must have
\begin{equation}
    \sum_{i=0}^{2^\ell-1} |c_i|^2 = \frac{1}{2^m}\,.
\end{equation}
Since $\Tr \Phi = 1$, we also conclude that
\begin{equation}
    1 = \sum_i c_i \Tr Q^i = 2^m c_0\;,
\end{equation}
and so
\begin{equation}
    c_0 = \frac{1}{2^m}\,.
\end{equation}
Next, let us write out $\Phi^2$
\begin{equation}
    \Phi^2 = \sum_{i,j=0}^{2^\ell-1} c_i c_j Q^i Q^j = \sum_{i=0}^{2^\ell-1} c_i^2 I^{\otimes m} + \sum_{i \neq j} c_i c_j Q^i Q^j\;.
\end{equation}
For simplicity, choose the $c_i$ to be real to write
\begin{align}
    \Phi^2 &= \frac{1}{2^m} I^{\otimes m} +  \sum_{i \neq j} c_i c_j Q^i Q^j \\
    &= \frac{1}{2^m} I^{\otimes m} + 2 \sum_{0<j} \frac{1}{2^m}c_j Q^j + \sum_{0\neq i \neq j} c_i c_j Q^i Q^j\;,
\end{align}
where, in the last line, we separated out the case $Q^i = I^{\otimes m}$ from the sum.~As a further simplication, let us also suppose that $c_1 = c_2 = \dots = c_{2^\ell-1} = c$ are all equal, so that we arrive at
\begin{equation}
    \Phi^2 = \frac{1}{2^m} I^{\otimes m} + \frac{2c}{2^m} \sum_{0<j} Q^j + c^2 \sum_{0\neq i\neq j} Q^i Q^j.
\end{equation}
This must be equal to $\Phi$, as given in line \eqref{eq:Phi_ansatz}.~To deduce the value of $c$, compare the inner products $\langle Q^k, \Phi\rangle$ and $\langle Q^k, \Phi^2\rangle$ for $k \neq 0$:
\begin{equation}
    \langle Q^k, \Phi\rangle = \langle Q^k, \Phi^2\rangle \quad \Rightarrow \quad 2^m c = 2 c + c^2 \sum_{0 \neq i \neq j} \langle Q^k, Q^i Q^j \rangle
\end{equation}
The sum counts the number of ordered pairs $Q^i, Q^j$ such that $Q^i Q^j = Q^k$, with $i, j, k \neq 0$.
To count these pairs, notice that we can write
\begin{equation}
    Q^k = Q^i (Q^i Q^k) \equiv Q^i Q^j.
\end{equation}
For fixed $Q^k$, there are thus $2^\ell-2$ choices of $Q^i$ that result in a distinct, non-identity operator $Q^j = Q^i Q^k$.~The number of ordered pairs is therefore $2^\ell - 2$.~Altogether, we have that
\begin{equation}
    2^m c = 2c + c^2 2^m (2^\ell - 2)\;,
\end{equation}
from which we readily obtain
\begin{equation}
    c = 0, ~ \frac{1 - 2^{1-m}}{2^\ell-2}.
\end{equation}
The nonzero solution for $c$ yields the desired seed state thus completing our proof.
\end{proof}

\noindent\textbf{Lemma~\ref{lem_none}.}
    \emph{Let $G$ be represented in a tensor product representation $g\mapsto U_R^g\otimes U_S^g$ on $\mathcal{H}_{\rm kin}=\mathcal{H}_R\otimes\mathcal{H}_S$ with $g\mapsto U^g_R$ faithful. An operator $E\in\cB(\mathcal{H}_{\rm kin})$ such that 
\begin{equation}\label{er}
        E\,\Pi_{\rm pn}=e_R\otimes I_S
\end{equation}
    for some nonvanishing operator $e_R\in\cB(\mathcal{H}_{R})$ exists if only if $U_S^g=I_S$ for all $g\in G$.}
\begin{proof}
Writing an arbitrary $E\in\cB(\mathcal{H}_{\rm kin})$  as
\begin{equation}
    E=\sum_{i,g,g'}c_{g,g',i}\ket{g}\!\bra{g'}\otimes e_i\,,
\end{equation}
for some complex coefficients $c_{g,g',i}$ and some operator basis $e_i\in\cB(\mathcal{H}_S)$, we find (upon a sum label redefinition)
\begin{equation}
    E\,\Pi_{\rm pn}=\sum_{i,g,g',g''}c_{g,g''g',i}\ket{g}\!\bra{g'}\otimes e_i\,U_S^{g''}\,.
\end{equation}
Since $\ket{g}\!\bra{g'}$ constitutes a basis for $\cB(\mathcal{H}_R)$, Eq.~\eqref{er} can only be satisfied, provided
\begin{equation}\label{er2}
    \sum_{i,g''}c_{g,g''g',i}\,e_i\,U_S^{g''}=k_{g,g'}I_S\,,\qquad\forall\,g,g'\in G\,,
\end{equation}
for some complex numbers $k_{g,g'}$. Now pick any $g,g'$ such that $k_{g,g'}\neq0$ and any $g*\in G$ such that $U_S^{g*}\neq I_S$ and multiply both sides of the previous equation with it from the right.~The l.h.s.\ yields (upon a sum label redefinition)
\begin{equation}
\sum_{i,g''}c_{g,g''g*^{-1}g',i}e_i\,U_S^{g''}\underset{\eqref{er2}}{=}k_{g,g*^{-1}g'}I_S\,.
\end{equation}
The r.h.s., on the other hand, gives 
\begin{equation}
k_{g,g'}U_S^{g^*}\not\propto I_S\,.
\end{equation}
We thus have a contradiction when $U_S^g\neq I_S$ for at least one $g\in G$.

By contrast, when $U_S^g=I_S$ for all $g\in G$, then a choice $c_{g,g',i}=\tilde c_{g,g'}\tilde c_i$ with $\sum_i \tilde{c}_i\,e_i=I_S$ produces an operator $E$ with the desired properties. 
\end{proof}

\begin{lem}[\textbf{Detection is equal to incoherent group averaging}]\label{lem_det}
    The syndrome detection operation for Pauli errors in $[[n,k]]$ stabilizer codes with faithful representation of $\mathcal{S}$ coincides with the \emph{incoherent group average} over the stabilizer group, i.e.\ its so-called $G$-twirl:
    \begin{equation}
        \mathcal{M}_{\rm syn}(\rho)=\sum_{\chi\in\hat{G}}\,P_\chi\,\rho\,P_\chi = \frac{1}{|G|}\sum_{g\in G} U^g\,\rho\,U^g=\mathcal{G}(\rho)\,,
    \end{equation}
    where $P_\chi$ is the orthogonal projector onto the error space $\Hil_\chi$. 
\end{lem}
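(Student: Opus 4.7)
The plan is to prove the identity by direct computation, substituting the explicit expression for the charge projectors $P_\chi$ (Eq.~\eqref{pchi}) into the left-hand side, and then collapsing the resulting double sum using the orthogonality relations for characters of the dual group $\hat{G}$ (Eq.~\eqref{eq:orthogonality_rel_dual}). Two facts simplify life considerably in the stabilizer setting: first, because $G=\mathbb{Z}_2^{\times(n-k)}$, every character $\chi \in \hat{G}$ is real-valued with $\chi(g) \in \{\pm 1\}$, so complex conjugation is trivial; second, each $U^g$ is a product of commuting Pauli strings that square to the identity, hence $U^g = (U^g)^\dagger$.

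Concretely, I would first write
\begin{equation}
\sum_{\chi\in\hat{G}} P_\chi\,\rho\,P_\chi = \frac{1}{|G|^2}\sum_{\chi\in\hat{G}}\sum_{g,h\in G}\chi(g)\,\chi(h)\,U^g\,\rho\,U^h,
\end{equation}
then interchange the order of summation to isolate
\begin{equation}
\sum_{\chi\in\hat{G}} P_\chi\,\rho\,P_\chi = \frac{1}{|G|^2}\sum_{g,h\in G}\left(\sum_{\chi\in\hat{G}}\chi(g)\chi(h)\right) U^g\,\rho\,U^h.
\end{equation}
The bracketed sum equals $\sum_{\chi\in\hat{G}}\chi(gh)$, and the orthogonality relation \eqref{eq:orthogonality_rel_dual} between columns of the character table (applied with $h' = e$, so that the second character argument is forced to be the identity) gives $\sum_{\chi\in\hat{G}}\chi(gh) = |G|\,\delta_{gh,e} = |G|\,\delta_{g,h}$, since $g=g^{-1}$ in $\mathbb{Z}_2^{\times (n-k)}$.

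Substituting back yields
\begin{equation}
\sum_{\chi\in\hat{G}} P_\chi\,\rho\,P_\chi = \frac{1}{|G|}\sum_{g\in G} U^g\,\rho\,U^g,
\end{equation}
which is exactly $\mathcal{G}(\rho)$, and one may equivalently write the right factor as $(U^g)^\dagger$ since each $U^g$ is self-adjoint. The argument is almost entirely bookkeeping, and there is no substantial obstacle: the only thing to be mindful of is that the identity crucially uses Pontryagin self-duality of $\mathbb{Z}_2^{\times(n-k)}$ together with the realness of the characters, so that the two character insertions from the two projectors combine into a single orthogonality-relation delta. In a more general non-Abelian or non-self-dual setting, one would instead need $\bar\chi(g)\chi(h)$ and the argument would collapse differently, but for the codes considered here the present computation suffices.
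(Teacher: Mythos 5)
Your proof is correct and follows essentially the same route as the paper's: expand both projectors via $P_\chi=\frac{1}{|G|}\sum_g\chi(g)U^g$, swap the order of summation, and collapse the character sum with the orthogonality relation $\sum_{\chi}\chi(g)\chi(h)=|G|\,\delta_{g,h}$ (the paper writes $\bar\chi(g_1)\chi(g_2)$, which is the same thing here since the characters are real). Your added remarks about self-adjointness of the $U^g$ and the role of realness of the characters are accurate but not needed beyond what the paper's one-line computation already uses.
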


\begin{proof}
    The projector is given by Eq.~\eqref{eq:projectors_on_isotypes}.
 The set $\{P_\chi\}_{\chi\in\hat{G}}$ clearly defines the elements of a trace-preserving operation because $\sum_{\chi\in\hat{G}}P_\chi=I$. Similarly, the $G$-twirl defines a trace-preserving operation because $\sum_g \tilde{E}_g^\dag\,\tilde{E}_g=I$, where $\tilde{E}_g=\frac{1}{\sqrt{2^{n-k}}} U^g$. Now, 
    \begin{equation}
     \sum_{\chi}P_\chi \rho P_\chi = \frac{1}{2^{2(n-k)}} \sum_{g_1 , g_2} \underbrace{\sum_\chi \bar{\chi}(g_1) \chi (g_2)}_{2^{n-k} \delta_{g_1, g_2}} U^{g_1}  \rho U^{g_2} = \frac{1}{2^{n-k}} \sum_g U^{g} \rho U^{g}\,.
    \end{equation}
\end{proof}

\noindent\textbf{Proposition~\ref{prop:general_recovery}} \textbf{(Action of local disentangler on error states).} \emph{Let $\Hil_\mrm{kin} = \Hil_R \otimes \Hil_S$ define a local QRF-system partition for a $[[n,k]]$ Pauli stabilizer code with $\Hil_R \simeq (\mathbb{C}^2)^{\otimes(n-k)}$ consisting of $n-k$ physical qubits and such that $G = \mathbb{Z}_2^{\times(n-k)}$ is represented with the code's stabilizers $g \mapsto U_R^g \otimes U_S^g$, where $g \mapsto U_R^g$ is faithful (but possibly projective), $g \mapsto U_S^g$ is non-trivial, and $\braket{g}{h}_R = \delta_{g,h}$ (i.e. $R$ is ideal). Let $\mathcal{E} = \{E_1, \dots, E_m\}$ be a set of correctable Pauli errors.
Then, for any any $\ket{\bar{\psi}} = T_R^\dagger(\ket{1}_R\otimes\ket{\psi}_S) \in \Hil_\mrm{pn}$, it follows that}
\begin{equation}
    T_R E_i \ket{\bar{\psi}} = \ket{w(E_i)}_R \otimes L_S(E_i)\ket{\psi}_S,
\end{equation}
\emph{for some collection of orthonormal states $\ket{w(E_i)}_R$ and unitary operators $L_S(E_i)$.}

\begin{proof}
First, from the definitions of $T_R$ in Eq.~\eqref{triv} and $\ket{1}_R$ in Eq.~\eqref{eq:R_ready}, and since the QRF is ideal, we observe that
\begin{equation}
    \ket{\bar{\psi}} = \frac{1}{\sqrt{|\mathcal{G}|}} \sum_{g \in \mathcal{G}} \ket{g}_R \otimes U_S^{g} \ket{\psi}_S .
\end{equation}
Let us also assume without loss of generality that $\mathcal{E}$ is a maximal set of correctable Pauli errors, since it can always be extended to one if needed.
Relabeling $i \rightarrow \chi$ for notational homogeneity, each Pauli string $E_\chi \in \mathcal{E}$ belongs to a distinct $C_\chi \subset \mathcal{B}_\chi$ (hence, a distinct isotype) of $\Hil_\mrm{kin}$, and there are $2^{n-k}$ such strings in $\mathcal{E}$ (see \App{app:stabilizer_reps}, in particular Lem.~\ref{lem:decomp_algebra} and Prop.~\ref{prop:decomp_Pauli}).
As an element of $\cB_\chi$, we have that
\begin{equation}\label{eq:error_in_isotype}
    U^g E_\chi U^g = \chi(g) E_\chi .
\end{equation}
(Recall that, as strings of Pauli operators with a prefactor of $\pm 1$, the stabilizers satisfy $(U^g)^\dagger = U^g$.)

Next, we manipulate $T_R E_\chi \ket{\bar\psi}$ in the following way:
\begin{align}
    T_R E_\chi \ket{\bar{\psi}} &= \left(\sum_h \ketbra{h}{h}_R \otimes (U_S^h)^\dagger \right) E_\chi \left( \frac{1}{\sqrt{|\mathcal{G}|}}  \sum_g \ket{g}_R \otimes U_S^g \ket{\psi}_S \right) \\
    &= \frac{1}{\sqrt{|\mathcal{G}|}} \sum_{h} \ket{h}_R \otimes \left( \sum_g (\bra{h}_R\otimes (U_S^h)^\dagger) E_\chi (\ket{g}_R \otimes U_S^g) \right) \ket{\psi}_S \label{eq:TEstep1} \\
    &= \frac{1}{\sqrt{|\mathcal{G}|}} \sum_{h} \ket{h}_R \otimes \left( \sum_g \bra{e}_R((U_R^h)^\dagger \otimes (U_S^h)^\dagger) E_\chi (U_R^g \otimes U_S^g) \ket{e}_R \right) \ket{\psi}_S \label{eq:TEstep2} \\
    &= \frac{1}{\sqrt{|\mathcal{G}|}} \sum_{h} \ket{h}_R \otimes \left[ \bra{e}_R \left( \sum_g (U^h)^\dagger E_\chi U^g \right) \ket{e}_R \right] \ket{\psi}_S
\end{align}
To go from line \eqref{eq:TEstep1} to \eqref{eq:TEstep2}, we used the fact that $\ket{g}_R = U_R^g \ket{e}_R$.~Next, we observe that $(U^h)^\dagger = U^h$, we insert $I = (U^h)^2$ between $E$ and $U^g$, we and invoke Eq.~\eqref{eq:error_in_isotype} to write
\begin{align}
    T_R E_\chi \ket{\bar{\psi}} &= \frac{1}{\sqrt{|\mathcal{G}|}} \sum_{h} \ket{h}_R \otimes \left[ \bra{e}_R \left( \sum_g (U^h E_\chi U^h)(U^h U^g) \right) \ket{e}_R \right] \ket{\psi}_S \\
    &= \frac{1}{\sqrt{|\mathcal{G}|}} \sum_{h} \ket{h}_R \otimes \left[ \bra{e}_R \chi(h) E_\chi \left( \sum_g U^{hg} \right) \ket{e}_R \right] \ket{\psi}_S \\
    &= \left( \frac{1}{\sqrt{|\mathcal{G}|}} \sum_{h} \chi(h) \ket{h}_R \right) \otimes (|\mathcal{G}| \bra{e}_R E_\chi \Pi_{\mrm{pn}} \ket{e}_R ) \ket{\psi}_S
\end{align}
The first bracketed term is the inverse group-valued Fourier transform (cf. Eq.~\eqref{Fourier}), which we denote here by
\begin{equation}
    \ket{w(E_\chi)}_R := \frac{1}{\sqrt{|\mathcal{G}|}} \sum_{h} \chi(h) \ket{h}_R,
\end{equation}
and let us denote the second bracketed term by
\begin{equation}
    L_S(E_\chi) := |\mathcal{G}| \bra{e}_R E_\chi \Pi_{\mrm{pn}} \ket{e}_R .
\end{equation}
so that
\begin{equation} \label{eq:TE}
T_R E_\chi \ket{\bar{\psi}} = \ket{w(E_\chi)}_R \otimes L_S(E_\chi)\ket{\psi}_S.
\end{equation}

To complete the proof, we are required to show that the $\ket{w(E_\chi)}_R$ are orthonormal and that the $L_S(E_\chi)$ are unitary.
Orthonormality follows from properties of characters:
\begin{align*}
    \braket{w(E_\chi)}{w(E_{\chi'})}_R &= \frac{1}{|\mathcal{G}|} \sum_{g,h} \chi^*(g) \chi'(h) \braket{g}{h}_R \\
    &= \frac{1}{|\mathcal{G}|} \sum_{g} \chi^*(g) \chi'(g) \\
    &= \delta_{\chi,\chi'}
\end{align*}
The last line follows because the characters of the irreps of a finite group are orthogonal. 
(Strictly speaking the conjugation is redundant here, because $\chi(g) \in \{+1, -1\}$.)

We can demonstrate unitary by brute force computation.
We first remark that, as strings of Pauli operators, we can write $E_\chi = (E_\chi)_R \otimes (E_\chi)_S$, where $(E_\chi)_R$ and $(E_\chi)_S$ are shorter substrings that square to the identity.
We then proceed with the following manipulations:
\begin{align}
    L_S(E_\chi)^\dagger L_S(E_\chi) &= |\mathcal{G}|^2 \bra{e}_R \Pi_\mrm{pn} E_\chi (\ketbra{e}{e}_R \otimes I_S) E_\chi \Pi_\mrm{pn} \ket{e}_R \\
    &= \sum_{h,g} \bra{e}_R (U_R^h \otimes U_S^h)^\dagger((E_\chi)_R \otimes (E_\chi)_S)  (\ketbra{e}{e}_R \otimes I_S) ((E_\chi)_R \otimes (E_\chi)_S) (U_R^g \otimes U_S^g) \ket{e}_R \\
    &= \sum_{h,g} \bra{h} (E_\chi)_R \ketbra{e}{e} (E_\chi)_R \ket{g} (U_S^h)^\dagger U_S^g
\end{align}
To go to the second line we also used the definition of $\Pi_\mrm{pn}$ and the fact that $\Pi_\mrm{pn}^\dagger = \Pi_\mrm{pn}$.
Next we swap the two matrix elements and combine $(U_S^h)^\dagger = U_S^h$ and $U_S^g$ according to Eq.~\eqref{eq:phasedefinitionS}:\footnote{$(U_R^h)^\dagger = U_R^h$ and $(U_S^h)^\dagger = U_S^h$, since these are strings of Pauli operators and, by definition, the $U_R^h$ have a trivial prefactor while the $U_S^h$ can have a prefactor of $+1$ or $-1$. See Eq.~\eqref{eq:URg_generalstab}.}
\begin{align}
    L_S(E_\chi)^\dagger L_S(E_\chi) &= \sum_{h,g} \bra{e} (E_\chi)_R \ketbra{g}{h} (E_\chi)_R \ket{e} c^*(h,g) U_S^{hg} \\
    &= \sum_{g, g'} \bra{e} (E_\chi)_R \ketbra{g}{g'g} (E_\chi)_R \ket{e} c^*(gg',g) U_S^{g'}
\end{align}
The last step follows from re-indexing the sum with $h = g'g~= gg'$.
From the definition of $c(g,g')$, it follows that
\begin{equation}
    c(gg',g) = c(g',g),
\end{equation}
and so in particular
\begin{align}
    \bra{g'g}_R c^*(gg',g) &= \left(c(gg',g) \ket{g'g}_R \right)^\dagger \\
    &= \left( c(g',g) U^{g'g}_R \ket{e}_R \right)^\dagger \\
    &= \left( U^{g'}_R U^g_R \ket{e}_R \right)^\dagger \\
    &= \bra{g}_R (U^{g'}_R)^\dagger \\
    &= \bra{g}_R U^{g'}_R,
\end{align}
whence
\begin{align}
    L_S(E_\chi)^\dagger L_S(E_\chi) &= \sum_{g'} \bra{e} (E_\chi)_R \left(\sum_g \ketbra{g}{g}_R \right) U_R^{g'} (E_\chi)_R \ket{e} U_S^{g'} \\
    &= \sum_{g'} \bra{e} (E_\chi)_R U_R^{g'} (E_\chi)_R \ket{e} U_S^{g'}.
\end{align}
Because $(E_\chi)_R$ and $U_R^{g'}$ are both strings of Pauli operators, they either commute or anticommute, and so let us write $U_R^{g'} (E_\chi)_R = (-1)^{f(g',\chi)} (E_\chi)_R U_R^{g'}$, where $f \in \{+1,-1\}$.
It then follows that
\begin{align}
    L_S(E_\chi)^\dagger L_S(E_\chi) &= \sum_{g'} (-1)^{f(g',\chi)} \bra{e} U_R^{g'} \ket{e} U_S^{g'} \\
    &= \sum_{g'} (-1)^{f(g',\chi)} \braket{e}{g'}_R U_S^{g'} \\
    &= (-1)^{f(e,\chi)} U_S^e \\
    &= I_S,
\end{align}
where the last line follows because $U_R^e = I^{\otimes(n-k)}$ and $U_S^e = I^{\otimes k}$ by assumption.
The computation of $L_S(E) L_S(E)^\dagger$ is analogous.
\end{proof}

\noindent\textbf{Lemma~\ref{lem_gaugedetect}} (\textbf{Recovery schemes for gauge fixing errors})\textbf{.}
    \emph{Consider a gauge fixing error set $\mathcal{E}=\{\hat{E}_g\}$ for some $[[n,k]]$ stabilizer code subject to a faithful representation of ${G}=\mathbb{Z}_2^{n-k}$. The error-correction operation $\mathcal{O}:\mathcal{B}(\Hil_{\rm kin})\rightarrow\mathcal{B}(\Hil_{\rm pn})$ is given by 
    \begin{equation}
        \mathcal{O}(\rho)=\sum_g\,\hat{E}_g^\dag\,\hat{P}_g\,\rho\,\hat{P}_g\,\hat{E}_g=\sqrt{2^{n-k}}\,\Pi_{\rm pn}\left(\sum_g\,\hat{P}_g\,\rho\,\hat{P}_g\right)\,\Pi_{\rm pn}\,.
    \end{equation}
Further, $\mathcal{O}(\rho)=\rho$ for $\rho\in\mathcal{S}(\Hil_{\rm pn})$, so code words are left intact by the correction if no error occurred.
}

\begin{proof}
The error correction operation elements $\hat{E}_g^\dag\,\hat{P}_g$ are  normalized: 
\begin{equation}
    \sum_{g\in G} \hat{P}_g\,\hat{E}_g\,\hat{E}_g^\dag\,\hat{P}_g=I\,.
\end{equation}
Moreover, setting $\tilde{E}_k=\sum_g c_{g,k}\hat{E}_g$ and thus $\hat{P}_g\tilde{E}_k\,\Pi_{\rm pn}=\sqrt{2^{n-k}}\,c_{g,k}\hat{P}_g\,\Pi_{\rm pn}$, for $\rho_{\rm pn}\in\mathcal{S}(\mathcal{H}_{\rm pn})$ we have
\begin{eqnarray}
\mathcal{O}\left(\tilde{\mathcal{E}}(\rho_{\rm pn})\right)&=&\sum_{g,k}E^\dag_{g}\,\hat{P}_{g}\,\tilde{E}_k\,\rho_{\rm pn}\,\tilde{E}_k^\dag\,\hat{P}_{g}\,\hat{E}_{g}=2^{n-k}\sum_{g,k} |c_{g,k}|^2\,\hat{E}_g^\dag\,\hat{P}_g\,\rho_{\rm pn}\,\hat{P}_g\,\hat{E}_g\nonumber\\
&\underset{\eqref{eq:unitary gauge fixing}}{=}&\sum_{g,k}|c_{g,k}|^2\,\rho_{\rm pn}=\rho_{\rm pn}\,.
\end{eqnarray}
The last equality follows from $\tilde{E}_k=\sum_g c_{g,k}\hat{E}_g$ and
\begin{equation}
  I=\sum_k\tilde{E}_k^\dag\tilde{E}_k=2^{n-k}\sum_{k,g,\chi}|c_{g,k}|^2\,E_\chi\Pi_{\rm pn}\hat{P}_{g_\chi}\Pi_{\rm pn}E_\chi=\sum_{k,g,\chi}|c_{g,k}|^2\,E_\chi\Pi_{\rm pn}E_\chi  =\sum_{k,g,\chi}|c_{g,k}|^2\,P_\chi=\sum_{k,g}|c_{g,k}|^2\,.\nonumber
\end{equation}
In particular, $\mathcal{O}(\rho_{\rm pn})=\rho_{\rm pn}$, so code words are unaffected by the operation if no error occurred.
\end{proof} 

\clearpage

\bibliographystyle{utphys-modified}
\bibliography{qrf-qecc.bib}

\end{document}